\newcommand\greyprint{}
\def\@biblabel#1{[#1]} %
\def\thebibliography#1{%
    \footnotesize
    \refsection*{{\refname}
        \@mkboth{\uppercase{\refname}}{\uppercase{\refname}}%
    }
    \list{\@biblabel{\@arabic\c@enumiv}}%
       {\settowidth\labelwidth{\@biblabel{#1}}%
        \leftmargin\labelwidth
        \advance\leftmargin\bibindent
        \itemindent-\bibindent
        \itemsep2pt
        \parsep \z@
        \usecounter{enumiv}%
        \let\p@enumiv\@empty
        \renewcommand\theenumiv{\@arabic\c@enumiv}%
    }%
    \let\newblock\@empty
    \sloppy
    \sfcode`\.=1000\relax
}
 \renewenvironment{thebibliography}[1]{%
   \begin{odlthebibliography}{#1}%
     \setlength{\parskip}{0ex}%
     \setlength{\itemsep}{3pt}%
     \fontsize{9.5}{9.5} %
     \selectfont
}%
 {%
   \end{odlthebibliography}%
 }
\tikzset{
    dharrow/.style={
        <->,
        postaction={decorate,-},
        }
}
\tikzset{
    dhdashedarrow/.style={
        <->,
        dashed,
        postaction={decorate,-},
        }
    }
\tikzset{
    lrharpoonarrow/.style={
        <[harpoon]->[harpoon],
        postaction={decorate,-},
        }
}
\tikzset{
    lrharpoondashedarrow/.style={
        <[harpoon]->[harpoon],
        dashed, %
        postaction={decorate,-},
        }
}
\newcommand\jamiesection[1]{\section{#1}}
\newcommand\jamiesubsection[1]{\subsection{#1}}
\newcommand\jamiesubsubsection[1]{\subsubsection{#1}}
\newtheoremstyle{jamiestyle}%
  {4pt}%
  {0pt}%
  {\it}%
  {0pt}%
  {\sc}%
  {.}%
  { }%
  {}%
\theoremstyle{jamiestyle}
\newtheorem{thrm}{Theorem}[subsection]
\newtheorem{prop}[thrm]{Proposition}
\newtheorem{lemm}[thrm]{Lemma}
\newtheorem{corr}[thrm]{Corollary}
\newtheoremstyle{jamienfstyle}%
  {4pt}%
  {0pt}%
  {\normalfont}%
  {0pt}%
  {\sc}%
  {.}%
  { }%
  {}%
\theoremstyle{jamienfstyle}
\newtheorem{nttn}[thrm]{Notation}
\newtheorem{defn}[thrm]{Definition}
\newtheorem{xmpl}[thrm]{Example}
\newtheorem{rmrk}[thrm]{Remark}
\definecolor{mygreen}{rgb}{0,0.6,0}
\definecolor{mygray}{rgb}{0.5,0.5,0.5}
\definecolor{mymauve}{rgb}{0.58,0,0.82}
\definecolor{gray}{RGB}{128, 128, 128}
\definecolor{lightgray}{RGB}{200, 200, 200}
\definecolor{cyan}{RGB}{0, 255, 255}
\definecolor{blue}{RGB}{0, 0, 255}
\definecolor{red}{RGB}{255, 0, 0}
\definecolor{pink}{RGB}{255, 128, 128}
\definecolor{green}{RGB}{0, 128, 0}
\definecolor{lightyellow}{RGB}{255, 255, 200}
\definecolor{purple}{RGB}{128, 0, 128}
\lstdefinestyle{all}
    {basicstyle=\ttfamily\scriptsize,
     keywordstyle=\color{blue}\ttfamily\scriptsize,
     commentstyle=\color{green}\ttfamily\scriptsize,
     stringstyle=\color{red}\ttfamily\scriptsize}
\lstdefinelanguage{hask}{%
    frame=none,
    xleftmargin=2pt,
    belowcaptionskip=\bigskipamount,
    captionpos=b,
    tabsize=2,
    numbers=left,
    numberstyle=\tiny\color{gray},
    emphstyle={\bf},
	morecomment=[s][\color{green}]{\{-}{-\}},
    stringstyle=\mdseries\rmfamily,
    commentstyle=\color{green},
    keywords={},
    keywords=[1]{case, of, data, if, then, else, where, let, in, do},
    keywords=[2]{Chip, Config, CurrencySymbol, TokenName, PubKeyHash, Integer, Value, State, Action, Text, Maybe, Void, TxConstraints,  Contract},
    keywords=[3]{HasNative},
    keywords=[4]{=>},
    keywords=[5]{Just, Nothing, MkChip, MkConfig, SetPrice, Buy},
    keywordstyle=[1]\mdseries\sffamily\color{red},
    keywordstyle=[2]\mdseries\sffamily\color{blue},
    keywordstyle=[3]\mdseries\sffamily\color{green},
    keywordstyle=[4]\mdseries\sffamily,
    keywordstyle=[5]\mdseries\sffamily\color{purple},
    columns=flexible,
    basicstyle=\small\sffamily,
    showstringspaces=false,
    breaklines=false,
    showspaces=false,
    escapeinside={--}{\^^M},escapebegin={\color{green}--},escapeend={},
    literate= {+}{{$+$}}1 {/}{{$/$}}1 {*}{{$*$}}1 {=}{{$=$}}1
              {>}{{$>$}}1 {<}{{$<$}}1 {\\}{{$\lambda$}}1
              {\\\\}{{\char`\\\char`\\}}1
              {->}{{$\rightarrow$}}2 {>=}{{$\geq$}}2 {<-}{{$\leftarrow$}}2
              {<=}{{$\leq$}}2 {=>}{{$\Rightarrow$}}2
              {\ .}{{$\circ$}}2 {\ .\ }{{$\circ$}}2
              {>>}{{>>}}2 {>>=}{{>>=}}2
              {|}{{$\mid$}}1
              {\_}{{\underline{\hspace{2mm}}}}2
}
\lstdefinelanguage{solidity}{%
    frame=none,
    xleftmargin=2pt,
    belowcaptionskip=\bigskipamount,
    captionpos=b,
    tabsize=2,
    numbers=left,
    numberstyle=\tiny\color{gray},
    emphstyle={\bf},
	morecomment=[s][\color{green}]{\{-}{-\}},
    stringstyle=\mdseries\rmfamily,
    commentstyle=\color{green},
    keywords={},
    keywords=[1]{pragma, solidity, contract, event, constructor, require, function, return, emit},
    keywords=[2]{address, uint, mapping},
    keywords=[3]{public, payable, external, view, returns},
    keywords=[4]{=>, +=, -=, =, <=, ==},
    keywords=[5]{msg, sender, transfer, value},
    keywordstyle=[1]\mdseries\sffamily\color{red},
    keywordstyle=[2]\mdseries\sffamily\color{blue},
    keywordstyle=[3]\mdseries\sffamily\color{green},
    keywordstyle=[4]\mdseries\sffamily,
    keywordstyle=[5]\mdseries\sffamily\color{purple},
    columns=flexible,
    basicstyle=\small\sffamily,
    showstringspaces=false,
    breaklines=false,
    showspaces=false,
    escapeinside={--}{\^^M},escapebegin={\color{green}--},escapeend={},
    literate= {+}{{$+$}}1 {/}{{$/$}}1 {*}{{$*$}}1 {=}{{$=$}}1
              {>}{{$>$}}1 {<}{{$<$}}1 {\\}{{$\lambda$}}1
              {\\\\}{{\char`\\\char`\\}}1
              {->}{{$\rightarrow$}}2 {>=}{{$\geq$}}2 {<-}{{$\leftarrow$}}2
              {<=}{{$\leq$}}2 {=>}{{$\Rightarrow$}}2
              {\ .}{{$\circ$}}2 {\ .\ }{{$\circ$}}2
              {>>}{{>>}}2 {>>=}{{>>=}}2
              {|}{{$\mid$}}1
              {\_}{{\underline{\hspace{2mm}}}}2
}
\newcommand\hpn[2][]{%
  \ext@arrow 9999{\hpnfill@}{#1}{#2}}
\newcommand\hpnfill@{%
  \arrowfill@\leftharpoonup\relbar\rightharpoondown}
\NewCommandCopy{\oldin}{\in}
\renewcommand\in{{{\hspace{1pt}{\oldin}\hspace{1pt}}}}
\NewCommandCopy{\oldnotin}{\notin}
\renewcommand\notin{{{\hspace{1pt}{\oldnotin}\hspace{1pt}}}}
\NewCommandCopy{\oldsetminus}{\setminus}
\renewcommand\setminus{{{\hspace{1pt}{\oldsetminus}\hspace{1pt}}}}
\newcommand\xor{\mathbin{\mathsf{\small xor}}}
\newcommand\atopen{T}
\newcommand\avaluation{f} %
\newcommand\leftopeninterval[1]{(#1]}
\newcommand\rightopeninterval[1]{[#1)}
\newcommand\opens{{\tf{Open}}}
\newcommand\regularOpens{\tf{Open}_{\f{reg}}}
\newcommand\topens{\tf{Topen}}
\newcommand\closed{\tf{Closed}}
\newcommand\regularClosed{\tf{Closed}_{\f{reg}}}
\newcommand\closure[1]{|#1|}
\newcommand{\dotarrow}{%
   \mathrel{\ooalign{\hss\raise.85ex\hbox{\scalebox{1.25}{\normalfont .}}%
   \kern0.35ex\hss\cr$\rightarrow$}}}
\newcommand\onlineref[2]{\url{#1} (permalink: \url{#2})}
\newcommand\footnoteref[2]{\footnote{See \onlineref{#1}{#2}.}}
\newcommand\notbetween{\mathbin{\cancel{\between}}}
\newcommand\notintertwinedwith{\mathrel{\notbetween}}
\newcommand\notintersectswith{\notbetween}
\newcommand\intertwined[1]{#1_{\between}}
\newcommand\intertwinedwith{\mathrel{\between}}
\newcommand\leqk{\leq_{\hspace{-.7pt}\intertwinedwith}}
\newcommand\geqk{\geq_{\hspace{-.7pt}\intertwinedwith}}
\newcommand\nbhd[0]{\f{nbhd}}
\newcommand\interior[0]{\f{interior}}
\newcommand\kiss[0]{\f{kiss}}
\newcommand\community[0]{\f{K}}
\newcommand\@deffont[2][]{{\bfseries #2}\index{#1}}
\newcommand\deffont{\@dblarg\@deffont}
\newcommand\powerset{\f{pow}}
\newcommand\f[1]{\mathit{#1}}
\newcommand\tf[1]{\mathsf{#1}}
\newcommand\ns[1]{\bm{\mathsf{#1}}}
\newcommand\liff{\Longleftrightarrow}
\newcommand\limp{\Longrightarrow}
\DeclareMathSymbol{\shortminus}{\mathbin}{AMSa}{"39}
\newcommand\minus{{\shortminus}}
\newcommand\plus{{+}}
\newcommand\Forall[1]{\forall #1.}
\newcommand\Exists[1]{\exists #1.}
\newcommand\cent{\vdash}
\newcommand\ment{\vDash}
\newcommand\boundary{\f{boundary}}
\newcommand\mone{{\text{-}1}}
\DeclareRobustCommand{\barcent}{\mathbin{\mathpalette\barcent@@\relax}}
\newcommand{\barcent@@}[2]{%
  \vbox{\offinterlineskip
    \sbox\z@{$\m@th#1\cent$}%
    \ialign{%
      \hfil##\hfil\cr
      $\m@th#1{}_{\minus}\kern-\scriptspace$\cr
      \noalign{\kern-.3\ht\z@}
      \box\z@\cr
    }%
  }%
}
\def\pmb@#1#2{\setbox8\hbox{$\m@th#1{#2}$}%
  \setboxz@h{$\m@th#1\mkern-.1mu$}\pmbraise@\wdz@
  \binrel@{#2}%
  \dimen@-\wd8 %
  \binrel@@{%
    \mkern-.1mu\copy8 %
    \kern\dimen@\mkern-.2mu\copy8 %
    \kern\dimen@\mkern-.3mu\copy8 %
    \kern\dimen@\mkern-.4mu\copy8 %
    \kern\dimen@\mkern.1mu\copy8 %
    \kern\dimen@\mkern.2mu\copy8 %
    \kern\dimen@\mkern.3mu\copy8 %
    \kern\dimen@\mkern.0mu\raise\pmbraise@\copy8 %
    \kern\dimen@\mkern.4mu\box8 %
           }%
}
\newcommand\tor{{\pmb\vee}}
\newcommand{\circlearrow}{}%
\DeclareRobustCommand{\circlearrow}{%
  \mathrel{\vphantom{\shortrightarrow}\mathpalette\circle@arrow\relax}%
}
\newcommand{\circle@arrow}[2]{%
  \m@th
  \ooalign{%
    \hidewidth$#1\circ\mkern1mu$\hidewidth\cr
    $#1\longrightarrow$\cr}%
}
\newcommand*\bigcdot{\mathpalette\bigcdot@{.5}}
\newcommand*\bigcdot@[2]{\mathbin{\vcenter{\hbox{\scalebox{#2}{$\m@th#1\bullet$}}}}}
\begin{document}
\title{Semitopology: a topological approach to decentralised collaborative action} 
\newcommand\titlerunning{\emph{Semitopology \& decentralised action}}
\newcommand\authorrunning{\emph{Murdoch J. Gabbay}}
\author{Murdoch J. Gabbay \affil{Heriot-Watt University, UK}
}

\begin{abstract}
We introduce \emph{semitopology}, a generalisation of point-set topology that removes the restriction that intersections of open sets need necessarily be open.
The intuition is that points represent participants in a decentralised system, and open sets represent collections of participants that collectively have the authority to collaborate to update their local state; we call this an \emph{actionable coalition}.

Examples of actionable coalition include: majority stakes in proof-of-stake blockchains; communicating peers in peer-to-peer networks; and even pedestrians working together to not bump into one another in the street.
Where actionable coalitions exist, they have in common that: collaborations are local (updating the states of the participants in the coalition, but not immediately those of the whole system); collaborations are voluntary (up to and including breaking rules); participants may be heterogeneous in their computing power or in their goals (not all pedestrians want to go to the same place); participants can choose with whom to collaborate; and they are not assumed subject to permission or synchronisation by a central authority.

We develop a topology-flavoured mathematics that goes some way to explaining how and why these complex decentralised systems can exhibit order, and gives us new ways to understand existing practical implementations. 

Semitopology is also interesting in and of itself, having a rich and interesting theory which quickly deviates from standard accounts on topological spaces.
It soon becomes clear that the most interesting semitopologies are rather ill-behaved from the usual viewpoint, as they are never Hausdorff. 
A notion of `transitive open sets' (topens) becomes central to the story, as topens define subsets of participants who should decide the same value in a distributed system that tries to achieve consensus, and points are called `regular' when they have a topen neighbourhood. 
The theory is then further developed by introducing intertwined points, closures, closed sets, and two interesting characterisations of regularity.

\keywords{Topology, semitopology, decentralised computation, distributed systems, consensus} 
\end{abstract}
\maketitle
\thispagestyle{empty}

\tableofcontents

\jamiesection{Introduction}
\label{sect.intro}

\jamiesubsection{What is a `decentralised collaborative action', and what is a semitopology?}
\label{subsect.what.is}

A system is \emph{decentralised} when it is distributed over several machines and furthermore the system as a whole is not centrally controlled.
Most blockchain systems and peer-to-peer networks are decentralised (they are distributed over multiple participants, and no single entity controls the system). 
The internet is also (mostly) decentralised, at least in principle.\footnote{The internet was designed to be an information network that would be resilient to nuclear attack.  It did this by being `centrifugal'; emphasising node-to-node actions instead of centre-to-centre actions.  See~\cite{ryan:hisidf}, summarised by Ars Technica~\cite{ars-technica:howabg}.} 
Common practical problems from daily life can also be understood in terms of decentralised collaborative action: for example when we drive along a road, or walk around in a shop, we collaborate with the other agents (drivers, or shoppers) in a local and decentralised manner to avoid collisions.

So decentralised collaborative action is everywhere, but it has gained particular interest recently to designers of computer systems because it is an \emph{essential} feature of many modern highly-decentralised computer systems, such as blockchains.
So at a very high level, what do we have?
\begin{enumerate*}
\item
There is a notion of what I will call an \emph{actionable coalition} (or just \emph{open set}).

This is a set $O\subseteq\ns P$ of participants with the capability, though not the obligation, to act collaboratively to advance (= update / transition) the local state of the elements in $O$, possibly but not necessarily in the same way for every $p\in O$.\footnote{E.g. in a blockchain, we may want all updates to be uniform so that we implement a decentralised ledger; but in a peer-to-peer system or the internet, updates need not be uniform, e.g. if nodes are swapping or forwarding data.}
\item
$\varnothing$ is trivially an actionable coalition.
Also we assume that $\ns P$ is actionable, since if it were not then literally nothing could ever get done.
\item
A sets union of actionable coalitions, is an actionable coalition.
\end{enumerate*}
Some important notes about this:
\begin{enumerate*}
\item
State must be stored and updated locally (if state were centralised, then whoever controls the state has \emph{de facto} control of the system, which would not be decentralised). 
\item
An actionable coalition can progress locally, \emph{without} consulting the rest of the system (if they had to, then the system would not be decentralised).
\item
Being a member of an actionable coalition does not imply control.
Actionable coalitions describe legal collaborations, but do not imply any obligation.
\item 
If $O$ is an actionable coalition for $p\in O$, and $p'\in O$ is another participant in $O$, then $O$ is also an actionable coalition for $p'$.
Note that this makes actionable coalitions look a bit like open sets in a topology.
\end{enumerate*}
So we can now introduce our first mathematical abstraction: we identify participants as \emph{points}, and we let \emph{open sets} be \emph{actionable coalitions}.
An actionable coalition is a \emph{coalition of participants with the capacity to act}.
They are not obliged to act, and if they do their action need not be identical across all participants, but the potential exists for this set to collaborate to progress their states.
This leads us to the definition of a semitopology.
\begin{nttn}
\label{nttn.powerset}
Suppose $\ns P$ is a set.
Write $\powerset(\ns P)$ for the powerset of $\ns P$ (the set of subsets of $\ns P$).
\end{nttn}

\begin{defn}
\label{defn.semitopology}
A \deffont{semitopological space}, or \deffont{semitopology} for short, consists of a pair $(\ns P, \opens(\ns P))$ of 
\begin{itemize*}
\item
a (possibly empty) set $\ns P$ of \deffont{points}, and 
\item
a set $\opens(\ns P)\subseteq\powerset(\ns P)$ of \deffont{open sets}, 
\end{itemize*}
such that:
\begin{enumerate*}
\item\label{semitopology.empty.and.universe}
$\varnothing\in\opens(\ns P)$ and $\ns P\in\opens(\ns P)$.
\item\label{semitopology.unions}
If $X\subseteq\opens(\ns P)$ then $\bigcup X\in\opens(\ns P)$.\footnote{There is a little overlap between this clause and the first one: if $X=\varnothing$ then by convention $\bigcup X=\varnothing$.  Thus, $\varnothing\in\opens(\ns P)$ follows from both clause~1 and clause~2.  If desired, the reader can just remove the condition $\varnothing\in\opens(\ns P)$ from clause~1, and no harm would come of it.} 
\end{enumerate*}
We may write $\opens(\ns P)$ just as $\opens$, if $\ns P$ is irrelevant or understood, and we may write $\opens_{\neq\varnothing}$ for the set of nonempty open sets.
\end{defn}

The reader will recognise a semitopology as being like a \emph{topology} on $\ns P$, but without the condition that the intersection of two open sets necessarily be an open set.
This reflects the fact that the intersection of two actionable coalitions need not itself be an actionable coalition.

\jamiesubsection{How does this lead to new maths?}

To get a flavour of our mathematical results, consider a fundamental problem in any decentralised system: ensuring that its participants remain in agreement, for some suitable sense of `agree'.

To take a simple example from blockchain: if we reach a situation where half of the nodes say that we have paid for a service, and the other half say that we have not --- then \emph{everyone} has a problem, because the system has become incoherent and it is not clear how the system as a whole can restore coherence and progress.\footnote{coherent (adj.) 1550s, ``harmonious;'' 1570s, ``sticking together,'' also ``connected, consistent'' (of speech, thought, etc.), from French cohérent (16c.), from Latin cohaerentem (nominative cohaerens), present participle of cohaerere ``cohere,'' from assimilated form of com ``together'' (see co-) + haerere ``to adhere, stick'' (etymologyonline: \url{https://www.etymonline.com/word/coherent).}}
This phenomenon is called \emph{forking}, and blockchain designers really want to avoid it!

We will call our mathematical abstraction of agreement, \emph{antiseparation}.
In a little more detail, antiseparation properties are coherence properties that are guaranteed to hold of a decentralised system
\emph{just} by analysing the structure of its actionable coalitions.
If we recall the usual separation axioms of topology --- such as $T_0$, $T_1$, Hausdorff, and so on --- note that these separation conditions have to do with the existence of non-intersecting open sets (or similar).  
Concretely, antiseparation assumptions on semitopologies are dual to this; they give various senses in which open neighbourhoods \emph{must intersect}.

It turns out that these are interesting properties to have, because they determine participants who should decide the same value in a distributed system that tries to achieve consensus.
It turns out that we can get surprisingly detailed information about consensus behaviour in decentralised systems just from 
quite weak and abstract antiseparation assumptions on the actionable coalitions (= open sets).

We emphasise this point: sometimes we can predict important macro properties of a system's behaviour without knowing anything about its specifics, so long as we have certain good properties on its actionable coalitions.

Let us start by considering a simple situation where participants are trying to agree on a binary consensus problem: whether to announce a single value `true' or `false'.
Continuing the theme of simplicity, assume some finite nonempty set of participants $\mathbb E$ and let their actionable coalitions be just any set of participants that forms a majority (so it contains strictly more than half of the set of all participants).
Now suppose that the participants in some actionable coalition $O\subseteq\mathbb E$ have communicated and have agreed on `true'.
Because they form an actionable coalition, they are entitled to act and to announce `true', and so they do.
They have now all committed to this state update and they cannot change their minds.

So: can this system fork?
Consider some participant $p\not\oldin O$.
If $p$ wants to make progress, is must also agree on `true', because all of its actionable coalitions intersect with $O$ and so contain at least one participant that has committed to `true' and cannot change its mind.
This does not mean that $p$ has to agree on `true'; it could choose not to progress, or it could break the rules.
But, by definition if $p$ does want to progress legally, then the decision has been made and it must eventually go along with the majority.
Thus, we have proved that any progress that is made by one participant within the rules (\dots must be shared with some actionable coalition of that participant, and since all such coalitions intersect it \dots) must eventually be followed any other participant that also progresses.
Thus forking is impossible.

The reader may already be familiar with this example, but note that this antiseparation property comes simply \emph{from the structure of the actionable coalitions}.
There is no need to consider the protocol, or even how values are interpreted.

Surprisingly, it turns out that antiseparation-style behaviour is common, and arises even if we do not require actionable coalitions that are simple majorities.
For example, let participants be $\mathbb Z=\{0,1,\minus 1,2,\minus 2,\dots\}$ and let actionable coalitions be generated by sets of three consecutive numbers starting at an even number $\{2i,2i\plus 1,2i\plus 2\}$, and suppose again that we are trying to agree on `true' or `false'.
Note that in contrast to the previous example, actionable coalitions need not intersect.
Yet, the moment one triplet of participants commits to `true', the rest of the system is obliged to eventually agree, if all participants play by the rules.
Now this particular example system is not particularly safe or desirable in practice, because we can imagine that $\{0,1,2\}$ agree on `true', and $\{4,5,6\}$ acting independently but in good faith agree on `false', and then $3$ cannot legally progress, because within $\{2,3,4\}$, $2$ has announced `true' and $4$ has announced `false' and $3$ cannot agree with both.
But, we know that \emph{if} all participants do legally progress, then they announce the same value.
So this example illustrates how antiseparation can arise even when actionable coalitions are rather small.\footnote{See also Remark~\ref{rmrk.transitive.comment}.}

The two examples above are quite different.
In one, all actionable coalitions intersect, and in the other they mostly do not.
This suggests that a `general mathematics of (anti)separation' is possible, based on the study of actionable coalitions.
In a nutshell, that mathematical story is what we will develop. 

The notion of actionable coalitions is introduced in this document, but in retrospect we see them everywhere. 
For example:
Some blockchain systems make actionable coalitions explicit, e.g. in the XRP Ledger~\cite{schwartz_ripple_2014} and the Stellar network~\cite{lokhafa:fassgp} the notion of actionable coalition is represented explicitly in the engineering architecture of the system.
Social choice theorists have a similar notion called a \emph{winning coalition} \cite[Item~5, page~40]{riker:thepc}, which is used to study voting systems; and if the reader has a background in logic then they may be reminded of a whole field of \emph{generalised quantifiers} (a good survey is in~\cite{sep-generalized-quantifiers}).\footnote{But, note that voting and generalised quantifiers have a centralised flavour to them.  For instance: a vote in the typical democratic sense is a synchronous, global operation (unless the result is disputed): votes are cast, collected, and then everyone gets together --- e.g. in a vote counting hall --- to count the votes and agree on who won and so certify the outcome.}
Cross-chain systems (which operate or translate across multiple blockchain) inherently have to deal with heterogeneous actionable coalitions, since the actionable coalitions of one blockchain need not be (and usually are not) identical to those of another. 
Concrete algorithms to attain consensus often use a notion of \emph{quorum}~\cite{lamport_part-time_1998,lamport:byzgp} $Q$ for a participant $p$; 
simply put, this is a set of participants $Q$ whose unanimous adoption of a value guarantees that $p$ will eventually also adopt this value.
If quorums are majorities (more than half) or supermajorities (more than two-thirds) of all participants then quorums already \emph{are} actionable coalitions; if not, then we can obtain an actionable coalition in a natural way by considering any set $O$ such that every participant (element) $p\in O$ has some subset $Q\subseteq O$ that is a quorum for $p$.

\jamiesubsection{Who should read this paper?}

\begin{enumerate}
\item
\emph{Practitioners} looking for a mathematical framework that subsumes what they're already doing, puts it in a broader context, creates a common language to speak with one another and with mathematicians, and suggests new engineering options.
\item 
\emph{Theoreticians} looking for maths to help design the next generation of advanced decentralised computer systems. 
\item
\emph{Pure mathematicians} who might be pleased to discover a new topology-adjacent field and might see it as a fresh research opportunity.\footnote{We can also learn what things are important and interesting to look at, and what distinctions make a difference in practice; I know that I have.}
\item
\emph{Mathematicians} looking to get into practical systems.
Real systems are often messy, because they have to accommodate a messy reality.
Semitopologies provide a useful abstraction that can help us to understand what is going on at a high level.
\end{enumerate} 

\jamiesubsection{Why did I write it?}

Numerous authors have recently studied designing systems where participants have different opinions on who is part of the system or on who is trustworthy or not~\cite{Alpos2024,sheff_heterogeneous_2021,cachin_quorum_2023,li_quorum_2023,bezerra_relaxed_2022,garcia2018federated,lokhafa:fassgp,losa:stecbi,florian_sum_2022,li_open_2023}.
These systems go by names such as \emph{(permissionless) fail-prone systems} and \emph{(heterogeneous) quorum systems} (more discussion, with more references, is in Subsection~\ref{subsect.related.work}).

Most of these systems are (or to be more precise: they directly give rise to) semitopologies, and it seems to me that the literature above is, in fact, \emph{rediscovering topology through semitopology}, but they did not know it. 
Here, we make the connection to classical mathematics explicit, and build on it to obtain results that matter and say something about the (expected) behaviours of these new classes of systems.

\jamiesubsection{Map of the paper}
\label{subsect.map}

\begin{enumerate}
\item
Section~\ref{sect.intro} is the Introduction.  You Are Here.
\item
In Section~\ref{sect.semitopology} we show how continuity corresponds to local agreement (Definition~\ref{defn.semitopology} and Lemma~\ref{lemm.open.lc}).
\item
In Section~\ref{sect.transitive.sets} we introduce \emph{transitive sets}, \emph{topens}, and \emph{intertwined points}.
These are all different views on the anti-separation well-behavedness properties that will interest us. 
Most of Section~\ref{sect.transitive.sets} is concerned with showing how these different views relate and in what senses they are equivalent (e.g. Proposition~\ref{prop.cc.char}).
Transitive sets are guaranteed to be in agreement (in a sense made precise in Theorem~\ref{thrm.correlated} and Corollary~\ref{corr.correlated.intersect}), and we take a first step to understanding the fine structure of semitopologies by proving that every semitopology partitions into topen sets (Theorem~\ref{thrm.topen.partition}), plus other kinds of points which we classify in the next Section.
\item
In Section~\ref{sect.regular.points} we start to classify points in more detail, introducing notions of \emph{regular}, \emph{weakly regular}, and \emph{quasiregular} points (Definition~\ref{defn.tn}).\footnote{The other main classification is \emph{conflicted} points, in Definition~\ref{defn.conflicted}.  These properties are connected by an equation: regular = weakly regular + unconflicted; see Theorem~\ref{thrm.r=wr+uc}.}
 
Regular points are those contained in some topen set, and they display particularly good behaviour.
Regularity will be very important to us and we will characterise it in multiple ways: see Remark~\ref{rmrk.how.regularity}.
(A survey of characterisations of weak regularity requires more machinery and appears in Remark~\ref{rmrk.how.weakly.regular}.)
\item
In Section~\ref{sect.closed.sets} we study closed sets, and in particular the interaction between intertwined points, topens, and closures.
Typical results are Proposition~\ref{prop.intertwined.as.closure} and Theorem~\ref{thrm.up.down.char} which characterise sets of intertwined points as minimal closures.
The significance to consensus is discussed in Remarks~\ref{rmrk.fundamental.consensus} and~\ref{rmrk.why.top.closure}.
\item
In Section~\ref{sect.unconflicted.point} we study unconflicted and hypertransitive points, leading to two useful characterisations of regularity in Theorems~\ref{thrm.r=wr+uc} and~\ref{thrm.regular=qr+sc}.
\item
In Section~\ref{sect.conclusions} we conclude and discuss related and future work.
We discuss connections with related work in Subsection~\ref{subsect.related.work}.
\end{enumerate}

\begin{rmrk}
Algebraic topology has been applied to the solvability of distributed-computing tasks in various computational models (e.g. the impossibility of wait-free $k$-set consensus using read-write registers and the Asynchronous Computability Theorem~\cite{herlihy_asynchronous_1993,borowsky_generalized_1993,saks_wait-free_1993}; see~\cite{herlihy_distributed_2013} for a survey).
Semitopology is not topology, and this is not a paper about algebraic topology applied to the solvability of distributed-computing tasks!

This paper is about the mathematics of actionable coalitions, as made precise by point-set semitopologies; their antiseparation properties; and the implications to partially continuous functions on of them.
If we discuss distributed systems, it is by way of providing motivating examples or noting applicability.
\end{rmrk}

\jamiesection{Semitopology}
\label{sect.semitopology}

\jamiesubsection{Definitions, examples, and some discussion}

\jamiesubsubsection{Definitions}

Recall from Definition~\ref{defn.semitopology} the definition of a semitopology.

\begin{rmrk}
\label{rmrk.two.ways.to.think}
\leavevmode
\begin{enumerate*}
\item
As a sets structure, a semitopology on $\ns P$ is like a \emph{topology} on $\ns P$, but without the condition that the intersection of two open sets be an open set.
\item
As a lattice structure, a semitopology on $\ns P$ is a 
bounded complete join-subsemilattice of $\powerset(\ns P)$.\footnote{\emph{Bounded} means closed under empty intersections and unions, i.e. containing the empty and the full set of points.  \emph{Complete} means closed under arbitrary (possibly empty, possibly infinite) sets unions.

The reader may know that a complete lattice is also co-complete: if we have all joins, then we also have all meets.
However, note that there is no reason for the meets in $\opens$ to coincide with the meets in $\powerset(\ns P)$, i.e. for them to be sets intersections.  
}
\item
Every semitopology $(\ns P,\opens)$ induces two natural topological completions: the least topology that contains $\opens$, and the greatest topology contained in $\opens$.
But there is more to semitopologies than just their topological completions, because:
\begin{enumerate*}
\item
We are explicitly interested in situations where intersections of open sets need \emph{not} be open.
\item
Completing to a topology loses information.
For example: the `many', `all-but-one', and `more-than-one' semitopologies in Example~\ref{xmpl.semitopologies} express three distinct notions of quorum, yet if $\ns P$ is infinite then for all three, the least topology containing them is the discrete semitopology (Definition~\ref{defn.value.assignment}(\ref{item.discrete.semitopology})), and the greatest topology that they contain is the trivial topology $\{\varnothing,\ns P\}$ (Example~\ref{xmpl.semitopologies}(\ref{item.trivial.topology})).
See also the overview in Subsection~\ref{subsect.vs}. 
\end{enumerate*}
\end{enumerate*}
\end{rmrk}

Semitopologies are not topologies.
We take a moment to spell out one concrete difference:
\begin{lemm}
\label{lemm.two.min}
In topologies, if a point $p$ has a minimal open neighbourhood then it is least (= unique minimal).
In semitopologies, a point may have multiple distinct minimal open neighbourhoods.
\end{lemm}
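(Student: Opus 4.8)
The statement is really two independent claims, which I would prove by quite different routes: the positive claim about topologies by a short direct argument, and the negative claim about semitopologies by a small explicit counterexample. For the topological claim, suppose $p$ has a minimal open neighbourhood $O$, and let $O'$ be any open neighbourhood of $p$. The one fact I need is the defining property of a topology that semitopologies discard, namely that $O\cap O'$ is again open. Since $p\in O$ and $p\in O'$, this $O\cap O'$ is an open neighbourhood of $p$ contained in $O$; minimality of $O$ then forces $O\cap O'=O$, i.e. $O\subseteq O'$. As $O'$ ranged over all open neighbourhoods of $p$, the set $O$ is contained in every one of them, so it is the least such neighbourhood. Leastness immediately yields uniqueness among minimal neighbourhoods: any second minimal $O''$ would satisfy $O\subseteq O''$, whence $O=O''$ by minimality of $O''$.

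For the semitopological claim I would simply exhibit a point with two minimal neighbourhoods, engineered so that the intersection step above fails. Take $\ns P=\{0,1,2\}$ and $\opens=\{\varnothing,\{0,1\},\{1,2\},\ns P\}$. Verifying that this is a semitopology in the sense of Definition~\ref{defn.semitopology} is routine: it contains $\varnothing$ and $\ns P$, and the only nontrivial union to check is $\{0,1\}\cup\{1,2\}=\ns P\in\opens$. The open neighbourhoods of the point $1$ are then exactly $\{0,1\}$, $\{1,2\}$, and $\ns P$, and since neither of the two two-element sets contains the other, both are minimal. Thus $1$ has two distinct minimal open neighbourhoods, as claimed. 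Note that their intersection $\{0,1\}\cap\{1,2\}=\{1\}$ is not open, which is precisely the obstruction that the topological axiom rules out.

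There is no real technical obstacle in either half; the content of the lemma is conceptual rather than computational. The only thing worth getting right is to make the two halves illuminate one another: I would present the counterexample so that it visibly breaks the exact intersection step used in the positive direction, letting the reader see that the failure of intersection-closure is not merely compatible with multiple minimal neighbourhoods but is the mechanism that produces them.
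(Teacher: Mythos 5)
Your proof is correct and essentially matches the paper's: both halves rely on the same key fact (closure of topologies under binary intersection) and you use the identical three-point counterexample $\opens=\{\varnothing,\{0,1\},\{1,2\},\ns P\}$ with the point $1$. The only (minor) difference is that in the first half you intersect a minimal neighbourhood with an \emph{arbitrary} open neighbourhood, which directly establishes genuine leastness, whereas the paper intersects two minimal neighbourhoods to show they coincide; your version is marginally more informative but the idea is the same.
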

\begin{proof}
To see that in a topology every minimal open neighbourhood is least, just note that if $p\in A$ and $p\in B$ then $p\in A\cap B$.
So if $A$ and $B$ are two minimal open neighbourhoods then $A\cap B$ is contained in both and by minimality is equal to both.

To see that in a semitopology a minimal open neighbourhood need not be least, it suffices to provide an example.
Consider $(\ns P,\opens)$ defined as follows, as illustrated in Figure~\ref{fig.two.min}:
\begin{itemize*}
\item
$\ns P=\{0,1,2\}$
\item
$\opens = \bigl\{ \varnothing,\ \{0,1\},\ \{1,2\},\ \{0,1,2\} \bigr\}$
\end{itemize*}
Note that $1$ has two minimal open neighbourhoods: $\{0,1\}$ and $\{1,2\}$. 
\end{proof}

\begin{figure}
\vspace{-1em}
\centering
\includegraphics[align=c,width=0.4\columnwidth,trim={50 120 50 120},clip]{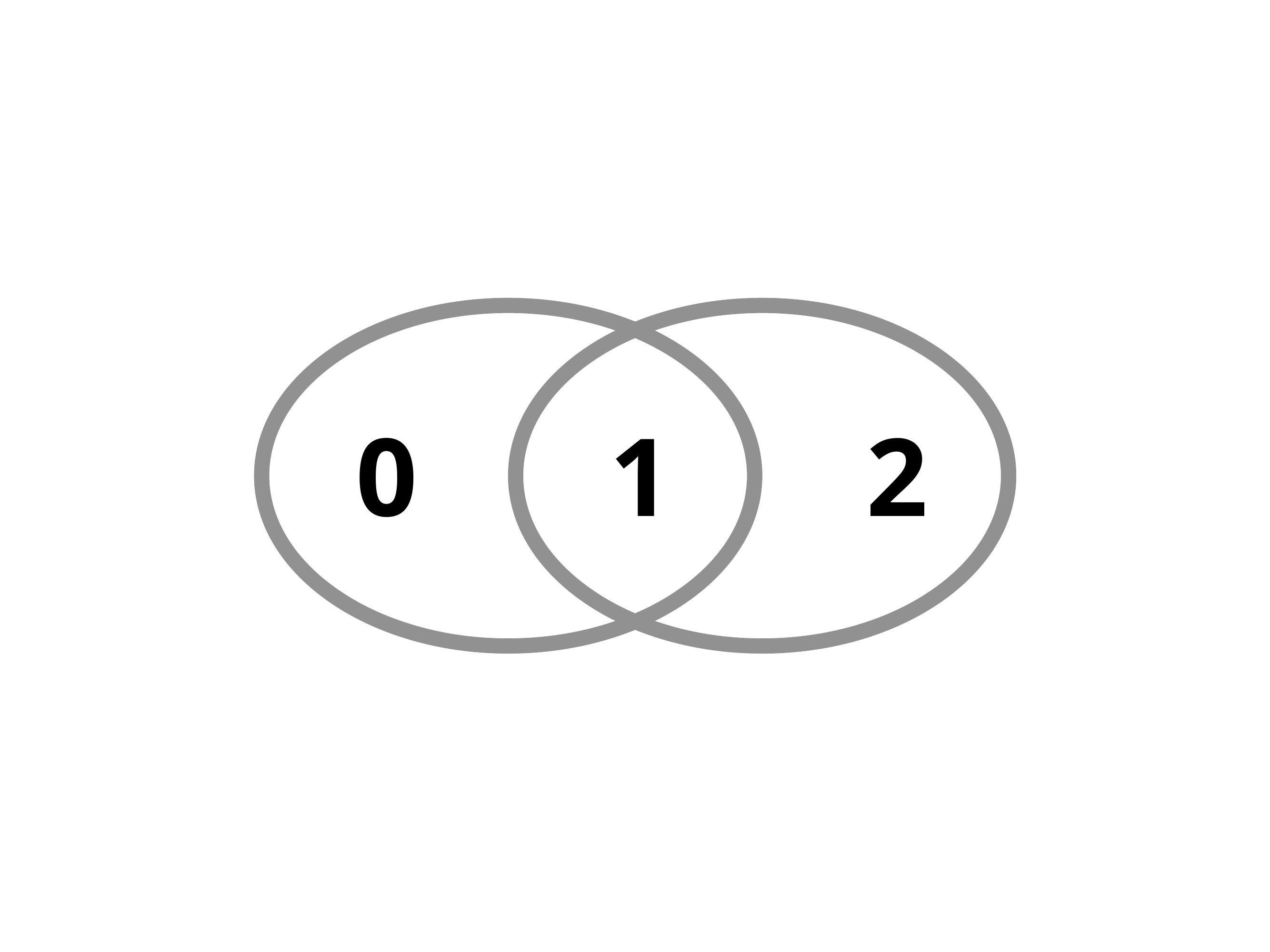}
\vspace{-1em}
\caption{An example of a point with two minimal open neighbourhoods (Lemma~\ref{lemm.two.min})}
\label{fig.two.min}
\end{figure}

\jamiesubsubsection{Examples}

As standard, we can make any set $\tf{Val}$ into a semitopology (indeed, it is also a topology) just by letting open sets be the powerset: 
\begin{defn}
\label{defn.value.assignment}
\leavevmode
\begin{enumerate*}
\item\label{item.discrete.semitopology}
Call $(\ns P,\powerset(\ns P))$ the \deffont{discrete semitopology on $\ns P$}.
 
We may call a set with the discrete semitopology a \deffont{semitopology of values}, and when we do we will usually call it $\tf{Val}$.
We may identify $\tf{Val}$-the-set and $\tf{Val}$-the-discrete-semitopology; meaning will always be clear.
\item\label{item.value.assignment}
When $(\ns P,\opens)$ is a semitopology and $\tf{Val}$ is a semitopology of values, we may call a function $f:\ns P\to\tf{Val}$ a \deffont[value assignment $f:\ns P\to\tf{Val}$]{value assignment}.

Note that a value just assigns values to points, and in particular we do not assume \emph{a priori} that it is continuous, where continuity is defined just as for topologies (see Definition~\ref{defn.continuity}).
\end{enumerate*} 
\end{defn}

\begin{xmpl}
\label{xmpl.semitopologies}
We consider further examples of semitopologies:
\begin{enumerate}
\item
Every topology is also a semitopology; intersections of open sets are allowed to be open in a semitopology, they are just not constrained to be open.
In particular, the discrete topology is also a discrete semitopology (Definition~\ref{defn.value.assignment}(\ref{item.discrete.semitopology})).
\item
The \deffont{initial semitopology} $(\varnothing,\{\varnothing\})$ and the \deffont{final semitopology} $(\{\ast\},\{\varnothing,\{\ast\}\})$ are semitopologies. 
\item\label{item.boolean.discrete}
An important discrete semitopological space is 
$$
\mathbb B=\{\bot,\top\}
\quad\text{with the discrete semitopology}\quad
\opens(\mathbb B)=\{\varnothing, \{\bot\},\{\top\},\{\bot,\top\}\}.
$$
We may silently treat $\mathbb B$ as a (discrete) semitopological space henceforth.
\item\label{item.trivial.topology}
Take $\ns P$ to be any nonempty set.
Let the \deffont[trivial semitopology]{trivial semitopology} (this is also a topology) on $\ns P$ have 
$$
\opens =\{\varnothing, \ns P\}.
$$
So (as usual) there are only two open sets: the one containing nothing, and the one containing every point.\footnote{According to Wikipedia, this space is also called \emph{indiscrete}, \emph{anti-discrete}, \emph{concrete}, and \emph{codiscrete} (\url{https://en.wikipedia.org/wiki/Trivial_topology}).}

The only nonempty open is $\ns P$ itself, reflecting a notion of actionable coalition that requires unanimous agreement. 
\item
Suppose $\ns P$ is a set and $\mathcal F\subseteq\powerset(\ns P)$ is nonempty and up-closed (so if $P\in\mathcal F$ and $P\subseteq P'\subseteq\ns P$ then $P'\in\mathcal F$, then $(\ns P,\mathcal F)$ is a semitopology.
This is not necessarily a topology, because we do not insist that $\mathcal F$ is a filter (i.e. is closed under intersections).

We give four sub-examples for different choices of $\mathcal P\subseteq\powerset(\ns P)$.
\begin{enumerate}
\item\label{item.supermajority}
Take $\ns P$ to be any finite nonempty set.
Let the \deffont{supermajority semitopology} have 
$$
\opens =\{\varnothing\}\cup\{O\subseteq\ns P \mid \f{cardinality}(O)\geq \nicefrac{2}{3}*\f{cardinality}(\ns P)\}.
$$
So $O$ is open when it contains at least two-thirds of the points.

Two-thirds is a typical threshold used for making progress in consensus algorithms.
\item
Take $\ns P$ to be any nonempty set.
Let the \deffont{many semitopology} have
$$
\opens = \{\varnothing\}\cup\{O\subseteq\ns P \mid \f{cardinality}(O)=\f{cardinality}(\ns P)\} .
$$
For example, if $\ns P=\mathbb N$ then open sets include $\f{evens}=\{2*n \mid n\in\mathbb N\}$ and $\f{odds}=\{2*n\plus 1 \mid n\in\mathbb N\}$.

Its notion of open set captures an idea that an actionable coalition is a set that may not be all of $\ns P$, but does at least biject with it.
\item\label{item.counterexample.X-x}
Take $\ns P$ to be any nonempty set.
Let the \deffont{all-but-one semitopology} have
$$
\opens = \{\varnothing,\ \ns P\}\cup\{\ns P\setminus \{p\}\mid p\in\ns P\} .
$$
This semitopology is not a topology.

The notion of actionable coalition here is that there may be at most one objector (but not two).
\item\label{item.counterexample.more-than-one}
Take $\ns P$ to be any set with cardinality at least $2$.
Let the \deffont{more-than-one semitopology} have
$$
\opens = \{\varnothing\}\cup\{O\subseteq\ns P \mid \f{cardinality}(O) \geq 2\} .
$$
This semitopology is not a topology.

This notion of actionable coalition reflects a security principle in banking and accounting (and elsewhere) of \emph{separation of duties}, that functional responsibilities be separated such that at least two people are required to complete an action --- so that errors (or worse) cannot be made without being discovered by another person.
\end{enumerate}
\item
Take $\ns P=\mathbb R$ (the set of real numbers) and let open sets be generated by intervals of the form $\rightopeninterval{0,r}$ or $\leftopeninterval{\minus r,0}$ for any strictly positive real number $r>0$.

This semitopology is not a topology, since (for example) $\leftopeninterval{1,0}$ and $\rightopeninterval{0,1}$ are open, but their intersection $\{0\}$ is not open.
\item\label{item.quorum.system}
In~\cite{naor:loacaq} a notion of \emph{quorum system} is discussed, defined as any collection of pairwise intersecting sets.
Quorum systems are a field of study in their own right, especially in the theory of concrete consensus algorithms.

Every quorum system gives rise naturally to a semitopology, just by closing under arbitrary unions.
We obtain what we will call an \emph{intertwined space} (Notation~\ref{nttn.intertwined.space}; a semitopology all of whose nonempty open sets intersect).\footnote{A topologist would call this a \emph{hyperconnected space}, but be careful! There are multiple such notions in semitopologies, so intuitions need not transfer over.  See the discussion in Subsection~\ref{subsection.topens.in.topologies}.}

Going in the other direction is interesting for a different reason, that it is slightly less canonical: of course every intertwined space is already a quorum system; but (for the finite case) we can also map to the set of all open covers of all points.

To give one specific example of a quorum system from~\cite{naor:loacaq}, consider $n\times n$ grid of cells with quorums being sets consisting of any full row and a full column; note that any two quorums must intersect in at least two points.
We obtain a semitopology just by closing under arbitrary unions.
\end{enumerate}
\end{xmpl}

\begin{rmrk}[Logical models of semitopologies]

\noindent One class of examples of semitopologies deserves its own discussion.
Consider an arbitrary logical system with predicates $\tf{Pred}$ and entailment relation $\cent$.\footnote{A validity relation $\ment$ would also work.}
Call $\Phi\subseteq\tf{Pred}$ \deffont[deductively closed (set of predicates)]{deductively closed} when $\Phi\cent\phi$ implies $\phi\in\Phi$.
Then take 
\begin{itemize*}
\item
$\ns P=\tf{Pred}$, and 
\item
let $O\in\opens$ be $\tf{Pred}$ or the complement to a deductively closed set $\Phi$, so $O=\tf{Pred}\setminus\Phi$.
\end{itemize*}
Note that an arbitrary union of open sets is open (because an arbitrary intersection of deductively closed sets is deductively closed), but an intersection of open sets need not be open (because the union of deductively closed sets need not be deductively closed).
This is a semitopology.
\end{rmrk}

\jamiesubsubsection{Why the name `semitopologies', and other discussion}

\begin{rmrk}[Why the name `semitopologies']
\label{rmrk.why.name.semitopologies}
When we give a name `semitopologies' to things that are like topologies but without intersections, this is a riff on 
\begin{itemize*}
\item
`semilattices', for things that are like lattices with joins but without meets (or vice-versa), and 
\item
`semigroups', for things that are like groups but without inverses.
\end{itemize*}
But, this terminology also reflects a real mathematical connection, because semitopologies \emph{are} semilattices \emph{are} semigroups, in standard ways which we take a moment to spell out: 
\begin{itemize*}
\item
A semitopology $(\ns P,\opens)$ is a bounded join subsemilattice of the powerset $\powerset(\ns P)$, by taking the join $\tor$ to be sets union $\cup$ and the bounds $\bot$ and $\top$ to be $\varnothing$ and $\ns P$ respectively. 
\item
A semilattice is an idempotent commutative monoid, which is an idempotent commutative semigroup with an identity, by taking the multiplication $\circ$ to be $\tor$ and the identity element to be $\bot$ ($\top$ becomes what is called a \emph{zero} or \emph{absorbing} element, such that $\top\circ x=\top$ always).
\end{itemize*} 
\end{rmrk}

\begin{figure}
\centering
\includegraphics[align=c,width=0.4\columnwidth,trim={50 0 50 0},clip]{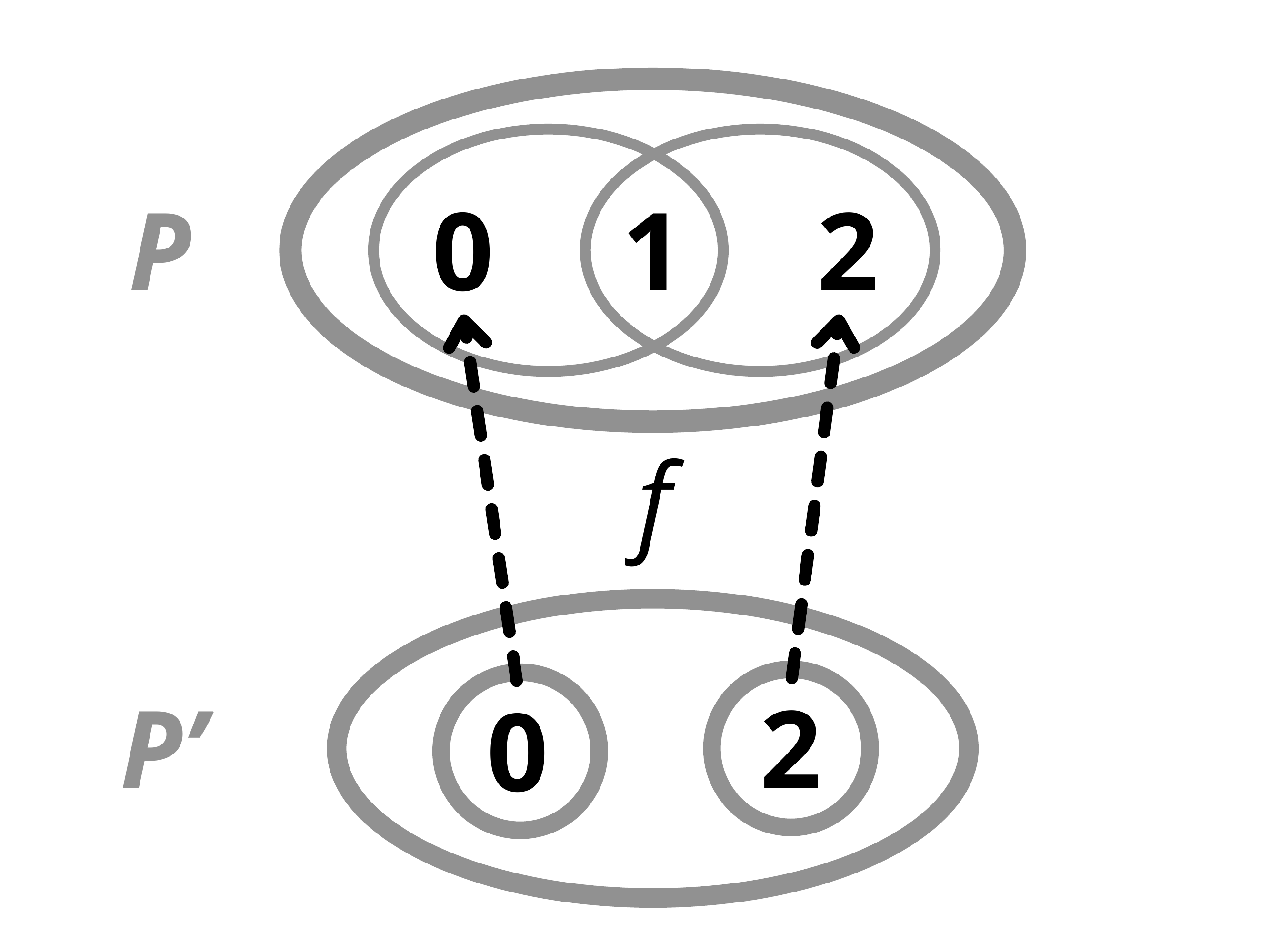}
\caption{Two nonidentical semitopologies (Remark~\ref{rmrk.PtoP})}
\label{fig.PtoP}
\end{figure}

\begin{rmrk}[Semitopologies are not \emph{just} semilattices]
\label{rmrk.PtoP}
We noted in Remark~\ref{rmrk.why.name.semitopologies} that every semitopology is a semilattice.
This is true, but the reader should not read this statement as reductive: semitopologies are not \emph{just} semilattices. 

To see why, consider the following two simple semitopologies, as illustrated in Figure~\ref{fig.PtoP}:
\begin{enumerate*}
\item
$(\ns P,\opens)$ where $\ns P=\{0,1,2\}$ and $\opens=\bigl\{\varnothing,\{0,1\},\{1,2\},\{0,1,2\}\bigr\}$.
\item
$(\ns P',\opens')$ where $\ns P=\{0,2\}$ and $\opens'=\bigl\{\varnothing,\{0\},\{2\},\{0,2\}\bigr\}$.
\end{enumerate*}
Note that the semilattices of open sets $\opens$ and $\opens'$ are isomorphic --- so, when viewed as semilattices these two semitopologies are the same (up to isomorphism).

However, $(\ns P,\opens)$ is not the same semitopology as $(\ns P',\opens')$.
There is more than one way to see this, but perhaps the simplest indication is that for every continuous $f:(\ns P,\opens)\to(\ns P',\opens')$, there is no continuous map $g:(\ns P',\opens')\to(\ns P,\opens)$ such that $g\circ f$ is the identity (we will define continuity in a moment in Definition~\ref{defn.continuity}(\ref{item.continuous.function}) but it is just as for topologies, so we take the liberty of using it here).
There are a limited number of possibilities for $f$ and $g$, and we can just enumerate them and check:
\begin{itemize*}
\item
If $f(0)=0$ and $f(2)=2$ and $g(1)=0$, then $g^\mone(\{2\})=\{2\}\not\oldin\opens$, and if $g(1)=1$ then $g^\mone(\{0\})=\{0\}\not\oldin\opens$. 
\item
If $f(0)=0$ and $f(2)=1$ and $g(1)=0$, then $g^\mone(\{2\})=\{1\}\not\oldin\opens$, and if $g(1)=2$ then $g^\mone(\{0\})=\{0\}\not\oldin\opens$. 
\item
Other possibilities are no harder.
\end{itemize*}
\llap{\phantom{$(\mathbb Q,\opens_{\mathbb Q})$ eliminate LaTeX bug in next para}} 
A similar observation holds for \emph{topologies}: for example, if we write $(\mathbb Q,\opens_{\mathbb Q})$ for the rational numbers with their usual open set topology, and $(\mathbb R,\opens_{\mathbb R})$ for the real numbers with their usual open set topology, then their topologies are isomorphic as lattices, with one direction of the isomorphism given just by $O\in \opens_{\mathbb R}$ maps to $O\cap \mathbb Q\oldin\opens_{\mathbb Q}$. 
This counterexample works for semitopologies too since every topology is also a semitopology.

However, we would still argue that the counterexample in Figure~\ref{fig.PtoP} is inherently stronger; not just because it is smaller (two and three points instead of countably and uncountably many) but also because --- while we can recover $\mathbb R$ from $\mathbb Q$ in a natural and canonical way by forming a completion --- the upper semitopology in Figure~\ref{fig.PtoP} is not \emph{a priori} canonically derived from the lower one.
The two semitopologies in Figure~\ref{fig.PtoP} seem to be distinct in some structural way, yet they still corresponding to the same semilattice, so we see that there is other structure here, which is not reflected by the pure semilattice derived from their open sets. 
\end{rmrk}

\begin{rmrk}[`Stronger' does not necessarily equal `better']
We conclude with some easy predictions about the theory of semitopologies, made just from general mathematical principles.
Fewer axioms means: 
\begin{enumerate*}
\item
\emph{more} models, 
\item
\emph{finer discrimination} between definitions, and 
\item
(because there are more models) \emph{more counterexamples}.
\end{enumerate*}
So we can expect a theory with the look-and-feel of topology, but with new models, new distinctions between definitions that in topology may be equivalent, and some new definitions, theorems, and counterexamples --- and this indeed will be the case.
 
Note that fewer axioms does not necessarily mean fewer interesting things to say and prove.
On the contrary: if we can make finer distinctions, there may also be more interesting things to prove; and furthermore, assumptions we make can become \emph{more} impactful in a weaker system, because these assumptions may exclude more models than would have been the case with more powerful axioms.

For example consider semigroup theory and group theory: every group is a semigroup, but both groups and semigroups have their own distinct character, literature, and applications. 
To take this to an extreme, consider the \emph{terminal} theory, which has just one first-order axiom: $\Exists{x}\Forall{y}x=y$.
This `subsumes' groups, lattices, graphs, and much besides, in the sense that every model of the terminal theory \emph{is} a group, a lattice, and a graph, in a natural way.  
But this theory is so strong, and its models so restricted (just the singleton model with one element) that there is not much left to say about it. 
Additional assumptions we may make on elements add literally nothing of value, because there was only one element to begin with!
\end{rmrk}

\jamiesubsection{Continuity, and its interpretation}
\label{subsect.continuity}

We can import the topological notion of continuity and it works fine in semitopologies, and the fact that there are no surprises is a feature. 
In Remark~\ref{rmrk.continuity=consensus} we explain how these notions matter to us:

\begin{defn}
\label{defn.continuity}
We import standard topological notions of inverse image and continuity:
\begin{enumerate}
\item
Suppose $\ns P$ and $\ns P'$ are any sets and $f:\ns P\to\ns P'$ is a function.
Suppose $O'\subseteq\ns P'$.
Then write $f^\mone(O')$ for the \deffont[inverse image $f^\mone(O')$]{inverse image} or \deffont[preimage $f^\mone(O')$]{preimage} of $O'$, defined by
$$
f^\mone(O')=\{p{\in}\ns P \mid f(p)\in O'\} . 
$$
\item\label{item.continuous.function}
Suppose $(\ns P,\opens)$ and $(\ns P',\opens')$ are semitopological spaces (Definition~\ref{defn.semitopology}).
Call a function $f:\ns P\to\ns P'$ \deffont[continuous function]{continuous} when the inverse image of an open set is open.
In symbols:
$$
\Forall{O'\in\opens'} f^\mone(O')\oldin\opens .
$$
\item\label{item.continuous.function.at.p}
Call a function $f:\ns P\to\ns P'$ \deffont[continuous function at a point]{continuous at $p\in\ns P$} when
$$
\Forall{O'{\in}\opens'}f(p)\in O'\limp \Exists{O_{p,O'}{\in}\opens}p\in O_{p,O'}\land O_{p,O'}\subseteq f^\mone(O') .
$$
In words: $f$ is continuous at $p$ when the inverse image of every open neighbourhood of $f(p)$ contains an open neighbourhood of $p$.
\item
Call a function $f:\ns P\to\ns P'$ \deffont[continuous function on a set]{continuous on $P\subseteq\ns P$} when $f$ is continuous at every $p\in P$.
\end{enumerate}
\end{defn}

\begin{lemm}
\label{lemm.alternative.cont}
Suppose $(\ns P,\opens)$ and $(\ns P',\opens')$ are semitopological spaces (Definition~\ref{defn.semitopology}) and suppose $f:\ns P\to\ns P'$ is a function.
Then the following are equivalent:
\begin{enumerate*}
\item
$f$ is continuous (Definition~\ref{defn.continuity}(\ref{item.continuous.function})).
\item
$f$ is continuous at every $p\in\ns P$ (Definition~\ref{defn.continuity}(\ref{item.continuous.function.at.p})).
\end{enumerate*}
\end{lemm}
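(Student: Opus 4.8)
The plan is to prove the two implications separately, following the familiar topological argument while checking at each step that it uses only arbitrary unions of open sets (Definition~\ref{defn.semitopology}, clause~\ref{semitopology.unions}) and never intersections. This is precisely the reason the equivalence survives the passage from topologies to semitopologies, and keeping it in view is the whole point of the exercise.

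First I would dispatch the direction from global to pointwise continuity, which is immediate. Assuming $f$ is continuous in the sense of Definition~\ref{defn.continuity}(\ref{item.continuous.function}), I fix $p\in\ns P$ and an open $O'\in\opens'$ with $f(p)\in O'$. Then $f^\mone(O')\in\opens$ by hypothesis, and $p\in f^\mone(O')$ since $f(p)\in O'$, so $f^\mone(O')$ itself serves as the witnessing open neighbourhood $O_{p,O'}$ demanded by Definition~\ref{defn.continuity}(\ref{item.continuous.function.at.p}): it contains $p$ and trivially satisfies $O_{p,O'}\subseteq f^\mone(O')$, with equality. As $p$ is arbitrary, $f$ is continuous at every point.

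For the converse I would assume $f$ is continuous at every $p\in\ns P$, fix an arbitrary $O'\in\opens'$, and show $f^\mone(O')\in\opens$. For each $p\in f^\mone(O')$ we have $f(p)\in O'$, so pointwise continuity supplies an open $O_{p,O'}\in\opens$ with $p\in O_{p,O'}\subseteq f^\mone(O')$. The crux is then the set-theoretic identity
$$
f^\mone(O')=\bigcup_{p\in f^\mone(O')}O_{p,O'},
$$
whose two inclusions I would verify directly: $\subseteq$ holds because each $p$ lies in its own $O_{p,O'}$, and $\supseteq$ holds because every $O_{p,O'}$ was chosen inside $f^\mone(O')$. The right-hand side is a union of open sets, hence open by clause~\ref{semitopology.unions}, so $f^\mone(O')\in\opens$, as required.

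The only obstacle worth flagging is conceptual rather than technical: one must confirm that the argument genuinely avoids intersections, since that is the operation semitopologies lack. This is in pointed contrast to the uniqueness of minimal neighbourhoods (Lemma~\ref{lemm.two.min}), whose topological proof relies essentially on $A\cap B$ being open and which indeed fails here. The present equivalence rests only on the union axiom, so it transfers unchanged; no other step requires care.
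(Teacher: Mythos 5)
Your proof is correct and follows essentially the same route as the paper: the forward direction takes $f^\mone(O')$ itself as the witnessing neighbourhood, and the converse writes $f^\mone(O')$ as the union of the pointwise witnesses $O_{p,O'}$ and appeals to closure under unions. The paper leaves the two inclusions as ``routine to check''; you spell them out, which is fine.
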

\begin{proof}
The top-down implication is immediate, taking $O=f^\mone(O')$.

For the bottom-up implication, given $p$ and an open neighbourhood $O'\ni f(p)$, we write
$$
O=\bigcup\{O_{p,O'}\in\opens \mid p\in\ns P,\ f(p)\in O'\}.
$$
Above, $O_{p,O'}$ is the open neighbourhood of $p$ in the preimage of $O'$, which we know exists by Definition~\ref{defn.continuity}(\ref{item.continuous.function.at.p}).

It is routine to check that $O= f^\mone(O')$, and since this is a union of open sets, it is open. 
\end{proof}

\begin{defn}
\label{defn.locally.constant}
Suppose that:
\begin{itemize*}
\item
$(\ns P,\opens)$ is a semitopology and 
\item
$\tf{Val}$ is a semitopology of values (Definition~\ref{defn.value.assignment}(\ref{item.discrete.semitopology})) and 
\item
$f:\ns P\to \tf{Val}$ is a value assignment (Definition~\ref{defn.value.assignment}(\ref{item.value.assignment}); an assignment of a value to each element in $\ns P$).
\end{itemize*}
Then:
\begin{enumerate*}
\item
Call $f$ \deffont[locally constant at a point]{locally constant at $p\in\ns P$} when there exists $p\in O_p\in\opens$ such that 
$$
\Forall{p'{\in}O_p}f(p)=f(p').
$$
So $f$ is locally constant at $p$ when it is constant on some open neighbourhood $O_p$ of $p$.
\item
Call $f$ \deffont[locally constant on a set]{locally constant} when it is locally constant at every $p\in\ns P$.
\end{enumerate*} 
\end{defn}

\begin{lemm}
\label{lemm.open.lc}
Suppose $(\ns P,\opens)$ is a semitopology and $\tf{Val}$ is a semitopology of values and $f:\ns P\to\tf{Val}$ is a value assignment.
Then the following are equivalent:
\begin{itemize*}
\item
$f$ is locally constant / locally constant at $p\in\ns P$ (Definition~\ref{defn.locally.constant}).
\item
$f$ is continuous / continuous at $p\in\ns P$ (Definition~\ref{defn.continuity}). 
\end{itemize*}
\end{lemm}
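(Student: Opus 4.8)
The plan is to establish the pointwise equivalence (``$f$ locally constant at $p$'' iff ``$f$ continuous at $p$'') and then lift it to the global statement by quantifying over all $p$, using Lemma~\ref{lemm.alternative.cont} to pass between continuity and pointwise continuity. The whole argument hinges on a single feature of the codomain: because $\tf{Val}$ carries the \emph{discrete} semitopology (Definition~\ref{defn.value.assignment}(\ref{item.discrete.semitopology})), every singleton $\{f(p)\}$ is open.

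For the forward direction (locally constant at $p$ implies continuous at $p$), I would take the witnessing open neighbourhood $O_p\ni p$ on which $f$ is constantly equal to $f(p)$, and show it serves as the neighbourhood required in Definition~\ref{defn.continuity}(\ref{item.continuous.function.at.p}): given any open $O'\ni f(p)$ in $\tf{Val}$, every $p'\in O_p$ satisfies $f(p')=f(p)\in O'$, so $O_p\subseteq f^\mone(O')$. Notably this direction does not use discreteness at all.

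For the reverse direction (continuous at $p$ implies locally constant at $p$), I would exploit discreteness directly. Since $\{f(p)\}$ is an open neighbourhood of $f(p)$, continuity at $p$ supplies an open $O_p\ni p$ with $O_p\subseteq f^\mone(\{f(p)\})$; unpacking this inclusion says exactly that $f(p')=f(p)$ for every $p'\in O_p$, which is local constancy at $p$. This is the only place where the hypothesis on $\tf{Val}$ enters, and it is the conceptual heart of the lemma: without open singletons, continuity would be too weak to force the value to be constant on a neighbourhood, so I expect this step to be the one genuinely carrying the content (even though it is short).

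Finally, for the global equivalence I would observe that ``$f$ locally constant'' and ``$f$ continuous at every point'' are, by definition, the conjunctions over all $p$ of the respective pointwise properties, so the pointwise equivalence lifts immediately; and Lemma~\ref{lemm.alternative.cont} identifies ``continuous at every point'' with ``continuous''. There is no real obstacle beyond bookkeeping here — the main thing to be careful about is keeping the pointwise and global versions cleanly separated so that the two ``/''-separated claims in the statement are both discharged.
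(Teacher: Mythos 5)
Your proposal is correct and follows essentially the same route as the paper's proof: both directions hinge on the fact that singletons are open in the discrete semitopology on $\tf{Val}$, so that $f^\mone(\{f(p)\})$ witnesses local constancy in one direction and the constancy neighbourhood witnesses continuity in the other. Your treatment is in fact slightly more careful than the paper's, which argues with global continuity (``$f^\mone(v)$ is open'') and leaves the pointwise variant and the appeal to Lemma~\ref{lemm.alternative.cont} implicit, whereas you separate the pointwise equivalence from its lift to the global statement explicitly.
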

\begin{proof}
This is just by pushing around definitions, but we spell it out:
\begin{itemize}
\item
Suppose $f$ is continuous, consider $p\in\ns P$, and write $v=f(p)$.
By our assumptions we know that $f^\mone(v)$ is open, and $p\in f^\mone(v)$.
This is an open neighbourhood $O_p$ on which $f$ is constant, so we are done.
\item
Suppose $f$ is locally constant, consider $p\in\ns P$, and write $v=f(p)$.
By assumption we can find $p\in O_p\in\opens$ on which $f$ is constant, so that $O_p\subseteq f^\mone(v)$.
\qedhere\end{itemize}
\end{proof}

\begin{rmrk}[Continuity = agreement]
\label{rmrk.continuity=consensus}
Lemma~\ref{lemm.open.lc} tells us that
we can view the problem of attaining agreement across an actionable coalition (as discussed in Subsection~\ref{subsect.what.is}) as being the same thing as computing a value assignment that is continuous on that coalition (and possibly elsewhere).

To see why, consider a semitopology $(\ns P, \opens)$ and following the intuitions discussed in Subsection~\ref{subsect.what.is} view points $p\in \ns P$ as \emph{participants}; and view open neighbourhoods $p\in O\in\opens$ as \deffont{actionable coalitions} that include $p$.
Then to say ``$f$ is a value assignment that is continuous at $p$'' is to say that:
\begin{itemize*}
\item
$f$ assigns a value or belief to $p\in\ns P$, and
\item
$p$ is part of a (by Lemma~\ref{lemm.open.lc} continuity) set of peers that agrees with $p$ and (being open) can progress to act on this agreement.
\end{itemize*}
Conceptually and mathematically this reduces the general question 
\begin{quote}
\emph{How can we model collaborative action?} 
\end{quote}
(which, to be fair, has more than one possible answer!) to a more specific research question
\begin{quote}
\emph{Understand continuous value assignments on semitopologies}.
\end{quote}
We then devote ourselves to elaborating (some of) a body of mathematics that we can pull out of this idea.
\end{rmrk}

\jamiesubsection{Neighbourhoods of a point}

Definition~\ref{defn.open.neighbourhood} is a standard notion from topology, and Lemma~\ref{lemm.open.is.open} is a (standard) characterisation of openness, which will be useful later: 

\begin{defn}
\label{defn.open.neighbourhood}
Suppose $(\ns P,\opens)$ is a semitopology and $p\in\ns P$ and $O\in\opens$.
Then call $O$ an \deffont{open neighbourhood} of $p$ when $p\in O$.

In other words: an open set is (by definition) an \emph{open neighbourhood} precisely for the points that it contains.
\end{defn}

\begin{lemm}
\label{lemm.open.is.open}
Suppose $(\ns P,\opens)$ is a semitopology and suppose $P\subseteq\ns P$ is any set of points.
Then the following are equivalent:
\begin{itemize*}
\item
$P\in\opens$.
\item
Every point $p$ in $P$ has an open neighbourhood in $P$. 
\end{itemize*}
In symbols we can write:
$$
\Forall{p{\in}P}\Exists{O{\in}\opens}(p\in O\land O\subseteq P)
\quad\text{if and only if}\quad
P\in\opens
$$
\end{lemm}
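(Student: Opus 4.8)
The plan is to prove the two implications separately, exploiting the fact that this characterisation only needs closure under unions (Definition~\ref{defn.semitopology}(\ref{semitopology.unions})) and \emph{not} closure under intersections --- which is precisely the property a semitopology retains. So I expect the standard topological argument to transfer verbatim.

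For the downward implication, from $P\in\opens$ to the pointwise condition, the argument is immediate: given any $p\in P$, I would take the witnessing open neighbourhood to be $P$ itself, since $p\in P$, $P\subseteq P$, and $P\in\opens$ by assumption.

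For the upward implication, from the pointwise condition to $P\in\opens$, I would reason as follows. By hypothesis, for each $p\in P$ I can pick some $O_p\in\opens$ with $p\in O_p$ and $O_p\subseteq P$. I then form the union $\bigcup_{p\in P}O_p$ and claim it equals $P$: the inclusion $\bigcup_{p\in P}O_p\subseteq P$ holds because every $O_p\subseteq P$, while the reverse inclusion $P\subseteq\bigcup_{p\in P}O_p$ holds because each $p\in P$ lies in its own chosen $O_p$. Since this is a union of members of $\opens$, closure under arbitrary unions (Definition~\ref{defn.semitopology}(\ref{semitopology.unions})) gives $\bigcup_{p\in P}O_p\in\opens$, and hence $P\in\opens$.

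The only point to watch --- rather than a genuine obstacle --- is the degenerate case $P=\varnothing$, where the pointwise condition holds vacuously and the union above is the empty union; but this is covered because $\varnothing\in\opens$ by Definition~\ref{defn.semitopology}(\ref{semitopology.empty.and.universe}) (indeed the empty union is open by convention). I expect no real difficulty here: the sole conceptual content is that the union step invokes exactly the one closure property semitopologies keep, so nothing depends on intersections being open.
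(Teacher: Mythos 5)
Your proposal is correct and follows essentially the same argument as the paper: $P$ itself witnesses the downward implication, and the upward implication writes $P=\bigcup_{p\in P}O_p$ and invokes closure under unions (Definition~\ref{defn.semitopology}(\ref{semitopology.unions})). Your extra remark on the empty case is a harmless elaboration the paper leaves implicit.
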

\begin{proof}
If $P$ is open then $P$ itself is an open neighbourhood for every point that it contains. 

Conversely, if every $p\in P$ contains some open neighbourhood $p\in O_p \subseteq P$ then $P=\bigcup\{O_p\mid p\in P\}$ and this is open by condition~\ref{semitopology.unions} of Definition~\ref{defn.semitopology}.
\end{proof}

\begin{rmrk}
An initial inspiration for modelling collaborative action using semitopologies, came from noting that the standard topological property described above in Lemma~\ref{lemm.open.is.open}, corresponds to the \emph{quorum sharing} property in \cite[Property~1]{losa:stecbi}; the connection to topological ideas had not been noticed in~\cite{losa:stecbi}.
\end{rmrk}

\jamiesection{Transitive sets \& topens}
\label{sect.transitive.sets}

\jamiesubsection{Some background on sets intersection}

Some notation will be convenient:
\begin{nttn}
\label{nttn.between}
Suppose $X$, $Y$, and $Z$ are sets.
\begin{enumerate*}
\item\label{item.between}
Write 
$$
X\between Y
\quad\text{when}\quad 
X\cap Y\neq\varnothing.
$$
When $X\between Y$ holds then we say (as standard) that $X$ and $Y$ \deffont[intersecting sets $X\between Y$]{intersect}.\index{$X\between Y$ (intersection of sets)}
\item
We may chain the $\between$ notation, writing for example 
$$
X\between Y\between Z
\quad\text{for}\quad
X\between Y\ \land \  Y\between Z
$$
\item
We may write $X\notbetween Y$ for $\neg(X\between Y)$, thus $X\notbetween Y$ when $X\cap Y=\varnothing$.
\end{enumerate*}
\end{nttn}

\begin{rmrk}
\emph{Note on design in Notation~\ref{nttn.between}:}
It is uncontroversial that if $X\neq\varnothing$ and $Y\neq\varnothing$ then $X\between Y$ should hold precisely when $X\cap Y\neq\varnothing$ --- but there is an edge case! 
What truth-value should $X\between Y$ return when $X$ or $Y$ is empty?
\begin{enumerate*}
\item
It might be nice if $X\subseteq Y$ would imply $X\between Y$.
This argues for setting 
$$
(X=\varnothing\lor Y=\varnothing)\limp X\between Y .
$$
\item
It might be nice if $X\between Y$ were monotone on both arguments (i.e. if $X\between Y$ and $X\subseteq X'$ then $X'\between Y$).
This argues for setting 
$$
(X=\varnothing\lor Y=\varnothing)\limp X\notbetween Y .
$$
\item
It might be nice if $X\between X$ always --- after all, should a set \emph{not} intersect itself? --- and this argues for setting 
$$
\varnothing\between\varnothing ,
$$ 
even if we also set $\varnothing\notbetween Y$ for nonempty $Y$. 
\end{enumerate*}
All three choices are defensible, and they are consistent with the following nice property:
$$
X\between Y \limp (X\between X \lor Y\between Y) . 
$$
We choose the second --- if $X$ or $Y$ is empty then $X\notbetween Y$ --- because it gives the simplest definition that $X\between Y$ precisely when $X\cap Y\neq\varnothing$.
\end{rmrk}

We list some elementary properties of $\between$ from Notation~\ref{nttn.between}(\ref{item.between}):
\begin{lemm}
\label{lemm.between.elementary}
\leavevmode
\begin{enumerate*}
\item\label{item.between.nonempty}
$X\between X$ if and only if $X\neq\varnothing$.
\item\label{item.between.symmetric}
$X\between Y$ if and only if $Y\between X$.
\item\label{between.elementary.either.or}
$X\between (Y\cup Z)$ if and only if $(X\between Y) \lor (X\between Z)$.
\item\label{between.subset}
If $X\subseteq X'$ and $X\neq\varnothing$ then $X\between X'$.
\item\label{between.monotone}
Suppose $X\between Y$.
Then $X\subseteq X'$ implies $X'\between Y$, and $Y\subseteq Y'$ implies $X\between Y'$. 
\item\label{between.nonempty}
If $X\between Y$ then $X\neq\varnothing$ and $Y\neq\varnothing$.
\end{enumerate*}
\end{lemm}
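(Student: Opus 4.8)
The plan is to unfold the definition from Notation~\ref{nttn.between}(\ref{item.between}) --- namely that $X\between Y$ holds exactly when $X\cap Y\neq\varnothing$ --- and then reduce each of the six clauses to an elementary fact about sets intersection. Under the convention adopted above (no special-casing of empty arguments), every clause becomes a direct consequence of basic set theory, so I expect the whole proof to be a short sequence of one-line verifications rather than anything requiring real machinery.

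For the first three clauses I would argue by rewriting the intersection. Clause~1 follows from $X\cap X=X$, so $X\cap X\neq\varnothing$ iff $X\neq\varnothing$. Clause~2 is immediate from commutativity of $\cap$. Clause~3 is the only one that uses a nontrivial identity: distributivity gives $X\cap(Y\cup Z)=(X\cap Y)\cup(X\cap Z)$, and a union of two sets is nonempty exactly when at least one of the two is nonempty, which yields the claimed disjunction.

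The remaining clauses I would prove by element-chasing. For clause~6, if $X\between Y$ then I pick a witness $z\in X\cap Y$; since $z\in X$ and $z\in Y$, both sets are nonempty. For clause~4, nonemptiness of $X$ supplies some $x\in X\subseteq X'$, whence $x\in X\cap X'$ and so $X\between X'$. For clause~5, a witness $z\in X\cap Y$ still lies in $X'\cap Y$ once $X\subseteq X'$ (and symmetrically in $X\cap Y'$ once $Y\subseteq Y'$), giving both monotonicity statements. I note in passing that clause~4 is also just the right-hand monotonicity of clause~5 applied to $X\between X$ (valid by clause~1) with the right argument enlarged from $X$ to $X'$, but the direct witness argument is cleaner and self-contained.

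There is no genuine obstacle here. The only point that requires any care is the empty-set edge case, and this is already pinned down by the design decision recorded above that $X\between Y$ means \emph{precisely} $X\cap Y\neq\varnothing$; with that fixed, each clause reduces to a routine manipulation and the lemma follows.
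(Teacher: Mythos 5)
Your proof is correct and takes essentially the same route as the paper, which simply states that the lemma holds ``by facts of sets intersection''; your element-chasing and rewriting arguments are a faithful elaboration of exactly that. Nothing is missing.
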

\begin{proof}
By facts of sets intersection.
\end{proof}

\jamiesubsection{Transitive open sets and value assignments}

\begin{defn}
\label{defn.transitive}
Suppose $(\ns P,\opens)$ is a semitopology.
Suppose $\atopen\subseteq\ns P$ is any set of points.
\begin{enumerate*}
\item\label{transitive.transitive}
Call $\atopen$ \deffont[transitive set]{transitive} when 
$$
\Forall{O,O'{\in}\opens} O\between \atopen \between O' \limp O\between O'. 
$$
\item\label{transitive.cc}
Call $\atopen$ \deffont[topen set]{topen} when $\atopen$ is nonempty transitive and open.\footnote{%
The empty set is trivially transitive and open, so it would make sense to admit it as a (degenerate) topen.  However, it turns out that we mostly need the notion of `topen' to refer to certain kinds of neighbourhoods of points (we will call them \emph{communities}; see Definition~\ref{defn.tn}).  It is therefore convenient to exclude the empty set from being topen, because while it is the neighbourhood of every point that it contains, it is not a neighbourhood of any point.} 

We may write 
$$
\topens=\{ \atopen\in\opens_{\neq\varnothing} \mid \atopen\text{ is transitive}\} .
$$
\item\label{transitive.max.cc}
Call $S$ a \deffont[maximal topen set]{maximal topen} when $S$ is a topen that is not a subset of any strictly larger topen.\footnote{`Transitive open' $\to$ `topen', like `closed and open' $\to$ `clopen'.

For convenient reference, note that related notions of \emph{strong} transitivity and topen are in Definition~\ref{defn.strongly.transitive}.}
\end{enumerate*}
\end{defn}

Theorem~\ref{thrm.correlated} clarifies why transitivity is interesting: continuous value assignments are constant --- if we think of points as participants, `constant function' here means `in agreement' --- across transitive sets.
\begin{thrm}
\label{thrm.correlated}
Suppose that:
\begin{itemize*}
\item
$(\ns P,\opens)$ is a semitopology.
\item
$\tf{Val}$ is a semitopology of values (a nonempty set with the discrete semitopology; see Definition~\ref{defn.value.assignment}(\ref{item.discrete.semitopology})). 
\item
$f:\ns P\to\tf{Val}$ is a value assignment (Definition~\ref{defn.value.assignment}(\ref{item.value.assignment})). 
\item
$T\subseteq\ns P$ is a transitive set (Definition~\ref{defn.transitive}) --- in particular this will hold if $\atopen$ is topen --- and $p,p'\in T$.
\end{itemize*} 
Then:
\begin{enumerate*}
\item\label{item.correlated.1}
If $f$ is continuous at $p$ and $p'$ then $f(p)=f(p')$.
\item\label{item.correlated.2}
As a corollary, if $f$ is continuous on $\atopen$, then $f$ is constant on $\atopen$.
\end{enumerate*}
In words we can say: 
\begin{quote}
Continuous value assignments are constant across transitive sets.
\end{quote}
\end{thrm}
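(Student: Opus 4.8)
The plan is to reduce continuity to local constancy using Lemma~\ref{lemm.open.lc}, and then to exploit transitivity of $T$ to force the two local values to coincide at a single shared point. The whole argument is driven by the observation that, because $\tf{Val}$ carries the discrete semitopology, continuity at a point is the same as being constant on some open neighbourhood of that point.

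First I would prove part~(\ref{item.correlated.1}). Suppose $f$ is continuous at $p$ and at $p'$. By Lemma~\ref{lemm.open.lc} this means $f$ is locally constant at $p$ and at $p'$, so there exist open neighbourhoods $p\in O_p\in\opens$ and $p'\in O_{p'}\in\opens$ on which $f$ takes the constant values $f(p)$ and $f(p')$ respectively. Now I would record the key intersections: since $p\in T\cap O_p$ we have $O_p\between T$, and since $p'\in T\cap O_{p'}$ we have $T\between O_{p'}$; chaining these gives $O_p\between T\between O_{p'}$. Applying transitivity of $T$ (Definition~\ref{defn.transitive}(\ref{transitive.transitive})) yields $O_p\between O_{p'}$, so there is a witness point $q\in O_p\cap O_{p'}$.

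The conclusion then follows immediately: local constancy on $O_p$ gives $f(q)=f(p)$, and local constancy on $O_{p'}$ gives $f(q)=f(p')$, whence $f(p)=f(q)=f(p')$. For the corollary~(\ref{item.correlated.2}), if $f$ is continuous on $T$ then by definition it is continuous at every point of $T$, so part~(\ref{item.correlated.1}) applies to every pair $p,p'\in T$, giving that $f$ is constant on $T$.

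I do not expect a serious obstacle here; the mathematical content is concentrated in the single translation from continuity to local constancy and the one application of transitivity that produces the shared point $q$. The only point requiring a little care is the empty-set edge case built into $\between$ (Notation~\ref{nttn.between}): but since $O_p$, $O_{p'}$, and $T$ all contain witnessing points ($p$ or $p'$), every set appearing in the $\between$ chain is nonempty, so the convention causes no difficulty and the chaining is legitimate.
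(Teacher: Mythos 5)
Your proposal is correct and is essentially the paper's own proof: both arguments extract open neighbourhoods of $p$ and $p'$ on which $f$ is constant (you via Lemma~\ref{lemm.open.lc}, the paper by unfolding continuity at a point directly into $p\in O\subseteq f^\mone(f(p))$), chain $O\between T\between O'$, apply transitivity to obtain a common witness point, and read off $f(p)=f(p')$. The treatment of part~(\ref{item.correlated.2}) as an immediate consequence of part~(\ref{item.correlated.1}) also matches the paper.
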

\begin{proof}
Part~\ref{item.correlated.2} follows from part~\ref{item.correlated.1} since if $f(p)=f(p')$ for \emph{any} $p,p'\in T$, then by definition $f$ is constant on $\atopen$.
So we now just need to prove part~\ref{item.correlated.1} of this result.

Consider $p,p'\in T$.
By continuity on $\atopen$, there exist open neighbourhoods $p\in O\subseteq f^\mone(f(p))$ and $p'\in O'\subseteq f^\mone(f(p'))$.
By construction $O\between \atopen \between O'$ (because $p\in O\cap T$ and $p'\in T\cap O'$).
By transitivity of $\atopen$ it follows that $O\between O'$. 
Thus, there exists $p''\in O\cap O'$, and by construction $f(p) = f(p'') = f(p')$.
\end{proof}

Corollary~\ref{corr.correlated.intersect} is an easy and useful consequence of Theorem~\ref{thrm.correlated}:
\begin{corr}
\label{corr.correlated.intersect}
Suppose that:
\begin{itemize*}
\item
$(\ns P,\opens)$ is a semitopology. 
\item
$f:\ns P\to \tf{Val}$ is a value assignment to some set of values $\tf{Val}$ (Definition~\ref{defn.value.assignment}). 
\item
$f$ is continuous on topen sets $\atopen, \atopen'\in\topens$.
\end{itemize*}
Then 
$$
\atopen\between \atopen'
\quad\text{implies}\quad 
\Forall{p\in\atopen,p'\in\atopen'} f(p)=f(p').
$$
\end{corr}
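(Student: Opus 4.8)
The plan is to reduce this to two applications of Theorem~\ref{thrm.correlated}, bridged through the shared intersection point. The key observation is that a topen is by definition nonempty, transitive, and open (Definition~\ref{defn.transitive}(\ref{transitive.cc})), so in particular each of $\atopen$ and $\atopen'$ is a transitive set on which $f$ is assumed continuous. That is exactly the hypothesis needed to invoke Theorem~\ref{thrm.correlated}(\ref{item.correlated.2}), which tells us that a continuous value assignment is constant on any transitive set.

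Concretely, I would first apply Theorem~\ref{thrm.correlated}(\ref{item.correlated.2}) to $\atopen$ to conclude that $f$ is constant on $\atopen$; call its common value $v\in\tf{Val}$, so that $f(p)=v$ for every $p\in\atopen$. Symmetrically, applying the same theorem to $\atopen'$ yields a value $v'\in\tf{Val}$ with $f(p')=v'$ for every $p'\in\atopen'$. At this stage the two topens are each internally in agreement, and all that remains is to show $v=v'$.

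This is where the hypothesis $\atopen\between\atopen'$ does its work. By Notation~\ref{nttn.between}(\ref{item.between}), $\atopen\between\atopen'$ means $\atopen\cap\atopen'\neq\varnothing$, so I may fix some witness $q\in\atopen\cap\atopen'$. Since $q\in\atopen$ we have $f(q)=v$, and since $q\in\atopen'$ we have $f(q)=v'$; hence $v=f(q)=v'$. It then follows that for arbitrary $p\in\atopen$ and $p'\in\atopen'$ we have $f(p)=v=v'=f(p')$, which is exactly the desired conclusion.

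I do not expect a genuine obstacle here: the result is an easy corollary, as advertised. The only point worth flagging is a temptation to argue by applying Theorem~\ref{thrm.correlated} to the union $\atopen\cup\atopen'$ directly — but that union need not be transitive even when the two topens intersect, so that shortcut is not available. Routing the argument through a single shared point $q$, rather than through transitivity of the union, is what makes the proof clean and correct.
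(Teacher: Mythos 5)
Your proof is correct and is essentially the paper's own argument, just spelled out: the paper likewise applies Theorem~\ref{thrm.correlated} to conclude $f$ is constant on each of $\atopen$ and $\atopen'$ and then uses the nonempty intersection to identify the two constants. One small correction to your closing aside: the shortcut via the union \emph{is} in fact available, since Lemma~\ref{lemm.transitive.transitive}(\ref{item.transitive.transitive.2}) (equivalently Lemma~\ref{lemm.cc.unions}(\ref{item.intersecting.pair.of.topens})) shows that the union of two intersecting topens is transitive (indeed topen) --- your claim that it ``need not be transitive'' is false for intersecting topens, though your chosen route through a shared witness point is perfectly fine and arguably cleaner.
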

\begin{proof}
By Theorem~\ref{thrm.correlated} $f$ is constant on $\atopen$ and $\atopen'$.
We assumed that $\atopen$ and $\atopen'$ intersect, and the result follows.
\end{proof}

A converse to Theorem~\ref{thrm.correlated} also holds:
\begin{prop}
\label{prop.correlated.converse}
Suppose that:
\begin{itemize*}
\item
$(\ns P,\opens)$ is a semitopology.
\item
$\tf{Val}$ is a semitopology of values with at least two elements (to exclude a degenerate case that no functions exist, or they exist but there is only one because there is only one value to map to).
\item
$T\subseteq\ns P$ is any set. 
\end{itemize*} 
Then 
\begin{itemize*}
\item
\emph{if} for every $p,p'\in T$ and every value assignment $f:\ns P\to\tf{Val}$, $f$ continuous at $p$ and $p'$ implies $f(p)=f(p')$, 
\item
\emph{then} $\atopen$ is transitive.
\end{itemize*}
\end{prop}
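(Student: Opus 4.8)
The plan is to prove the contrapositive: assuming $\atopen$ is \emph{not} transitive, I will construct a value assignment that violates the hypothesis. So suppose $\atopen$ is not transitive. By Definition~\ref{defn.transitive}(\ref{transitive.transitive}) there exist $O,O'\in\opens$ with $O\between \atopen\between O'$ but $O\notbetween O'$, i.e.\ $O\cap O'=\varnothing$. From $O\between \atopen$ I pick a witness $p\in O\cap \atopen$, and from $\atopen\between O'$ a witness $p'\in \atopen\cap O'$; note that $p,p'\in \atopen$, so these are exactly the two points the hypothesis quantifies over.

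Next I would fix two distinct values $v_0,v_1\in\tf{Val}$ (available since $\tf{Val}$ has at least two elements) and define $f:\ns P\to\tf{Val}$ by $f(q)=v_1$ for $q\in O'$ and $f(q)=v_0$ for all other $q$. The key point is that this is well-defined and constant on each of $O$ and $O'$ precisely because $O\cap O'=\varnothing$: no point of $O$ lies in $O'$, so $f$ takes the constant value $v_0$ on the open set $O\ni p$, while it takes the constant value $v_1$ on the open set $O'\ni p'$.

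Then I invoke Lemma~\ref{lemm.open.lc}, which identifies continuity at a point with local constancy at that point. Since $O$ is an open neighbourhood of $p$ on which $f$ is constant, $f$ is continuous at $p$; since $O'$ is an open neighbourhood of $p'$ on which $f$ is constant, $f$ is continuous at $p'$. But $f(p)=v_0\neq v_1=f(p')$, which contradicts the assumed implication that $f$ continuous at $p$ and $p'$ forces $f(p)=f(p')$. Hence $\atopen$ must be transitive, as required.

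I do not anticipate a genuine obstacle; the proof is a direct construction. The only two points requiring care are (i) the well-definedness and local constancy of $f$, which is supplied exactly by the non-intersection $O\cap O'=\varnothing$ inherited from the failure of transitivity, and (ii) the passage from the pointwise continuity condition in the hypothesis to a concrete witnessing function, which is handled cleanly by Lemma~\ref{lemm.open.lc} so that constancy of $f$ on $O$ and on $O'$ is enough to guarantee continuity at $p$ and $p'$ without imposing any global constraint on $f$.
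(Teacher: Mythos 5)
Your proof is correct and follows essentially the same route as the paper's: prove the contrapositive, use the disjoint $O$ and $O'$ to build a two-valued assignment that is constant on each, and invoke local constancy for continuity at the two witness points. If anything you are slightly more careful than the paper, since you explicitly choose the witnesses $p\in O\cap\atopen$ and $p'\in\atopen\cap O'$ inside $\atopen$, which is needed for the hypothesis to apply.
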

\begin{proof}
We prove the contrapositive. 
Suppose $\atopen$ is not transitive, so there exist $O,O'\in\opens$ such that $O\between \atopen\between O'$ and yet $O\cap O'=\varnothing$.
We choose two distinct values $v\neq v'\in\tf{Val}$ and define $f$ to map any point in $O$ to $v$ and any point in $\ns P\setminus O$ to $v'$.

Choose some $p\in O$ and $p'\in O'$.
It does not matter which, and some such $p$ and $p'$ exist, because $O$ and $O'$ are nonempty by Lemma~\ref{lemm.between.elementary}(\ref{between.nonempty}), since $O\between\atopen$ and $O'\between\atopen$).

We note that $f(p)=v$ and $f(p')=v'$ and $f$ is continuous at $p\in O$ and $p'\in O'\subseteq\ns P\setminus O$, yet $f(p)\neq f(p')$.
\end{proof}

We can sum up what Theorem~\ref{thrm.correlated} and Proposition~\ref{prop.correlated.converse} mean, as follows:
\begin{rmrk}
\label{rmrk.transitive.correlated}
Suppose $(\ns P,\opens)$ is a semitopology and $\tf{Val}$ is a semitopology of values with at least two elements.
Say that a value assignment $f:\ns P\to\tf{Val}$ \deffont[splits (value assignment splits a set)]{splits} a set $T\subseteq\ns P$ when there exist $p,p'\in T$ such that $f$ is continuous at $p$ and $p'$ and $f(p)\neq f(p')$. 
Then Theorem~\ref{thrm.correlated} and Proposition~\ref{prop.correlated.converse} together say in words that: 
\begin{quote}
$T\subseteq\ns P$ is transitive if and only if it cannot be split by a value assignment that is continuous on $T$. 
\end{quote}
Intuitively, transitive sets characterise areas of guaranteed agreement.

This reminds us of a basic result in topology about \emph{connected spaces}~\cite[Chapter~8, section~26]{willard:gent}.
Call a topological space $(\ns T,\opens)$ \deffont[disconnected (semi)topology]{disconnected} when there exist open sets $O,O'\in\opens$ such that $O\cap O'=\varnothing$ (in our notation: $O\notbetween O'$) and $O\cup O'=\ns T$; otherwise call $(\ns T,\opens)$ \deffont[connected (semi)topology]{connected}.
Then $(\ns T,\opens)$ is disconnected if and only if (in our terminology above) it can be split by a value assignment. 
Theorem~\ref{thrm.correlated} and Proposition~\ref{prop.correlated.converse} are not identical to that result, but they are in the same spirit. 
\end{rmrk}

\begin{rmrk}
\label{rmrk.transitive.comment}
The notion of transitive set gives us enough to comment on the two examples in Subsection~\ref{subsect.what.is}.
Recall that we considered:
\begin{enumerate*}
\item
A nonempty finite set $\mathbb E$ with open sets $\opens(\mathbb E)$ (`actionable coalitions') being majority subsets $O\subseteq\mathbb E$.
\item
Integers $\mathbb Z$ with open sets $\opens(\mathbb Z)$ generated by triplets $\{2i,2i\plus 1,2i\plus 2\}$.
\end{enumerate*}
The reader can check that in $(\mathbb E,\opens(\mathbb E))$ \emph{every} set is transitive, because every pair of nonempty open sets intersect; thus, no $T\subseteq\mathbb E$ can be split by a value assignment that is continuous on $T$. 
In contrast, the reader can check that in $(\mathbb Z,\opens(\mathbb Z))$, most sets are not transitive, including (for example) $\{0,4\}$. 
This lack of transitivity reflects an intuitive observation we made in Subsection~\ref{subsect.what.is} that our second example was `not necessarily particularly safe or desirable in practice'; in our more technical language, we can now note that there exists a value assignment that splits $\{0,4\}$, yet is continuous at $0$ and $4$.
What $(\mathbb Z,\opens(\mathbb Z))$ does satisfy is the weaker (but still useful!) safety property that any continuous value assignment that is continuous everywhere, is constant (corresponding to our informal observation that ``\emph{if} all participants do legally progress, then they announce the same value'').\footnote{We can be more precise if we like: e.g. $T$ cannot be split by a value assignment that is continuous on a contiguous segment of $\mathbb Z$ that includes $T$.  Continuity on all of $\mathbb Z$ is one sufficient condition for this, which corresponds (in the language of consensus) to assuming that all participants are correct.  But we digress.}
This reflects a useful intuition, that the topological notion of `continuity at a point', corresponds to an intuition of $p$ as a participant `following the rules'.
\end{rmrk}

\jamiesubsection{Examples and discussion of transitive sets and topens}

We may routinely order sets by subset inclusion; including open sets, topens, closed sets, and so on, and we may talk about maximal, minimal, greatest, and least elements.
We include the (standard) definition for reference: 
\begin{nttn}
\label{nttn.min.max}
Suppose $(\ns P,\leq)$ is a poset.
Then:
\begin{enumerate*}
\item
Call $p\in\ns P$ \deffont[maximal element (in poset)]{maximal} when $\Forall{p'}p{\leq}p'\limp p'=p$ and \deffont[minimal element (in poset)]{minimal} when $\Forall{p'}p'{\leq}p\limp p'=p$.
\item
Call $p\in\ns P$ \deffont[greatest element (in poset)]{greatest} when $\Forall{p}p'\leq p$ and \deffont[least element (in poset)]{least} when $\Forall{p'}p\leq p'$.
\end{enumerate*}
\end{nttn}

\begin{xmpl}[Examples of transitive sets]
\label{xmpl.singleton.transitive}
\leavevmode
\begin{enumerate*}
\item\label{item.singleton.transitive}
$\{p\}$ is transitive, for any single point $p\in\ns P$. 
\item
The empty set $\varnothing$ is (trivially) transitive.
It is not topen because we insist in Definition~\ref{defn.transitive}(\ref{transitive.cc}) that topens are nonempty.
\item
Call a set $P\subseteq\ns P$ \emph{topologically indistinguishable} when (using Notation~\ref{nttn.between}) for every open set $O$, 
$$
P\between O\liff P\subseteq O .
$$ 
It is easy to check that if $P$ is topologically indistinguishable, then it is transitive.
\end{enumerate*} 
\end{xmpl}

\begin{xmpl}[Examples of topens]
\label{xmpl.cc}
\leavevmode
\begin{enumerate*}
\item\label{item.cc.two.regular}
Take $\ns P=\{0, 1, 2\}$, with open sets $\varnothing$, $\ns P$, $\{0\}$, and $\{2\}$. 
This has two maximal topens $\{0\}$ and $\{2\}$  as illustrated in Figure~\ref{fig.012} (top-left diagram). 
\item\label{item.cc.two.regular.b}
Take $\ns P=\{0, 1, 2\}$, with open sets $\varnothing$, $\ns P$, $\{0\}$, $\{0, 1\}$, $\{2\}$, $\{1,2\}$, and $\{0,2\}$. 
This has two maximal topens $\{0\}$ and $\{2\}$, as illustrated in Figure~\ref{fig.012} (top-right diagram). 
\item\label{item.xmpl.cc.3}
Take $\ns P=\{0,1,2,3,4\}$, with open sets generated by $\{0, 1\}$, $\{1\}$, $\{3\}$, and $\{3,4\}$.
This has two maximal topens $\{0,1\}$ and $\{3,4\}$, as illustrated in Figure~\ref{fig.012} (lower-left diagram). 
\item\label{item.xmpl.cc.4}
Take $\ns P=\{0,1,2,\ast\}$, with open sets generated by $\{0\}$, $\{1\}$, $\{2\}$, $\{0, 1,\ast\}$, and $\{1,2,\ast\}$.
This has three maximal topens $\{0\}$, $\{1\}$, and $\{2\}$, as illustrated in Figure~\ref{fig.012} (lower-right diagram). 
\item
Take the all-but-one semitopology from Example~\ref{xmpl.semitopologies}(\ref{item.counterexample.X-x}) on $\mathbb N$: so $\ns P=\mathbb N$ with opens $\varnothing$, $\mathbb N$, and $\mathbb N\setminus \{x\}$ for every $x\in\mathbb N$.
This has a single maximal topen $\mathbb N$.
\item
The semitopology in Figure~\ref{fig.square.diagram} has no topen sets at all ($\varnothing$ is transitive and open, but by definition in Definition~\ref{defn.transitive}(\ref{transitive.cc}) topens have to be nonempty).
\end{enumerate*}
\end{xmpl}

\begin{figure}
\centering
\includegraphics[align=c,width=0.4\columnwidth,trim={50 60 50 120},clip]{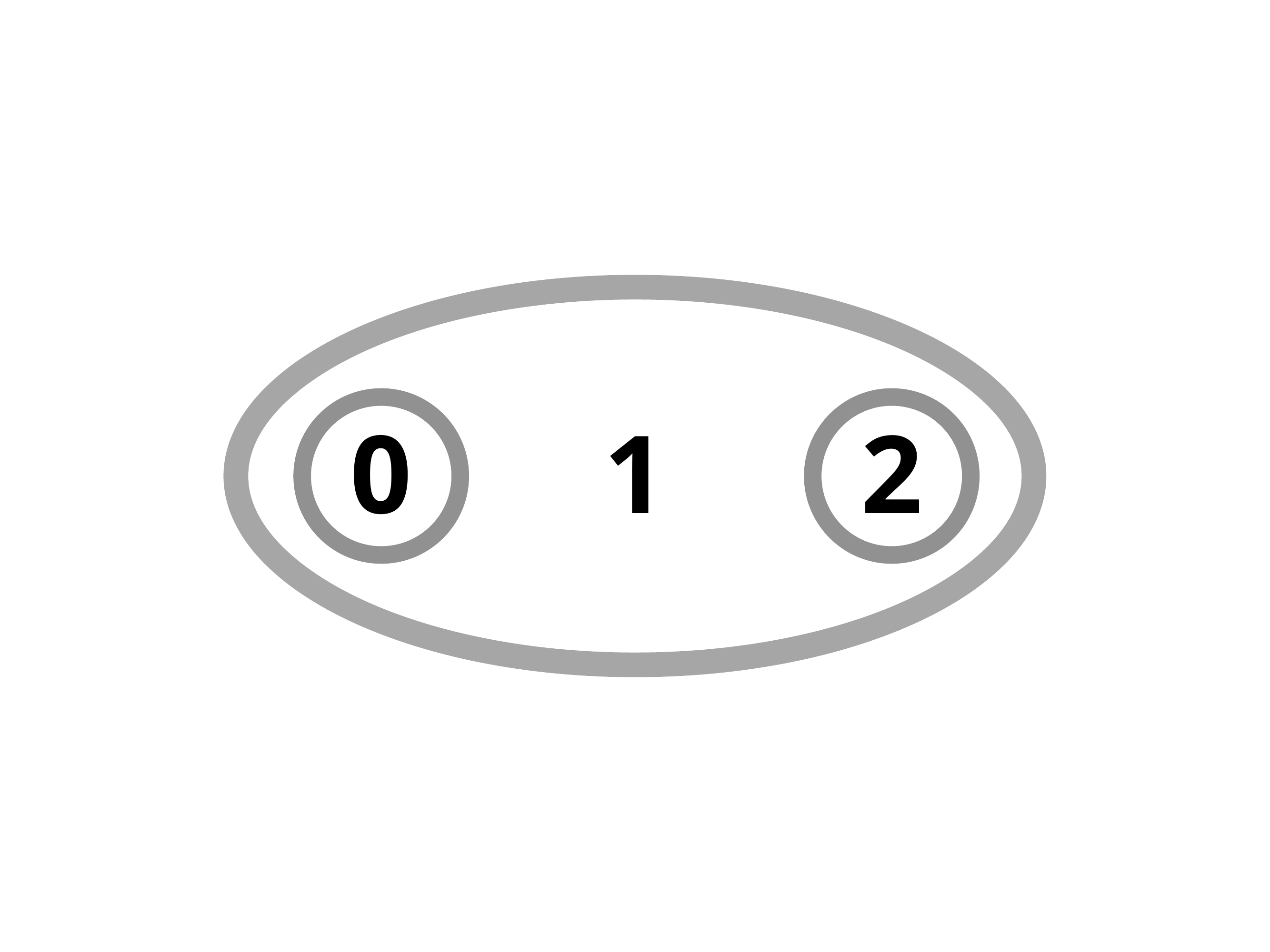}
\includegraphics[align=c,width=0.4\columnwidth,trim={50 60 50 220},clip]{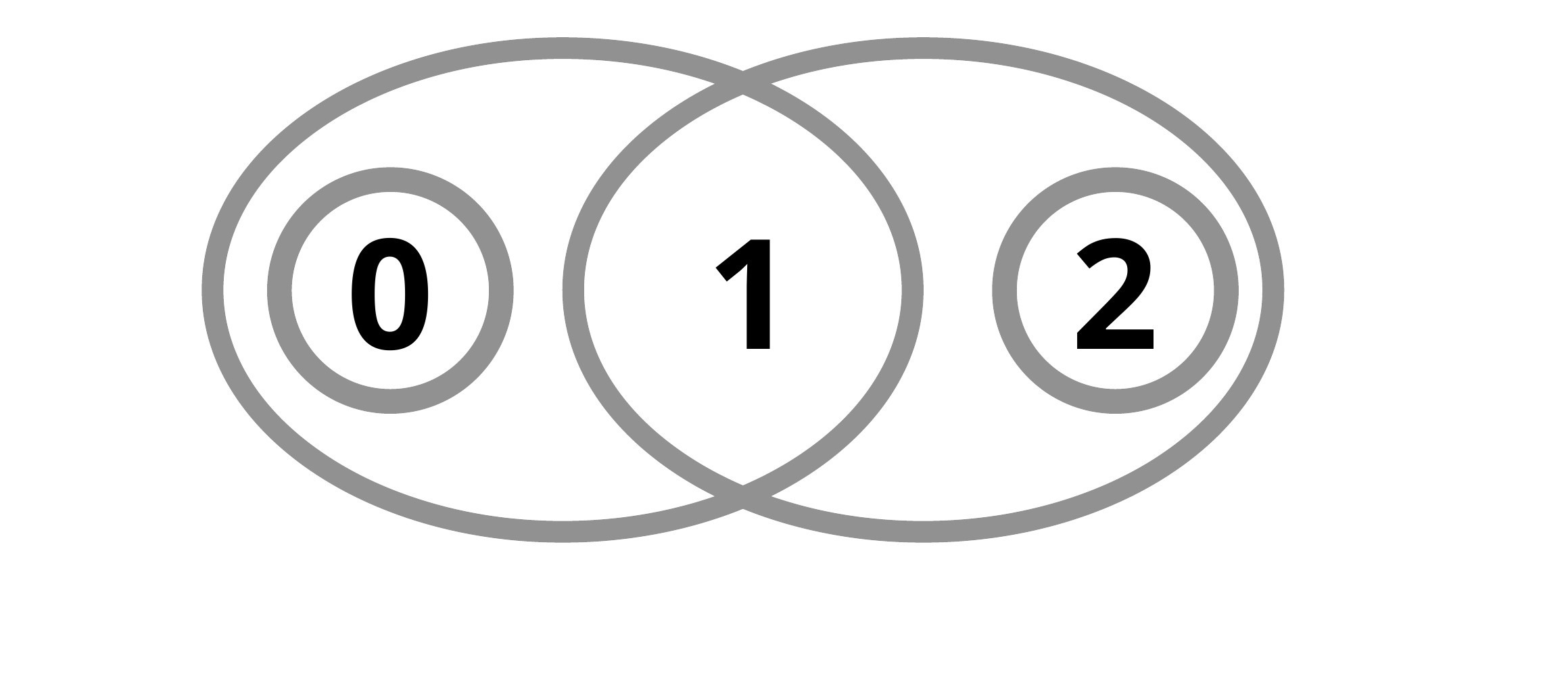}
\\
\includegraphics[align=c,width=0.35\columnwidth,trim={20 20 20 20},clip]{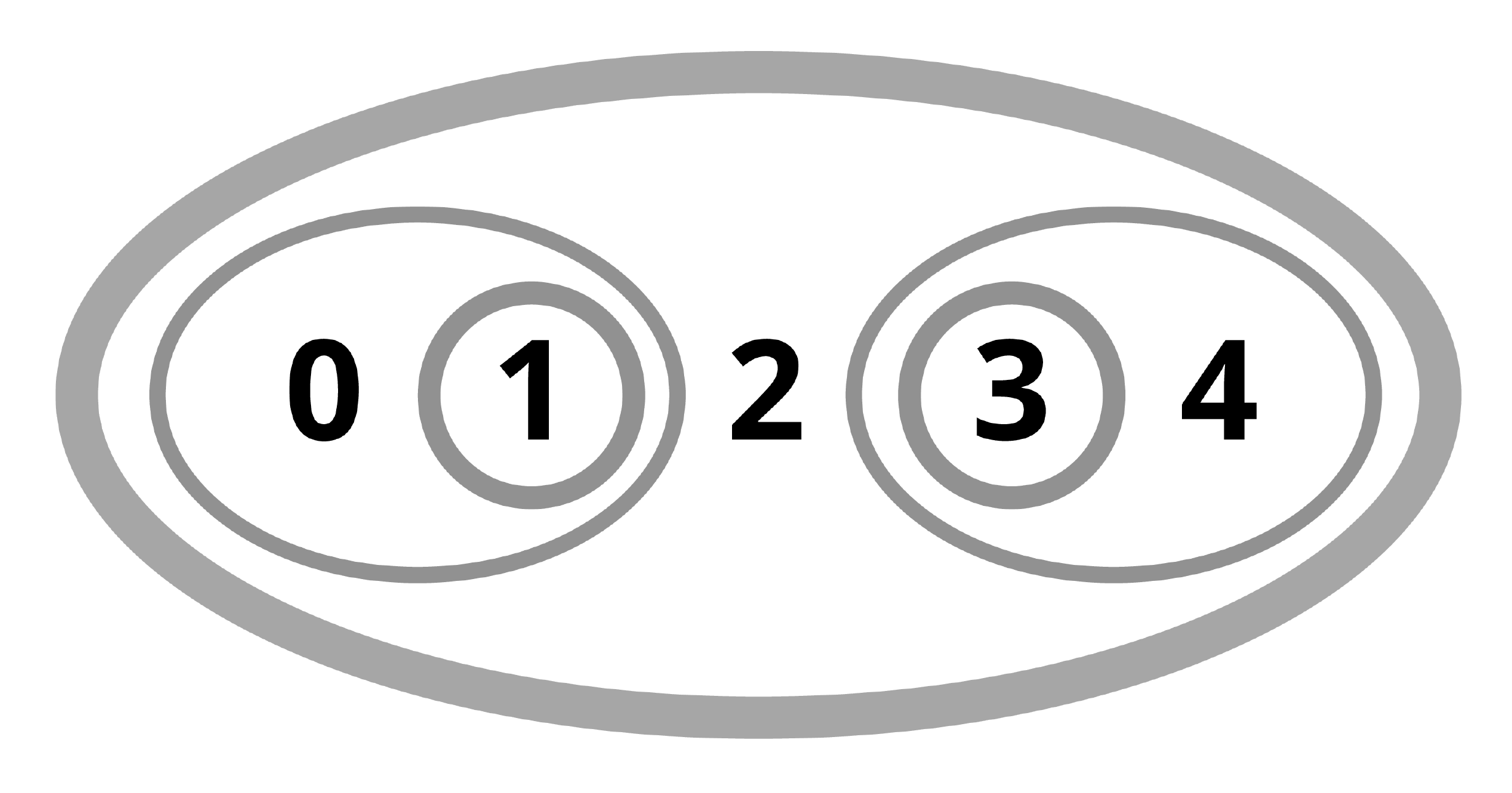}
\quad\  
\includegraphics[align=c,width=0.35\columnwidth,trim={50 20 50 20},clip]{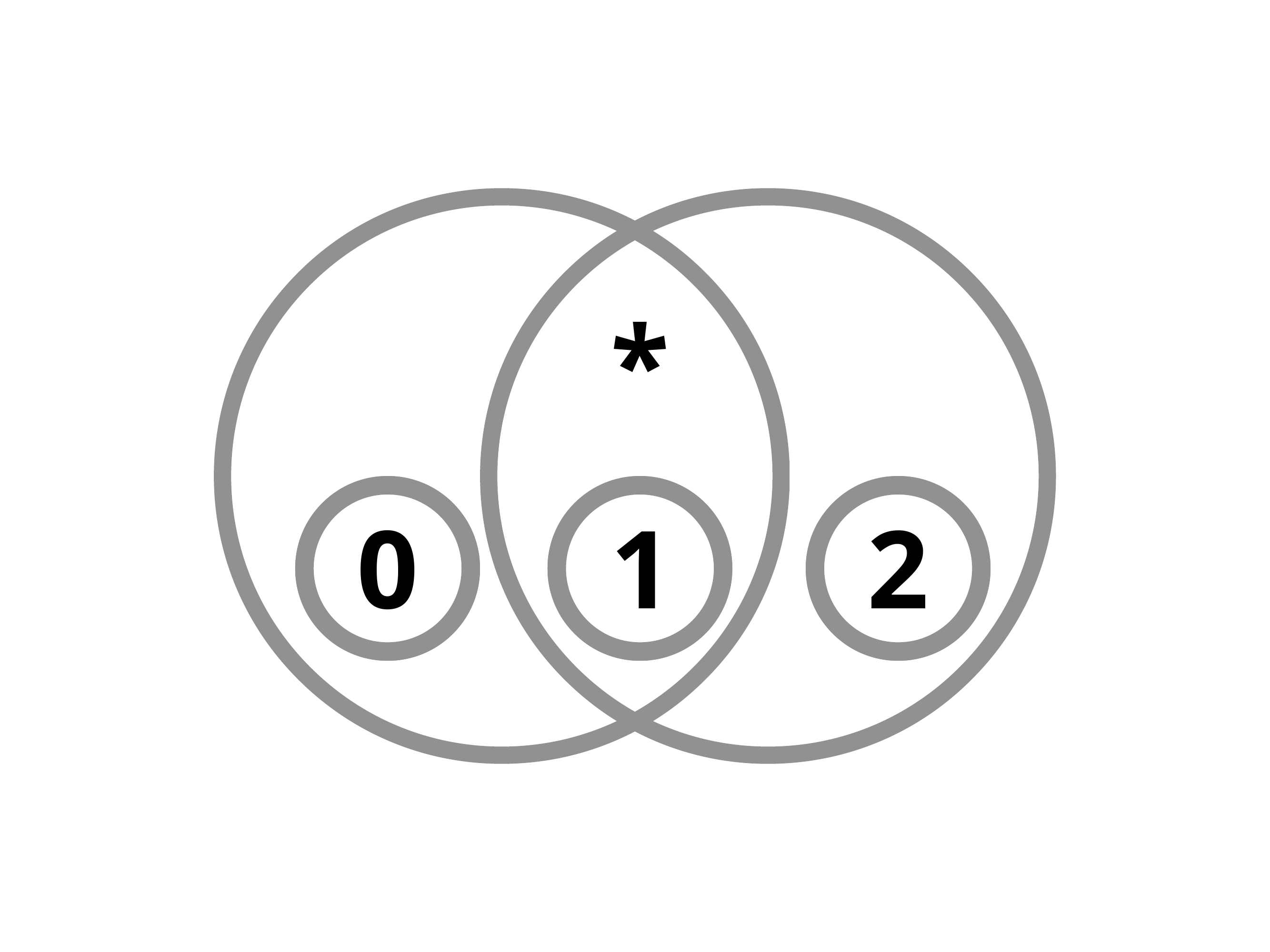}

\begin{flushleft}
\noindent\emph{Here and elsewhere, we might omit open sets that are unions of open sets that are illustrated.  
For example, we explicitly draw the universal open set in the left-hand diagrams above, but not in the right-hand diagrams above.
Meaning is clear and we get cleaner diagrams.
}
\end{flushleft}
\caption{Examples of topens (Example~\ref{xmpl.cc})}
\label{fig.012}
\end{figure}

\begin{rmrk}[Discussion]
We take a moment for a high-level discussion of where we are going.

The semiopologies in Example~\ref{xmpl.cc} invite us to ask what makes these examples different (especially parts~\ref{item.cc.two.regular} and~\ref{item.cc.two.regular.b}).
Clearly they are not equal, but that is a superficial answer in the sense that it is valid just in the world of sets, and it ignores semitopological structure.

For comparison: if we ask what makes $0$ and $1$ different in $\mathbb N$, we could just to say that $0\neq 1$, but this ignores what makes them different \emph{as numbers}.
For more insight, we could note that $0$ is the additive unit whereas $1$ is the multiplicative unit of $\mathbb N$ as a semiring; or that $0$ is a least element and $1$ is the unique atom of $\mathbb N$ as a well-founded poset; or that $1$ is the successor of $0$ of $\mathbb N$ as a well-founded inductive structure. 
Each of these answers gives us more understanding, not only into $0$ and $1$ but also into the structures that can be given to $\mathbb N$ itself. 

So we can ask:
\begin{quote}
\emph{What semitopological property or properties on points can identify the essential nature of the differences between the semitopologies in Example~\ref{xmpl.cc}?}
\end{quote}
There would be some truth to saying that the rest of our investigation is devoted to developing and understanding answers to this question!
In particular, we will shortly define the set of \emph{intertwined points} $\intertwined{p}$ in Definition~\ref{defn.intertwined.points}.
Example~\ref{xmpl.how.different?} will note that $\intertwined{1}=\{0,1,2\}$ in Example~\ref{xmpl.cc}(\ref{item.cc.two.regular}), whereas $\intertwined{1}=\{1\}$ in Example~\ref{xmpl.cc}(\ref{item.cc.two.regular.b}), and $\intertwined{x}=\mathbb N$ for every $x$ in Example~\ref{xmpl.cc}(\ref{item.xmpl.cc.3}).
\end{rmrk}

\jamiesubsection{Closure properties of transitive sets}
\label{subsect.closure.properties.of.tt}

\begin{rmrk}
Transitive sets have some nice closure properties which we treat in this Subsection --- here we mean `closure' in the sense of ``the set of transitive sets is closed under various operations'', and not in the topological sense of `closed sets'.

Topens --- nonempty transitive \emph{open} sets --- will have even better closure properties, which emanate from the requirement in Lemma~\ref{lemm.transitive.transitive} that at least one of the transitive sets $\atopen$ or $\atopen'$ is open. 
See Subsection~\ref{subsect.closure.properties.of.cc}.
\end{rmrk}

\begin{lemm}
\label{lemm.transitive.subset}
Suppose $(\ns P,\opens)$ is a semitopology and $\atopen\subseteq \ns P$. 
Then:
\begin{enumerate*}
\item\label{item.transitive.subset.1}
If $\atopen$ is transitive and $\atopen'\subseteq \atopen$, then $\atopen'$ is transitive.
\item\label{item.transitive.subset.2}
If $\atopen$ is topen and $\varnothing\neq \atopen'\subseteq \atopen$ is nonempty and open, then $\atopen'$ is topen.
\end{enumerate*}
\end{lemm}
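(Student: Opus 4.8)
The plan is to prove both parts by directly unfolding Definition~\ref{defn.transitive} and leaning on the monotonicity of $\between$ recorded in Lemma~\ref{lemm.between.elementary}(\ref{between.monotone}). The key observation is that transitivity is a condition of the form ``whenever two opens both touch $\atopen$, they touch each other'', and shrinking $\atopen$ to a subset $\atopen'$ can only make the hypothesis $O\between\atopen'\between O'$ \emph{harder} to satisfy --- so the implication is inherited for free. Thus part~\ref{item.transitive.subset.1} should require nothing more than pushing the intersections up along the inclusion $\atopen'\subseteq\atopen$.

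For part~\ref{item.transitive.subset.1}, I would take arbitrary $O,O'\in\opens$ with $O\between\atopen'\between O'$ and show $O\between O'$. Since $\atopen'\subseteq\atopen$, Lemma~\ref{lemm.between.elementary}(\ref{between.monotone}) gives $O\between\atopen$ from $O\between\atopen'$, and $\atopen\between O'$ from $\atopen'\between O'$; hence $O\between\atopen\between O'$. Transitivity of $\atopen$ then yields $O\between O'$, which is exactly what transitivity of $\atopen'$ demands.

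Part~\ref{item.transitive.subset.2} is then immediate by bookkeeping against the definition of topen (nonempty, transitive, and open). We are handed that $\atopen'$ is nonempty and open by hypothesis; since $\atopen$ is topen it is in particular transitive, so part~\ref{item.transitive.subset.1} applied to $\atopen'\subseteq\atopen$ supplies transitivity of $\atopen'$. The three defining conditions are met, so $\atopen'$ is topen.

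There is no real obstacle here: the content is entirely the monotonicity step in part~\ref{item.transitive.subset.1}, and part~\ref{item.transitive.subset.2} is a one-line assembly from the definition. The only thing to be careful about is the edge case where $\atopen'=\varnothing$ in part~\ref{item.transitive.subset.1} (the hypothesis $O\between\varnothing$ is never satisfiable by Lemma~\ref{lemm.between.elementary}(\ref{between.nonempty}), so the implication holds vacuously and $\varnothing$ is trivially transitive), and to note that in part~\ref{item.transitive.subset.2} nonemptiness of $\atopen'$ is supplied by hypothesis precisely so that this degenerate case is excluded and topen-ness is genuinely obtained.
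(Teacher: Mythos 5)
Your proof is correct and follows essentially the same route as the paper: part~1 by monotonicity of $\between$ (Lemma~\ref{lemm.between.elementary}(\ref{between.monotone})) followed by transitivity of $\atopen$, and part~2 by assembling the definition of topen from part~1. The extra remark about the vacuous case $\atopen'=\varnothing$ is harmless and accurate.
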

\begin{proof}
\leavevmode
\begin{enumerate}
\item
By Definition~\ref{defn.transitive} it suffices to consider open sets $O$ and $O'$ such that $O\between \atopen'\between O'$, and prove that $O\between O'$.
But this is simple: by Lemma~\ref{lemm.between.elementary}(\ref{between.monotone}) $O\between \atopen\between O'$, so $O\between O'$ follows by transitivity of $\atopen$. 
\item
Direct from part~\ref{item.transitive.subset.1} of this result and Definition~\ref{defn.transitive}(\ref{transitive.cc}).
\qedhere\end{enumerate}
\end{proof}

\begin{lemm}
\label{lemm.transitive.transitive}
Suppose that:
\begin{itemize*}
\item
$(\ns P,\opens)$ is a semitopology.
\item
$\atopen,\atopen'\subseteq\ns P$ are transitive.
\item
At least one of $\atopen$ and $\atopen'$ is open.
\end{itemize*}
Then:
\begin{enumerate*}
\item\label{item.transitive.transitive.1} 
$\Forall{O,O'\in\opens}O\between \atopen \between \atopen'\between O' \limp O\between O'$. 
\item\label{item.transitive.transitive.2} 
If $\atopen\between \atopen'$ then $\atopen\cup \atopen'$ is transitive.
\end{enumerate*}
\end{lemm}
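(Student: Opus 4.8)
The plan is to prove part~\ref{item.transitive.transitive.1} directly by chaining two applications of the transitivity hypothesis, and then to derive part~\ref{item.transitive.transitive.2} from it by a routine case analysis. The key observation throughout is that the defining property of a transitive set (Definition~\ref{defn.transitive}(\ref{transitive.transitive})) only lets us ``bridge'' across the set when \emph{both} flanking sets are open; this is exactly where the hypothesis that at least one of $\atopen,\atopen'$ is open earns its keep.

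For part~\ref{item.transitive.transitive.1}, I would first reduce to the case where $\atopen'$ is open: the statement is symmetric under swapping $(\atopen,O)$ with $(\atopen',O')$ and reversing the chain (using symmetry of $\between$, Lemma~\ref{lemm.between.elementary}(\ref{item.between.symmetric})), so assuming $\atopen'$ open costs nothing. Now suppose $O\between\atopen$, $\atopen\between\atopen'$, and $\atopen'\between O'$. Since $O$ and $\atopen'$ are both open and $O\between\atopen\between\atopen'$, transitivity of $\atopen$ gives $O\between\atopen'$. Combining this with $\atopen'\between O'$ yields $O\between\atopen'\between O'$ with both $O$ and $O'$ open, so transitivity of $\atopen'$ gives the desired $O\between O'$.

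For part~\ref{item.transitive.transitive.2}, suppose $\atopen\between\atopen'$ and take open $O,O'$ with $O\between(\atopen\cup\atopen')\between O'$. Using Lemma~\ref{lemm.between.elementary}(\ref{between.elementary.either.or}) I would expand each intersection into a disjunction, so that $O$ meets $\atopen$ or $\atopen'$, and likewise $O'$ meets $\atopen$ or $\atopen'$. This gives four cases. In the two ``aligned'' cases ($O$ and $O'$ both meeting the same one of $\atopen$, $\atopen'$) the conclusion $O\between O'$ is immediate from transitivity of that single set. In the two ``crossed'' cases (e.g.\ $O\between\atopen$ and $\atopen'\between O'$) I would insert the hypothesis $\atopen\between\atopen'$ to form a full chain $O\between\atopen\between\atopen'\between O'$ and invoke part~\ref{item.transitive.transitive.1}; the remaining crossed case is handled symmetrically after rewriting $\atopen\between\atopen'$ as $\atopen'\between\atopen$.

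I expect no serious obstacle here: the whole argument is bookkeeping. The one point that needs care is checking, at each use of transitivity, that the two flanking sets really are open --- for part~\ref{item.transitive.transitive.1} this is precisely where the openness of $\atopen'$ (or of $\atopen$, in the symmetric case) is essential, since the defining implication of transitivity says nothing when a flank is a non-open transitive set. Everything else is elementary manipulation of $\between$ via Lemma~\ref{lemm.between.elementary}.
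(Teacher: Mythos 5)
Your proof is correct and follows essentially the same route as the paper's: part~\ref{item.transitive.transitive.1} by reducing (via symmetry) to the case that $\atopen'$ is open and then applying transitivity of $\atopen$ followed by transitivity of $\atopen'$, and part~\ref{item.transitive.transitive.2} by the same four-way case split via Lemma~\ref{lemm.between.elementary}(\ref{between.elementary.either.or}), with crossed cases handled by part~\ref{item.transitive.transitive.1}. Your explicit attention to which flanking sets must be open at each application of transitivity is exactly the point the paper's proof relies on.
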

\begin{proof}
\leavevmode
\begin{enumerate}
\item
We simplify using Definition~\ref{defn.transitive} and our assumption that one of $\atopen$ and $\atopen'$ is open.
We consider the case that $\atopen'$ is open: 
$$
\begin{array}{r@{\ }l@{\qquad}l}
O\between \atopen\between \atopen'\between O'
\limp&
O\between \atopen' \between O'
&\text{$\atopen$ transitive, $\atopen'$ open}
\\
\limp&
O\between O'
&\text{$\atopen'$ transitive}.
\end{array}
$$
The argument for when $\atopen$ is open, is precisely similar.
\item
Suppose $O\between \atopen\cup \atopen'\between O'$.
By Lemma~\ref{lemm.between.elementary}(\ref{between.elementary.either.or}) (at least) one of the following four possibilities must hold:
$$
O\between \atopen\land \atopen\between O',
\quad
O\between \atopen'\land \atopen\between O',
\quad
O\between \atopen\land \atopen'\between O',
\quad\text{or}\quad
O\between \atopen'\land \atopen'\between O' .
$$
If $O\between \atopen\ \land\ \atopen'\between O'$ then by part~\ref{item.transitive.transitive.1} of this result we have $O\between O'$ as required. 
The other possibilities are no harder.
\qedhere\end{enumerate}
\end{proof}

\begin{defn}[Ascending/descending chain]\leavevmode
\label{defn.ascending.chains}
A \deffont[chain of sets]{chain} of sets $\mathcal X$ is a collection of sets that is totally ordered by subset inclusion $\subseteq$.\footnote{A total order is reflexive, transitive, antisymmetric, and total.}

We may call a chain \deffont[ascending chain of sets]{ascending} or \deffont[descending chain of sets]{descending} if we want to emphasise that we are thinking of the sets as `going up' or `going down'.
\end{defn}

\begin{lemm}
\label{lemm.cac.transitive}
Suppose $(\ns P,\opens)$ is a semitopology and suppose $\mathcal \atopen$ is a chain of transitive sets (Definition~\ref{defn.ascending.chains}).
Then $\bigcup\mathcal \atopen$ is a transitive set.
\end{lemm}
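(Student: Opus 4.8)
The plan is to unfold the definition of transitivity (Definition~\ref{defn.transitive}(\ref{transitive.transitive})) directly for the union $\bigcup\mathcal\atopen$, and to reduce each ``intersection-with-the-union'' condition to an intersection with a single member of the chain, reconciling the two members that arise via the total ordering.

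First I would fix arbitrary open sets $O,O'\in\opens$ satisfying $O\between\bigcup\mathcal\atopen\between O'$, with the goal of deriving $O\between O'$. Since $O\between\bigcup\mathcal\atopen$ says that $O$ meets the union, I pick a witness point of $O\cap\bigcup\mathcal\atopen$; this point lies in some member $\atopen_1\in\mathcal\atopen$, and hence $O\between\atopen_1$. Symmetrically, from $\bigcup\mathcal\atopen\between O'$ I extract a member $\atopen_2\in\mathcal\atopen$ with $\atopen_2\between O'$.

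The one point requiring care is that these two witnesses may live in \emph{different} members of the chain, so next I would invoke that $\mathcal\atopen$ is totally ordered by $\subseteq$ (Definition~\ref{defn.ascending.chains}): either $\atopen_1\subseteq\atopen_2$ or $\atopen_2\subseteq\atopen_1$. Working with the larger of the two, say $\atopen_1\subseteq\atopen_2$, monotonicity of $\between$ (Lemma~\ref{lemm.between.elementary}(\ref{between.monotone})) upgrades $O\between\atopen_1$ to $O\between\atopen_2$, so that $O\between\atopen_2\between O'$. Transitivity of the single set $\atopen_2$ then yields $O\between O'$, as required; the opposite inclusion is handled identically.

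The main obstacle here is merely the bookkeeping of collapsing two possibly-distinct chain members into one common member; there is no analytic content. It is worth remarking that the argument uses nothing about the chain beyond \emph{directedness} (any two members have an upper bound in the family), so the same proof establishes the stronger fact that any directed union of transitive sets is transitive. Note also that, unlike Lemma~\ref{lemm.transitive.transitive}, no openness of the $\atopen_i$ is needed.
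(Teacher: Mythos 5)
Your proof is correct and follows essentially the same route as the paper's: extract chain members witnessing each intersection, use the total order to collapse them into a single member, and apply that member's transitivity. Your closing observations (that directedness suffices and that no openness is needed) are accurate and consistent with the paper's own footnote remarking that this lemma, unlike Lemma~\ref{lemm.transitive.transitive}, does not require openness.
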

\begin{proof}
Suppose $O\between \bigcup\mathcal \atopen\between O'$.
Then there exist $\atopen,\atopen'\in\mathcal\atopen$ such that $O\between \atopen$ and $\atopen'\between O'$.
But $\mathcal\atopen$ is totally ordered, so either $\atopen\subseteq\atopen'$ or $\atopen\supseteq\atopen'$.
In the former case it follows that $O\between \atopen'\between O'$ so that $O\between O'$ by transitivity of $\atopen'$; the latter case is precisely similar. 
\end{proof}

\jamiesubsection{Closure properties of topens}
\label{subsect.closure.properties.of.cc}

Definition~\ref{defn.connected.set} will be useful in Lemma~\ref{lemm.cc.unions}(\ref{item.clique.of.topens}): 
\begin{defn}
\label{defn.connected.set}
Suppose $(\ns P,\opens)$ is a semitopology.
Call a set of nonempty open sets $\mathcal O\subseteq\opens_{\neq\varnothing}$ a \deffont[clique of sets]{clique} when its elements pairwise intersect.\footnote{%
We call this a \emph{clique}, because if we form the \emph{intersection graph} with nodes elements of $\mathcal O$ and with an (undirected) edge between $O$ and $O'$ when $O\between O'$, then $\mathcal O$ is a clique precisely when its intersection graph is indeed a clique.
See also Definition~\ref{defn.tangled}.
}
In symbols: 
$$
\mathcal O\subseteq\opens\ \text{is a clique}
\quad\text{when}\quad
\Forall{O,O'\in\mathcal O}O\between O'.
$$
Note that if $\mathcal O$ is a clique then every $O\in\mathcal O$ is nonempty, since if $O=\varnothing$ then by $O\notbetween O$ by Lemma~\ref{lemm.between.elementary}(\ref{item.between.nonempty}).
\end{defn}

\begin{lemm}
\label{lemm.cc.unions}
Suppose $(\ns P,\opens)$ is a semitopology.
Then:
\begin{enumerate*}
\item\label{item.intersecting.pair.of.topens}
If $\atopen$ and $\atopen'$ are an intersecting pair of topens (i.e. $\atopen\between \atopen'$), then $\atopen\cup \atopen'$ is topen. 
\item\label{item.clique.of.topens}
If $\mathcal \atopen$ is a clique of topens (Definition~\ref{defn.connected.set}), then $\bigcup\mathcal \atopen$ is topen. 
\item\label{item.chain.of.topens}
If $\mathcal \atopen$ is a nonempty ascending chain of topens then $\bigcup\mathcal \atopen$ is topen.
\end{enumerate*}
\end{lemm}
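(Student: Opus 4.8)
The plan is to check, in each of the three cases, the three defining conditions of a topen from Definition~\ref{defn.transitive}(\ref{transitive.cc}): nonempty, open, and transitive. In every case openness is immediate, since each union is a union of open sets and so is open by Definition~\ref{defn.semitopology}(\ref{semitopology.unions}); and nonemptiness is immediate because at least one summand is a topen, hence nonempty. So the genuine content in all three parts is \emph{transitivity}, and the single engine driving every argument will be the chaining principle of Lemma~\ref{lemm.transitive.transitive}(\ref{item.transitive.transitive.1}), which collapses a chain $O\between \atopen\between \atopen'\between O'$ down to $O\between O'$ whenever $\atopen,\atopen'$ are transitive and open.

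For part~\ref{item.intersecting.pair.of.topens} I would simply invoke Lemma~\ref{lemm.transitive.transitive}(\ref{item.transitive.transitive.2}): the topens $\atopen$ and $\atopen'$ are transitive and open, and $\atopen\between \atopen'$ holds by hypothesis, so $\atopen\cup \atopen'$ is transitive; together with the openness and nonemptiness observations it is topen.

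Part~\ref{item.clique.of.topens} is the heart of the matter, and the step I expect to carry the weight. Assuming $\mathcal{\atopen}$ nonempty, suppose $O\between \bigcup\mathcal{\atopen}\between O'$ for open $O,O'$. The key move is to unfold the intersection with the union: since $X\between \bigcup\mathcal{\atopen}$ holds exactly when $X\between \atopen$ for some $\atopen\in\mathcal{\atopen}$ (the arbitrary-union version of Lemma~\ref{lemm.between.elementary}(\ref{between.elementary.either.or}), i.e.\ the routine fact that $X\cap\bigcup\mathcal{\atopen}=\bigcup_{\atopen\in\mathcal{\atopen}}(X\cap \atopen)$), I extract $\atopen,\atopen'\in\mathcal{\atopen}$ with $O\between \atopen$ and $\atopen'\between O'$. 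Now the clique hypothesis (Definition~\ref{defn.connected.set}) supplies precisely the missing middle link $\atopen\between \atopen'$, giving $O\between \atopen\between \atopen'\between O'$, whence $O\between O'$ by Lemma~\ref{lemm.transitive.transitive}(\ref{item.transitive.transitive.1}). This is exactly where the clique condition earns its keep: pairwise intersection of the members is what lets me glue any two witnessing summands together, and it is the only subtlety in the whole lemma.

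Part~\ref{item.chain.of.topens} is then the easiest. A nonempty ascending chain of topens is in particular a chain of transitive sets, so Lemma~\ref{lemm.cac.transitive} yields that $\bigcup\mathcal{\atopen}$ is transitive directly, and with openness and nonemptiness it is topen. (Alternatively, comparable nonempty sets intersect by Lemma~\ref{lemm.between.elementary}(\ref{between.subset}), so a nonempty chain of topens is a clique of topens and part~\ref{item.chain.of.topens} falls out as a special case of part~\ref{item.clique.of.topens}.) Beyond the union-unfolding remark above, the only points needing a word are the standing assumptions that the clique and the chain are nonempty, so that the resulting union is genuinely nonempty.
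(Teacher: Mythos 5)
Your proposal is correct and follows essentially the same route as the paper: part~\ref{item.intersecting.pair.of.topens} via Lemma~\ref{lemm.transitive.transitive}(\ref{item.transitive.transitive.2}), part~\ref{item.clique.of.topens} by extracting witnessing summands $\atopen,\atopen'$ and chaining $O\between\atopen\between\atopen'\between O'$ through Lemma~\ref{lemm.transitive.transitive}(\ref{item.transitive.transitive.1}) using the clique hypothesis, and part~\ref{item.chain.of.topens} by either of the two reductions (chain-is-a-clique, or Lemma~\ref{lemm.cac.transitive}), both of which the paper itself mentions. Your explicit attention to nonemptiness of the clique is a small point the paper glosses over, but it changes nothing of substance.
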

\begin{proof}
\leavevmode
\begin{enumerate}
\item
$\atopen\cup \atopen'$ is open because by Definition~\ref{defn.semitopology}(\ref{semitopology.unions}) open sets are closed under arbitrary unions, and by Lemma~\ref{lemm.transitive.transitive}(\ref{item.transitive.transitive.2}) $\atopen\cup \atopen'$ is transitive.
\item
$\bigcup\mathcal \atopen$ is open by Definition~\ref{defn.semitopology}(\ref{semitopology.unions}).
Also, if $O\between\bigcup\mathcal \atopen\between O'$ then there exist $\atopen,\atopen'\in\mathcal \atopen$ such that $O\between \atopen$ and $\atopen'\between O'$.
We assumed $\atopen\between \atopen'$, so by Lemma~\ref{lemm.transitive.transitive}(\ref{item.transitive.transitive.1}) (since $\atopen$ and $\atopen'$ are open) we have $O\between O'$ as required. 
\item
Any chain is pairwise intersecting.  We use part~\ref{item.clique.of.topens} of this result.\footnote{We could also use Lemma~\ref{lemm.cac.transitive}.  The chain needs to be nonempty because $\bigcup\varnothing=\varnothing$ and this is open but not topen (= nonempty, transitive, and open).  The reader might ask why Lemma~\ref{lemm.cac.transitive} was not derived directly from Lemma~\ref{lemm.transitive.transitive}(\ref{item.transitive.transitive.2}); this is because (interestingly) Lemma~\ref{lemm.cac.transitive} does not require openness.}
\qedhere
\end{enumerate}
\end{proof}

\begin{corr}
\label{corr.max.cc}
Suppose $(\ns P,\opens)$ is a semitopology.
Then every topen $\atopen$ is contained in a unique maximal topen.
\end{corr}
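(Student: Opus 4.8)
The plan is to realise the unique maximal topen explicitly, as the union of \emph{all} topens containing $\atopen$, and to observe that this collection forms a clique so that Lemma~\ref{lemm.cc.unions} applies directly.

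First I would fix a topen $\atopen$ and set $\mathcal{T} = \{S\in\topens \mid \atopen\subseteq S\}$, the collection of topens containing $\atopen$; this is nonempty since $\atopen\in\mathcal{T}$. The key observation is that $\mathcal{T}$ is a clique in the sense of Definition~\ref{defn.connected.set}: any two members $S,S'\in\mathcal{T}$ both contain $\atopen$, which is nonempty because topens are nonempty by Definition~\ref{defn.transitive}(\ref{transitive.cc}), so $S\cap S'\supseteq\atopen\neq\varnothing$ and hence $S\between S'$. Lemma~\ref{lemm.cc.unions}(\ref{item.clique.of.topens}) then tells me that $U:=\bigcup\mathcal{T}$ is topen.

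Next I would verify that $U$ is in fact the \emph{greatest} topen containing $\atopen$: it contains $\atopen$ (as $\atopen\in\mathcal{T}$), and any topen $S$ with $\atopen\subseteq S$ lies in $\mathcal{T}$ and is therefore a subset of $U$. Both claims follow at once. For maximality: if $U\subseteq V$ with $V$ topen, then $\atopen\subseteq V$ puts $V\in\mathcal{T}$, so $V\subseteq U$ and hence $V=U$. For uniqueness: any maximal topen $M\supseteq\atopen$ lies in $\mathcal{T}$, so $M\subseteq U$, and maximality of $M$ forces $M=U$.

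I do not expect a serious obstacle; given Lemma~\ref{lemm.cc.unions}, the corollary is essentially immediate. The only point requiring a moment's thought is the clique observation---that containing the common nonempty subset $\atopen$ is exactly what guarantees pairwise intersection---and the only edge case to keep in mind is that $\mathcal{T}$ must be nonempty for Lemma~\ref{lemm.cc.unions}(\ref{item.clique.of.topens}) to deliver a topen (the empty union being merely empty), which is automatic here since $\atopen\in\mathcal{T}$. An alternative route would replace the clique argument by a Zorn's Lemma application using the chain-union property Lemma~\ref{lemm.cc.unions}(\ref{item.chain.of.topens}), but the clique observation makes the greatest topen available directly and avoids invoking choice.
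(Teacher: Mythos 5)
Your proof is correct and takes essentially the same approach as the paper: both construct the maximal topen as the union of a clique of topens containing $\atopen$ and invoke Lemma~\ref{lemm.cc.unions}(\ref{item.clique.of.topens}). The only cosmetic difference is that the paper indexes its clique by the sets $\atopen\cup\atopen'$ for topens $\atopen'$ merely \emph{intersecting} $\atopen$, whereas you index directly by topens \emph{containing} $\atopen$ --- the two unions coincide, and your explicit verification of maximality and uniqueness fills in what the paper leaves as ``easy to check''.
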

\begin{proof}
Consider $\mathcal \atopen$ defined by
$$
\mathcal \atopen = \{\atopen\cup \atopen' \mid \atopen'\text{ topen}\land \atopen\between \atopen'\} .
$$
By Lemma~\ref{lemm.cc.unions}(\ref{item.intersecting.pair.of.topens}) this is a set of topens.
By construction they all contain $\atopen$, and by our assumption that $\atopen\neq\varnothing$ they pairwise intersect (since they all contain $\atopen$, at least), so by Lemma~\ref{lemm.cc.unions}(\ref{item.clique.of.topens}) $\bigcup\mathcal \atopen$ is topen.
It is easy to check that this is the unique maximal transitive open set that contains $\atopen$. 
\end{proof}

\begin{thrm}
\label{thrm.topen.partition}
Suppose $(\ns P,\opens)$ is a semitopology.
Then any $P\subseteq \ns P$, and in particular $\ns P$ itself, can be partitioned into:
\begin{itemize*}
\item
Some disjoint collection of maximal topens.
\item
A set of other points, which are not contained in any topen.
\end{itemize*}
\end{thrm}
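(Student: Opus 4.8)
The plan is to construct the partition of $\ns P$ directly from its maximal topens, leaning entirely on the union-closure results just proved. The one fact doing the real work is Lemma~\ref{lemm.cc.unions}(\ref{item.intersecting.pair.of.topens}), that an intersecting pair of topens has topen union, together with the existence-and-uniqueness statement of Corollary~\ref{corr.max.cc}.

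First I would prove that distinct maximal topens are disjoint. Suppose $\atopen$ and $\atopen'$ are maximal topens with $\atopen\between\atopen'$. Then $\atopen\cup\atopen'$ is topen by Lemma~\ref{lemm.cc.unions}(\ref{item.intersecting.pair.of.topens}), and since it contains the maximal topen $\atopen$, maximality forces $\atopen\cup\atopen'=\atopen$; symmetrically $\atopen\cup\atopen'=\atopen'$, so $\atopen=\atopen'$. Contrapositively, two \emph{distinct} maximal topens cannot intersect, i.e. $\atopen\notbetween\atopen'$. In particular every point lies in \emph{at most} one maximal topen.

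Next I would classify the points of $\ns P$ into two kinds. Write $\resilient=\bigcup\topens$ for the set of points lying in some topen. If $p\in\resilient$ then $p$ belongs to some topen, which by Corollary~\ref{corr.max.cc} sits inside a unique maximal topen; so $p$ lies in at least one maximal topen, hence (by the previous step) in exactly one. Thus the maximal topens are pairwise disjoint and cover $\resilient$, so they partition $\resilient$. By the very definition of $\resilient$, every point of $\ns P\setminus\resilient$ lies in no topen at all. The desired partition of $\ns P$ is then the family of all maximal topens together with the single block $\ns P\setminus\resilient$; the latter is disjoint from each maximal topen because every maximal topen is a subset of $\resilient$. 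The same dichotomy --- each point either lies in a (unique) maximal topen or in no topen whatsoever --- classifies the points of any $P\subseteq\ns P$, with $P=\ns P$ the principal case.

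I do not expect a serious obstacle here: once Corollary~\ref{corr.max.cc} guarantees that maximal topens exist and absorb through union any topen containing a given point, the argument reduces to disjointness plus coverage. The only step needing genuine care is the disjointness claim, where maximality must be invoked in \emph{both} directions to collapse an intersecting pair to equality; everything else is bookkeeping about the cover $\resilient$ and its complement $\ns P\setminus\resilient$.
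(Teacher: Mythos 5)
Your proof is correct and is essentially the argument the paper has in mind: the paper's own proof simply says ``Routine from Corollary~\ref{corr.max.cc}'', and your write-up supplies exactly the routine details — pairwise disjointness of maximal topens via Lemma~\ref{lemm.cc.unions}(\ref{item.intersecting.pair.of.topens}) and maximality, plus the existence of a unique maximal topen above any topen from Corollary~\ref{corr.max.cc}. No gap; your care over the disjointness step (invoking maximality on both sides) is exactly the right place to be careful.
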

\begin{proof}
Routine from Corollary~\ref{corr.max.cc}.
\end{proof}

\begin{rmrk}
\label{rmrk.forward}
\label{rmrk.partition}
It may be useful to put Theorem~\ref{thrm.topen.partition} in the context of the terminology, results, and examples that will follow below. 
We will have Definition~\ref{defn.tn}(\ref{item.regular.point}\&\ref{item.irregular.point}) and Theorem~\ref{thrm.max.cc.char}.
These will allow us to call a point $p$ contained in some maximal topen $\atopen$ \emph{regular}, to call the maximal topen $\atopen$ of a regular point its \emph{community}, and a point that is not contained in any topen \emph{irregular}.
Then Theorem~\ref{thrm.topen.partition} says that a semitopology $\ns P$ can be partitioned into:
\begin{itemize*}
\item
Disjoint maximal communities of regular points which, in a sense made formal in Theorem~\ref{thrm.correlated}, are a coalition acting together --- and
\item
a set of irregular points, which are in no community and so are not members of any coalition.
\end{itemize*} 
We give examples in Example~\ref{xmpl.cc} and Figure~\ref{fig.012}, and we will see more elaborate examples below (see in particular the collection in Example~\ref{xmpl.two.topen.examples}). 

In the special case that the entire space consists of a single topen community, there are no irregular points and all participants are guaranteed to agree, where algorithms succeed.
For the application of a single blockchain trying to arrive at consensus, this discussion tells us that we want the underlying semitopology to consist of a single topen, because this means that all participants are guaranteed to agree, where algorithms succeed.
A semitopology that consists of a single topen set is precisely one all of whose open sets intersect, and the reader familiar with literature on quorum systems (for example~\cite{losa:stecbi}) will recognise this as corresponding to the \emph{quorum intersection property}. 
\end{rmrk}

\jamiesubsection{Intertwined points} 
\label{subsect.intertwined.points}

\jamiesubsubsection{The basic definition, and some lemmas}

\begin{defn}
\label{defn.intertwined.points}
Suppose $(\ns P,\opens)$ is a semitopology and $p,p'\in\ns P$.
\begin{enumerate*}
\item\label{item.p.intertwinedwith.p'}
Call $p$ and $p'$ \deffont[intertwined (two points $p\intertwinedwith p'$)]{intertwined} when $\{p,p'\}$ is transitive.\index{$p\intertwinedwith p'$ (two intertwined points)}
Unpacking Definition~\ref{defn.transitive} this means:
$$
\Forall{O,O'{\in}\opens} (p\in O\land p'\in O') \limp O\between O' .
$$ 
By a mild abuse of notation, write 
$$
p\intertwinedwith p' \quad \text{when}\quad \text{$p$ and $p'$ are intertwined}.
$$
\item\label{intertwined.defn}
Define $\intertwined{p}$\index{intertwined of $p$ ($\intertwined{p}$)}\index{$\intertwined{p}$ (points intertwined with a point $p$)} (read `intertwined of $p$') to be the set of points intertwined with $p$.
In symbols: 
$$
\intertwined{p}=\{p'\in\ns P \mid p\intertwinedwith p'\} .
$$
\end{enumerate*}
\end{defn}

\begin{xmpl}
\label{xmpl.how.different?}
We return to the examples in Example~\ref{xmpl.cc}.  
There we note that:
\begin{enumerate*}
\item
$\intertwined{1}=\{0,1,2\}$ and $\intertwined{0}=\{0,1\}$ and $\intertwined{2}=\{1,2\}$.
\item
$\intertwined{1}=\{1\}$ and $\intertwined{0}=\{0\}$ and $\intertwined{2}=\{2\}$.
\item
$\intertwined{0}=\intertwined{1}=\{0,1,2\}$ and $\intertwined{3}=\intertwined{4}=\{2,3,4\}$ and $\intertwined{2}=\ns P$.
\item
$\intertwined{0}=\{0\}$ and $\intertwined{1}=\intertwined{\ast}=\{1,\ast\}$ and $\intertwined{2}=\{2\}$. 
\item
$\intertwined{x}=\ns P$ for every $x$. 
\item
$\intertwined{x}=\{x\}$ for every $x$. 
\end{enumerate*}
\end{xmpl}

Here is one reason to care about intertwined points; a value assignment is constant on a pair of intertwined points, where it is continuous:
\begin{lemm}
\label{lemm.intertwined.correlated}
Suppose $\tf{Val}$ is a semitopology of values and $f:\ns P\to\tf{Val}$ is a value assignment (Definition~\ref{defn.value.assignment})
and $p,p'\in\ns P$ and $p\between p'$.
Then if $f$ is continuous at $p$ and $p'$, then $f(p)=f(p')$.
\end{lemm}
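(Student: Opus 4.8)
The plan is to recognise this as the two-point special case of Theorem~\ref{thrm.correlated}. By Definition~\ref{defn.intertwined.points}(\ref{item.p.intertwinedwith.p'}), the hypothesis that $p$ and $p'$ are intertwined says exactly that the set $\{p,p'\}$ is transitive. So I would take $T=\{p,p'\}$, note that $p,p'\in T$ and that $T$ is transitive, and apply Theorem~\ref{thrm.correlated}(\ref{item.correlated.1}): since $f$ is continuous at $p$ and at $p'$, that part of the theorem delivers $f(p)=f(p')$ immediately. On this route there is no real obstacle; the only thing to verify is the definitional unfolding linking intertwinedness of $p,p'$ to transitivity of $\{p,p'\}$, and Definition~\ref{defn.intertwined.points} records this explicitly.

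Should a self-contained argument be preferred, I would simply replay the proof of Theorem~\ref{thrm.correlated} in this instance. First, continuity of $f$ at $p$ gives an open neighbourhood $p\in O\subseteq f^\mone(f(p))$, and continuity at $p'$ gives $p'\in O'\subseteq f^\mone(f(p'))$. Since $p\in O$, $p'\in O'$, and $p,p'$ are intertwined, the defining property of intertwinedness ($\Forall{O,O'{\in}\opens}(p\in O\land p'\in O')\limp O\between O'$) yields $O\between O'$. Choosing any $p''\in O\cap O'$ then forces $f(p)=f(p'')=f(p')$, because $O\subseteq f^\mone(f(p))$ and $O'\subseteq f^\mone(f(p'))$.

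Both routes have identical content, and I would favour the first, since it does no more than instantiate the already-proved Theorem~\ref{thrm.correlated} at $T=\{p,p'\}$; the second merely makes transparent that transitivity of $\{p,p'\}$ is all that is used. The statement is thus the pairwise shadow of the slogan \emph{``continuous value assignments are constant across transitive sets.''}
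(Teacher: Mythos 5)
Your first route is exactly the paper's proof: it notes that $\{p,p'\}$ is transitive by Definition~\ref{defn.intertwined.points}(\ref{item.p.intertwinedwith.p'}) and then invokes Theorem~\ref{thrm.correlated}. The proposal is correct and takes essentially the same approach; the self-contained second route is just an inlining of the same argument.
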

\begin{proof}
$\{p,p'\}$ is transitive by Definition~\ref{defn.intertwined.points}(\ref{item.p.intertwinedwith.p'}).
we use Theorem~\ref{thrm.correlated}.
\end{proof}

We might suppose that being intertwined is transitive.
Lemma~\ref{lemm.intertwined.not.transitive} shows that this is not necessarily the case (the case when $\between$ \emph{is} transitive at $p$ is an important well-behavedness property, which we will call being \emph{unconflicted}; see Subsection~\ref{subsect.reg.tra.int} and Definition~\ref{defn.conflicted}):
\begin{lemm}
\label{lemm.intertwined.not.transitive}
Suppose $(\ns P,\opens)$ is a semitopology.
Then:
\begin{enumerate*}
\item
The `is intertwined' relation $\between$ is reflexive and symmetric. 
\item
$\between$ is not necessarily transitive.
That is: $p'\intertwinedwith p\intertwinedwith p''$ does not necessarily imply $p'\intertwinedwith p''$.
\end{enumerate*}
\end{lemm}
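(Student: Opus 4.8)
The plan is to handle the two parts separately; neither is difficult given what has already been set up.

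For reflexivity and symmetry I would work directly from the unpacked form of Definition~\ref{defn.intertwined.points}(\ref{item.p.intertwinedwith.p'}). Reflexivity, $p \intertwinedwith p$, is precisely the assertion that the singleton $\{p\}$ is transitive, which is Example~\ref{xmpl.singleton.transitive}(\ref{item.singleton.transitive}); alternatively one observes directly that if $p \in O$ and $p \in O'$ then $p \in O \cap O'$, so $O \between O'$. Symmetry is immediate from the definition, since $p \intertwinedwith p'$ asks that the \emph{set} $\{p,p'\}$ be transitive and $\{p,p'\} = \{p',p\}$; if one prefers, it also follows from symmetry of $\between$ (Lemma~\ref{lemm.between.elementary}(\ref{item.between.symmetric})) applied to the unpacked condition.

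For the failure of transitivity I would exhibit a counterexample, and a convenient one is already to hand: the semitopology of Example~\ref{xmpl.cc}(\ref{item.cc.two.regular}) on $\ns P = \{0,1,2\}$, whose nonempty open sets are $\{0\}$, $\{2\}$, $\{0,2\}$, and $\ns P$. Here I would check that $0 \intertwinedwith 1$ and $1 \intertwinedwith 2$ yet $0 \notintertwinedwith 2$. The first two hold because the only open set containing $1$ is $\ns P$, so every open neighbourhood of $0$ (respectively of $2$), being nonempty, intersects it. The failure $0 \notintertwinedwith 2$ is witnessed by the disjoint open neighbourhoods $\{0\} \ni 0$ and $\{2\} \ni 2$, for which $\{0\} \notbetween \{2\}$. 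This agrees with the values $\intertwined{0} = \{0,1\}$, $\intertwined{1} = \{0,1,2\}$, $\intertwined{2} = \{1,2\}$ recorded in Example~\ref{xmpl.how.different?}.

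I do not expect any real obstacle: the statement is routine, and the only genuine choices are the counterexample and the pair of witnessing disjoint opens. The substantive point is conceptual rather than technical — the observed failure of transitivity is exactly what makes the transitive case (being \emph{unconflicted}, as flagged in the statement) worth isolating later.
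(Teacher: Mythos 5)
Your proposal is correct and follows essentially the same route as the paper: reflexivity and symmetry read off directly from Definition~\ref{defn.intertwined.points}, and transitivity is refuted by the same counterexample, Example~\ref{xmpl.cc}(\ref{item.cc.two.regular}) with the disjoint opens $\{0\}$ and $\{2\}$. If anything you are slightly more careful than the paper's own write-up, which lists $\opens=\{\varnothing,\ns P,\{0\},\{2\}\}$ and omits the union $\{0,2\}$ that closure under unions forces to be open.
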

\begin{proof}
Reflexivity and symmetry are clear from Definition~\ref{defn.intertwined.points}(\ref{item.p.intertwinedwith.p'}) and Lemma~\ref{lemm.between.elementary}(\ref{between.elementary.either.or}).

To show that transitivity need not hold, it suffices to provide a counterexample.
The semitopology from Example~\ref{xmpl.cc}(\ref{item.cc.two.regular}) (illustrated in Figure~\ref{fig.012}, top-left diagram) will do.
Take 
$$
\ns P=\{0,1,2\}
\quad\text{and}\quad
\opens=\{\varnothing,\ns P,\{0\},\{2\}\}.
$$
Then 
$$
0\between 1
\ \ \text{and}\ \ 1\between 2,
\quad\text{but}\quad
\neg(0\between 2).
$$
\end{proof}

We conclude with an easy observation:
\begin{nttn}
\label{nttn.intertwined.space}
Suppose $(\ns P,\opens)$ is a semitopology.
Call $\ns P$ \deffont[intertwined (a set $\ns P$)]{intertwined} when 
$$
\Forall{p,p'\in\ns P}p\intertwinedwith p'.
$$
In words: $\ns P$ is intertwined when all of its points are pairwise intertwined.
\end{nttn}

Lemma~\ref{lemm.intertwined.space} will be useful later, notably for Lemma~\ref{lemm.intertwined.space.regular}:
\begin{lemm}
\label{lemm.intertwined.space}
Suppose $(\ns P,\opens)$ is a semitopology.
Then the following conditions are equivalent:
\begin{enumerate*}
\item\label{item.intertwined.space.P}
$\ns P$ is an intertwined space.
\item\label{item.intertwined.space.P.transitive}
$\ns P$ is a transitive set in the sense of Definition~\ref{defn.transitive}(\ref{transitive.transitive}).
\item
All nonempty open sets intersect.
\item
Every nonempty open set is topen.
\end{enumerate*}
\end{lemm}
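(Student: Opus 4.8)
The plan is to prove the four conditions equivalent by a single cyclic chain of implications, $\text{(1)}\limp\text{(2)}\limp\text{(3)}\limp\text{(4)}\limp\text{(1)}$, unwinding the definitions of \emph{intertwined} (Definition~\ref{defn.intertwined.points}) and of \emph{transitive}/\emph{topen} (Definition~\ref{defn.transitive}), and using the elementary facts about $\between$ collected in Lemma~\ref{lemm.between.elementary}. The conceptual pivot is condition~3 (``all nonempty open sets intersect''): it is the most concrete of the four, and the others translate into and out of it most directly. One could equally organise the argument around the equivalences $\text{(2)}\liff\text{(3)}\liff\text{(1)}$ with~(4) attached, but a cycle keeps the bookkeeping minimal.

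For the first three links I would argue as follows, each being a short definition-chase. For $\text{(1)}\limp\text{(2)}$: given opens with $O\between\ns P\between O'$, Lemma~\ref{lemm.between.elementary}(\ref{between.nonempty}) forces $O$ and $O'$ nonempty, so I pick $p\in O$ and $p'\in O'$; intertwinedness of $p,p'$, unpacked via Definition~\ref{defn.intertwined.points}(\ref{item.p.intertwinedwith.p'}), gives $O\between O'$, which is exactly transitivity of $\ns P$. For $\text{(2)}\limp\text{(3)}$: two nonempty open sets $O,O'$ are nonempty subsets of $\ns P$, so by Lemma~\ref{lemm.between.elementary}(\ref{between.subset}) (and symmetry) we have $O\between\ns P\between O'$, and transitivity of $\ns P$ yields $O\between O'$. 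For $\text{(3)}\limp\text{(4)}$: a nonempty open $O$ is already nonempty and open, so only transitivity remains; given opens with $A\between O\between A'$, Lemma~\ref{lemm.between.elementary}(\ref{between.nonempty}) makes $A,A'$ nonempty, whence $A\between A'$ by~(3).

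The one step that needs a genuine idea is $\text{(4)}\limp\text{(1)}$, and I expect it to be the main (if mild) obstacle, since from ``every nonempty open is topen'' one must recover a statement about \emph{arbitrary} pairs of points. The trick I would use is to take $\ns P$ itself as the witness transitive set: given points $p,p'$, the whole space $\ns P$ is open and contains $p$, hence is a nonempty open set and therefore topen, in particular transitive. Then for any opens $O\ni p$ and $O'\ni p'$, both are nonempty subsets of $\ns P$, so $O\between\ns P\between O'$, and transitivity of $\ns P$ delivers $O\between O'$; as $O,O'$ were arbitrary this is precisely $p\intertwinedwith p'$. Finally I would dispatch the degenerate case $\ns P=\varnothing$ by observing that it is harmless: there are then no nonempty open sets and no pairs of points, so all four conditions hold vacuously, and in particular the choice of $p$ in the last step is only made when $\ns P\neq\varnothing$.
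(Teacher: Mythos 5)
Your proof is correct, and it is simply a fully spelled-out version of what the paper dismisses as ``Routine by unpacking the definitions'' --- the cyclic chain $(1)\limp(2)\limp(3)\limp(4)\limp(1)$, the use of Lemma~\ref{lemm.between.elementary} to extract witnesses, and the vacuous treatment of $\ns P=\varnothing$ are all sound. Nothing to correct.
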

\begin{proof}
Routine by unpacking the definitions.
\end{proof}

\begin{rmrk}
A topologist would call an intertwined space \emph{hyperconnected} (see Definition~\ref{defn.tangled} and the following discussion).
This is also --- modulo closing under arbitrary unions --- what an expert in the classical theory of consensus might call a \emph{quorum system}~\cite{naor:loacaq}.
\end{rmrk}

\jamiesubsubsection{Pointwise characterisation of transitive sets}

\begin{lemm}
\label{lemm.three.transitive}
Suppose $(\ns P,\opens)$ is a semitopology and $\atopen\subseteq\ns P$.
Then the following are equivalent:
\begin{enumerate*}
\item\label{item.three.transitive.1}
$\atopen$ is transitive.
\item\label{item.three.transitive.2}
$p\intertwinedwith p'$ (meaning by Definition~\ref{defn.intertwined.points} that $\{p,p'\}$ is transitive) 
for every $p,p'\in \atopen$.
\end{enumerate*}
\end{lemm}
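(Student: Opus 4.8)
The plan is to prove the two implications of the equivalence separately, in each case just by unwinding Definition~\ref{defn.transitive} and Definition~\ref{defn.intertwined.points}.

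The forward direction (\ref{item.three.transitive.1})$\Rightarrow$(\ref{item.three.transitive.2}) comes essentially for free from the closure property already established. If $\atopen$ is transitive and $p,p'\in\atopen$, then $\{p,p'\}\subseteq\atopen$, so by Lemma~\ref{lemm.transitive.subset}(\ref{item.transitive.subset.1}) the subset $\{p,p'\}$ is itself transitive; by Definition~\ref{defn.intertwined.points}(\ref{item.p.intertwinedwith.p'}) this is exactly the statement $p\intertwinedwith p'$. So this half needs no new work beyond citing subset-closure of transitivity.

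For the backward direction (\ref{item.three.transitive.2})$\Rightarrow$(\ref{item.three.transitive.1}), I would verify the defining condition of transitivity directly. Take open sets $O,O'$ with $O\between\atopen\between O'$. By Notation~\ref{nttn.between}, the first conjunct supplies a witness $p\in O\cap\atopen$ and the second a witness $p'\in\atopen\cap O'$. Since $p,p'\in\atopen$, the hypothesis gives $p\intertwinedwith p'$, i.e.\ $\{p,p'\}$ is transitive. Now I unpack this using Definition~\ref{defn.intertwined.points}(\ref{item.p.intertwinedwith.p'}): because $p\in O$ and $p'\in O'$, that definition immediately yields $O\between O'$, which is precisely what transitivity of $\atopen$ demands.

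There is no real obstacle here; the statement is a routine repackaging of definitions. The one point that deserves a moment's care is that, in the backward direction, the two witnesses $p$ and $p'$ extracted from $O\between\atopen$ and $\atopen\between O'$ may well be distinct points of $\atopen$. It is therefore genuinely the \emph{pairwise} intertwining hypothesis (ranging over all $p,p'\in\atopen$, not just single points) that is being used, and the fact that $\intertwinedwith$ is reflexive (Lemma~\ref{lemm.intertwined.not.transitive}) handles the harmless case $p=p'$. Once this is noted, both directions follow at once.
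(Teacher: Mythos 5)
Your proof is correct and follows essentially the same route as the paper: the forward direction via Lemma~\ref{lemm.transitive.subset}(\ref{item.transitive.subset.1}), and the backward direction by extracting witnesses $p\in O\cap\atopen$ and $p'\in\atopen\cap O'$ and applying transitivity of $\{p,p'\}$. The closing remark about the case $p=p'$ is harmless but unnecessary, since the hypothesis already quantifies over all pairs including equal ones.
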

\begin{proof}
Suppose $\atopen$ is transitive.
Then by Lemma~\ref{lemm.transitive.subset}(\ref{item.transitive.subset.1}), $\{p,p'\}$ is transitive for every $p,p'\in \atopen$.

Suppose $\{p,p'\}$ is transitive for every $p,p'\in \atopen$.
Consider open sets $O$ and $O'$ such that $O\between \atopen\between O'$. 
Choose $p\in O\cap \atopen$ and $p'\in O\cap \atopen'$.
By construction $\{p,p'\}\subseteq \atopen$ so this is transitive.
It follows that $O\between O'$ as required.
\end{proof}

The special case of Lemma~\ref{lemm.three.transitive} where $\atopen$ is an open set will be particularly useful:
\begin{prop}
\label{prop.cc.char}
Suppose $(\ns P,\opens)$ is a semitopology and $\atopen\subseteq\ns P$.
Then the following are equivalent:
\begin{enumerate*}
\item
$\atopen$ is topen.
\item
$\atopen\in\opens_{\neq\varnothing}$ and $\Forall{p,p'{\in}\atopen}p\intertwinedwith p'$.
\end{enumerate*}
In words we can say:
\begin{quote}
A topen is a nonempty open set of intertwined points.
\end{quote}
\end{prop}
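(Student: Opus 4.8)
The plan is to unfold the definition of `topen' and then invoke Lemma~\ref{lemm.three.transitive}, after which nothing remains to be done. Recall from Definition~\ref{defn.transitive}(\ref{transitive.cc}) that $\atopen$ is topen precisely when $\atopen$ is nonempty, open, and transitive. Since the notation $\opens_{\neq\varnothing}$ (Definition~\ref{defn.semitopology}) denotes exactly the nonempty open sets, this is equivalent to saying $\atopen\in\opens_{\neq\varnothing}$ together with $\atopen$ being transitive.

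The only remaining step is to rewrite the transitivity condition in pointwise form. This is exactly the content of Lemma~\ref{lemm.three.transitive}, which states that $\atopen$ is transitive if and only if $p\intertwinedwith p'$ for every $p,p'\in\atopen$. Substituting this equivalence into the characterisation of the previous paragraph immediately gives that $\atopen$ is topen if and only if $\atopen\in\opens_{\neq\varnothing}$ and $\Forall{p,p'\in\atopen}p\intertwinedwith p'$, which is the claim.

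I do not anticipate any genuine obstacle, since both ingredients are already in hand: the result is in essence a repackaging of Lemma~\ref{lemm.three.transitive} under the extra hypothesis that $\atopen$ is a nonempty open set. The one point deserving the briefest care is simply to note that $\atopen\in\opens_{\neq\varnothing}$ faithfully captures the `nonempty and open' clause appearing in the definition of topen, so that the two characterisations line up exactly as stated.
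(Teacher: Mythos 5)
Your proposal is correct and follows essentially the same route as the paper: unfold Definition~\ref{defn.transitive}(\ref{transitive.cc}) to see that topen means nonempty, open, and transitive, then apply Lemma~\ref{lemm.three.transitive} to replace transitivity with the pointwise condition $\Forall{p,p'{\in}\atopen}p\intertwinedwith p'$. Nothing is missing.
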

\begin{proof}
By Definition~\ref{defn.transitive}(\ref{transitive.cc}), $\atopen$ is topen when it is nonempty, open, and transitive. 
By Lemma~\ref{lemm.three.transitive} this last condition is equivalent to $p\intertwinedwith p'$ for every $p,p'\in \atopen$. 
\end{proof}

\begin{rmrk}[Intertwined as `non-Hausdorff']
\label{rmrk.not.hausdorff}
\leavevmode
\\
\noindent Recall that we call a topological space $(\ns P,\opens)$ \deffont[Hausdorff space]{Hausdorff} (or \deffont[$T_2$ space (Hausdorff condition)]{$T_2$}) when any two points can be separated by pairwise disjoint open sets.
Using the $\between$ symbol from Notation~\ref{nttn.between}, we rephrase the Hausdorff condition as
$$
\Forall{p,p'}p\neq p'\limp \Exists{O,O'}(p\in O\land p'\in O'\land \neg (O\between O')) , 
$$
we can simplify to 
$$
\Forall{p,p'}p\neq p'\limp p\notintertwinedwith p' ,
$$
and thus we simplify the Hausdorff condition just to
\begin{equation}
\label{eq.hausdorff}
\Forall{p}\intertwined{p}=\{p\}.
\end{equation}
Note how distinct $p$ and $p'$ being intertwined is the \emph{opposite} of being Hausdorff: $p\intertwinedwith p'$ when $p'\in\intertwined{p}$, and they \emph{cannot} be separated by pairwise disjoint open sets.
Thus the assertion $p\intertwinedwith p'$ in Proposition~\ref{prop.cc.char} is a negation to the Hausdorff property:
$$
\Exists{p}\intertwined{p}\neq\{p\} .
$$
This is useful because for semitopologies as applied to consensus, 
\begin{itemize*}
\item
being Hausdorff means that the space is separated (which is probably a bad thing, if we are looking for a system with lots of points in consensus), whereas 
\item
being full of intertwined points means 
by Theorem~\ref{thrm.correlated} that the system will (where algorithms succeed) be full of points whose value assignment agrees (which is a good thing).
\end{itemize*}
In the blockchain literature, we say that a blockchain \emph{forks} when it partitions into two sets of participants with incompatible beliefs about the state of the system.
In this light, we can view Theorem~\ref{thrm.correlated} as a result making precise sufficient conditions to ensure that this does not happen. 
\end{rmrk}

\jamiesubsection{Strong topens: topens that are also subspaces}

\jamiesubsubsection{Definition and main result}

Let us take stock and recall that:
\begin{itemize*}
\item
$\atopen$ is \emph{topen} when it is a nonempty open transitive set (Definition~\ref{defn.transitive}).
\item
$\atopen$ is \emph{transitive} when $O\between \atopen \between O'$ implies $O\between O'$ for all $O,O'\in\tf{Opens}$ (Definition~\ref{defn.transitive}). 
\item
$O\between O'$ means that $O\cap O'\neq\varnothing$ (Notation~\ref{nttn.between}). 
\end{itemize*}
But, note above that if $\atopen$ is topen and $O\between \atopen\between O'$ then $O\cap O'$ need not intersect \emph{inside $\atopen$}.
It could be that $O$ and $O'$ intersect outside of $\atopen$ (an example is in the proof Lemma~\ref{lemm.cc.subspaces} below).

Definition~\ref{defn.subspace} spells out a standard topological construction in the language of semitopologies:
\begin{defn}[Subspaces]
\label{defn.subspace}
Suppose $(\ns P,\opens)$ is a semitopology and suppose $\atopen\subseteq\ns P$ is a set of points.
Write $(\atopen,\opens\cap \atopen)$ for the semitopology such that:
\begin{itemize*}
\item
The points are $\atopen$.
\item
The open sets have the form $O\cap \atopen$ for $O\in\opens$.
\end{itemize*}
We say that $(\atopen, \opens\cap \atopen)$ is $\atopen$ with the \deffont{semitopology induced by $(\ns P,\opens)$}.

We may call $(\atopen,\opens\cap \atopen)$ a \deffont{subspace} of $(\ns P,\opens)$, and if the open sets are understood then we may omit mention of them and just write:
\begin{quote}
A subset $\atopen\subseteq\ns P$ is naturally a \deffont{(semitopological) subspace} of $\ns P$.
\end{quote}
\end{defn}

\begin{figure}
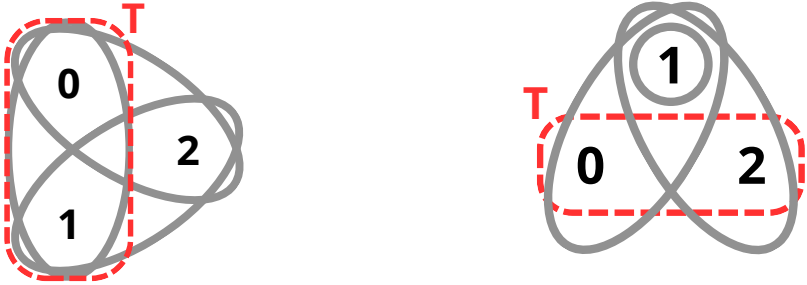

\vspace{-1em}
\centering
\subcaptionbox{A topen that is not strong (Lemma~\ref{lemm.cc.subspaces})}{\includegraphics[width=0.4\columnwidth,trim={50 0 50 20},clip]{diagrams/not-strong-topen\greyprint.pdf}}
\qquad
\subcaptionbox{A transitive set that is not strongly transitive (Lemma~\ref{lemm.strong.is.stronger}(\ref{item.strong.is.stronger.2}))}{\includegraphics[width=0.5\columnwidth,trim={50 30 50 30},clip]{diagrams/not-strongly-transitive\greyprint.pdf}}
\caption{Two counterexamples for (strong) transitivity}
\label{fig.not-strong-topen}
\end{figure}

\begin{lemm}
\label{lemm.cc.subspaces}
The property of being a (maximal) topen is not necessarily closed under taking subspaces.
\end{lemm}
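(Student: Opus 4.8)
The statement is an \emph{existence} claim, so the plan is to exhibit a single semitopology $(\ns P,\opens)$ together with a maximal topen $\atopen$ such that, in the induced subspace $(\atopen,\opens\cap\atopen)$, the set $\atopen$ fails to be topen. Since a maximal topen in the subspace would in particular be topen in the subspace, defeating topen-ness in the subspace kills both halves of the parenthetical ``(maximal) topen'' at once. The mechanism is exactly the observation recorded just before the statement: if $\atopen$ is topen and $O\between\atopen\between O'$, then $O\between O'$, but the witnessing intersection point may lie \emph{outside} $\atopen$; restricting the opens to $\atopen$ then severs that intersection, so transitivity is lost in the subspace.

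The example I would use is $\ns P=\{1,2,3,4\}$ with $\opens$ generated under unions by $\{1,2\}$, $\{1,3\}$, $\{2,3\}$, and $\{3,4\}$, and the candidate $\atopen=\{1,2\}$. The two neighbourhoods $\{1,3\}\ni 1$ and $\{2,3\}\ni 2$ meet only at the external bridge point $3$, which is the source of the phenomenon. Enumerating the opens gives
$$
\opens=\bigl\{\varnothing,\{1,2\},\{1,3\},\{2,3\},\{3,4\},\{1,2,3\},\{1,3,4\},\{2,3,4\},\ns P\bigr\}.
$$

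The verification I would carry out has three parts. First, $\{1,2\}$ is topen: it is open and nonempty, and it is transitive because every open meeting $\{1,2\}$ either contains both points or contains the bridge $3$, so any two such opens intersect (the only nonempty open disjoint from $\{1,2\}$ is $\{3,4\}$, which does not pass through $\{1,2\}$). Second, $\{1,2\}$ is maximal: the only opens containing it are $\{1,2,3\}$ and $\ns P$, and each fails transitivity via $\{1,2\}\between\,\cdot\,\between\{3,4\}$ together with $\{1,2\}\notbetween\{3,4\}$. Third, the induced subspace on $\{1,2\}$ has opens $\{O\cap\{1,2\}\mid O\in\opens\}=\{\varnothing,\{1\},\{2\},\{1,2\}\}$, i.e.\ the discrete semitopology on two points, in which $\{1\}\between\{1,2\}\between\{2\}$ while $\{1\}\notbetween\{2\}$, so $\{1,2\}$ is not transitive and hence not topen there.

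The main obstacle is reconciling two opposing pressures. The bridge point $3$ is \emph{needed} to make $\{1,2\}$ topen (it supplies the external intersection of $\{1,3\}$ and $\{2,3\}$), yet that same bridge tends to fuse $\{1,2,3\}$ into a strictly larger topen, which would destroy maximality. The role of the capping point $4$, via the open $\{3,4\}$, is precisely to break the transitivity of every proper open superset of $\{1,2\}$ \emph{without} disturbing $\{1,2\}$ itself, since $\{3,4\}$ is disjoint from $\{1,2\}$ and so never participates in a chain through it. Checking that no larger open is transitive is the one calculation deserving care. For the weaker, non-maximal reading one may simply drop the point $4$ and the generator $\{3,4\}$: on $\ns P=\{1,2,3\}$ the set $\{1,2\}$ is still topen with the same disconnected subspace, though it is then only a non-maximal topen sitting inside the maximal topen $\ns P$.
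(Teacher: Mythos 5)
Your proposal is correct, and the underlying mechanism is exactly the paper's: exhibit a topen $\atopen$ whose witnessing intersections $O\between O'$ all occur at a ``bridge'' point outside $\atopen$, so that passing to the subspace $(\atopen,\opens\cap\atopen)$ severs them. Indeed your fallback three-point example (drop the point $4$ and the generator $\{3,4\}$) is, up to relabelling, precisely the example the paper uses: $\ns P=\{0,1,2\}$ with $\opens$ the unions of the two-element subsets, and $\atopen=\{0,1\}$, which becomes the discrete two-point space downstairs. The one substantive difference is that you take the parenthetical ``(maximal)'' seriously: in the paper's example $\{0,1\}$ is a topen but \emph{not} a maximal one (all nonempty opens of that space pairwise intersect, so the maximal topen is $\ns P$ itself, which trivially survives restriction), whereas your four-point example uses the extra open $\{3,4\}$ to break transitivity of $\{1,2,3\}$ and $\ns P$ and thereby certify that $\{1,2\}$ is a \emph{maximal} topen upstairs while failing even to be topen downstairs. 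Your verification of all three parts (topen-ness via the ``contains both points or contains $3$'' case split, maximality via $\{1,2\}\between\cdot\between\{3,4\}$ with $\{1,2\}\notbetween\{3,4\}$, and the discrete induced subspace) is sound. So your argument establishes a slightly stronger statement than the paper's own proof at the cost of one extra point; both are perfectly adequate proofs of the lemma as stated.
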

\begin{proof}
It suffices to exhibit a semitopology $(\ns P,\opens)$ and a subset $\atopen\subseteq\ns P$ such that $\atopen$ is topen in $(\ns P,\opens)$ but $\atopen$ is not topen in $(\atopen,\opens\cap \atopen)$.
We set:
$$
\ns P=\{0, 1, 2\}
\qquad
\opens=\{\varnothing,\ \{0, 2\},\ \{1, 2\},\ \{0,1\},\ \ns P\}
\qquad
\atopen=\{0,1\}
$$
as illustrated in Figure~\ref{fig.not-strong-topen} (left-hand diagram).
Now:
\begin{itemize*}
\item
$\atopen$ is topen in $(\ns P,\opens)$, because every open neighbourhood of $0$ --- that is $\{0,2\}$, $\{0,1\}$, and $\ns P$ --- intersects with every open neighbourhood of $1$ --- that is $\{1,2\}$, $\{0,1\}$, and $\ns P$.
\item
$\atopen$ is not topen in $(\atopen,\opens\cap \atopen)$, because $\{0\}$ is an open neighbourhood of $0$ and $\{1\}$ is an open neighbourhood of $1$ and these do not intersect.
\qedhere\end{itemize*}
\end{proof}

Lemma~\ref{lemm.cc.subspaces} motivates the following definitions:

\begin{defn}
\label{defn.betweenY}
Suppose $X$, $Y$, and $Z$ are sets.
Write $X\between_Y Z$, and say that $X$ and $Z$ \deffont[meet in $Y$ ($X\between_Y Z$)]{meet}\index{$X\between_Y Z$ ($X$ and $Z$ intersect in $Y$)} or \deffont[intersect in $Y$ ($X\between_Y Z$)]{intersect in $Y$}, when $(X\cap Y)\between (Z\cap Y)$.
\end{defn}

\begin{lemm}
\label{lemm.betweenY.basic.sets}
Suppose $X$, $Y$, and $Z$ are sets.
Then:
\begin{enumerate*}
\item\label{item.betweenY.basic.sets.1}
The following are equivalent:
$$
X\cap Y\cap Z\neq\varnothing 
\quad\liff\quad
X\between_Y Z
\quad\liff\quad
Y\between_X Z
\quad\liff\quad
X\between_Z Y .
$$
\item\label{item.betweenY.basic.sets.2}
$X\between_Y Y$ if and only if $X \between Y$.
\item\label{item.betweenY.basic.sets.3}
If $X\between_Y Z$ then $X\between Z$.
\end{enumerate*}
\end{lemm}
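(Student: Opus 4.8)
The plan is to reduce all three parts to the single condition $X\cap Y\cap Z\neq\varnothing$, using only commutativity, associativity, and idempotency of sets intersection, together with the unfoldings of $\between$ (Notation~\ref{nttn.between}) and $\between_Y$ (Definition~\ref{defn.betweenY}). There is no genuine difficulty here; the work is entirely bookkeeping about intersections.

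For part~\ref{item.betweenY.basic.sets.1}, I would unfold each of the three decorated expressions via Definition~\ref{defn.betweenY}. For instance $X\between_Y Z$ means $(X\cap Y)\between(Z\cap Y)$, i.e.\ $(X\cap Y)\cap(Z\cap Y)\neq\varnothing$; since $(X\cap Y)\cap(Z\cap Y)=X\cap Y\cap Z$ by idempotency ($Y\cap Y=Y$) and reordering, this is exactly $X\cap Y\cap Z\neq\varnothing$. The remaining two expressions unfold symmetrically: $Y\between_X Z$ gives $(Y\cap X)\cap(Z\cap X)=X\cap Y\cap Z$, and $X\between_Z Y$ gives $(X\cap Z)\cap(Y\cap Z)=X\cap Y\cap Z$. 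Thus all four conditions are literally the same statement, so the chain of equivalences is immediate.

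Part~\ref{item.betweenY.basic.sets.2} is then just the instance $Z=Y$ of part~\ref{item.betweenY.basic.sets.1}: we get $X\between_Y Y\liff X\cap Y\cap Y\neq\varnothing$, and $X\cap Y\cap Y=X\cap Y$, so this reads $X\cap Y\neq\varnothing$, which is $X\between Y$ by Notation~\ref{nttn.between}. For part~\ref{item.betweenY.basic.sets.3}, from $X\between_Y Z$ part~\ref{item.betweenY.basic.sets.1} gives $X\cap Y\cap Z\neq\varnothing$, i.e.\ $(X\cap Y)\between Z$; since $X\cap Y\subseteq X$, monotonicity of $\between$ (Lemma~\ref{lemm.between.elementary}(\ref{between.monotone})) yields $X\between Z$ as required.

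The only point to be careful about is keeping the idempotency step $Y\cap Y=Y$ explicit, so that the three decorated ``meet'' relations are visibly seen to collapse to the same symmetric triple intersection $X\cap Y\cap Z$. I expect the author's proof to be essentially a one-line appeal to elementary facts of sets intersection, in the same style as the proof of Lemma~\ref{lemm.between.elementary}.
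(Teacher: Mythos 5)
Your proposal is correct and matches the paper's approach: the paper's entire proof reads ``From Definition~\ref{defn.betweenY}, by elementary sets calculations,'' and your write-up simply makes those calculations explicit (collapsing each decorated relation to $X\cap Y\cap Z\neq\varnothing$ via idempotency and reordering). Your anticipation of a one-line proof was exactly right, and the details you supply are all sound.
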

\begin{proof}
From Definition~\ref{defn.betweenY}, by elementary sets calculations.
\end{proof}

\begin{defn}
\label{defn.strongly.transitive}
Suppose $(\ns P, \opens)$ is a semitopology and recall from Definition~\ref{defn.transitive} the notions of \emph{transitive set} and \emph{topen}.
\begin{enumerate*}
\item\label{item.strongly.transitive}
Call $\atopen\subseteq\ns P$ \deffont[strongly transitive set]{strongly transitive} when
$$
\Forall{O,O'{\in}\opens} O\between \atopen \between O' \limp O\between_\atopen O' . 
$$
\item\label{strong.transitive.cc}
Call $\atopen$ a \deffont{strong topen}\index{strongly topen set} when $\atopen$ is nonempty open and strongly transitive, 
\end{enumerate*}
\end{defn}

\begin{lemm}
\label{lemm.strong.is.stronger}
Suppose $(\ns P, \opens)$ is a semitopology and $\atopen\subseteq\ns P$.
Then:
\begin{enumerate*}
\item\label{item.strong.is.stronger.1}
If $\atopen$ is strongly transitive then it is transitive.
\item\label{item.strong.is.stronger.2}
The reverse implication need not hold (even if $(\ns P,\opens)$ is a topology): it is possible for $\atopen$ to be transitive but not strongly transitive.
\end{enumerate*} 
\end{lemm}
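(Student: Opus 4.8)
The plan is to dispatch part~\ref{item.strong.is.stronger.1} immediately from the sets lemma we already have, and to handle part~\ref{item.strong.is.stronger.2} by exhibiting a small topology.

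For part~\ref{item.strong.is.stronger.1}, suppose $\atopen$ is strongly transitive and take $O,O'\in\opens$ with $O\between\atopen\between O'$. By Definition~\ref{defn.strongly.transitive}(\ref{item.strongly.transitive}) we get $O\between_\atopen O'$, and Lemma~\ref{lemm.betweenY.basic.sets}(\ref{item.betweenY.basic.sets.3}) tells us that $O\between_\atopen O'$ implies $O\between O'$. This is exactly the transitivity condition of Definition~\ref{defn.transitive}(\ref{transitive.transitive}), so $\atopen$ is transitive. The conceptual point is that strong transitivity asks the witnessing intersection to occur \emph{inside} $\atopen$, which is strictly more than ordinary transitivity demands.

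For part~\ref{item.strong.is.stronger.2}, the plan is to engineer a topology in which every open set meeting $\atopen$ shares a common point lying \emph{outside} $\atopen$; this forces $\atopen$ to be transitive, yet that common point cannot witness $\between_\atopen$. Concretely I would take
$$
\ns P=\{0,1,2\},\qquad \opens=\bigl\{\varnothing,\ \{2\},\ \{0,2\},\ \{1,2\},\ \ns P\bigr\},\qquad \atopen=\{0,1\}.
$$
First I would check this is a genuine topology: the only non-trivial intersection is $\{0,2\}\cap\{1,2\}=\{2\}$, which is open, and the required unions are all present. Next, transitivity of $\atopen$: the open sets meeting $\atopen$ are exactly $\{0,2\}$, $\{1,2\}$, and $\ns P$, and each contains the point $2$, so any two of them intersect; hence $O\between\atopen\between O'$ always yields $O\between O'$. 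Finally, failure of strong transitivity: taking $O=\{0,2\}$ and $O'=\{1,2\}$ we have $O\between\atopen\between O'$ but $O\cap O'\cap\atopen=\{2\}\cap\{0,1\}=\varnothing$, so by Lemma~\ref{lemm.betweenY.basic.sets}(\ref{item.betweenY.basic.sets.1}) $O\notbetween_\atopen O'$, witnessing that $\atopen$ is not strongly transitive.

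The only real design work is in part~\ref{item.strong.is.stronger.2}: one has to arrange a topology (so, closed under intersection) in which a transitive set fails to be strongly transitive, and the naive three-point attempts either are not topologies or collapse to discrete-like spaces in which $\atopen$ stops being transitive. The device that makes it go through is to let all opens that touch $\atopen$ overlap at a single auxiliary point $2\notin\atopen$: this simultaneously guarantees pairwise intersection (hence transitivity) and pushes every such intersection out of $\atopen$ (hence failure of strong transitivity). I expect this to be the main --- indeed essentially the only --- obstacle; part~\ref{item.strong.is.stronger.1} is a one-line consequence of Lemma~\ref{lemm.betweenY.basic.sets}(\ref{item.betweenY.basic.sets.3}).
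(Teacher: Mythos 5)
Your proposal is correct and follows essentially the same route as the paper: part~1 is the same one-line deduction via Lemma~\ref{lemm.betweenY.basic.sets}(\ref{item.betweenY.basic.sets.3}), and your counterexample for part~2 is the paper's own example (Figure~\ref{fig.not-strong-topen}, right-hand diagram) up to relabelling of points --- the paper uses the hub point $1$ with $\atopen=\{0,2\}$ and opens generated by $\{1\},\{0,1\},\{1,2\}$, which is isomorphic to your hub point $2$ with $\atopen=\{0,1\}$. No gaps.
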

\begin{proof}
We consider each part in turn:
\begin{enumerate}
\item
Suppose $\atopen$ is strongly transitive and suppose $O\between\atopen\between O'$.
By Lemma~\ref{lemm.betweenY.basic.sets}(\ref{item.betweenY.basic.sets.2}) $O\between_\atopen \atopen \between_\atopen O'$.
By strong transitivity $O\between_\atopen O'$.
By Lemma~\ref{lemm.betweenY.basic.sets}(\ref{item.betweenY.basic.sets.3}) $O\between O'$.
Thus $\atopen$ is transitive.
\item
It suffices to provide a counterexample.
This is illustrated in Figure~\ref{fig.not-strong-topen} (right-hand diagram).
We set:
\begin{itemize*}
\item
$\ns P = \{0,1,2\}$, and
\item
$\opens= \{\varnothing,\ \{1\},\ \{0,1\},\ \{1,2\},\ \{0,1,2\}\}$.
\item
We set $\atopen=\{0,2\}$.
\end{itemize*}
We note that $(\ns P,\opens)$ is a topology, and it is easy to check that $\atopen$ is transitive --- we just note that $\{0,1\}\between\atopen\between\{1,2\}$ and $\{0,1\}\between\{1,2\}$.
However, $\atopen$ is not strongly transitive, because $\{0,1\}\cap\{1,2\}=\{1\}\not\subseteq\atopen$.
\qedhere\end{enumerate}
\end{proof}

\begin{prop}
Suppose $(\ns P,\opens)$ is a semitopology and suppose $\atopen\in\opens$.
Then the following are equivalent:
\begin{enumerate*}
\item
$\atopen$ is a strong topen.
\item
$\atopen$ is a topen in $(\atopen,\opens\cap \atopen)$ (Definition~\ref{defn.subspace}).
\end{enumerate*} 
\end{prop}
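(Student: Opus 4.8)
The plan is to observe that, once the defining clauses of both sides are unpacked, they become word-for-word identical, so the equivalence is essentially a bookkeeping exercise with the subspace open sets.

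First I would dispose of the easy clauses. Since $\atopen\in\opens$ is assumed, $\atopen$ is open in $\ns P$, and it is automatically open in its own subspace $(\atopen,\opens\cap\atopen)$ because $\atopen=\ns P\cap\atopen$ with $\ns P\in\opens$. Both `strong topen' (Definition~\ref{defn.strongly.transitive}(\ref{strong.transitive.cc})) and `topen in the subspace' (Definition~\ref{defn.transitive}(\ref{transitive.cc})) additionally require $\atopen$ to be nonempty, which is the same requirement on both sides. Hence the whole statement reduces to showing that $\atopen$ is strongly transitive in $(\ns P,\opens)$ if and only if $\atopen$ is transitive in $(\atopen,\opens\cap\atopen)$.

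The core step is then to translate transitivity in the subspace into a statement about open sets of $\ns P$. Every open set of the subspace has the form $O\cap\atopen$ with $O\in\opens$, so transitivity in the subspace reads: for all $O,O'\in\opens$, $(O\cap\atopen)\between\atopen\between(O'\cap\atopen)$ implies $(O\cap\atopen)\between(O'\cap\atopen)$. I would simplify each clause using that $O\cap\atopen\subseteq\atopen$: the hypothesis $(O\cap\atopen)\between\atopen$ holds exactly when $O\cap\atopen\neq\varnothing$, i.e.\ when $O\between\atopen$ (and symmetrically for $O'$, using symmetry of $\between$, Lemma~\ref{lemm.between.elementary}(\ref{item.between.symmetric})); and the conclusion $(O\cap\atopen)\between(O'\cap\atopen)$ is, by definition of $\between_\atopen$ together with Lemma~\ref{lemm.betweenY.basic.sets}(\ref{item.betweenY.basic.sets.1}), exactly $O\between_\atopen O'$. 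After these substitutions, transitivity in the subspace becomes precisely the condition $\Forall{O,O'\in\opens}O\between\atopen\between O'\limp O\between_\atopen O'$, which is the definition of strong transitivity of $\atopen$.

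I do not expect a genuine obstacle here; the only point requiring a little care is the quantifier matching in the core step: subspace transitivity quantifies over pairs of subspace-open sets $U=O\cap\atopen$, and I must check that the translated hypothesis and conclusion depend only on $O\cap\atopen$ (not on the particular $O\in\opens$ presenting it), so that quantifying over subspace opens is interchangeable with quantifying over $O,O'\in\opens$. Since every clause above was rewritten purely in terms of $O\cap\atopen$ and $O'\cap\atopen$, this is immediate, and the two conditions coincide on the nose.
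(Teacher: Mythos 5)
Your proposal is correct and takes essentially the same approach as the paper: both proofs unpack the definitions, use that subspace opens are exactly the sets $O\cap\atopen$ for $O\in\opens$, and observe that transitivity of $\atopen$ in the subspace coincides with strong transitivity in $(\ns P,\opens)$ (the paper phrases this as two implications via Lemma~\ref{lemm.betweenY.basic.sets}, while you do a single bidirectional translation, but the content is the same).
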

\begin{proof}
Suppose $\atopen$ is a strong topen; thus $\atopen$ is nonempty, open, and strongly transitive in $(\ns P,\opens)$.
Then by construction $\atopen$ is open in $(\atopen,\opens\cap \atopen)$, and the strong transitivity property of Definition~\ref{defn.strongly.transitive} asserts precisely that $\atopen$ is transitive as a subset of $(\atopen,\opens\cap \atopen)$.

Now suppose $\atopen$ is a topen in $(\atopen,\opens\cap \atopen)$; thus $\atopen$ is nonempty, open, and transitive in $(\atopen,\opens\cap \atopen)$.
Then $\atopen$ is nonempty and by assumption above $\atopen\in\opens$.\footnote{It does not follow from $\atopen$ being open in $(\atopen,\opens\cap \atopen)$ that $\atopen$ is open in $(\ns P,\opens)$, which is why we included an assumption that this holds in the statement of the result.}
Now suppose $O,O'\in\opens$ and $O\between \atopen\between O'$.
Then by Lemma~\ref{lemm.betweenY.basic.sets}(\ref{item.betweenY.basic.sets.2}) $O \between_\atopen \atopen\between_\atopen O'$, so by transitivity of $\atopen$ in $(\atopen,\opens\cap \atopen)$ also $O\between_\atopen O'$, and thus by Lemma~\ref{lemm.betweenY.basic.sets}(\ref{item.betweenY.basic.sets.3}) also $O\between O'$. 
\end{proof}

\jamiesubsubsection{Connection to lattice theory}

There is a notion from order-theory of a \emph{join-irreducible} element (see for example in \cite[Definition~2.42]{priestley:intlo}), and a dual notion of \emph{meet-irreducible} element:
\begin{defn}
Call an element $s$ in a lattice $\mathcal L$ 
\begin{itemize*}
\item
\deffont[join-irreducible element]{join-irreducible} when $s$ is not a bottom element, and $s$ is not a join of two strictly smaller elements: if $x\vee y=s$ then $x=s$, or $y=s$, and
\item
\deffont[meet-irreducible element]{meet-irreducible} when $s$ is not a top element, and $s$ is not a meet of two strictly greater elements: if $x\wedge y=s$ then $x=s$ or $y=s$. 
\end{itemize*}
This definition is typically given for lattices, but it makes just as much sense for semilattices as well.
\end{defn}

\begin{xmpl}
\label{xmpl.meet-irreducible}
\leavevmode
\begin{enumerate*}
\item
Consider the lattice of finite (possibly empty) subsets of $\mathbb N$, with $\mathbb N$ adjoined as a top element.
Then $\mathbb N$ is join-irreducible; $\mathbb N\subseteq\mathbb N$ is not a bottom element, and if $x\cup y=\mathbb N$ then either $x=\mathbb N$ or $y=\mathbb N$.
\item\label{item.final.N}
Consider $\mathbb N$ with the \deffont{final segment semitopology} such that opens are either $\varnothing$ or sets $n_\geq = \{n'\in\mathbb N \mid n'\geq n\}$.

Then $\varnothing$ is meet-irreducible; $\varnothing$ is not a top element, and if $x\cap y=\varnothing$ then either $x=\varnothing$ or $y=\varnothing$.
\item
Consider the integers with the lattice structure in which meet is minimum and join is maximum.
Then every element is join- and meet-irreducible; if $x\vee y=z$ then $x=z$ or $y=z$, and similarly for $x\wedge y$. 
\end{enumerate*}
\end{xmpl}

We spell out how this is related to our notions of transitivity from Definitions~\ref{defn.transitive} and~\ref{defn.strongly.transitive}:
\begin{lemm}
\label{lemm.meet-irreducible}
Suppose $(\ns P,\opens)$ is a semitopology and $\atopen\subseteq\ns P$.
Then: 
\begin{enumerate*}
\item\label{item.meet-irreducible.1}
$\atopen$ is strongly transitive if and only if $\varnothing$ is meet-irreducible in $(\atopen,\opens\cap \atopen)$ (Definition~\ref{defn.subspace}). 
\item
$\atopen$ is transitive if $\varnothing$ is meet-irreducible in $(\atopen,\opens\cap \atopen)$.
\item
If $\atopen$ is transitive it does not necessarily follow that $\varnothing$ is meet-irreducible in $(\atopen,\opens\cap \atopen)$.
\end{enumerate*}
\end{lemm}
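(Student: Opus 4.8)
The plan is to prove part~\ref{item.meet-irreducible.1} by a direct dictionary between $(\ns P,\opens)$ and the subspace $(\atopen,\opens\cap\atopen)$, and then to obtain parts~2 and~3 formally from it together with Lemma~\ref{lemm.strong.is.stronger}.

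First I would set up the dictionary for part~\ref{item.meet-irreducible.1}. Every open set of the subspace has the form $U=O\cap\atopen$ with $O\in\opens$, and unwinding Notation~\ref{nttn.between} gives $U\neq\varnothing$ exactly when $O\between\atopen$; likewise $V=O'\cap\atopen\neq\varnothing$ exactly when $\atopen\between O'$. By Lemma~\ref{lemm.betweenY.basic.sets}(\ref{item.betweenY.basic.sets.1}) the relation $O\between_\atopen O'$ says precisely $O\cap O'\cap\atopen\neq\varnothing$, which is the same as $U\cap V\neq\varnothing$, i.e.\ that the meet of $U$ and $V$ in the subspace (computed as set intersection, as in Example~\ref{xmpl.meet-irreducible}(\ref{item.final.N})) is nonempty. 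Under this translation the defining implication of strong transitivity, $O\between\atopen\between O'\limp O\between_\atopen O'$, reads ``any two nonempty subspace opens have nonempty meet'', whose contrapositive is exactly the assertion that $\varnothing$ is not the meet of two opens strictly above it. Pairing this with the other clause of meet-irreducibility---that $\varnothing$ not be the top element (the top of the subspace is $\atopen$ itself, since $\atopen=\ns P\cap\atopen\in\opens\cap\atopen$), which amounts to $\atopen\neq\varnothing$---yields the equivalence. I would flag the single degenerate point where the two notions formally part ways: if $\atopen=\varnothing$ then strong transitivity holds vacuously whereas $\varnothing$ is the top of $(\varnothing,\{\varnothing\})$ and so is not meet-irreducible; the intended and interesting reading is therefore for nonempty $\atopen$.

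Part~2 is then immediate: if $\varnothing$ is meet-irreducible then by part~\ref{item.meet-irreducible.1} $\atopen$ is strongly transitive, and by Lemma~\ref{lemm.strong.is.stronger}(\ref{item.strong.is.stronger.1}) strong transitivity implies transitivity. For part~3 I would reuse the separating example already given in Lemma~\ref{lemm.strong.is.stronger}(\ref{item.strong.is.stronger.2}), namely the topology $\ns P=\{0,1,2\}$ with $\opens=\{\varnothing,\{1\},\{0,1\},\{1,2\},\ns P\}$ and $\atopen=\{0,2\}$, which is transitive but not strongly transitive. Here $\opens\cap\atopen=\{\varnothing,\{0\},\{2\},\{0,2\}\}$, and the two nonempty opens $\{0\}$ and $\{2\}$ meet to $\varnothing$, so $\varnothing$ is meet-reducible; this exhibits a transitive $\atopen$ for which $\varnothing$ is not meet-irreducible. (Equivalently, part~3 is immediate from part~\ref{item.meet-irreducible.1} and the failure of the converse recorded in Lemma~\ref{lemm.strong.is.stronger}(\ref{item.strong.is.stronger.2}).)

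The real work, and the only genuine obstacle, is in part~\ref{item.meet-irreducible.1}, and it is conceptual rather than computational: one must read the meet in $(\atopen,\opens\cap\atopen)$ as set intersection---the operation used in Example~\ref{xmpl.meet-irreducible}---and not as the abstract lattice meet, which in a semitopology is the interior of the intersection and can be strictly smaller. Only with the former reading do strong transitivity and meet-irreducibility coincide; once the notion of meet is pinned down and the nonemptiness/top-element bookkeeping is handled, the equivalence is a routine contrapositive unwinding of the two definitions.
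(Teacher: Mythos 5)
Your proof is correct and follows essentially the same route as the paper's: part~\ref{item.meet-irreducible.1} by matching the contrapositive of strong transitivity against the meet-irreducibility condition with the meet read as set intersection, part~2 from part~\ref{item.meet-irreducible.1} together with Lemma~\ref{lemm.strong.is.stronger}(\ref{item.strong.is.stronger.1}), and part~3 by counterexample (you reuse the space from Lemma~\ref{lemm.strong.is.stronger}(\ref{item.strong.is.stronger.2}) where the paper uses the left-hand diagram of Figure~\ref{fig.not-strong-topen}; both work). Your two extra observations --- that for $\atopen=\varnothing$ strong transitivity holds vacuously while $\varnothing$ is the top of $(\varnothing,\{\varnothing\})$ and so fails the `not a top element' clause, and that the meet in $(\atopen,\opens\cap\atopen)$ must be read as set intersection rather than the abstract poset meet --- are both correct refinements that the paper's proof silently elides.
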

\begin{proof}
We reason as follows: 
\begin{enumerate}
\item
$\varnothing$ is meet-irreducible in $(\atopen,\opens\cap \atopen)$ means that $(O\cap \atopen)\cap (O'\cap \atopen)=\varnothing$ implies $O\cap \atopen=\varnothing$ or $O\cap \atopen'=\varnothing$.

$\atopen$ is strongly transitive when (taking the contrapositive in Definition~\ref{defn.strongly.transitive}(\ref{item.strongly.transitive})) $(O\cap \atopen)\cap (\atopen\cap O')=\varnothing$ implies $O\cap \atopen=\varnothing$ or $\atopen\cap O'=\varnothing$.

That these conditions are equivalent follows by straightforward sets manipulations. 
\item
We can use part~\ref{item.meet-irreducible.1} of this result and Lemma~\ref{lemm.strong.is.stronger}(\ref{item.strong.is.stronger.1}), or give a direct argument by sets calculations: if $O\cap O'=\varnothing$ then $(O\cap \atopen)\cap (\atopen\cap O')=\varnothing$ and by meet-irreducibility $O\cap \atopen=\varnothing$ or $\atopen\cap O'=\varnothing$ as required.
\item
Figure~\ref{fig.not-strong-topen} (left-hand diagram) provides a counterexample, taking $\atopen=\{0,1\}$ and $O=\{0,2\}$ and $O'=\{1,2\}$.
Then $(O\cap \atopen)\cap (\atopen\cap O')=\varnothing$ but it is not the case that $O\cap \atopen=\varnothing$ or $O'\cap \atopen=\varnothing$.
\qedhere\end{enumerate}
\end{proof}

\begin{rmrk}
\label{rmrk.imperfect}
The proof of Lemma~\ref{lemm.meet-irreducible} not hard, but the result is interesting for what it says, and also for what it does not say:
\begin{enumerate}
\item
The notion of being a strong topen maps naturally to something in order theory; namely that $\varnothing$ is meet-irreducible in the induced poset $\{O\cap \atopen\mid O\in\opens\}$ which is the set of open sets of the subspace $(\atopen,\opens\cap \atopen)$ of $(\ns P,\opens)$.
\item
However, this mapping is imperfect: the poset is not a lattice, and it is also not a sub-poset of $\opens$ --- even if $\atopen$ is topen.
If $\opens$ were a topology and closed under intersections then we would have a lattice --- but it is precisely the point of difference between semitopologies vs. topologies that open sets need not be closed under intersections. 
\item
Being transitive does not correspond to meet-irreducibility; there is an implication in one direction, but certainly not in the other. 
\end{enumerate}
So, Lemma~\ref{lemm.meet-irreducible} says that (strong) transitivity has a flavour of meet-irreducibility, but in a way that also illustrates --- as did Proposition~\ref{prop.max.topen.min.closed}(\ref{item.max.topen.min.closed.2}) --- how semitopologies are different, because they are not closed under intersections, and have their own behaviour.
\end{rmrk}

\jamiesubsubsection{Topens in topologies}
\label{subsection.topens.in.topologies}

We conclude by briefly looking at what `being topen' means if our semitopology is actually a topology.
We recall a standard definition from topology:
\begin{defn}
\label{defn.tangled}
Suppose $(\ns P,\opens)$ is a semitopology.
Call $\atopen\subseteq\ns P$ \deffont[hyperconnected set]{hyperconnected} when all nonempty open subsets of $\atopen$ intersect.\footnote{Calling this \emph{hyperconnected} is a slight but natural generalisation of the usual definition: in topology, `hyperconnected' is typically used to refer to an entire space rather than a subset of it.  In the case that $\atopen=\ns P$, our definition specialises to the usual one.}
In symbols: 
$$
\Forall{O,O'\in\opens_{\neq\varnothing}} O,O'\subseteq\atopen \limp O\between O' .
$$
\end{defn}

\begin{lemm}
\label{lemm.tran.neosi}
Suppose $(\ns P,\opens)$ is a semitopology.
Then if $\atopen\subseteq\ns P$ is transitive then it is hyperconnected.
\end{lemm}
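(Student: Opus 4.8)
The plan is to unwind the two definitions and observe that hyperconnectedness is essentially the special case of transitivity in which the two test open sets are themselves subsets of $\atopen$. Recall that $\atopen$ being transitive means $\Forall{O,O'{\in}\opens}O\between \atopen\between O'\limp O\between O'$, whereas $\atopen$ being hyperconnected means that any two nonempty open $O,O'\subseteq\atopen$ satisfy $O\between O'$. So the task is to manufacture the hypothesis $O\between\atopen\between O'$ from the data that $O$ and $O'$ are nonempty opens contained in $\atopen$, and then simply invoke transitivity.

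First I would fix two nonempty open sets $O,O'\in\opens_{\neq\varnothing}$ with $O\subseteq\atopen$ and $O'\subseteq\atopen$, these being exactly the objects quantified over in Definition~\ref{defn.tangled}. The key step is to note that a nonempty subset always intersects the set it is contained in: since $O\neq\varnothing$ and $O\subseteq\atopen$, Lemma~\ref{lemm.between.elementary}(\ref{between.subset}) gives $O\between\atopen$; likewise from $O'\neq\varnothing$ and $O'\subseteq\atopen$ we get $O'\between\atopen$, hence $\atopen\between O'$ by symmetry (Lemma~\ref{lemm.between.elementary}(\ref{item.between.symmetric})). Chaining these yields $O\between\atopen\between O'$.

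Now transitivity of $\atopen$ applies directly — with the two open sets being $O$ and $O'$ — and concludes $O\between O'$, which is precisely what hyperconnectedness of $\atopen$ demands. That closes the argument.

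There is really no serious obstacle here; the result is a one-line consequence once the right elementary fact about $\between$ is cited. If anything, the only point requiring a moment's care is making sure the nonemptiness hypotheses on $O$ and $O'$ are genuinely available, since they are exactly what is needed to pass from $O\subseteq\atopen$ to $O\between\atopen$ (a subset that were empty would not intersect $\atopen$ under the conventions fixed in Notation~\ref{nttn.between}). Both nonemptiness conditions are supplied for free by the quantification over $\opens_{\neq\varnothing}$ in the definition of hyperconnected, so the proof goes through cleanly.
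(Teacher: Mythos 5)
Your proof is correct and is essentially the paper's own argument: both derive $O\between\atopen\between O'$ from nonemptiness and containment of $O,O'$ in $\atopen$ and then invoke transitivity directly. The only difference is that you spell out the appeal to Lemma~\ref{lemm.between.elementary}(\ref{between.subset}), which the paper leaves implicit.
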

\begin{proof}
Suppose $\varnothing\neq O,O'\subseteq\atopen$.
Then $O\between\atopen\between O'$ and by transitivity $O\between O'$ as required.
\end{proof}

What is arguably particularly interesting about Lemma~\ref{lemm.tran.neosi} is that its reverse implication does \emph{not} hold, and in quite a strong sense: 
\begin{lemm}
\label{lemm.tran.no.neosi}
Suppose $(\ns P,\opens)$ is a semitopology and $\atopen\subseteq\ns P$. 
Then:
\begin{enumerate*}
\item
$\atopen$ can be hyperconnected but not transitive, even if $(\ns P,\opens)$ is a topology (not just a semitopology).
\item
$\atopen$ can be hyperconnected but not transitive, even if $\atopen$ is an open set.
\end{enumerate*}
\end{lemm}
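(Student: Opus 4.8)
Both parts are existential (`can be') statements, so the plan is to produce explicit finite counterexamples; no general theory is needed. The conceptual hinge is the asymmetry between the two definitions. Hyperconnectedness (Definition~\ref{defn.tangled}) only tests open sets \emph{contained in} $\atopen$, whereas transitivity (Definition~\ref{defn.transitive}) tests arbitrary open sets that merely \emph{meet} $\atopen$. So I want a set $\atopen$ all of whose own open subsets pairwise intersect, together with two `external' open sets $O,O'$ that each touch $\atopen$ but are disjoint from each other.

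For part~1 I take $\ns P=\{0,1,2\}$ with $\opens=\{\varnothing,\{0\},\{1,2\},\ns P\}$ --- this is closed under binary intersections (note $\{0\}\cap\{1,2\}=\varnothing$), so it is genuinely a topology --- and I set $\atopen=\{0,1\}$, which is not open. The only nonempty open subset of $\atopen$ is $\{0\}$, so $\atopen$ is trivially hyperconnected; but $\{0\}\between\atopen\between\{1,2\}$ while $\{0\}\notbetween\{1,2\}$, so $\atopen$ is not transitive. This settles part~1.

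For part~2 the added demand is that $\atopen$ be open, and here I expect the one genuine idea. The first move --- which I regard as the crux --- is to notice that \emph{no topology can work}: if $\atopen$ is open in a topology and $O\between\atopen$, then $O\cap\atopen$ is a nonempty \emph{open} subset of $\atopen$, so applying hyperconnectedness to $O\cap\atopen$ and $O'\cap\atopen$ forces $O\between O'$, i.e. transitivity holds automatically. Thus any witness must exploit the one feature separating semitopologies from topologies, namely that $O\cap\atopen$ need not be open.

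Guided by this, I simply make $\atopen=\{0,1\}$ open in the part~1 example, i.e. I enlarge $\opens$ to $\{\varnothing,\{0\},\{0,1\},\{1,2\},\ns P\}$. This is still closed under unions, so it is a semitopology, but it is no longer a topology precisely because $\{0,1\}\cap\{1,2\}=\{1\}\notin\opens$. Now $\atopen$ is open; its nonempty open subsets are $\{0\}$ and $\{0,1\}$, which intersect, so it stays hyperconnected; and the same pair $O=\{0\}$, $O'=\{1,2\}$ witnesses $O\between\atopen\between O'$ with $O\notbetween O'$, so $\atopen$ is still not transitive. The remaining verifications (that each $\opens$ satisfies Definition~\ref{defn.semitopology} and that the stated intersections hold) are routine, so the only real content is the topological-collapse observation, which both forces me out of topology-land and tells me exactly what to build.
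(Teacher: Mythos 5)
Your proposal is correct and takes essentially the same approach as the paper: both parts are settled by explicit small counterexamples exploiting the fact that hyperconnectedness only inspects open subsets of $\atopen$ while transitivity inspects arbitrary open sets meeting $\atopen$ (the paper uses the lower-left and top-right semitopologies of Figure~\ref{fig.012} with $\atopen=\{0,4\}$ and $\atopen=\{0,1\}$ respectively; your three-point witnesses are equally valid). Your ``no topology can work for part~2'' observation is sound and is precisely the content of the paper's subsequent Lemma~\ref{lemm.transitive.topology}, so it is a nice motivation but not part of the proof proper.
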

\begin{proof}
It suffices to provide counterexamples:
\begin{enumerate}
\item
Consider the semitopology illustrated in the lower-left diagram in Figure~\ref{fig.012} (which is a topology), and set $\atopen=\{0,4\}$.
This has no nonempty open subsets so it is trivially hyperconnected.
However, $\atopen$ is not transitive because $\{0,1\}\between \atopen \between \{3,4\}$ yet $\{0,1\}\notbetween\{3,4\}$.
\item
Consider the semitopology illustrated in the top-right diagram in Figure~\ref{fig.012}, and set $\atopen=\{0,1\}$.
This has two nonempty open subsets, $\{0\}$ and $\{0,1\}$, so it is hyperconnected.
However, $\atopen$ is not transitive, because $\{0\}\between \atopen \between \{1,2\}$ yet $\{0\}\notbetween\{1,2\}$.
\qedhere\end{enumerate}
\end{proof}

We know from Lemma~\ref{lemm.strong.is.stronger}(\ref{item.strong.is.stronger.2}) that `transitive' does not imply `strongly transitive' for an arbitrary subset $\atopen\subseteq\ns P$, even in a topology.
When read together with Lemmas~\ref{lemm.tran.neosi} and~\ref{lemm.tran.no.neosi}, this invites the question of what happens when 
\begin{itemize*}
\item
$(\ns P,\opens)$ is a topology, and \emph{also} 
\item
$\atopen$ is an open set.
\end{itemize*}
In this natural special case, strong transitivity, transitivity, and being hyperconnected, all become equivalent: 
\begin{lemm}
\label{lemm.transitive.topology}
Suppose $(\ns P,\opens)$ is a topology and suppose $\atopen\in\opens$ is an open set.
Then the following are equivalent:
\begin{itemize*}
\item
$\atopen$ is a strong topen (Definition~\ref{defn.strongly.transitive}(\ref{strong.transitive.cc})).
\item
$\atopen$ is a topen.
\item
$\atopen$ is hyperconnected.
\end{itemize*}
\end{lemm}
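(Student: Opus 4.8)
The plan is to observe that two of the three implications come for free from earlier results, so that the real work is to close the cycle using the two hypotheses that distinguish this special case: that $(\ns P,\opens)$ is a \emph{topology} and that $\atopen$ is \emph{open}.

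First I would record the easy directions, which in fact hold in any semitopology. A strong topen is strongly transitive, hence transitive by Lemma~\ref{lemm.strong.is.stronger}(\ref{item.strong.is.stronger.1}), and it is nonempty and open, so it is a topen. A topen is transitive, hence hyperconnected by Lemma~\ref{lemm.tran.neosi}. This gives the chain \emph{strong topen} $\limp$ \emph{topen} $\limp$ \emph{hyperconnected}, so it remains only to prove \emph{hyperconnected} $\limp$ \emph{strong topen}.

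For this last implication, suppose $\atopen$ is open, nonempty, and hyperconnected, and take $O,O'\in\opens$ with $O\between\atopen\between O'$; I must show $O\between_\atopen O'$, i.e. (Definition~\ref{defn.betweenY}) $(O\cap\atopen)\between(O'\cap\atopen)$. Here is the one place the hypotheses bite: because $(\ns P,\opens)$ is a topology, $O\cap\atopen$ and $O'\cap\atopen$ are open; they are nonempty precisely because $O\between\atopen$ and $\atopen\between O'$; and they are subsets of $\atopen$. Thus they are two nonempty open subsets of $\atopen$, so hyperconnectedness (Definition~\ref{defn.tangled}) forces $(O\cap\atopen)\between(O'\cap\atopen)$, which is exactly $O\between_\atopen O'$. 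Hence $\atopen$ is strongly transitive, and being nonempty and open, a strong topen.

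The main (and really the only) obstacle is the step where hyperconnectedness is applied: it yields intersections only of open \emph{subsets of $\atopen$}, so one must first manufacture such subsets from the arbitrary opens $O,O'$ that merely meet $\atopen$ — and this is exactly where closure under finite intersection is indispensable. In a bare semitopology $O\cap\atopen$ need not be open, and Lemma~\ref{lemm.tran.no.neosi} shows the implication genuinely fails there. I would also flag the degenerate case $\atopen=\varnothing$, which is vacuously hyperconnected but is neither a topen nor a strong topen; so the statement is to be read for nonempty $\atopen$, as is implicit since topens are nonempty by convention.
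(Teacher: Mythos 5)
Your proof is correct and follows essentially the same route as the paper: the first two implications are quoted from Lemma~\ref{lemm.strong.is.stronger}(\ref{item.strong.is.stronger.1}) and Lemma~\ref{lemm.tran.neosi}, and the closing implication intersects $O$ and $O'$ with $\atopen$ to produce nonempty open subsets of $\atopen$ (using that $(\ns P,\opens)$ is a topology and $\atopen$ is open) and then applies hyperconnectedness. Your remark pinpointing exactly where closure under intersection is needed, and your flag of the vacuous case $\atopen=\varnothing$, are both accurate observations that the paper leaves implicit.
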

\begin{proof}
We assumed $\atopen$ is open, so the equivalence above can also be thought of as 
\begin{quote}
strongly transitive $\liff$ transitive $\liff$ all nonempty open subsets intersect.
\end{quote}
We prove a chain of implications:
\begin{itemize}
\item
If $\atopen$ is a strong topen then it is a topen by Lemma~\ref{lemm.strong.is.stronger}(\ref{item.strong.is.stronger.1}).
\item
If $\atopen$ is a topen then we use Lemma~\ref{lemm.tran.neosi}.
\item
Suppose $\atopen$ is hyperconnected, so every pair of nonempty open subsets of $\atopen$ intersect; and 
suppose $O,O'\in\opens_{\neq\varnothing}$ and $O\between\atopen\between O'$.
Then also $(O\cap\atopen) \between \atopen \between (O'\cap\atopen)$.
Now $O\cap\atopen$ and $O'\cap\atopen$ are open: because $\atopen$ is open; and $\ns P$ is a topology (not just a semitopology), so intersections of open sets are open.
By transitivity of $\atopen$ we have $O\cap\atopen\between O'\cap\atopen$.
Since $O$ and $O'$ were arbitrary, $\atopen$ is strongly transitive.
\qedhere\end{itemize} 
\end{proof}

\jamiesection{Interiors, communities \& regular points}
\label{sect.regular.points}

\jamiesubsection{Community of a (regular) point}

Definition~\ref{defn.interior} is standard:
\begin{defn}[Open interior]
\label{defn.interior}
Suppose $(\ns P,\opens)$ is a semitopology and $P\subseteq\ns P$.
Define $\interior(P)$ the \deffont{(open) interior of $P$}\index{$\interior(P)$ (open interior)} by
$$
\interior(P)=\bigcup\{ O\in\opens \mid O\subseteq P\} .
$$
\end{defn}

\begin{lemm}
\label{lemm.interior.open}
Suppose $(\ns P,\opens)$ is a semitopology and $P\subseteq\ns P$.
Then $\interior(P)$ from Definition~\ref{defn.interior} is the greatest open subset of $P$.
\end{lemm}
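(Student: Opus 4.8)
The plan is to unpack Definition~\ref{defn.interior} and verify the two defining properties of a greatest open subset directly: that $\interior(P)$ is itself an open subset of $P$, and that it dominates every other open subset of $P$. Both follow almost immediately from the fact that $\interior(P)$ is defined as the union of the family $\mathcal O_P=\{O\in\opens \mid O\subseteq P\}$, so there is no real obstacle here — the work is just bookkeeping.

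First I would establish that $\interior(P)\in\opens$. Since $\interior(P)=\bigcup\mathcal O_P$ is a union of open sets, it is open by the union-closure condition of Definition~\ref{defn.semitopology}(\ref{semitopology.unions}). Next I would check $\interior(P)\subseteq P$: every $O\in\mathcal O_P$ satisfies $O\subseteq P$ by the defining condition of $\mathcal O_P$, and a union of subsets of $P$ is again a subset of $P$, so $\bigcup\mathcal O_P\subseteq P$. This settles that $\interior(P)$ is an open subset of $P$.

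Finally I would verify maximality. Let $O'\in\opens$ be any open set with $O'\subseteq P$. Then by definition $O'\in\mathcal O_P$, and hence $O'\subseteq\bigcup\mathcal O_P=\interior(P)$. Thus every open subset of $P$ is contained in $\interior(P)$, which combined with the previous paragraph shows $\interior(P)$ is the greatest open subset of $P$.

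The only point requiring any care is the edge case where $\mathcal O_P$ is empty (for instance if $P$ has no open subset beyond what the axioms force); but this causes no difficulty, since $\varnothing\in\opens$ and $\varnothing\subseteq P$ always hold by Definition~\ref{defn.semitopology}(\ref{semitopology.empty.and.universe}), so $\mathcal O_P$ is never empty and $\bigcup\mathcal O_P$ is well defined as an open set. I expect the entire argument to be a few lines of routine verification rather than anything genuinely hard.
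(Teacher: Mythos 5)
Your proof is correct and is exactly the routine verification that the paper's proof (which simply says ``Routine by the construction in Definition~\ref{defn.interior} and closure of open sets under unions'') leaves implicit: openness from closure under unions, containment in $P$ since each member of the family is a subset of $P$, and maximality since every open subset of $P$ belongs to the family being unioned. No difference in approach; your edge-case remark is harmless but unnecessary, since $\bigcup\varnothing=\varnothing$ is open in any case.
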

\begin{proof}
Routine by the construction in Definition~\ref{defn.interior} and closure of open sets under unions (Definition~\ref{defn.semitopology}(\ref{semitopology.unions})).
\end{proof}

\begin{corr}
\label{corr.interior.monotone}
Suppose $(\ns P,\opens)$ is a semitopology and $P,P'\subseteq\ns P$.
Then if $P\subseteq P'$ then $\interior(P)\subseteq\interior(P')$.
\end{corr}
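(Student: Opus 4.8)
The plan is to deduce this directly from the characterisation of the interior as the greatest open subset (Lemma~\ref{lemm.interior.open}), since this gives the cleanest route. First I would observe that $\interior(P)$ is open and satisfies $\interior(P)\subseteq P$; combined with the hypothesis $P\subseteq P'$ and transitivity of $\subseteq$, this yields $\interior(P)\subseteq P'$. Thus $\interior(P)$ is an open subset of $P'$. Since $\interior(P')$ is the \emph{greatest} open subset of $P'$ (again by Lemma~\ref{lemm.interior.open}), every open subset of $P'$ is contained in it, and in particular $\interior(P)\subseteq\interior(P')$, as required.

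Alternatively, I would argue straight from the definition $\interior(P)=\bigcup\{O\in\opens\mid O\subseteq P\}$: if $P\subseteq P'$, then every $O\in\opens$ with $O\subseteq P$ also satisfies $O\subseteq P'$, so $\{O\in\opens\mid O\subseteq P\}\subseteq\{O\in\opens\mid O\subseteq P'\}$, and the union of a smaller family cannot exceed the union of a larger one. Either route is entirely elementary, and I would likely present the first since it reuses the lemma just proved.

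I do not expect any genuine obstacle here: the statement is pure monotonicity of the interior operator, and both the definitional and the ``greatest open subset'' arguments are a couple of lines. The only point worth flagging is that neither argument ever forms an intersection of open sets, so nothing relies on $\opens$ being closed under intersection; the proof therefore goes through identically for semitopologies as it would for topologies.
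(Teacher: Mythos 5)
Your proposal is correct and matches the paper's proof, which is simply ``Routine using Lemma~\ref{lemm.interior.open}'' --- exactly the greatest-open-subset argument you give first. The alternative definitional argument and the observation that no intersections are needed are both fine but add nothing beyond what the paper intends.
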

\begin{proof}
Routine using Lemma~\ref{lemm.interior.open}.
\end{proof}

\begin{defn}[Community of a point, and regularity]
\label{defn.tn}
Suppose $(\ns P,\opens)$ is a semitopology and $p\in\ns P$.
Then:
\begin{enumerate*}
\item\label{item.tn}
Define $\community(p)$ the \deffont[community of $p$ ($\community(p)$)]{community of $p$}\index{$\community(p)$ (community of a point)} by 
$$
\community(p)=\interior(\intertwined{p}) .
$$
\item\label{item.community.P}
Extend $\community$ to subsets $P\subseteq\ns P$ by taking a sets union:
$$
\community(P) = \bigcup\{\community(p) \mid p\in P\} .
$$
\item\label{item.regular.point}
Call $p$ a \deffont{regular point} when its community is a topen neighbourhood of $p$.
In symbols:
$$
p\text{ is regular}\quad\text{when}\quad p\in\community(p)\in\topens .
$$
\item\label{item.weakly.regular.point}
Call $p$ a \deffont{weakly regular point} when its community is an open (but not necessarily topen) neighbourhood of $p$.
In symbols:
$$
p\text{ is weakly regular}\quad\text{when}\quad p\in\community(p)\in\opens .
$$
\item\label{item.quasiregular.point}
Call $p$ a \deffont{quasiregular point} when its community is nonempty.
In symbols:
$$
p\text{ is quasiregular}\quad\text{when}\quad \varnothing\neq\community(p)\in\opens .
$$
\item\label{item.irregular.point}
If $p$ is not regular then we may call it an \deffont{irregular point}, or just say that it is not regular.
\item\label{item.regular.S}
If $P\subseteq\ns P$ and every $p\in P$ is regular/weakly regular/quasiregular/irregular then we may call $P$ a \deffont{regular/weakly regular/quasiregular/irregular set} respectively (see also Definition~\ref{defn.conflicted}(\ref{item.unconflicted})).
\qedhere\end{enumerate*}
\end{defn}

\begin{rmrk}
\label{rmrk.r.wr.qr}
Lemmas~\ref{lemm.wr.r} and~\ref{lemm.wr.r.no} give an overview of the relationships between the properties in Definition~\ref{defn.tn}.
\end{rmrk}

\begin{lemm}
\label{lemm.wr.r}
Suppose $(\ns P,\opens)$ is a semitopology and $p\in\ns P$.
Then:
\begin{enumerate*}
\item\label{item.r.implies.wr}
If $p$ is regular, then $p$ is weakly regular.
\item\label{item.wr.implies.qr}
If $p$ is weakly regular, then $p$ is quasiregular.
\end{enumerate*}
\end{lemm}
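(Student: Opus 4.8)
The plan is to unpack the three definitions in Definition~\ref{defn.tn} and observe that each condition is a strengthening of the next, so that both implications reduce to a direct comparison of what is being asserted. The only genuine facts I need are: every topen is in particular open, so $\topens\subseteq\opens$ (immediate from Definition~\ref{defn.transitive}(\ref{transitive.cc}), where a topen is defined to be a nonempty transitive \emph{open} set); and that any set containing a point is nonempty.

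For part~\ref{item.r.implies.wr}, I would suppose $p$ is regular, which by definition means $p\in\community(p)\in\topens$. Since $\topens\subseteq\opens$, this gives $\community(p)\in\opens$, and I already have $p\in\community(p)$; hence $p\in\community(p)\in\opens$, which is exactly the defining condition for $p$ to be weakly regular.

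For part~\ref{item.wr.implies.qr}, I would suppose $p$ is weakly regular, i.e. $p\in\community(p)\in\opens$. Because $p\in\community(p)$, the set $\community(p)$ is nonempty, so $\varnothing\neq\community(p)$; combined with $\community(p)\in\opens$ this is precisely the condition for $p$ to be quasiregular.

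I expect no real obstacle here: the entire content lies in the layering of the definitions, and each implication follows by discarding a condition (transitivity in the first case) or re-reading one (membership forcing nonemptiness in the second). As an aside, one could note that $\community(p)=\interior(\intertwined{p})$ is automatically open by Lemma~\ref{lemm.interior.open}, so the clause ``$\in\opens$'' is in fact redundant in all three definitions; but this observation is not needed for either implication, so I would not rely on it.
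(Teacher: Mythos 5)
Your proof is correct and follows essentially the same route as the paper: both implications are obtained by unpacking Definition~\ref{defn.tn} and noting that each regularity condition strengthens the next (topen implies open, and membership of $p$ implies nonemptiness). Your aside that $\community(p)=\interior(\intertwined{p})$ is automatically open is also accurate and consistent with Lemma~\ref{lemm.two.intertwined}(\ref{item.two.intertwined.1}).
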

\begin{proof}
We consider each part in turn:
\begin{enumerate}
\item
If $p$ is regular then by Definition~\ref{defn.tn}(\ref{item.regular.point}) $p\in\community(p)\in\topens$, so certainly $p\in\community(p)$ and by Definition~\ref{defn.tn}(\ref{item.weakly.regular.point}) $p$ is weakly regular.
\item
If $p$ is weakly regular then by Definition~\ref{defn.tn}(\ref{item.weakly.regular.point}) $p\in\community(p)\in\opens$, so certainly $\community(p)\neq\varnothing$ and by Definition~\ref{defn.tn}(\ref{item.quasiregular.point}) $p$ is quasiregular.
\qedhere
\end{enumerate}
\end{proof}

\begin{xmpl}
\label{xmpl.wr}
\leavevmode
\begin{enumerate*}
\item
In Figure~\ref{fig.not-strong-topen} (left-hand diagram),\ $0$, $1$, and $2$ are three intertwined points and the entire space $\{0,1,2\}$ consists of a single topen set.
It follows that $0$, $1$, and $2$ are all regular and their community is $\{0,1,2\}$.
\item\label{item.wr.2}
In Figure~\ref{fig.012} (top-left diagram),\ $0$ and $2$ are regular and $1$ is weakly regular but not regular ($1\in\community(1)=\{0,1,2\}$ but $\{0,1,2\}$ is not topen). 
\item\label{item.qr.2}
In Figure~\ref{fig.012} (lower-right diagram),\ $0$, $1$, and $2$ are regular and $\ast$ is quasiregular ($\community(\ast)=\{1\}$).
\item
In Figure~\ref{fig.012} (top-right diagram),\ $0$ and $2$ are regular and $1$ is neither regular, weakly regular, nor quasiregular ($\community(1)=\varnothing$).
\item
In a semitopology of values $(\tf{Val},\powerset(\tf{Val}))$ (Definition~\ref{defn.value.assignment}) every value $v\in\tf{Val}$ is regular, weakly regular, and unconflicted.
\item\label{item.wr.6}
In $\mathbb R$ with its usual topology (which is also a semitopology), every point is unconflicted because the topology is Hausdorff and by Equation~\ref{eq.hausdorff} in Remark~\ref{rmrk.not.hausdorff} this means precisely that $\intertwined{p}=\{p\}$ so $p$ is intertwined just with itself.
Furthermore $p$ is not (quasi/weakly)regular, because $\community(p)=\interior(\intertwined{p})=\varnothing$.
\end{enumerate*} 
\end{xmpl}

\begin{lemm}
\label{lemm.wr.r.no}
Suppose $(\ns P,\opens)$ is a semitopology and $p\in\ns P$.
Then:
\begin{enumerate*}
\item\label{item.wr.r.not.quasiregular}
$p$ might not be quasiregular (i.e. $\community(p)=\varnothing$); thus by Lemma~\ref{lemm.wr.r} it is also not weakly regular and not regular.
\item\label{item.wr.r.no.converse.1}
$p$ might be quasiregular but not weakly regular (i.e. $\community(p)\neq\varnothing$ but $p\notin\community(p)$); and 
\item\label{item.wr.r.no.converse.2}
$p$ might be weakly regular but not regular (i.e. $p\in\community(p)\notin\topens$). 
\end{enumerate*}
\end{lemm}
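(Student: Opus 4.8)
The plan is to prove each of the three non-implications by exhibiting a counterexample, and the pleasant observation is that all three are already available among the finite semitopologies of Example~\ref{xmpl.cc}, whose intertwined-point sets are recorded in Example~\ref{xmpl.how.different?}. Since $\community(p)=\interior(\intertwined{p})$, each verification reduces to reading off $\intertwined{p}$, taking its open interior (the greatest open subset, by Lemma~\ref{lemm.interior.open}), and then checking membership of $p$ and whether the resulting set is topen. So the proof is purely a matter of assembling and checking data that has mostly been computed already.

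For part~\ref{item.wr.r.not.quasiregular} I would use the semitopology of Example~\ref{xmpl.cc}(\ref{item.cc.two.regular.b}) (Figure~\ref{fig.012}, top-right) at the point $1$. Here $\intertwined{1}=\{1\}$, which is not open, so its interior is $\varnothing$; hence $\community(1)=\varnothing$ and $1$ is not quasiregular, and by Lemma~\ref{lemm.wr.r} it is therefore neither weakly regular nor regular. For part~\ref{item.wr.r.no.converse.1} I would use Example~\ref{xmpl.cc}(\ref{item.xmpl.cc.4}) (Figure~\ref{fig.012}, lower-right) at the point $\ast$. Here $\intertwined{\ast}=\{1,\ast\}$, whose greatest open subset is $\{1\}$ (no open set containing $\ast$ avoids both $0$ and $2$), so $\community(\ast)=\{1\}\neq\varnothing$ and $\ast$ is quasiregular; but $\ast\notin\{1\}$, so $\ast$ is not weakly regular.

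For part~\ref{item.wr.r.no.converse.2} I would use Example~\ref{xmpl.cc}(\ref{item.cc.two.regular}) (Figure~\ref{fig.012}, top-left) at the point $1$. Here $\intertwined{1}=\ns P=\{0,1,2\}$, which is open, so $\community(1)=\ns P\ni 1$ and $1$ is weakly regular. However $\ns P$ is not transitive: taking $O=\{0\}$ and $O'=\{2\}$ gives $O\between\ns P\between O'$ yet $O\notbetween O'$. Hence $\ns P\notin\topens$, and $1$ is not regular, as required.

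There is no genuine obstacle here: the entire content is the verification of three small finite examples, each reusing intertwined-point data computed earlier. The only point requiring a little care is correctly reading off which subsets are open in each example, since that is exactly what determines the open interiors $\community(p)=\interior(\intertwined{p})$; once the opens are fixed, every claim above is a one-line check.
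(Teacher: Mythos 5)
Your proof is correct and takes essentially the same approach as the paper: exhibiting explicit counterexamples, and indeed your examples for parts~\ref{item.wr.r.no.converse.1} and~\ref{item.wr.r.no.converse.2} (the point $\ast$ in the lower-right diagram and the point $1$ in the top-left diagram of Figure~\ref{fig.012}) are exactly the ones the paper uses. For part~\ref{item.wr.r.not.quasiregular} you use the finite top-right example of Figure~\ref{fig.012} at the point $1$ rather than the paper's choice of $0\in\mathbb R$ with the usual topology, but both verifications are valid and this is an immaterial difference.
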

\begin{proof}
We consider each part in turn:
\begin{enumerate}
\item
Point $0\in\mathbb R$ in Example~\ref{xmpl.wr}(\ref{item.wr.6}) is not quasiregular.
\item
Point $1$ in Example~\ref{xmpl.wr}(\ref{item.wr.2}) (illustrated in Figure~\ref{fig.012}, top-left diagram) is weakly regular ($\community(1)=\{0,1,2\}$) but not regular ($\community(1)$ is open but not topen).
\item
Point $\ast$ in Example~\ref{xmpl.wr}(\ref{item.qr.2}) (illustrated in Figure~\ref{fig.012}, lower-right diagram) is quasiregular ($\community(\ast)=\{1\}$ is nonempty but does not contain $\ast$).
\qedhere
\end{enumerate}
\end{proof}

\begin{lemm}
\label{lemm.intertwined.space.regular}
Suppose $(\ns P,\opens)$ is a semitopology.
Then:
\begin{enumerate*}
\item\label{item.intertwined.space.regular.1}
If all nonempty open sets intersect then $(\ns P,\opens)$ is regular (meaning that every $p\in\ns P$ is regular).
\item\label{item.intertwined.space.regular.2}
The reverse implication need not hold: it is possible for $(\ns P,\opens)$ to be regular but not all open sets intersect (cf. Corollary~\ref{corr.topen.partition.char}).
\end{enumerate*}
\end{lemm}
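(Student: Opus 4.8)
The plan is to treat the two parts separately, in both cases leaning on the characterisation of intertwined spaces from Lemma~\ref{lemm.intertwined.space}.

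For part~\ref{item.intertwined.space.regular.1}, I would first observe that the hypothesis ``all nonempty open sets intersect'' is, by Lemma~\ref{lemm.intertwined.space}, equivalent to $\ns P$ being an intertwined space and to $\ns P$ being transitive. The key step is then that, in an intertwined space, every point is intertwined with every other point, so $\intertwined{p}=\ns P$ for each $p\in\ns P$. Unfolding Definition~\ref{defn.tn}, this gives
$$
\community(p)=\interior(\intertwined{p})=\interior(\ns P)=\ns P,
$$
the last equality because $\ns P\in\opens$ by Definition~\ref{defn.semitopology}. It then remains to check that $\ns P$ is a topen neighbourhood of $p$: it is open and transitive (again by Lemma~\ref{lemm.intertwined.space}), and it is nonempty precisely because it contains the point $p$ we are examining. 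Hence $p\in\community(p)=\ns P\in\topens$, so $p$ is regular; and if $\ns P=\varnothing$ the claim is vacuous.

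For part~\ref{item.intertwined.space.regular.2}, I would simply exhibit a counterexample, and the cleanest is the discrete semitopology on two points: $\ns P=\{0,1\}$ with $\opens=\{\varnothing,\{0\},\{1\},\{0,1\}\}$. Here $\{0\}$ and $\{1\}$ are disjoint nonempty open sets, so not all open sets intersect. Yet both points are regular: since $\{0\}$ and $\{1\}$ are disjoint opens we have $0\notintertwinedwith 1$, whence $\intertwined{0}=\{0\}$ and $\community(0)=\interior(\{0\})=\{0\}$, a set which is nonempty, open, and (being a singleton) transitive by Example~\ref{xmpl.singleton.transitive}(\ref{item.singleton.transitive}), hence topen, and which contains $0$. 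The argument for $1$ is symmetric.

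I do not expect a genuine obstacle in either part; the work is routine once Lemma~\ref{lemm.intertwined.space} and the two-point counterexample are in hand. The one place that calls for a moment's care is the nonemptiness clause in part~\ref{item.intertwined.space.regular.1}: $\ns P$ qualifies as \emph{topen} (rather than merely transitive and open) only because the very act of checking regularity of a point $p$ certifies $\ns P\neq\varnothing$.
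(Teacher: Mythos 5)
Your proof is correct and follows essentially the same route as the paper: part~1 via Lemma~\ref{lemm.intertwined.space} to get $\intertwined{p}=\ns P=\community(p)\in\topens$, and part~2 via the two-point discrete semitopology (the paper just cites Corollary~\ref{corr.when.singleton.topen} where you verify the singletons are topen by hand). Your explicit remark that nonemptiness of $\ns P$ is certified by the point $p$ being examined is a small point of care that the paper leaves implicit.
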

\begin{proof}
We consider each part in turn:
\begin{enumerate}
\item
By Lemma~\ref{lemm.intertwined.space}(\ref{item.intertwined.space.P.transitive}) $\ns P\in\topens$ (since it is transitive and open).
By Lemma~\ref{lemm.intertwined.space}(\ref{item.intertwined.space.P}) $\intertwined{p}=\ns P$ for every $p\in\ns P$, thus $\community(p)=\interior(\intertwined{p})=\ns P$.
Thus $p\in\community(p)\in\topens$ for every $p\in\ns P$, so $\ns P$ is regular.
\item
It suffices to provide a counterexample.
We take any discrete semitopology with at least two elements; e.g. $(\{0,1\},\powerset(\{0,1\}))$.
Then $\{0\}\notintersectswith\{1\}$, but by Corollary~\ref{corr.when.singleton.topen} $0$ and $1$ are both regular.
\qedhere
\end{enumerate}
\end{proof}

\begin{xmpl}
When we started looking at semitopologies we gave some examples in Example~\ref{xmpl.semitopologies}.
These may seem quite elementary now, but we run through them commenting on which spaces are regular, weakly regular, or quasiregular:
\begin{itemize*}
\item
Any discrete semitopology is regular; topen neighbourhoods are just the singleton sets.
\item
The initial semitopology is regular: it has no topen neighbourhoods, but also no points.
The final semitopology is regular: it has one topen neighbourhood, containing one point.
The trivial topology is regular; it has a single topen neighbourhood that is $\ns P$ itself. 
\item
The supermajority semitopology is regular.
It has one topen neighbourhood containing all of $\ns P$.
\item
The many semitopology is regular if $\ns P$ is finite (because it is equal to the trivial semitopology), and not even quasiregular if $\ns P$ is infinite, because (for infinite $\ns P$) $\intertwined{p}=\varnothing$ for every point.
For example, if $\ns P=\mathbb N$ and $p$ is even and $p'$ is odd, then $\f{evens}=\{2*n \mid n\in\mathbb N\}$ and $\f{odds}=\{2*n\plus 1 \mid n\in\mathbb N\}$ are disjoint open neighbourhoods of $p$ and $p'$ respectively.
\item
The all-but-one semitopology is regular for $\ns P$ having cardinality of $3$ or more, since all points are intertwined so there is a single topen neighbourhood which is the whole space.
If $\ns P$ has cardinality $2$ or $1$ then we have a discrete semitopology (on two points or one point) and these too are regular, with two or one topen neighbourhoods. 
\item
The more-than-one semitopology is not even quasiregular for $\ns P$ having cardinality of $4$ or more.
If $\ns P$ has cardinality $3$ then we get the left-hand topology in Figure~\ref{fig.not-strong-topen}, which is regular.
If $\ns P$ has cardinality $2$ then we get the trivial semitopology, which is regular. 
\item
Take $\ns P=\mathbb R$ (the set of real numbers) and let open sets be generated by intervals of the form $\rightopeninterval{0,r}$ or $\leftopeninterval{\minus r,0}$ for any strictly positive real number $r>0$.
The reader can check that this semitopology is regular.
\item
Any quorum system induces an intertwined semitopology, as outlined in Example~\ref{xmpl.semitopologies}(\ref{item.quorum.system}).
By Lemmas~\ref{lemm.intertwined.space.regular}(\ref{item.intertwined.space.regular.1}) and~\ref{lemm.intertwined.space} this is a regular semitopology, and every nonempty open set is a topen neighbourhood.
\end{itemize*}
\end{xmpl}

\begin{rmrk}
We pause to recap:
\leavevmode
\begin{enumerate}
\item
$\community(p)$ always exists and always is open.
It may or may not be empty, may or may not be topen, and may or may not contain $p$.
\item
When $p\in\community(p)\in\topens$ we call $p$ `regular', which suggests that non-regular behaviour --- $p\notin\community(p)$ and/or $\community(p)\notin\topens$, or even $\community(p)=\varnothing$ --- is `bad behaviour', and being regular `good behaviour'.

But what is this good behaviour that regularity implies? 
Theorem~\ref{thrm.correlated} (continuous value assignments are constant on topens) tells us that a regular $p$ is surrounded by a topen neighbourhood of points $\community(p)=\interior(\intertwined{p})$ that must agree with it under continuous value assignments.
Using our terminology \emph{community} and \emph{regular}, we can say that \emph{the community of a regular $p$ shares its values}.
\item
We can sum up the above intuitively as follows: 
\begin{enumerate*}
\item
We care about transitivity because it implies agreement.
\item
We care about being open, because it implies actionability. 
\item
Thus, a regular point is interesting because it is a participant in a maximal topen neighbourhood and therefore can \emph{i)} come to agreement and \emph{ii)} take action on that agreement. 
\end{enumerate*}
\item
The question then arises how the community of $p$ can be (semi)topologically characterised.
We will explore, notably in Theorem~\ref{thrm.max.cc.char}, Proposition~\ref{prop.views.of.regularity}, and Theorem~\ref{thrm.up.down.char}; see also Remark~\ref{rmrk.arc}.
\end{enumerate}
\end{rmrk} 

\jamiesubsection{Further exploration of (quasi-/weak) regularity and topen sets}

\begin{rmrk}
\label{rmrk.T0-T2}
Recall three common separation axioms from topology:
\begin{enumerate*}
\item
$T_0$: if $p_1\neq p_2$ then there exists some $O\in\opens$ such that $(p_1\in O)\xor (p_2\in O)$, where $\xor$ denotes \emph{exclusive or}.
\item
$T_1$: if $p_1\neq p_2$ then there exist $O_1,O_2\in\opens$ such that $p_i\in O_j \liff i=j$ for $i,j\in\{1,2\}$.
\item
$T_2$, or the \emph{Hausdorff condition}: if $p_1\neq p_2$ then there exist $O_1,O_2\in\opens$ such that $p_i\in O_j \liff i=j$ for $i,j\in\{1,2\}$, and $O_1\cap O_2=\varnothing$.
Cf. the discussion in Remark~\ref{rmrk.not.hausdorff}.
\end{enumerate*}
Even the weakest of the well-behavedness property for semitopologies that we consider in Definition~\ref{defn.tn} --- quasiregularity --- is in some sense strongly opposed to the space being Hausdorff/$T_2$ (though not to being $T_1$), as Lemma~\ref{lemm.quasiregular.hausdorff} makes formal.
\end{rmrk}

\begin{lemm}
\label{lemm.quasiregular.hausdorff}
\leavevmode
\begin{enumerate*}
\item
Every quasiregular Hausdorff semitopology is discrete.

In more detail: if $(\ns P,\opens)$ is a semitopology that is quasiregular (Definition~\ref{defn.tn}(\ref{item.quasiregular.point})) and Hausdorff (equation~\ref{eq.hausdorff} in Remark~\ref{rmrk.not.hausdorff}), then $\opens=\powerset(\ns P)$. 
\item
There exists a (quasi)regular $T_1$ semitopology that is not discrete.
\end{enumerate*} 
\end{lemm}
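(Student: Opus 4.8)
The plan is to prove the two parts separately; both reduce to unpacking the definition of community once the right characterisation of each hypothesis is in hand.

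For part~1, I would start from the Hausdorff hypothesis written in the form of equation~\ref{eq.hausdorff}, namely $\intertwined{p}=\{p\}$ for every $p\in\ns P$. Feeding this into the definition of community gives $\community(p)=\interior(\intertwined{p})=\interior(\{p\})$. Now quasiregularity (Definition~\ref{defn.tn}(\ref{item.quasiregular.point})) says $\community(p)\neq\varnothing$. Since the only subsets of $\{p\}$ are $\varnothing$ and $\{p\}$, and $\interior(\{p\})$ is by Lemma~\ref{lemm.interior.open} the greatest open subset of $\{p\}$, its being nonempty forces $\interior(\{p\})=\{p\}$, i.e.\ $\{p\}$ is open. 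Thus every singleton is open, and since open sets are closed under arbitrary unions (Definition~\ref{defn.semitopology}(\ref{semitopology.unions})) every subset $P=\bigcup_{p\in P}\{p\}$ is open, giving $\opens=\powerset(\ns P)$.

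For part~2, I would exhibit a concrete witness: take $\ns P$ to be any infinite set and let $\opens$ be the cofinite topology, $\opens=\{\varnothing\}\cup\{O\subseteq\ns P\mid \ns P\setminus O\text{ is finite}\}$. This is a (semi)topology, and I would check three things. First, it is $T_1$: given $p\neq q$, the cofinite sets $\ns P\setminus\{q\}$ and $\ns P\setminus\{p\}$ separate $p$ and $q$ in the $T_1$ sense. Second, it is regular: any two nonempty cofinite subsets of an infinite set intersect (their complements are finite, hence cannot cover $\ns P$), so all nonempty open sets intersect and Lemma~\ref{lemm.intertwined.space.regular}(\ref{item.intertwined.space.regular.1}) yields regularity of every point; concretely $\intertwined{p}=\ns P$ and $\community(p)=\ns P\in\topens$. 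Third, it is not discrete: a singleton $\{p\}$ is open only if $\ns P\setminus\{p\}$ is finite, which fails for infinite $\ns P$.

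Neither part presents a serious obstacle: the only real content is recognising the cofinite topology as the canonical separator between $T_1$ and $T_2$ that also happens to be intertwined (hence regular). The one point to handle carefully is the edge behaviour in part~1 — confirming that $\interior(\{p\})$ genuinely has only $\varnothing$ and $\{p\}$ as candidates — but this is immediate since $\{p\}$ has no other subsets, so the argument does not hinge on any further hypothesis.
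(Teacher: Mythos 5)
Your proof of part~1 is the same as the paper's: Hausdorff gives $\intertwined{p}=\{p\}$, quasiregularity forces $\interior(\{p\})\neq\varnothing$, hence $\{p\}\in\opens$, and closure under unions finishes it. For part~2 you use the same strategy as the paper (exhibit an intertwined space --- all nonempty opens pairwise intersect --- and invoke Lemma~\ref{lemm.intertwined.space.regular}(\ref{item.intertwined.space.regular.1}) for regularity, then check $T_1$ and non-discreteness) but with a different witness: the cofinite topology on an infinite set, where the paper uses the three-point semitopology generated by $\{0,1\}$, $\{1,2\}$, $\{2,0\}$ from Figure~\ref{fig.not-strong-topen}. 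Both witnesses are correct. The paper's has the advantage of being finite and minimal; yours has the mild advantage of being an actual topology, which shows the separation between $T_1$ and quasiregular-Hausdorff is already visible in the classical setting and is not an artefact of dropping closure under intersections. Your verifications (two cofinite sets intersect since finite complements cannot cover an infinite set; singletons are not cofinite) are all sound.
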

\begin{proof}
We consider each part in turn:
\begin{enumerate}
\item
By the Hausdorff property, $\intertwined{p}=\{p\}$.
By the quasiregularity property, $\community(p)\neq\varnothing$.
It follows that $\community(p)=\{p\}$.
But by construction in Definition~\ref{defn.tn}(\ref{item.tn}), $\community(p)$ is an open interior.
Thus $\{p\}\in\opens$.
The result follows.
\item
It suffices to provide an example.
We use the left-hand semitopology in Figure~\ref{fig.not-strong-topen}.
Thus $\ns P=\{0,1,2\}$ and $\opens$ is generated by $\{0,1\}$, $\{1,2\}$, and $\{2,0\}$.
All nonempty open sets intersect, so by Lemma~\ref{lemm.intertwined.space.regular}(\ref{item.intertwined.space.regular.1}) $\ns P$ is regular.
It is also $T_1$ (Remark~\ref{rmrk.T0-T2}).
\qedhere\end{enumerate}
\end{proof}
 
Lemma~\ref{lemm.two.intertwined} confirms in a different way that regularity (Definition~\ref{defn.tn}(\ref{item.regular.point})) is non-trivially distinct from weak regularity and quasiregularity:
\begin{lemm}
\label{lemm.two.intertwined}
Suppose $(\ns P,\opens)$ is a semitopology and $p\in\ns P$.
Then:
\begin{enumerate*}
\item\label{item.two.intertwined.1}
$\community(p)\in\opens$.
\item\label{item.two.intertwined.2}
$\community(p)$ is not necessarily topen; equivalently $\community(p)$ is not necessarily transitive.
(More on this later in Subsection~\ref{subsect.irregular}.)
\end{enumerate*}
\end{lemm}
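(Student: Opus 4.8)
The plan is to dispatch the two parts separately: part~\ref{item.two.intertwined.1} is a one-line consequence of the definitions, while part~\ref{item.two.intertwined.2} needs a counterexample.

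For part~\ref{item.two.intertwined.1}, I would simply unfold $\community(p)=\interior(\intertwined{p})$ from Definition~\ref{defn.tn}(\ref{item.tn}) and appeal to Lemma~\ref{lemm.interior.open}, which states that $\interior(\intertwined{p})$ is the greatest open subset of $\intertwined{p}$ and in particular lies in $\opens$. So $\community(p)\in\opens$ with no further work.

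For part~\ref{item.two.intertwined.2}, the key observation is that, by part~\ref{item.two.intertwined.1}, $\community(p)$ is \emph{always} open; since a topen is by Definition~\ref{defn.transitive}(\ref{transitive.cc}) a nonempty open transitive set, the only way for such an already-open set to fail to be topen is to be empty or to fail transitivity. This is exactly why the two phrasings ``not necessarily topen'' and ``not necessarily transitive'' are offered, and I would exhibit a single $p$ whose community is nonempty and open but not transitive, witnessing both phrasings at once. The natural candidate is the point $1$ in Example~\ref{xmpl.cc}(\ref{item.cc.two.regular}) (the top-left diagram of Figure~\ref{fig.012}), already flagged in Example~\ref{xmpl.wr}(\ref{item.wr.2}). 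Concretely I would take $\ns P=\{0,1,2\}$ with $\opens$ generated by $\{0\}$ and $\{2\}$, so that the only open neighbourhood of $1$ is $\ns P$ itself. One then checks that $\intertwined{1}=\ns P$, since $\ns P$ meets every nonempty open set, whence $\community(1)=\interior(\ns P)=\ns P$, which is nonempty and open. It remains to verify that $\ns P$ is not transitive: taking $O=\{0\}$ and $O'=\{2\}$ gives $O\between\ns P\between O'$ yet $O\notbetween O'$, contradicting Definition~\ref{defn.transitive}(\ref{transitive.transitive}). Hence $\community(1)$ is open and nonempty but not transitive, so not topen, establishing both forms of the claim.

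There is no genuine obstacle here; the one point to handle carefully is the ``equivalently'' clause, which quietly relies on part~\ref{item.two.intertwined.1} (that $\community(p)$ is automatically open) together with the nonemptiness of the chosen community, so that the failure to be topen is a bona fide failure of transitivity rather than a failure of openness or a degenerate consequence of emptiness.
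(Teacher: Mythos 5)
Your proposal is correct and follows essentially the same route as the paper: part~1 is read off from $\community(p)=\interior(\intertwined{p})$ being an open interior, and part~2 uses the very same counterexample (the space of Example~\ref{xmpl.cc}(\ref{item.cc.two.regular}), where $\community(1)=\{0,1,2\}$ is open but $\{0\}\between\community(1)\between\{2\}$ with $\{0\}\notbetween\{2\}$). Your extra care over the ``equivalently'' clause --- noting that the failure must be one of transitivity rather than openness or emptiness --- is a sensible refinement but does not change the argument.
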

\begin{proof}
$\community(p)$ is open by construction in Definition~\ref{defn.tn}(\ref{item.tn}), since it is an open interior.

For part~\ref{item.two.intertwined.2}, it suffices to provide a counterexample.
We consider the semitopology from Example~\ref{xmpl.cc}(\ref{item.cc.two.regular}) (illustrated in Figure~\ref{fig.012}, top-left diagram). 
We calculate that $\community(1)=\{0,1,2\}$ so that $\community(1)$ is an open neighbourhood of $1$ --- but it is not transitive, and thus not topen, since $\{0\}\cap\{2\}=\varnothing$.

Further checking reveals that $\{0\}$ and $\{2\}$ are two maximal topens within $\community(1)$. 
\end{proof}

So what is $\community(p)$?
We start by characterising $\community(p)$ as the \emph{greatest} topen neighbourhood of $p$, if this exists:
\begin{lemm}
\label{lemm.intertwined.is.the.greatest}
\label{lemm.max.cc.intertwined}
Suppose $(\ns P,\opens)$ is a semitopology and recall from Definition~\ref{defn.tn}(\ref{item.regular.point}) that $p$ is regular when $\community(p)$ is a topen neighbourhood of $p$.
\begin{enumerate*}
\item\label{item.intertwined.is.the.greatest.1}
If $\community(p)$ is a topen neighbourhood of $p$ (i.e. if $p$ is regular) then $\community(p)$ is a maximal topen.
\item\label{item.intertwined.is.the.greatest.2}
If $p\in \atopen\in\topens$ is a maximal topen neighbourhood of $p$ then $\atopen=\community(p)$.
\end{enumerate*}
\end{lemm}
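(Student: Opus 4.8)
The plan is to treat both parts as a comparison between the topen $\atopen$ and the set $\community(p)=\interior(\intertwined{p})$, exploiting two facts: that a topen is exactly a nonempty open set of pairwise-intertwined points (Proposition~\ref{prop.cc.char}), and that $\community(p)$ is the greatest open subset of $\intertwined{p}$ (Lemma~\ref{lemm.interior.open}). The inclusion $\atopen\subseteq\community(p)$ will be common to both parts and is the easy half: if $\atopen$ is any topen with $p\in\atopen$, then by Proposition~\ref{prop.cc.char} every $p'\in\atopen$ satisfies $p'\intertwinedwith p$, so $\atopen\subseteq\intertwined{p}$; since $\atopen$ is open, Lemma~\ref{lemm.interior.open} gives $\atopen\subseteq\interior(\intertwined{p})=\community(p)$.

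For part~\ref{item.intertwined.is.the.greatest.1}, I would assume $\community(p)$ is topen and show it is not a proper subset of any topen. So let $\atopen'$ be a topen with $\community(p)\subseteq\atopen'$. Since $p\in\community(p)\subseteq\atopen'$, the easy-half argument applied to $\atopen'$ gives $\atopen'\subseteq\community(p)$, whence $\atopen'=\community(p)$. Thus $\community(p)$ is a maximal topen.

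For part~\ref{item.intertwined.is.the.greatest.2} I already have $\atopen\subseteq\community(p)$, so it remains to prove equality. The approach is to show that $\community(p)$ is \emph{itself} transitive --- and hence, being open and nonempty, a topen --- after which maximality of $\atopen$ forces $\atopen=\community(p)$ (one could instead invoke the uniqueness in Corollary~\ref{corr.max.cc}). To see $\community(p)$ is transitive, suppose $O\between\community(p)\between O'$ for opens $O,O'$, and choose witnesses $b\in O\cap\community(p)$ and $b'\in O'\cap\community(p)$. Both lie in $\community(p)\subseteq\intertwined{p}$, so $b\intertwinedwith p$ and $b'\intertwinedwith p$; since $\atopen$ is an open neighbourhood of $p$ and $O$ is an open neighbourhood of $b$, intertwinedness of $b$ with $p$ yields $O\between\atopen$, and symmetrically $O'\between\atopen$. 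Now $O\between\atopen\between O'$ with $\atopen$ transitive gives $O\between O'$, as required. (This is essentially an application of Lemma~\ref{lemm.transitive.transitive}, routing the intersection through the fixed neighbourhood $\atopen$ of $p$.)

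The main obstacle is exactly this transitivity of $\community(p)$ in part~\ref{item.intertwined.is.the.greatest.2}. It is tempting but wrong to argue that the points of $\community(p)\subseteq\intertwined{p}$ are pairwise intertwined: being intertwined is not transitive (Lemma~\ref{lemm.intertwined.not.transitive}), so $b\intertwinedwith p\intertwinedwith b'$ does not give $b\intertwinedwith b'$ directly, and indeed $\community(p)$ can fail to be topen in general (Lemma~\ref{lemm.two.intertwined}). What rescues the argument under the hypothesis of part~\ref{item.intertwined.is.the.greatest.2} is the availability of the \emph{fixed} open neighbourhood $\atopen$ of $p$: rather than connecting $b$ and $b'$ to each other, I connect each of $O$ and $O'$ to $\atopen$ through $p$ and then use transitivity of $\atopen$. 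The one point to double-check is that $\atopen$ genuinely serves as an open neighbourhood of $p$ in these intertwinedness applications, which holds since $\atopen\in\opens$ and $p\in\atopen$ by hypothesis.
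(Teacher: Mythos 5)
Your proof is correct. For part~\ref{item.intertwined.is.the.greatest.1} and for the inclusion $\atopen\subseteq\community(p)$ you argue exactly as the paper does, via Proposition~\ref{prop.cc.char} and the fact that $\community(p)=\interior(\intertwined{p})$ is the greatest open subset of $\intertwined{p}$. Where you diverge is instructive: for part~\ref{item.intertwined.is.the.greatest.2} the paper writes ``$p\in\atopen\cap\community(p)$ and both are topen'' and then applies Lemma~\ref{lemm.cc.unions}(\ref{item.intersecting.pair.of.topens}) and maximality --- but the assertion that $\community(p)$ is topen is precisely the nontrivial content of the claim (Lemma~\ref{lemm.two.intertwined}(\ref{item.two.intertwined.2}) shows it can fail without the hypothesis that $p$ has a maximal topen neighbourhood), and the paper does not justify it at that point. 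Your argument supplies exactly the missing justification: given $O\between\community(p)\between O'$, you route both intersections through the fixed topen neighbourhood $\atopen$ of $p$ using the fact that every point of $\community(p)$ is intertwined with $p$, and then invoke transitivity of $\atopen$. This is the right move, and your accompanying remark correctly identifies why the naive ``pairwise intertwined'' argument would fail (non-transitivity of $\intertwinedwith$). So your proof has the same overall skeleton as the paper's but is more complete at the one step that actually requires work; once $\community(p)$ is known to be a topen containing $\atopen$, maximality (or Corollary~\ref{corr.max.cc}) finishes the job as you say.
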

\begin{proof}
\leavevmode
\begin{enumerate}
\item
Since $p$ is regular, by definition, $\community(p)$ is topen and is a neighbourhood of $p$.
It remains to show that $\community(p)$ is a maximal topen.

Suppose $\atopen$ is a topen neighbourhood of $p$; we wish to prove $\atopen\subseteq \community(p)=\interior(\intertwined{p})$.
Since $\atopen$ is open it would suffice to show that $\atopen\subseteq\intertwined{p}$.
By Proposition~\ref{prop.cc.char} $p\intertwinedwith p'$ for every $p'\in \atopen$, and it follows immediately that $\atopen\subseteq\intertwined{p}$.
\item
Suppose $\atopen$ is a maximal topen neighbourhood of $p$.

First, note that $\atopen$ is open, and by Proposition~\ref{prop.cc.char} $\atopen\subseteq\intertwined{p}$, so $\atopen\subseteq\community(p)$.

By assumption $p\in\atopen\cap\community(p)$ and both are topen so by Lemma~\ref{lemm.cc.unions}(\ref{item.intersecting.pair.of.topens}) $\atopen\cup\community(p)$ is topen, and by maximality $\community(p)\subseteq\atopen$.
\qedhere\end{enumerate}
\end{proof}

\begin{rmrk}
\label{rmrk.how.regularity}
We can use Lemma~\ref{lemm.max.cc.intertwined} to characterise regularity in five equivalent ways: see Theorem~\ref{thrm.max.cc.char} and Corollary~\ref{corr.regular.is.regular}.
Other characterisations will follow but will require additional machinery to state (the notion of \emph{closed neighbourhood}; see Definition~\ref{defn.cn}).
See Corollary~\ref{corr.corr.pKp} and Theorem~\ref{thrm.up.down.char}.
\end{rmrk}

\begin{thrm}
\label{thrm.max.cc.char}
Suppose $(\ns P,\opens)$ is a semitopology and $p\in \ns P$.
Then the following are equivalent:
\begin{enumerate*}
\item\label{char.p.regular}
$p$ is regular, or in full: $p\in\community(p)\in\tf{Topen}$.
\item\label{char.Kp.greatest.topen}
$\community(p)$ is the greatest topen neighbourhood of $p$.
\item\label{char.Kp.max.topen}
$\community(p)$ is a maximal topen neighbourhood of $p$.
\item\label{char.max.topen}
$p$ has a maximal topen neighbourhood. 
\item\label{char.some.topen}
$p$ has some topen neighbourhood.
\end{enumerate*}
\end{thrm}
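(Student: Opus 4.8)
The plan is to prove the cycle of implications $(\ref{char.p.regular})\limp(\ref{char.Kp.greatest.topen})\limp(\ref{char.Kp.max.topen})\limp(\ref{char.max.topen})\limp(\ref{char.some.topen})\limp(\ref{char.p.regular})$, noting that the three middle implications are essentially immediate and that the real content sits in the first and the last. All the work has been front-loaded into Lemma~\ref{lemm.max.cc.intertwined} and Corollary~\ref{corr.max.cc}, so the proof is mostly a matter of assembling them.

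For $(\ref{char.p.regular})\limp(\ref{char.Kp.greatest.topen})$, I would assume $p$ is regular, so that $\community(p)$ is itself a topen neighbourhood of $p$. It remains to show it is the \emph{greatest} such. Given any topen neighbourhood $\atopen$ of $p$, Proposition~\ref{prop.cc.char} gives $p\intertwinedwith p'$ for every $p'\in\atopen$, whence $\atopen\subseteq\intertwined{p}$; since $\atopen$ is open this yields $\atopen\subseteq\interior(\intertwined{p})=\community(p)$. Thus $\community(p)$ is a topen neighbourhood of $p$ containing every other, i.e. the greatest. The step $(\ref{char.Kp.greatest.topen})\limp(\ref{char.Kp.max.topen})$ holds because a greatest element of a poset is in particular maximal; $(\ref{char.Kp.max.topen})\limp(\ref{char.max.topen})$ and $(\ref{char.max.topen})\limp(\ref{char.some.topen})$ are just forgetting information, since $\community(p)$ witnesses the existence of a maximal topen neighbourhood, and a maximal topen neighbourhood is in particular some topen neighbourhood.

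The one genuinely load-bearing step is $(\ref{char.some.topen})\limp(\ref{char.p.regular})$. Here I would start from some topen neighbourhood $\atopen\ni p$ and apply Corollary~\ref{corr.max.cc} to embed it in a unique maximal topen $\atopen'\supseteq\atopen$. Since $p\in\atopen\subseteq\atopen'$, this $\atopen'$ is a maximal topen neighbourhood of $p$, so Lemma~\ref{lemm.max.cc.intertwined}(\ref{item.intertwined.is.the.greatest.2}) forces $\atopen'=\community(p)$. Therefore $\community(p)$ is topen and contains $p$, which is exactly regularity of $p$, closing the cycle.

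I do not anticipate a serious obstacle. The only point requiring a little care is ensuring the chain bottoms out correctly: that ``$p$ has some topen neighbourhood'' really does bootstrap up to ``$\community(p)$ is itself topen''. This is precisely what the existence of a containing maximal topen (Corollary~\ref{corr.max.cc}) combined with part~(\ref{item.intertwined.is.the.greatest.2}) of Lemma~\ref{lemm.max.cc.intertwined} delivers, so once those are in hand the argument is routine.
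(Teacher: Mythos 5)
Your proposal is correct and follows essentially the same route as the paper: the same cycle of implications in the same order, with the load-bearing step $(\ref{char.some.topen})\limp(\ref{char.p.regular})$ handled exactly as in the paper via Corollary~\ref{corr.max.cc} and Lemma~\ref{lemm.max.cc.intertwined}(\ref{item.intertwined.is.the.greatest.2}). The only cosmetic difference is that for $(\ref{char.p.regular})\limp(\ref{char.Kp.greatest.topen})$ you inline the containment argument ($\atopen\subseteq\intertwined{p}$ hence $\atopen\subseteq\community(p)$, which is the content of Lemma~\ref{lemm.intertwined.is.the.greatest}(\ref{item.intertwined.is.the.greatest.1})) and thereby get greatestness directly, whereas the paper cites that lemma for maximality and then upgrades to greatestness via Lemma~\ref{lemm.cc.unions}(\ref{item.intersecting.pair.of.topens}).
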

\begin{proof}
We prove a cycle of implications:
\begin{enumerate}
\item
If $\community(p)$ is a topen neighbourhood of $p$ then it is maximal by Lemma~\ref{lemm.intertwined.is.the.greatest}(\ref{item.intertwined.is.the.greatest.1}).
Furthermore this maximal topen neighbourhood of $p$ is necessarily greatest, since if we have two maximal topen neighbourhoods of $p$ then their union is a larger topen neighbourhood of $p$ by Lemma~\ref{lemm.cc.unions}(\ref{item.intersecting.pair.of.topens}) (union of intersecting topens is topen).
\item
If $\intertwined{p}$ is the greatest topen neighbourhood of $p$, then certainly it is a maximal topen neighbourhood of $p$.
\item
If $\intertwined{p}$ is a maximal topen neighbourhood of $p$, then certainly $p$ has a maximal topen neighbourhood.
\item
If $p$ has a maximal topen neighbourhood then certainly $p$ has a topen neighbourhood.
\item
Suppose $p$ has a topen neighbourhood $\atopen$.
By Corollary~\ref{corr.max.cc} we may assume without loss of generality that $\atopen$ is a maximal topen.
We use Lemma~\ref{lemm.max.cc.intertwined}(\ref{item.intertwined.is.the.greatest.2}).
\qedhere\end{enumerate}
\end{proof}

Theorem~\ref{thrm.max.cc.char} has numerous corollaries:
\begin{corr}
\label{corr.when.singleton.topen}
Suppose $(\ns P,\opens)$ is a semitopology and $p\in\ns P$ and $\{p\}\in\opens$.
Then $p$ is regular. 
\end{corr}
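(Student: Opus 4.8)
The plan is to invoke Theorem~\ref{thrm.max.cc.char}, which tells us that $p$ is regular as soon as $p$ has \emph{some} topen neighbourhood (its condition~\ref{char.some.topen}). So it suffices to exhibit a single topen set containing $p$, and the obvious candidate is $\{p\}$ itself.

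First I would check that $\{p\}$ meets all three requirements to be topen (Definition~\ref{defn.transitive}(\ref{transitive.cc})): nonempty, open, and transitive. Nonemptiness is immediate. Openness is exactly the hypothesis $\{p\}\in\opens$. For transitivity, I would simply cite Example~\ref{xmpl.singleton.transitive}(\ref{item.singleton.transitive}), which records that every singleton $\{p\}$ is transitive (this is trivial from Definition~\ref{defn.transitive}: if $O\between\{p\}\between O'$ then $p\in O$ and $p\in O'$, so $O\between O'$). Hence $\{p\}\in\topens$.

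Since $p\in\{p\}$, this makes $\{p\}$ a topen neighbourhood of $p$, so condition~\ref{char.some.topen} of Theorem~\ref{thrm.max.cc.char} is satisfied, and we conclude that $p$ is regular. There is no real obstacle here; the content is entirely carried by the earlier theorem and the triviality that singletons are transitive. The only thing worth stating explicitly is which equivalent characterisation of regularity is being used, namely ``$p$ has some topen neighbourhood'', since the unfolded definition (that $\community(p)$ itself is topen and contains $p$) would require the extra step of identifying $\community(p)$, whereas Theorem~\ref{thrm.max.cc.char} lets us bypass that computation.
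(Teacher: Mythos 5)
Your proposal is correct and follows essentially the same route as the paper: cite Example~\ref{xmpl.singleton.transitive}(\ref{item.singleton.transitive}) for transitivity of the singleton, conclude that $\{p\}$ is topen, and then apply Theorem~\ref{thrm.max.cc.char}(\ref{char.some.topen}). The only difference is that you spell out the three topen conditions a little more explicitly, which is harmless.
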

\begin{proof}
We noted in Example~\ref{xmpl.singleton.transitive}(\ref{item.singleton.transitive}) that a singleton $\{p\}$ is always transitive, so if $\{p\}$ is also open, then it is topen, so that $p$ has a topen neighbourhood and by Theorem~\ref{thrm.max.cc.char}(\ref{char.some.topen}) $p$ is topen.\footnote{%
It does not follow from $p\in\{p\}\in\topens$ that $\community(p)=\{p\}$: consider $\ns P=\{0,1\}$ and $\opens=\{\varnothing,\{0\},\{0,1\}\}$ and $p=0$; then $\{p\}\in\topens$ yet $\community(p)=\{0,1\}$.}
\end{proof}

\begin{corr}
\label{corr.regular.is.regular}
Suppose $(\ns P,\opens)$ is a semitopology and $p\in\ns P$.
Then the following are equivalent:
\begin{enumerate*}
\item
$p$ is regular.
\item
$p$ is weakly regular and $\community(p)=\community(p')$ for every $p'\in\community(p)$.
\end{enumerate*} 
\end{corr}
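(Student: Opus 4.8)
The plan is to prove the two implications separately, leaning on Proposition~\ref{prop.cc.char} (a nonempty open set is topen exactly when its points are pairwise intertwined) together with the elementary identity $\community(q)=\interior(\intertwined{q})\subseteq\intertwined{q}$, valid for every point $q$ since the interior is always a subset of the set whose interior is taken. Both directions reduce to manipulating this inclusion and the maximality facts already packaged in Lemma~\ref{lemm.max.cc.intertwined}.

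For the forward implication I would suppose $p$ is regular. Weak regularity is then immediate from Lemma~\ref{lemm.wr.r}(\ref{item.r.implies.wr}), so the only real work is the fixed-point condition $\community(p)=\community(p')$ for $p'\in\community(p)$. Since $p$ is regular, $\community(p)$ is topen, and by Lemma~\ref{lemm.max.cc.intertwined}(\ref{item.intertwined.is.the.greatest.1}) it is in fact a \emph{maximal} topen. For any $p'\in\community(p)$ this exhibits $\community(p)$ as a maximal topen neighbourhood of $p'$, so Lemma~\ref{lemm.max.cc.intertwined}(\ref{item.intertwined.is.the.greatest.2}), applied at $p'$, gives $\community(p)=\community(p')$ at once.

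For the reverse implication I would assume $p$ is weakly regular --- so $\community(p)$ is a nonempty open neighbourhood of $p$ --- and that $\community(p')=\community(p)$ for every $p'\in\community(p)$, and aim to show $\community(p)$ is topen. By Proposition~\ref{prop.cc.char} it suffices to check that any two $p',p''\in\community(p)$ are intertwined, and here the hypothesis does the work: from $p'\in\community(p)$ we get $\community(p')=\community(p)\ni p''$, whence $p''\in\community(p')\subseteq\intertwined{p'}$ by the inclusion above, so $p'\intertwinedwith p''$ (using symmetry of $\intertwinedwith$ from Lemma~\ref{lemm.intertwined.not.transitive}). Thus $\community(p)$ is a nonempty open set of pairwise intertwined points, hence topen, and $p\in\community(p)\in\topens$ makes $p$ regular.

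I expect the reverse implication to carry the genuine content: the one idea that matters is reading the fixed-point condition $\community(p')=\community(p)$ as \emph{forcing} pairwise intertwinedness through $\community(p')\subseteq\intertwined{p'}$, after which Proposition~\ref{prop.cc.char} closes everything off. The forward implication should be essentially bookkeeping on top of the maximality statements already established in Lemma~\ref{lemm.max.cc.intertwined}, so the only thing to be careful about is invoking part~(\ref{item.intertwined.is.the.greatest.2}) at the correct point $p'$ rather than at $p$.
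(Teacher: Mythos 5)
Your proof is correct and follows essentially the same route as the paper: weak regularity from Lemma~\ref{lemm.wr.r}, the forward fixed-point condition from the maximality machinery (you invoke Lemma~\ref{lemm.max.cc.intertwined} directly at $p'$ where the paper routes through Theorem~\ref{thrm.max.cc.char}, which is just a repackaging of that lemma), and the reverse direction by extracting pairwise intertwinedness from $p''\in\community(p')\subseteq\intertwined{p'}$ and closing with Proposition~\ref{prop.cc.char}. No gaps.
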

\begin{proof}
We prove two implications, using Theorem~\ref{thrm.max.cc.char}:
\begin{itemize}
\item
Suppose $p$ is regular.
By Lemma~\ref{lemm.wr.r}(\ref{item.r.implies.wr}) $p$ is weakly regular.
Now consider $p'\in\community(p)$.
By Theorem~\ref{thrm.max.cc.char} $\community(p)$ is topen, so it is a topen neighbourhood of $p'$. 
By Theorem~\ref{thrm.max.cc.char} $\community(p')$ is a greatest topen neighbourhood of $p'$. 
But by Theorem~\ref{thrm.max.cc.char} $\community(p)$ is also a greatest topen neighbourhood of $p$, and $\community(p)\between\community(p')$ since they both contain $p'$.
By Lemma~\ref{lemm.cc.unions}(\ref{item.intersecting.pair.of.topens}) and maximality, they are equal.
\item
Suppose $p$ is weakly regular and suppose $\community(p)=\community(p')$ for every $p'\in\community(p)$, and consider $p',p''\in\community(p)$.
Then $p'\intertwinedwith p''$ holds, since $p''\in\community(p')=\community(p)$.
By Proposition~\ref{prop.cc.char} $\community(p)$ is topen, and by weak regularity $p\in\community(p)$, so by Theorem~\ref{thrm.max.cc.char} $p$ is regular as required. 
\qedhere\end{itemize}
\end{proof}

\begin{rmrk}
With regards to Corollary~\ref{corr.regular.is.regular}, it might be useful to look at Example~\ref{xmpl.cc}(\ref{item.cc.two.regular.b}) and Figure~\ref{fig.012} (top-right diagram).
In that example the point $1$ is \emph{not} regular, and its community $\{0,1,2\}$ is not a community for $0$ or $2$.
\end{rmrk}

\begin{corr}
\label{corr.p.p'.regular.community}
Suppose $(\ns P,\opens)$ is a semitopology and $p,p'\in\ns P$.
Then if $p$ is regular and $p'\in\community(p)$ then $p'$ is regular and has the same community.
\end{corr}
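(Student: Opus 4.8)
The plan is to deduce the regularity of $p'$ directly from the characterisation of regularity in Theorem~\ref{thrm.max.cc.char}, by exhibiting a concrete topen neighbourhood of $p'$, and then to identify $\community(p')$ with $\community(p)$ using the maximal-topen results of Lemma~\ref{lemm.max.cc.intertwined}. In fact the equality of communities is essentially the content of the forward implication already proved in Corollary~\ref{corr.regular.is.regular}, so much of the work can be cited rather than repeated; the only genuinely new assertion here is that $p'$ is itself regular, and that falls out of the same machinery.

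First I would observe that since $p$ is regular, Definition~\ref{defn.tn}(\ref{item.regular.point}) gives $\community(p)\in\topens$, and by hypothesis $p'\in\community(p)$. Thus $\community(p)$ is a topen set containing $p'$, i.e.\ a topen neighbourhood of $p'$. Applying the implication (\ref{char.some.topen})$\limp$(\ref{char.p.regular}) of Theorem~\ref{thrm.max.cc.char} to the point $p'$, we conclude immediately that $p'$ is regular. This settles the first half of the statement with no further calculation.

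It remains to show $\community(p')=\community(p)$. Here I would first note that, because $p$ is regular, Lemma~\ref{lemm.intertwined.is.the.greatest}(\ref{item.intertwined.is.the.greatest.1}) tells us $\community(p)$ is a \emph{maximal} topen. Since we have just seen that $\community(p)$ is a topen neighbourhood of $p'$, it is in particular a maximal topen neighbourhood of $p'$, so Lemma~\ref{lemm.max.cc.intertwined}(\ref{item.intertwined.is.the.greatest.2}) yields $\community(p)=\community(p')$, as required. I do not expect any real obstacle in this argument: it is a direct assembly of Theorem~\ref{thrm.max.cc.char} and Lemma~\ref{lemm.max.cc.intertwined}. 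The one point to keep in mind is that the equality of communities genuinely needs the \emph{maximality} of $\community(p)$ (merely being a topen neighbourhood of $p'$ would not pin it down as $\community(p')$), which is exactly why invoking Lemma~\ref{lemm.intertwined.is.the.greatest}(\ref{item.intertwined.is.the.greatest.1}) for the regular point $p$ is the crucial first move.
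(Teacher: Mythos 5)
Your proof is correct and takes essentially the same route as the paper: both arguments rest on the characterisation of regularity via topen neighbourhoods (Theorem~\ref{thrm.max.cc.char}) and the maximality of $\community(p)$ as a topen. The only cosmetic difference is that the paper obtains $\community(p)=\community(p')$ by citing Corollary~\ref{corr.regular.is.regular} and then reads off regularity of $p'$, whereas you establish regularity of $p'$ first and then invoke Lemma~\ref{lemm.max.cc.intertwined}(\ref{item.intertwined.is.the.greatest.2}) directly --- which is essentially how Corollary~\ref{corr.regular.is.regular} is itself proved.
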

\begin{proof}
Suppose $p$ is regular --- so by Definition~\ref{defn.tn}(\ref{item.regular.point}) $p\in\community(p)\in\topens$ --- and suppose $p'\in\community(p)$.
Then by Corollary~\ref{corr.regular.is.regular} $\community(p)=\community(p')$, so $p'\in\community(p')\in\topens$ and by Theorem~\ref{thrm.max.cc.char} $p'$ is regular. 
\end{proof}

\begin{corr}
\label{corr.max.topen.char}
Suppose $(\ns P,\opens)$ is a semitopology. 
Then the following are equivalent for $\atopen\subseteq\ns P$:
\begin{itemize*}
\item
$\atopen$ is a maximal topen.
\item
$\atopen\neq\varnothing$ and $\atopen=\community(p)$ for every $p\in \atopen$.
\end{itemize*}
\end{corr}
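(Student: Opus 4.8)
The plan is to prove the two implications separately, plumbing together Lemma~\ref{lemm.max.cc.intertwined}, Corollary~\ref{corr.regular.is.regular}, and Theorem~\ref{thrm.max.cc.char}; no genuinely new idea is needed beyond these earlier results.

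For the top-down direction, I would assume $\atopen$ is a maximal topen. Then $\atopen\neq\varnothing$ directly from Definition~\ref{defn.transitive}(\ref{transitive.cc}), since topens are nonempty by fiat. Now fix an arbitrary $p\in\atopen$. The key observation is that $\atopen$, being a maximal topen \emph{globally}, is in particular a maximal topen neighbourhood \emph{of $p$}: any topen neighbourhood of $p$ strictly containing $\atopen$ would be a topen strictly containing $\atopen$, contradicting global maximality. Hence Lemma~\ref{lemm.max.cc.intertwined}(\ref{item.intertwined.is.the.greatest.2}) applies and yields $\atopen=\community(p)$. Since $p\in\atopen$ was arbitrary, this gives $\atopen=\community(p)$ for every $p\in\atopen$.

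For the bottom-up direction, I would assume $\atopen\neq\varnothing$ and $\atopen=\community(p)$ for every $p\in\atopen$, and pick some $p\in\atopen$ (which exists since $\atopen$ is nonempty). Then $p\in\atopen=\community(p)\in\opens$, so $p$ is weakly regular by Definition~\ref{defn.tn}(\ref{item.weakly.regular.point}). Moreover, for every $p'\in\community(p)=\atopen$ the hypothesis gives $\community(p')=\atopen=\community(p)$, so the second clause of Corollary~\ref{corr.regular.is.regular} holds, and $p$ is therefore regular. Finally Lemma~\ref{lemm.max.cc.intertwined}(\ref{item.intertwined.is.the.greatest.1}) (or equivalently Theorem~\ref{thrm.max.cc.char}) tells us that $\community(p)$ is a maximal topen; since $\community(p)=\atopen$, this shows $\atopen$ is a maximal topen as required.

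I expect the only point needing care — rather than a real obstacle — to be the identification in the top-down direction of ``maximal topen'' with ``maximal topen neighbourhood of $p$'', which is precisely the hypothesis shape demanded by Lemma~\ref{lemm.max.cc.intertwined}(\ref{item.intertwined.is.the.greatest.2}); everything else is a direct appeal to the already-established machinery around $\community(p)$.
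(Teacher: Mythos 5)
Your proof is correct and follows essentially the same route as the paper's: the top-down direction observes that a globally maximal topen is a maximal topen neighbourhood of each of its points and then applies Lemma~\ref{lemm.max.cc.intertwined}(\ref{item.intertwined.is.the.greatest.2}) (the paper cites Theorem~\ref{thrm.max.cc.char}, which is proved from that same lemma), and the bottom-up direction verifies the hypotheses of Corollary~\ref{corr.regular.is.regular} exactly as the paper does. If anything, your explicit final appeal to Lemma~\ref{lemm.max.cc.intertwined}(\ref{item.intertwined.is.the.greatest.1}) to get maximality (not just topen-ness) of $\atopen$ is slightly more complete than the paper's terse conclusion.
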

\begin{proof}
If $\atopen$ is a maximal topen and $p\in\atopen$ then $\atopen$ is a maximal topen neighbourhood of $p$.
By Theorem~\ref{thrm.max.cc.char}(\ref{char.Kp.greatest.topen}\&\ref{char.some.topen}) $\atopen=\community(p)$.

If $\atopen\neq\varnothing$ and $\atopen=\community(p)$ for every $p\in\atopen$,
then $\community(p)=\community(p')$ for every $p'\in\community(p)$ and by Corollary~\ref{corr.regular.is.regular} $p$ is regular, so that by
Definition~\ref{defn.tn}(\ref{item.regular.point}) $\atopen=\community(p)\in\topens$ as required. 
\end{proof}

\jamiesubsection{Intersection and partition properties of regular spaces}
\label{subsect.topen.partitions}

Proposition~\ref{prop.topen.intersect.subset} is useful for consensus in practice.
Suppose we are a regular point $q$ and we have reached consensus with some topen neighbourhood $O\ni q$.
Suppose further that our topen neighbourhood $O$ intersects with the maximal topen neighbourhood $\community(p)$ of some other regular point $p$.
Then Proposition~\ref{prop.topen.intersect.subset} tells us that we were inside $\community(p)$ all along.
See also Remark~\ref{rmrk.gradecast}.
\begin{prop}
\label{prop.topen.intersect.subset}
Suppose $(\ns P,\opens)$ is a semitopology and $p\in\ns P$ is regular and $O\in\topens$ is topen.
Then 
$$
O\between\community(p)
\quad\text{if and only if}\quad
O\subseteq\community(p).
$$
\end{prop}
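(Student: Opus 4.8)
The plan is to prove the two implications separately, observing that the whole statement is essentially a corollary of the closure properties of topens (Lemma~\ref{lemm.cc.unions}) together with the maximality of $\community(p)$ guaranteed by regularity.

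The backward direction $O\subseteq\community(p)\limp O\between\community(p)$ is immediate. Since $O$ is topen it is nonempty by Definition~\ref{defn.transitive}(\ref{transitive.cc}), so from $O\subseteq\community(p)$ and $O\neq\varnothing$ we get $O\between\community(p)$ directly by Lemma~\ref{lemm.between.elementary}(\ref{between.subset}).

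For the forward direction $O\between\community(p)\limp O\subseteq\community(p)$, I would first record that because $p$ is regular, $\community(p)$ is not merely topen but a \emph{maximal} topen: this is exactly Lemma~\ref{lemm.max.cc.intertwined}(\ref{item.intertwined.is.the.greatest.1}) (equivalently Theorem~\ref{thrm.max.cc.char}, where regularity is characterised as $\community(p)$ being the greatest topen neighbourhood of $p$). Now suppose $O\between\community(p)$. Then $O$ and $\community(p)$ are an intersecting pair of topens, so by Lemma~\ref{lemm.cc.unions}(\ref{item.intersecting.pair.of.topens}) their union $O\cup\community(p)$ is again topen. Since $\community(p)\subseteq O\cup\community(p)$ and $\community(p)$ is a maximal topen, maximality forces $O\cup\community(p)=\community(p)$, and hence $O\subseteq\community(p)$, as required.

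There is no real obstacle here: the only thing to get right is invoking regularity at the correct point, namely to upgrade ``$\community(p)$ is a topen neighbourhood of $p$'' to ``$\community(p)$ is a maximal topen'', after which the union-of-intersecting-topens lemma does all the work. The intuition worth flagging is that maximal topens behave like the atoms of the topen structure: a topen cannot intersect one without being swallowed by it, which is precisely the ``you were inside $\community(p)$ all along'' reading described before the statement.
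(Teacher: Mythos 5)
Your proof is correct and follows essentially the same route as the paper's: the backward direction from nonemptiness of topens, and the forward direction by upgrading $\community(p)$ to a maximal topen via Theorem~\ref{thrm.max.cc.char}, applying Lemma~\ref{lemm.cc.unions}(\ref{item.intersecting.pair.of.topens}) to the intersecting pair, and concluding by maximality. No gaps.
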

\begin{proof} 
The right-to-left implication is immediate from Notation~\ref{nttn.between}(\ref{item.between}), given that 
topens are nonempty by Definition~\ref{defn.transitive}(\ref{transitive.cc}).

For the left-to-right implication, suppose $O\between\community(p)$.
By Theorem~\ref{thrm.max.cc.char} $\community(p)$ is a maximal topen, and by Lemma~\ref{lemm.cc.unions}(\ref{item.intersecting.pair.of.topens}) $O\cup\community(p)$ is topen.
Then $O\subseteq\community(p)$ follows by maximality.
\end{proof}

\begin{prop}
\label{prop.community.partition}
Suppose $(\ns P,\opens)$ is a semitopology and suppose $p,p'\in\ns P$ are regular.
Then
$$
\community(p)\between\community(p')
\quad\liff\quad
\community(p)=\community(p')
$$
(See also Corollary~\ref{corr.community.intersects.community}, which considers similar properties for $p$ and $p'$ that are not necessarily regular.)
\end{prop}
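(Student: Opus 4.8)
The plan is to reduce everything to Proposition~\ref{prop.topen.intersect.subset}, which already relates intersection and containment for a topen set and the community of a regular point; the proposition then follows by applying it symmetrically to $p$ and $p'$.

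First, I would dispatch the right-to-left implication, which is essentially immediate. If $\community(p)=\community(p')$ then, since $p$ is regular, $\community(p)$ is topen and hence nonempty by Definition~\ref{defn.transitive}(\ref{transitive.cc}); a nonempty set intersects itself by Lemma~\ref{lemm.between.elementary}(\ref{item.between.nonempty}), so $\community(p)\between\community(p)=\community(p')$, giving $\community(p)\between\community(p')$.

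For the left-to-right implication, suppose $\community(p)\between\community(p')$. Since $p'$ is regular, Theorem~\ref{thrm.max.cc.char} tells us that $\community(p')$ is a (maximal) topen; since $p$ is regular, I can then apply Proposition~\ref{prop.topen.intersect.subset} with the topen set taken to be $\community(p')$, and from $\community(p)\between\community(p')$ conclude $\community(p')\subseteq\community(p)$. Running the symmetric argument---swapping the roles of $p$ and $p'$, so that $\community(p)$ now plays the role of the topen---yields $\community(p)\subseteq\community(p')$. The two inclusions together give $\community(p)=\community(p')$, as required.

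There is no real obstacle here; the substantive work has already been done in Proposition~\ref{prop.topen.intersect.subset} (which itself rests on the union-of-intersecting-topens fact, Lemma~\ref{lemm.cc.unions}(\ref{item.intersecting.pair.of.topens}), plus maximality). The only point needing a little care is verifying the hypotheses of Proposition~\ref{prop.topen.intersect.subset} at each application---namely that the point playing the role of $p$ is regular and that the set playing the role of the topen is genuinely a topen---and both of these follow from the regularity of $p$ and $p'$ via Theorem~\ref{thrm.max.cc.char}.
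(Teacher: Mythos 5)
Your proof is correct. The right-to-left direction is fine (the paper witnesses the intersection with $p$ itself, using $p\in\community(p)$, rather than via nonemptiness of a topen, but this is immaterial). For the left-to-right direction you take a genuinely different, though closely related, route from the paper: you invoke Proposition~\ref{prop.topen.intersect.subset} twice, symmetrically, to obtain the two inclusions $\community(p')\subseteq\community(p)$ and $\community(p)\subseteq\community(p')$, whereas the paper picks a witness point $p''\in\community(p)\cap\community(p')$ and applies Corollary~\ref{corr.p.p'.regular.community} to conclude $\community(p)=\community(p'')=\community(p')$ in one step. Both arguments ultimately rest on the same machinery --- Theorem~\ref{thrm.max.cc.char} (communities of regular points are maximal topens) and Lemma~\ref{lemm.cc.unions}(\ref{item.intersecting.pair.of.topens}) (unions of intersecting topens are topen) --- so neither is more elementary than the other. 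Your version has the small aesthetic advantage of making the symmetry of the statement visible in the proof; the paper's version has the advantage of exposing the stronger fact that every point of the common intersection already carries the shared community. Either is acceptable.
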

\begin{proof}
We prove two implications.
\begin{itemize}
\item
Suppose there exists $p''\in\community(p)\cap\community(p')$.
By Corollary~\ref{corr.p.p'.regular.community} ($p''$ is regular and) $\community(p)=\community(p'')=\community(p')$.
\item
Suppose $\community(p)=\community(p')$.
By assumption $p\in\community(p)$, so $p\in\community(p')$.
Thus $p\in\community(p)\cap\community(p')$.
\qedhere\end{itemize}
\end{proof}

Corollary~\ref{corr.topen.partition.char} is a simple characterisation of regular semitopological spaces (it is also a kind of continuation to Lemma~\ref{lemm.intertwined.space.regular}(\ref{item.intertwined.space.regular.2})):
\begin{corr}
\label{corr.topen.partition.char}
Suppose $(\ns P,\opens)$ is a semitopology.
Then the following are equivalent:
\begin{enumerate*}
\item\label{item.topen.partition.char.1}
$(\ns P,\opens)$ is regular.
\item\label{item.topen.partition.char.2}
$\ns P$ partitions into topen sets: there exists some set of topen sets $\mathcal T$ such that $\atopen\notbetween\atopen'$ for every $\atopen,\atopen'\in\mathcal T$ and $\ns P=\bigcup\mathcal T$.
\item\label{item.topen.partition.char.3}
Every $X\subseteq\ns P$ has a cover of topen sets: there exists some set of topen sets $\mathcal T$ such that $X\subseteq\bigcup\mathcal T$.
\end{enumerate*}
\end{corr}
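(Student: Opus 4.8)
The plan is to prove a cycle of implications $\ref{item.topen.partition.char.1}\limp\ref{item.topen.partition.char.2}\limp\ref{item.topen.partition.char.3}\limp\ref{item.topen.partition.char.1}$, since each arrow of the cycle draws on exactly one of the structural results already established. The conceptual content sits almost entirely in the first arrow; the other two are essentially bookkeeping.

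For $\ref{item.topen.partition.char.1}\limp\ref{item.topen.partition.char.2}$ I would take the witnessing family to be the set of distinct communities, namely $\mathcal T=\{\community(p)\mid p\in\ns P\}$. Regularity of the space means that every $p$ satisfies $p\in\community(p)\in\topens$, so each member of $\mathcal T$ is indeed topen, and since every $p$ lies in its own community we get $\ns P=\bigcup\mathcal T$ immediately. The one genuinely substantive point is disjointness: I must check that two members of $\mathcal T$ are either equal or non-intersecting. This is exactly Proposition~\ref{prop.community.partition}, which says that for regular $p,p'$ we have $\community(p)\between\community(p')$ iff $\community(p)=\community(p')$; contrapositively, distinct communities satisfy $\community(p)\notbetween\community(p')$, so $\mathcal T$ is a partition of $\ns P$ into pairwise-disjoint topens as required.

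The arrow $\ref{item.topen.partition.char.2}\limp\ref{item.topen.partition.char.3}$ is trivial: given a partition $\ns P=\bigcup\mathcal T$ into topen sets and any $X\subseteq\ns P$, the very same family $\mathcal T$ covers $X$, since $X\subseteq\ns P=\bigcup\mathcal T$. For $\ref{item.topen.partition.char.3}\limp\ref{item.topen.partition.char.1}$ I would instantiate the covering property at $X=\ns P$ (or, pointwise, at each singleton): this yields a family $\mathcal T$ of topens with $\ns P\subseteq\bigcup\mathcal T$, so every $p\in\ns P$ lies in some topen set $\atopen\in\mathcal T$. Thus every point has a topen neighbourhood, and by Theorem~\ref{thrm.max.cc.char}(\ref{char.some.topen}) every point is regular, i.e.\ $(\ns P,\opens)$ is regular.

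I expect the only real obstacle to be the disjointness verification in the first implication, and this is entirely dispatched by Proposition~\ref{prop.community.partition}; everything else is a direct unwinding of Definition~\ref{defn.tn}(\ref{item.regular.point}) and Theorem~\ref{thrm.max.cc.char}. It is worth noting that this corollary is essentially the ``no irregular points'' specialisation of the general partition result (Theorem~\ref{thrm.topen.partition}), so an alternative route for $\ref{item.topen.partition.char.1}\limp\ref{item.topen.partition.char.2}$ would be to invoke Theorem~\ref{thrm.topen.partition} directly and observe that regularity forces the irregular-remainder set to be empty; I would prefer the explicit $\mathcal T=\{\community(p)\mid p\in\ns P\}$ construction, however, as it makes the partition concrete and keeps the dependency on Proposition~\ref{prop.community.partition} transparent.
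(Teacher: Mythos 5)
Your proposal is correct and follows essentially the same route as the paper: the witnessing partition $\mathcal T=\{\community(p)\mid p\in\ns P\}$ with disjointness supplied by Proposition~\ref{prop.community.partition}, and Theorem~\ref{thrm.max.cc.char}(\ref{char.some.topen}) to recover regularity from the existence of a topen neighbourhood. The only organisational difference is that you close a cycle with a direct $\ref{item.topen.partition.char.3}\limp\ref{item.topen.partition.char.1}$ step, whereas the paper proves the two bi-implications $\ref{item.topen.partition.char.1}\liff\ref{item.topen.partition.char.2}$ and $\ref{item.topen.partition.char.2}\liff\ref{item.topen.partition.char.3}$ separately, using Corollary~\ref{corr.max.cc} to upgrade a cover to a partition in the latter; your arrangement avoids that invocation and is marginally more economical.
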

\begin{proof}
The proof is routine from the machinery that we already have.
We prove equivalence of parts~\ref{item.topen.partition.char.1} and~\ref{item.topen.partition.char.2}:
\begin{enumerate}
\item
Suppose $(\ns P,\opens)$ is regular, meaning by Definition~\ref{defn.tn}(\ref{item.regular.S}\&\ref{item.regular.point}) that $p\in\community(p)\in\topens$ for every $p\in\ns P$.
We set $\mathcal T=\{\community(p) \mid p\in\ns P\}$.
By assumption this covers $\ns P$ in topens, and by Proposition~\ref{prop.community.partition} the cover is a partition. 
\item
Suppose $\mathcal T$ is a topen partition of $\ns P$.
By definition for every point $p$ there exists $T\in\mathcal T$ such that $p\in T$ and so $p$ has a topen neighbourhood.
By Theorem~\ref{thrm.max.cc.char}(\ref{char.some.topen}\&\ref{char.p.regular}) $p$ is regular.
\end{enumerate}
We prove equivalence of parts~\ref{item.topen.partition.char.2} and~\ref{item.topen.partition.char.3}:
\begin{enumerate}
\item
Suppose $\mathcal T$ is a topen partition of $\ns P$, and suppose $X\subseteq\mathcal P$.
Then trivially $X\subseteq\bigcup\mathcal T$.
\item
Suppose every $X\subseteq\ns P$ has a cover of topen sets.
Then $\ns P$ has a cover of topen sets; write it $\mathcal T$.
By Corollary~\ref{corr.max.cc} we may assume without loss of generality that $\mathcal T$ is a partition, and we are done.
\qedhere\end{enumerate} 
\end{proof}

\begin{rmrk}
\label{rmrk.the.moral}
The moral we take from the results and examples above (and those to follow) is that the world we are entering has rather different well-behavedness criteria than those familiar from the study of typical Hausdorff topologies like $\mathbb R$.
Put crudely: 
\begin{enumerate*}
\item
`Bad' spaces are spaces that are not regular.

$\mathbb R$ with its usual topology (which is also a semitopology) is an example of a `bad' semitopology; it is not even quasiregular.
\item
`Good' spaces are spaces that are regular.

The supermajority and all-but-one semitopologies from Example~\ref{xmpl.semitopologies}(\ref{item.supermajority}\&\ref{item.counterexample.X-x}) are typical examples of `good' semitopologies; both are intertwined spaces (Notation~\ref{nttn.intertwined.space}).
\item
Corollary~\ref{corr.topen.partition.char} shows that the `good' spaces are just the (disjoint, possibly infinite) unions of intertwined spaces.
\end{enumerate*}
\end{rmrk}

\jamiesubsection{Examples of communities and (ir)regular points}
\label{subsect.irregular}

By Definition~\ref{defn.tn} a point $p$ is regular when its community is a topen neighbourhood.
Then a point is \emph{not} regular when its community is \emph{not} a topen neighbourhood of $p$. 
We saw one example of this in Lemma~\ref{lemm.two.intertwined}.
In this subsection we take a moment to investigate the possible behaviour in more detail.

\begin{xmpl}
\label{xmpl.p.not.regular}
\leavevmode
\begin{enumerate}
\item\label{item.p.not.regular.R}
We noted in Example~\ref{xmpl.p.not.regular}(\ref{item.wr.6}) and Lemma~\ref{lemm.wr.r.no}(\ref{item.wr.r.not.quasiregular}) that for $\mathbb R$ the real numbers with its usual topology, every $p\in\mathbb R$ is not regular. 
Then
$\intertwined{x}=\{x\}$ and $\community(x)=\varnothing$ for every $x\in\mathbb R$.
\item\label{item.p.not.regular.012}
We continue the semitopology from Example~\ref{xmpl.cc}(\ref{item.cc.two.regular}) (illustrated in Figure~\ref{fig.012}, top-left diagram), as used in Lemma~\ref{lemm.two.intertwined}:
\begin{itemize*}
\item
$\ns P=\{0,1,2\}$.
\item
$\opens$ is generated by $\{0\}$ and $\{2\}$. 
\end{itemize*}
Then:
\begin{itemize*}
\item
$\intertwined{0}=\{0,1\}$ and $\community(0)=\interior(\intertwined{0})=\{0\}$. 
\item
$\intertwined{2}=\{1,2\}$ and $\community(2)=\interior(\intertwined{2})=\{2\}$. 
\item
$\intertwined{1}=\{0,1,2\}$ and $\community(1)=\{0,1,2\}$. 
\end{itemize*}
\item\label{item.point.not.regular.but.community.is.topen}\label{item.p.not.regular.01234}
We take, as illustrated in Figure~\ref{fig.irregular} (left-hand diagram):
\begin{itemize*}
\item
$\ns P=\{0,1,2,3,4\}$.
\item
$\opens$ is generated by $\{1,2\}$, $\{0,1,3\}$, $\{0,2,4\}$, $\{3\}$, and $\{4\}$.
\end{itemize*}
Then:
\begin{itemize*}
\item
$\intertwined{x}=\{0,1,2\}$ and $\community(x)=\interior(\intertwined{x})=\{1,2\}$ for $x\in\{0,1,2\}$.
\item
$\intertwined{x}=\{x\}=\community(x)$ for $x\in\{3,4\}$.
\end{itemize*}
\item\label{item.p.not.regular.01234b}
We take, as illustrated in Figure~\ref{fig.irregular} (right-hand diagram):
\begin{itemize*}
\item
$\ns P=\{0,1,2,3,4\}$.
\item
$\opens$ is generated by $\{1\}$, $\{2\}$, $\{3\}$, $\{4\}$, $\{0, 1, 2, 3\}$, and $\{0, 1, 2, 4\}$. 
\end{itemize*}
Then:
\begin{itemize*}
\item
$\intertwined{0}=\{0,1,2\}$ and $\community(0)=\{1,2\}$.
\item
$\community(0)$ is not transitive and consists of two distinct topens $\{1\}$ and $\{2\}$.
\item
$0\notin\community(0)$. 
\end{itemize*}
See Remark~\ref{rmrk.indeed.two.closed.neighbourhoods} for further discussion of this example.
\item
The reader can also look ahead to Example~\ref{xmpl.two.topen.examples}.
In Example~\ref{xmpl.two.topen.examples}(\ref{item.two.topen.examples.1}), every point $p$ is regular and $\community(p)=\mathbb Q^2$.
In Example~\ref{xmpl.two.topen.examples}(\ref{item.two.topen.examples.2}), no point $p$ is regular and $\community(p)=\varnothing\subseteq\mathbb Q^2$.
\end{enumerate}
\end{xmpl}

\begin{figure}
\vspace{-1em}
\centering
\includegraphics[width=0.35\columnwidth]{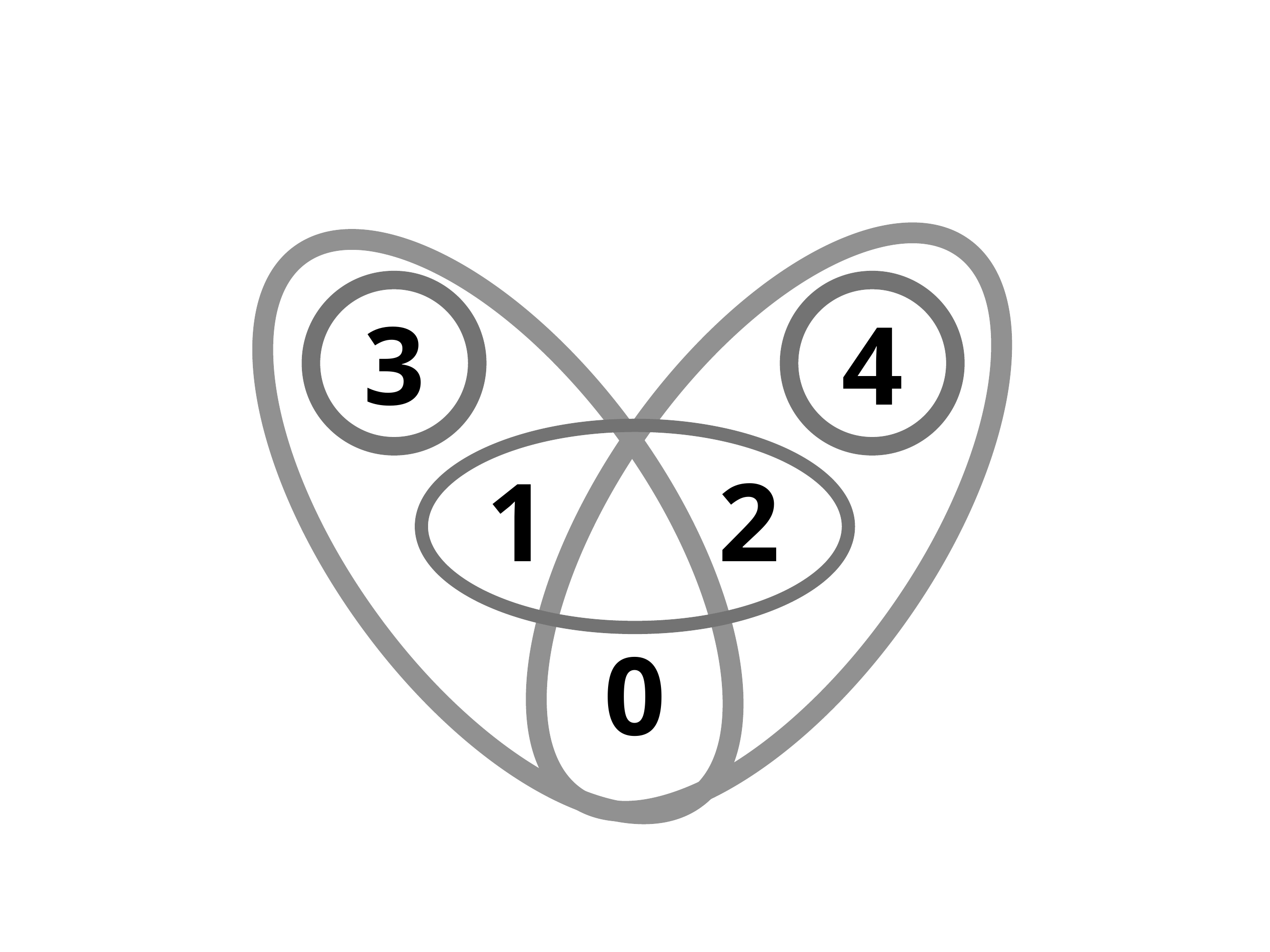}
\includegraphics[width=0.31\columnwidth]{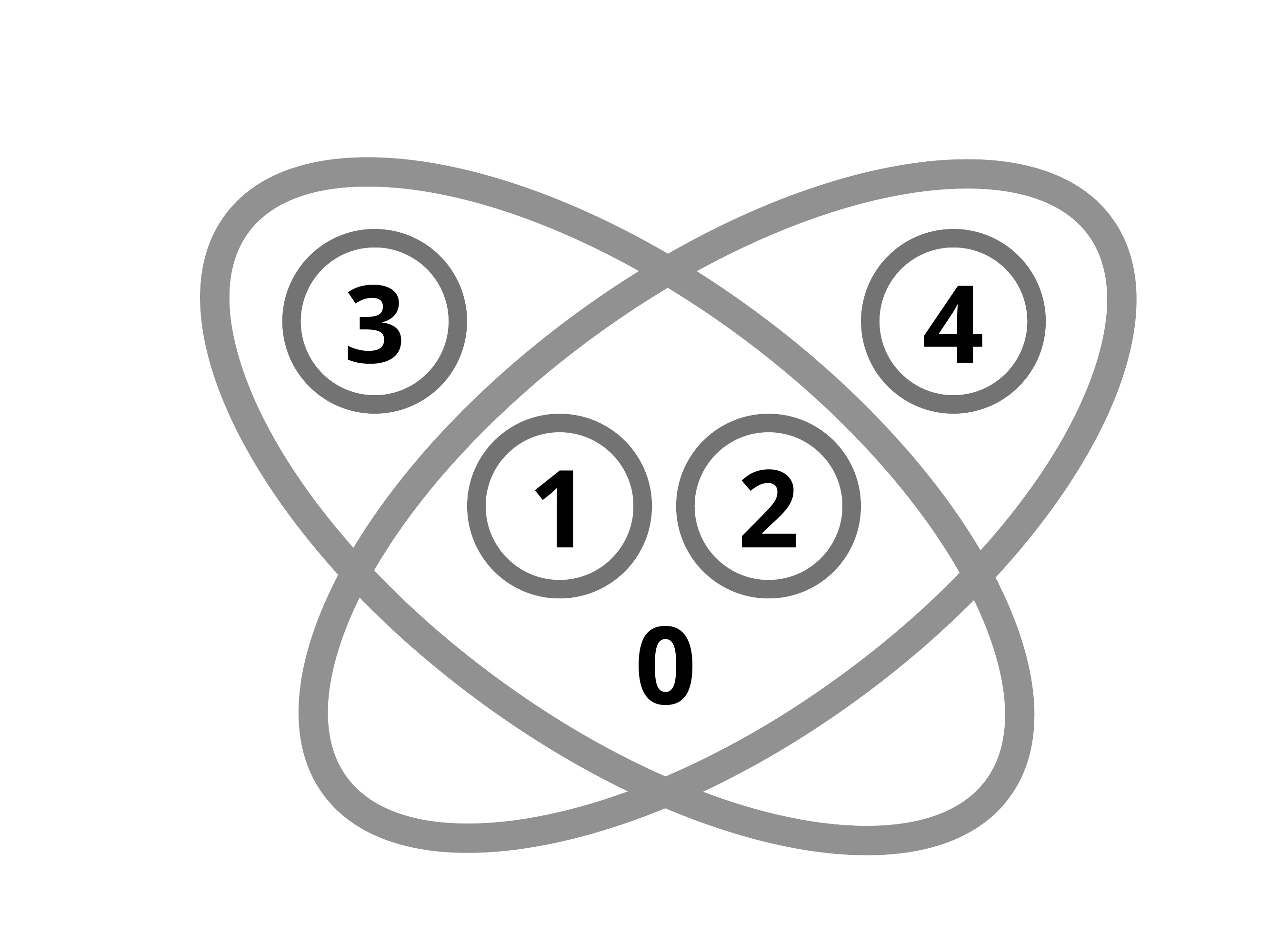}
\vspace{-0em}
\caption{Illustration of Example~\ref{xmpl.p.not.regular}(\ref{item.p.not.regular.01234}\&\ref{item.p.not.regular.01234b})}
\label{fig.irregular}
\end{figure}

\begin{lemm}
\label{lemm.p.not.regular}
Suppose $(\ns P,\opens)$ is a semitopology and $p\in\ns P$.
Then precisely one of the following possibilities must hold, and each one is possible: 
\begin{enumerate*}
\item
$p$ is regular: $p\in\community(p)$ and $\community(p)$ is topen (nonempty, open, and transitive). 
\item
$\community(p)$ is topen, but $p\notin\community(p)$. 
\item
$\community(p)=\varnothing$.
\item
$\community(p)$ is open but not transitive.
(Both $p\in\community(p)$ and $p\notin\community(p)$ are possible.)
\end{enumerate*}
\end{lemm}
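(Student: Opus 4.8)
The plan is to prove the two claims in turn: that the four possibilities are mutually exclusive and jointly exhaustive, and then that each one actually occurs.

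For exhaustiveness I would work from the single fact that is always available, namely that $\community(p)=\interior(\intertwined{p})$ is open (Lemma~\ref{lemm.two.intertwined}(\ref{item.two.intertwined.1}), since it is an open interior). I then build a case tree. First, either $\community(p)=\varnothing$, which is exactly possibility~3, or $\community(p)$ is a nonempty open set. In the latter case I split on transitivity: if $\community(p)$ is transitive then, being nonempty, open, and transitive, it is topen by Definition~\ref{defn.transitive}(\ref{transitive.cc}), and a final split on whether $p\in\community(p)$ lands in possibility~1 or possibility~2; whereas if $\community(p)$ is not transitive, then it is open but not transitive, which is possibility~4. Every $p$ falls into exactly one branch, so the four cases are exhaustive.

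Mutual exclusivity then follows by unwinding the word \emph{topen} ($=$ nonempty, open, and transitive). Possibilities~1 and~2 assert that $\community(p)$ is topen, hence nonempty and transitive, which separates them both from possibility~3 (where $\community(p)=\varnothing$) and from possibility~4 (where $\community(p)$ fails transitivity); note that $\varnothing$ is itself transitive (Example~\ref{xmpl.singleton.transitive}), so possibilities~3 and~4 do not overlap either. Finally, possibilities~1 and~2 are distinguished precisely by whether $p\in\community(p)$.

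For the realisability claim I would simply point to witnesses already computed in the paper. Possibility~1 is witnessed by point~$0$ of Example~\ref{xmpl.p.not.regular}(\ref{item.p.not.regular.012}), whose community $\{0\}$ is topen and contains $0$ (or by any point of a discrete semitopology). Possibility~2 is witnessed by point~$0$ of Example~\ref{xmpl.p.not.regular}(\ref{item.p.not.regular.01234}), where $\community(0)=\{1,2\}$ is topen yet $0\notin\{1,2\}$. Possibility~3 is witnessed by any $x\in\mathbb R$ with the usual topology, where $\community(x)=\varnothing$ (Example~\ref{xmpl.p.not.regular}(\ref{item.p.not.regular.R})). Possibility~4 must be exhibited in both of its sub-forms: with $p\in\community(p)$ by point~$1$ of Example~\ref{xmpl.p.not.regular}(\ref{item.p.not.regular.012}), where $\community(1)=\{0,1,2\}$ is open but not transitive because $\{0\}\notbetween\{2\}$; and with $p\notin\community(p)$ by point~$0$ of Example~\ref{xmpl.p.not.regular}(\ref{item.p.not.regular.01234b}), where $\community(0)=\{1,2\}$ decomposes into the two disjoint topens $\{1\}$ and $\{2\}$. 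Since the statement is essentially a classification by bookkeeping, there is no genuine obstacle; the only steps needing care are the transitivity verifications in the cited examples, where showing $\community(p)$ transitive uses the pointwise criterion of Proposition~\ref{prop.cc.char} and showing it non-transitive amounts to exhibiting a pair of open sets meeting $\community(p)$ but not each other.
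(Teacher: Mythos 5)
Your proposal is correct and follows essentially the same route as the paper: the witnesses you cite for each possibility are the ones the paper itself uses (Example~\ref{xmpl.p.not.regular}, parts \ref{item.p.not.regular.R}, \ref{item.p.not.regular.012}, \ref{item.p.not.regular.01234}, and \ref{item.p.not.regular.01234b}), and your explicit case tree on emptiness, transitivity, and membership of $p$ merely spells out the step the paper dismisses as ``clearly mutually exclusive and exhaustive''. The only cosmetic difference is that the paper witnesses possibility~1 with the one-point discrete space rather than point $0$ of Example~\ref{xmpl.p.not.regular}(\ref{item.p.not.regular.012}); both are fine.
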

\begin{proof} 
\leavevmode\begin{enumerate}
\item
To see that $p$ can be regular, consider $\ns P=\{0\}$ with the discrete topology.
Then $p\in\community(p)=\{0\}$.
\item
To see that it is possible for $\community(p)$ to be topen but $p$ is not in it, consider Example~\ref{xmpl.p.not.regular}(\ref{item.p.not.regular.01234}).
There, $\ns P=\{0,1,2,3,4\}$ and $\intertwined{0}=\{0,1,2\}$ and $\community(0)=\{1,2\}$.
Then $\community(0)$ is topen, but $0\notin\community(0)$.

(Another, slightly more compact but more distant, example is $p=\ast$ in the lower-right semitopology in Figure~\ref{fig.012}.)
\item
To see that $\community(p)=\varnothing$ is possible, consider Example~\ref{xmpl.p.not.regular}(\ref{item.p.not.regular.R}) (the real numbers $\mathbb R$ with its usual topology).
Then by Remark~\ref{rmrk.not.hausdorff} $\intertwined{r}=\{r\}$ and so $\community(x)=\interior(\{r\})=\varnothing$.
(See also Example~\ref{xmpl.two.topen.examples}(\ref{item.two.topen.examples.2}) for a more elaborate example.) 
\item
To see that it is possible for $\community(p)$ to be an open neighbourhood of $p$ but not transitive, see Example~\ref{xmpl.p.not.regular}(\ref{item.p.not.regular.012}).
There, $\ns P=\{0,1,2\}$ and $1\in \intertwined{1}=\{0,1,2\}=\community(1)$, but $\{0,1,2\}$ is not transitive (it contains two disjoint topens: $\{0\}$ and $\{2\}$).

To see that it is possible for $\community(p)$ to be open and nonempty yet not contain $p$ and not be transitive, see Example~\ref{xmpl.p.not.regular}(\ref{item.p.not.regular.01234b}) for $p=0$, and see also Remark~\ref{rmrk.indeed.two.closed.neighbourhoods} for a discussion of the connection with minimal closed neighbourhoods.
\end{enumerate}
The possibilities above are clearly mutually exclusive and exhaustive.
\end{proof}

\jamiesection{Closed sets}
\label{sect.closed.sets}

\jamiesubsection{Closed sets}
\label{subsect.closed.sets.basics}

\begin{rmrk}
\label{rmrk.computing.closures}
In Subsection~\ref{subsect.closed.sets.basics} we check that some familiar properties of closures carry over from topologies to semitopologies.
There are no technical surprises, but this is in itself a mathematical result that needs to be checked. 
From Subsection~\ref{subsect.trans.clos} and the following Subsections we will study the close relation between closures and sets of intertwined points. 
\end{rmrk}

\begin{defn}
\label{defn.closure}
Suppose $(\ns P,\opens)$ is a semitopology and suppose $p\in\ns P$ and $P\subseteq\ns P$.
Then:
\begin{enumerate*}
\item\label{item.closure}
Define $\closure{P}\subseteq\ns P$ the \deffont{closure of $P$} to be the set of points $p$ such that every open neighbourhood of $p$ intersects $P$.
In symbols using Notation~\ref{nttn.between}: 
$$
\closure{P} = \{ p'\in\ns P \mid \Forall{O{\in}\opens} p'\in O \limp P\between O\} .
$$
\item\label{item.closure.p}
As is standard, we may write $\closure{p}$ for $\closure{\{p\}}$.
Unpacking definitions for reference:
$$
\closure{p} = \{ p'\in\ns P \mid \Forall{O{\in}\opens} p'\in O \limp p\in O\} .
$$
\end{enumerate*}
\end{defn}

\begin{lemm}
\label{lemm.closure.monotone}
Suppose $(\ns P,\opens)$ is a semitopology and suppose $P,P'\subseteq\ns P$.
Then taking the closure of a set is: 
\begin{enumerate*}
\item\label{closure.monotone}
\emph{Monotone:}\quad If $P\subseteq P'$ then $\closure{P}\subseteq\closure{P'}$.
\item\label{closure.increasing}
\emph{Increasing:}\quad $P\subseteq\closure{P}$.
\item\label{closure.idempotent}
\emph{Idempotent:}\quad $\closure{P}=\closure{\closure{P}}$.
\end{enumerate*}
\end{lemm}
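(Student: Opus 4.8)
The plan is to read off all three properties directly from the definition of closure in Definition~\ref{defn.closure}(\ref{item.closure}), together with the elementary facts about $\between$ collected in Lemma~\ref{lemm.between.elementary}. Recall that $p'\in\closure{P}$ means precisely that $P\between O$ holds for every open $O$ with $p'\in O$. With this unpacking in hand, two of the three parts are immediate, and only one inclusion in the idempotence claim carries any real content.

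For monotonicity I would take $p'\in\closure{P}$ and an arbitrary open $O\ni p'$; by hypothesis $P\between O$, and since $P\subseteq P'$, Lemma~\ref{lemm.between.elementary}(\ref{between.monotone}) upgrades this to $P'\between O$. As $O$ was arbitrary, $p'\in\closure{P'}$. For the increasing property I would take $p\in P$ and an arbitrary open $O\ni p$; then $p\in P\cap O$ witnesses $P\between O$, so $p\in\closure{P}$.

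For idempotence I would prove the two inclusions separately. The inclusion $\closure{P}\subseteq\closure{\closure{P}}$ is just the increasing property applied to $\closure{P}$ in place of $P$. The reverse inclusion $\closure{\closure{P}}\subseteq\closure{P}$ is the one genuinely substantive step: given $p'\in\closure{\closure{P}}$ and an arbitrary open $O\ni p'$, I know $\closure{P}\between O$, so there is a witness $q\in\closure{P}\cap O$; now $O$ is an open neighbourhood of $q$ and $q\in\closure{P}$, so the defining property of $\closure{P}$ yields $P\between O$, whence $p'\in\closure{P}$.

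The only place where one must be slightly careful — and the nearest thing to an obstacle — is this last step: the trick is to observe that the witness $q$ supplied by $\closure{P}\between O$ lies in the \emph{open} set $O$, so that $O$ itself serves as an open neighbourhood of $q$ to which the membership $q\in\closure{P}$ may be applied. Everything else is bookkeeping with the definition and with Lemma~\ref{lemm.between.elementary}, so I expect the whole proof to be short.
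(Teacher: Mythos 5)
Your proof is correct and follows exactly the route the paper intends: the paper's own proof is just ``By routine calculations from Definition~\ref{defn.closure}'', and your unpacking (including the key observation for $\closure{\closure{P}}\subseteq\closure{P}$ that the witness $q$ lies in the open set $O$, so $O$ is an open neighbourhood of $q$ to which $q\in\closure{P}$ applies) is precisely the routine calculation being alluded to.
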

\begin{proof}
By routine calculations from Definition~\ref{defn.closure}.
\end{proof}

\begin{lemm}
\label{lemm.closure.open.char}
Suppose $(\ns P,\opens)$ is a semitopology and $P\subseteq\ns P$ and $O\in\opens$.
Then 
$$
P\between O
\quad\text{if and only if}\quad 
\closure{P}\between O.
$$
\end{lemm}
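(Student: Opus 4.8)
The plan is to prove the two implications separately, and the only real subtlety is that openness of $O$ is used in exactly one of the two directions.

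For the left-to-right implication I would appeal to the general fact that closure is increasing: by Lemma~\ref{lemm.closure.monotone}(\ref{closure.increasing}) we have $P\subseteq\closure{P}$, so if $P\between O$ then monotonicity of $\between$ on its first argument (Lemma~\ref{lemm.between.elementary}(\ref{between.monotone})) immediately gives $\closure{P}\between O$. Note that this direction does not use openness of $O$ at all; it holds for an arbitrary $O\subseteq\ns P$.

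For the right-to-left implication I would unfold the definition of closure directly. Suppose $\closure{P}\between O$, so there is some point $p'\in\closure{P}\cap O$. Since $p'\in O$ and $O\in\opens$, the set $O$ is an open neighbourhood of $p'$; and since $p'\in\closure{P}$, Definition~\ref{defn.closure}(\ref{item.closure}) says that every open neighbourhood of $p'$ intersects $P$. Instantiating this universally-quantified condition at $O$ yields $P\between O$, as required.

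The only place any content enters is this backward direction, and the ``obstacle'' is really just to observe that the hypothesis $O\in\opens$ is exactly what licenses applying the defining property of $\closure{P}$ at the witnessing point $p'$. Openness of $O$ is essential for the biconditional: were $O$ an arbitrary subset, the right-to-left direction could fail, since a point of $\closure{P}$ lying in $O$ would no longer guarantee that $O$ is an open neighbourhood witnessing intersection with $P$.
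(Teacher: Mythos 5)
Your proof is correct and follows essentially the same route as the paper's: the forward direction via $P\subseteq\closure{P}$ and monotonicity of $\between$, and the backward direction by picking a witness in $\closure{P}\cap O$ and instantiating the defining property of the closure at the open set $O$. Your added observation that openness of $O$ is needed only in the right-to-left direction is accurate, though the paper does not remark on it.
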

\begin{proof}
Suppose $P\between O$.
Then $\closure{P}\between O$ using Lemma~\ref{lemm.closure.monotone}(\ref{closure.increasing}).

Suppose $\closure{P}\between O$.
Pick $p\in \closure{P}\cap O$.
By construction of $\closure{P}$ in Definition~\ref{defn.closure} $p\in O\limp P\between O$.
It follows that $P\between O$ as required.
\end{proof}

\begin{defn}
\label{defn.closed}
Suppose $(\ns P,\opens)$ is a semitopology and suppose $C\subseteq\ns P$.
\begin{enumerate*}
\item\label{item.closed.set}
Call $C$ a \deffont{closed set} when $C=\closure{C}$.
\item
Call $C$ a \deffont{clopen set} when $C$ is closed and open.
\item
Write $\closed$ for the set of \deffont[closed sets $\closed$]{closed sets} (as we wrote $\opens$ for the open sets; the ambient semitopology will always be clear or understood).
\end{enumerate*}
\end{defn}

\begin{lemm}
\label{lemm.closure.closed}
Suppose $(\ns P,\opens)$ is a semitopology and suppose $P\subseteq\ns P$.
Then $\closure{P}$ is closed and contains $P$.
In symbols:
$$
P\subseteq \closure{P}\in\closed .
$$ 
\end{lemm}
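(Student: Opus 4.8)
The plan is to unpack the statement $P\subseteq\closure{P}\in\closed$ into its two conjuncts and dispatch each using the monotonicity package already established in Lemma~\ref{lemm.closure.monotone} together with the open-set characterisation of closure in Lemma~\ref{lemm.closure.open.char}. The containment $P\subseteq\closure{P}$ is nothing more than the \emph{increasing} property, Lemma~\ref{lemm.closure.monotone}(\ref{closure.increasing}), so that half is immediate and needs no further work.

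The substantive half is showing $\closure{P}\in\closed$, which by Definition~\ref{defn.closed}(\ref{item.closed.set}) means verifying $\closure{P}=\closure{\closure{P}}$. I would prove this by two inclusions. The inclusion $\closure{P}\subseteq\closure{\closure{P}}$ is again just the increasing property of Lemma~\ref{lemm.closure.monotone}(\ref{closure.increasing}), now applied to the set $\closure{P}$ in place of $P$. For the reverse inclusion $\closure{\closure{P}}\subseteq\closure{P}$, I would take an arbitrary $p'\in\closure{\closure{P}}$ and an arbitrary open neighbourhood $O\ni p'$; by Definition~\ref{defn.closure} we have $\closure{P}\between O$, and then Lemma~\ref{lemm.closure.open.char} converts this directly into $P\between O$. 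Since $O$ was an arbitrary open neighbourhood of $p'$, this says exactly that $p'\in\closure{P}$, giving the desired inclusion.

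There is, honestly, no real obstacle here: the result is essentially a repackaging of idempotence, Lemma~\ref{lemm.closure.monotone}(\ref{closure.idempotent}), and one could alternatively just cite that fact together with the increasing property and be done in a single line. The only point to keep straight is the \emph{direction} of the nontrivial inclusion — the increasing property always supplies $\closure{P}\subseteq\closure{\closure{P}}$ for free, so the content lies entirely in proving $\closure{\closure{P}}\subseteq\closure{P}$, and the clean tool for that is the equivalence $P\between O\liff\closure{P}\between O$ from Lemma~\ref{lemm.closure.open.char}, which lets one "collapse" a double closure through any open witness.
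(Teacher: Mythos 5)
Your proposal is correct and matches the paper's proof, which simply cites Definition~\ref{defn.closed}(\ref{item.closed.set}) together with the increasing and idempotence properties of Lemma~\ref{lemm.closure.monotone} --- exactly the one-line alternative you identify at the end. Your expanded derivation of the inclusion $\closure{\closure{P}}\subseteq\closure{P}$ via Lemma~\ref{lemm.closure.open.char} is a valid unfolding of that idempotence step, but it is not a different argument in substance.
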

\begin{proof}
From Definition~\ref{defn.closed}(\ref{item.closed.set}) and Lemma~\ref{lemm.closure.monotone}(\ref{closure.increasing} \& \ref{closure.idempotent}).
\end{proof}

\begin{xmpl}\leavevmode
\begin{enumerate}
\item
Take $\ns P=\{0,1\}$ and $\opens=\{\varnothing, \{0\}, \{0,1\}\}$.
Then the reader can verify that:
\begin{itemize*}
\item
$\{0\}$ is open.
\item
The closure of $\{1\}$ is $\{1\}$ and $\{1\}$ is closed.
\item
The closure of $\{0\}$ is $\{0,1\}$.
\item
$\varnothing$ and $\{0,1\}$ are the only clopen sets.
\end{itemize*}
\item
Now take $\ns P=\{0,1\}$ and $\opens=\{\varnothing, \{0\}, \{1\}, \{0,1\}\}$.\footnote{Following Definition~\ref{defn.value.assignment} and Example~\ref{xmpl.semitopologies}(\ref{item.boolean.discrete}), this is just $\{0,1\}$ with the \emph{discrete semitopology}.}
Then the reader can verify that:
\begin{itemize*}
\item
Every set is clopen.
\item
The closure of every set is itself.
\end{itemize*}
\end{enumerate}
\end{xmpl}

\begin{rmrk}
There are two standard definitions for when a set is closed: when it is equal to its closure (as per Definition~\ref{defn.closed}(\ref{item.closed.set})), and when it is the complement of an open set.
In topology these are equivalent.
We do need to check that the same holds in semitopology, but as it turns out the proof is routine:
\end{rmrk}

\begin{lemm}
\label{lemm.closed.complement.open}
Suppose $(\ns P,\opens)$ is a semitopology.
Then:
\begin{enumerate*}
\item\label{item.closed.complement.open.1}
Suppose $C\in\closed$ is closed (by Definition~\ref{defn.closed}: $C=\closure{C}$).
Then $\ns P\setminus C$ is open.
\item\label{item.closed.complement.open.2}
Suppose $O\in\opens$ is open.
Then $\ns P\setminus O$ is closed (by Definition~\ref{defn.closed}: $\closure{\ns P\setminus O}=\ns P\setminus O$).
\end{enumerate*}
\end{lemm}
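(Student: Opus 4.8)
The plan is to handle the two parts symmetrically by unpacking Definition~\ref{defn.closure} and then leaning on the characterisation of openness in Lemma~\ref{lemm.open.is.open}. The only genuine content is translating ``not in the closure'' into ``has an open neighbourhood missing the set'', and vice versa.

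For part~\ref{item.closed.complement.open.1}, suppose $C=\closure{C}$. I would show that every point of $\ns P\setminus C$ has an open neighbourhood contained in $\ns P\setminus C$, and then conclude openness by Lemma~\ref{lemm.open.is.open}. Concretely, take $p\in\ns P\setminus C$. Since $C=\closure{C}$, we have $p\notin\closure{C}$, so the defining condition in Definition~\ref{defn.closure} fails at $p$: there is some $O\in\opens$ with $p\in O$ and $\neg(C\between O)$, i.e.\ $C\cap O=\varnothing$. This $O$ is then an open neighbourhood of $p$ with $O\subseteq\ns P\setminus C$. As $p$ was arbitrary, Lemma~\ref{lemm.open.is.open} gives $\ns P\setminus C\in\opens$.

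For part~\ref{item.closed.complement.open.2}, suppose $O\in\opens$. By Lemma~\ref{lemm.closure.monotone}(\ref{closure.increasing}) we already have $\ns P\setminus O\subseteq\closure{\ns P\setminus O}$, so it remains to prove the reverse inclusion, equivalently that $p\in O$ implies $p\notin\closure{\ns P\setminus O}$. But if $p\in O$ then $O$ itself is an open neighbourhood of $p$ witnessing the failure of the closure condition, since $O\cap(\ns P\setminus O)=\varnothing$, i.e.\ $\neg(O\between(\ns P\setminus O))$. Hence $p\notin\closure{\ns P\setminus O}$, giving $\closure{\ns P\setminus O}\subseteq\ns P\setminus O$ and therefore equality.

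Neither part presents a real obstacle; the one place to be careful is reading the negation of the closure condition correctly, namely that $p\notin\closure{P}$ is exactly the existence of an open neighbourhood of $p$ disjoint from $P$, together with the convention from Notation~\ref{nttn.between} that $\neg(X\between Y)$ means precisely $X\cap Y=\varnothing$.
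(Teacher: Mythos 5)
Your proof is correct and follows essentially the same route as the paper: part~\ref{item.closed.complement.open.1} unpacks the negation of the closure condition to get a witnessing open neighbourhood and concludes via Lemma~\ref{lemm.open.is.open}, and part~\ref{item.closed.complement.open.2} combines Lemma~\ref{lemm.closure.monotone}(\ref{closure.increasing}) with the observation that $O$ itself witnesses $O\notbetween\closure{\ns P\setminus O}$. No gaps.
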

\begin{proof}
\leavevmode
\begin{enumerate}
\item
Suppose $p\in \ns P\setminus C$.
Since $C=\closure{C}$, we have $p\in\ns P\setminus\closure{C}$.
Unpacking Definition~\ref{defn.closure}, this means precisely that there exists $O_p\in\opens$ with $p\in O_p \notbetween C$.
We use Lemma~\ref{lemm.open.is.open}. 
\item
Suppose $O\in\opens$.
Combining Lemma~\ref{lemm.open.is.open} with Definition~\ref{defn.closure} 
it follows that $O\notbetween \closure{\ns P\setminus O}$ so that $\closure{\ns P\setminus O}\subseteq\ns P\setminus O$.
Furthermore, by Lemma~\ref{lemm.closure.monotone}(\ref{closure.increasing}) $\ns P\setminus O\subseteq\closure{\ns P\setminus O}$.
\qedhere\end{enumerate}
\end{proof}

\begin{corr}
\label{corr.closed.complement.union}
If $C\in\closed$ then $\ns P\setminus C=\bigcup_{O\in\opens} O\notbetween C$.
\end{corr}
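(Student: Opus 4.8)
The plan is to establish the set equality $\ns P\setminus C=\bigcup\{O\in\opens \mid O\notbetween C\}$ by proving the two inclusions separately, using only the closure-based definition of closedness together with the machinery already set up for $\between$ and $\closure{\cdot}$.

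First I would treat the easy inclusion $\bigcup\{O\in\opens \mid O\notbetween C\}\subseteq\ns P\setminus C$. Every open set $O$ ranged over in the union satisfies $O\notbetween C$, which by Notation~\ref{nttn.between} means $O\cap C=\varnothing$, i.e.\ $O\subseteq\ns P\setminus C$. Since each member of the family is contained in $\ns P\setminus C$, so is their union.

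For the reverse inclusion $\ns P\setminus C\subseteq\bigcup\{O\in\opens \mid O\notbetween C\}$, I would take an arbitrary $p\in\ns P\setminus C$. Because $C$ is closed we have $C=\closure{C}$ by Definition~\ref{defn.closed}(\ref{item.closed.set}), so $p\notin\closure{C}$. Negating the defining condition of $\closure{C}$ in Definition~\ref{defn.closure} yields an open neighbourhood $O_p\in\opens$ with $p\in O_p$ and $C\notbetween O_p$. Thus $O_p$ is one of the open sets appearing in the union and it contains $p$, so $p$ lies in the union, as required.

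This is essentially a restatement of the proof of Lemma~\ref{lemm.closed.complement.open}(\ref{item.closed.complement.open.1}), with the difference that here I record the witnessing open neighbourhoods $O_p$ explicitly in the union rather than merely concluding that $\ns P\setminus C$ is open via Lemma~\ref{lemm.open.is.open}. There is no genuine obstacle; the only points requiring care are the correct unpacking of $O\notbetween C$ as $O\cap C=\varnothing$ and the direction of the quantifier when negating membership in $\closure{C}$ (the universally quantified implication becomes an existential witness $O_p$). It is worth noting that closedness of $C$ is used only for the second inclusion and is in fact exactly what makes $\ns P\setminus C$ open.
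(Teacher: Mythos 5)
Your proof is correct and follows essentially the same route as the paper: the paper's first inclusion simply cites Lemma~\ref{lemm.closed.complement.open}(\ref{item.closed.complement.open.1}) (whose proof is exactly your pointwise extraction of a witnessing open neighbourhood $O_p$ from $p\notin\closure{C}$), and its second inclusion is your easy direction. Inlining the lemma's argument rather than citing it is a presentational difference only.
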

\begin{proof}
By Lemma~\ref{lemm.closed.complement.open}(\ref{item.closed.complement.open.1}) $\ns P\setminus C\subseteq\bigcup_{O\in\opens} O\notbetween C$.
Conversely, if $O\notbetween C$ then $O\subseteq\ns P\setminus C$ by Definition~\ref{defn.closure}(\ref{item.closure}). 
\end{proof}

\begin{corr}
\label{corr.closure.closure}
Suppose $(\ns P,\opens)$ is a semitopology and $P\subseteq\ns P$ and $\mathcal C\subseteq\powerset(\ns P)$.
Then:
\begin{enumerate*}
\item
$\varnothing$ and $\ns P$ are closed.
\item\label{closure.closure.cap}
If every $C\in\mathcal C$ is closed, then $\bigcap\mathcal C$ is closed.
Or succinctly in symbols:
$$
\mathcal C\subseteq\closed \limp \bigcap\mathcal C\in\closed .
$$
\item\label{item.closure.as.intersection}
$\closure{P}$ is equal to the intersection of all the closed sets that contain it.
In symbols:
$$
\closure{P}=\bigcap\{C\in\closed \mid P\subseteq C\}. 
$$
\end{enumerate*}
\end{corr}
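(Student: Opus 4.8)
The plan is to dispatch the three parts in order, drawing only on the closure operator's monotonicity, increasingness and idempotence (Lemma~\ref{lemm.closure.monotone}), the fact that $\closure{P}$ is always closed (Lemma~\ref{lemm.closure.closed}), and the complement characterisation of closedness (Lemma~\ref{lemm.closed.complement.open}). There is no deep content here; the whole point is that closures behave in semitopologies exactly as in topologies, so the work is purely bookkeeping against results already established.

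For part~1 I would avoid computing closures directly and instead argue by complements. Since $\varnothing\in\opens$ and $\ns P\in\opens$ by Definition~\ref{defn.semitopology}(\ref{semitopology.empty.and.universe}), Lemma~\ref{lemm.closed.complement.open}(\ref{item.closed.complement.open.2}) tells us that $\ns P\setminus\varnothing=\ns P$ and $\ns P\setminus\ns P=\varnothing$ are both closed.

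For part~2 the argument is a standard sandwich. Writing $D=\bigcap\mathcal C$, Lemma~\ref{lemm.closure.monotone}(\ref{closure.increasing}) gives $D\subseteq\closure{D}$ immediately. For the reverse inclusion, fix any $C\in\mathcal C$; since $D\subseteq C$, monotonicity (Lemma~\ref{lemm.closure.monotone}(\ref{closure.monotone})) gives $\closure{D}\subseteq\closure{C}$, and $\closure{C}=C$ because $C$ is closed, so $\closure{D}\subseteq C$. As this holds for every $C\in\mathcal C$, we obtain $\closure{D}\subseteq\bigcap\mathcal C=D$, hence $\closure{D}=D$ and $D$ is closed. The degenerate case $\mathcal C=\varnothing$, where by convention $\bigcap\mathcal C=\ns P$, is already covered by part~1.

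Part~3 is then proved by two inclusions against the family $\mathcal C=\{C\in\closed \mid P\subseteq C\}$. By Lemma~\ref{lemm.closure.closed} the set $\closure{P}$ is itself closed and contains $P$, so $\closure{P}\in\mathcal C$ and therefore $\bigcap\mathcal C\subseteq\closure{P}$. Conversely, for each $C\in\mathcal C$ we have $P\subseteq C$, so monotonicity and closedness give $\closure{P}\subseteq\closure{C}=C$; intersecting over all such $C$ yields $\closure{P}\subseteq\bigcap\mathcal C$. Combining the two inclusions gives the claimed equality. None of these steps presents a genuine obstacle; the only thing to watch is that each reverse inclusion rests on the defining property $\closure{C}=C$ of closed sets rather than on any new computation, and that part~3 reuses part~2 only implicitly (through Lemma~\ref{lemm.closure.closed}), so the three parts can be proved independently.
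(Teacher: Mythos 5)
Your proof is correct. Parts 1 and 3 follow essentially the paper's own route: part 1 via complements of the open sets $\varnothing,\ns P\in\opens$, and part 3 by the same two inclusions (the key step in both being that $\closure{P}$ is itself a closed superset of $P$ by Lemma~\ref{lemm.closure.closed}, so it participates in the intersection). Where you genuinely diverge is part 2. The paper argues by complementation: $\ns P\setminus\bigcap\mathcal C=\bigcup\{\ns P\setminus C\mid C\in\mathcal C\}$ is a union of open sets by Lemma~\ref{lemm.closed.complement.open}, hence open by Definition~\ref{defn.semitopology}(\ref{semitopology.unions}), so the intersection is closed. You instead run a sandwich purely at the level of the closure operator, using only monotonicity, increasingness, and $\closure{C}=C$. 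Both are sound; yours has the mild advantage of not routing through De Morgan and the union axiom at all --- it would survive for any abstract increasing monotone closure operator --- while the paper's version makes visible exactly where the semitopology axioms enter. A further small difference is that in part 3 you close the second inclusion by intersecting the individual bounds $\closure{P}\subseteq C$, whereas the paper applies monotonicity once to $P\subseteq\bigcap\mathcal C$ and then invokes part 2; your variant makes the three parts genuinely independent, as you note. Your explicit handling of the degenerate case $\mathcal C=\varnothing$ is a careful touch the paper omits.
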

\begin{proof}
\leavevmode
\begin{enumerate}
\item
Immediate from Lemma~\ref{lemm.closed.complement.open}(\ref{item.closed.complement.open.2}).
\item
From Lemma~\ref{lemm.closed.complement.open} and Definition~\ref{defn.semitopology}(\ref{semitopology.empty.and.universe}\&\ref{semitopology.unions}).
\item
By Lemma~\ref{lemm.closure.closed} $\bigcap\{C\in\closed \mid P\subseteq C\}\subseteq\closure{P}$.
By construction $P\subseteq\bigcap\{C\in\closed \mid P\subseteq C\}$, and using Lemma~\ref{lemm.closure.monotone}(\ref{closure.monotone}) and part~\ref{item.closure.as.intersection} of this result we have
$$
\closure{P} 
\stackrel{L\ref{lemm.closure.monotone}(\ref{closure.monotone})}\subseteq 
\closure{\bigcap\{C\in\closed \mid P\subseteq C\}} 
\stackrel{pt.2}= 
\bigcap\{C\in\closed \mid P\subseteq C\} .
$$ 
\qedhere\end{enumerate}
\end{proof}

The usual characterisation of continuity in terms of inverse images of closed sets being closed, remains valid:
\begin{corr}
\label{corr.alternative.cont.closed}
Suppose $(\ns P,\opens)$ and $(\ns P',\opens')$ are semitopological spaces (Definition~\ref{defn.semitopology}) and suppose $\avaluation:\ns P\to\ns P'$ is a function.
Then the following are equivalent:
\begin{enumerate*}
\item
$\avaluation$ is continuous, meaning by Definition~\ref{defn.continuity}(\ref{item.continuous.function}) that $\avaluation^\mone(O')\in\opens$ for every $O'\in\opens'$.
\item
$\avaluation^\mone(C')\in\closed$ for every $C'\in\closed'$.
\end{enumerate*}
\end{corr}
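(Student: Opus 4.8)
The plan is to reduce the whole statement to the standard set-theoretic fact that taking preimages commutes with complementation, and then feed that through Lemma~\ref{lemm.closed.complement.open}, which already establishes (in both directions) that closed sets are exactly the complements of open sets. Since that lemma does all the real work of relating `closed' to `open', the corollary becomes a purely formal manipulation.

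First I would record the elementary identity
$$
\avaluation^\mone(\ns P'\setminus S)=\ns P\setminus \avaluation^\mone(S)
\qquad\text{for any }S\subseteq\ns P',
$$
which is immediate from Definition~\ref{defn.continuity}(\ref{item.continuous.function}): a point $p$ lies in $\avaluation^\mone(\ns P'\setminus S)$ precisely when $\avaluation(p)\notin S$, i.e. precisely when $p\notin\avaluation^\mone(S)$.

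For the top-down implication, I would assume $\avaluation$ is continuous and take any $C'\in\closed'$. By Lemma~\ref{lemm.closed.complement.open}(\ref{item.closed.complement.open.1}) the set $\ns P'\setminus C'$ is open, so continuity gives $\avaluation^\mone(\ns P'\setminus C')\in\opens$. By the identity above this set equals $\ns P\setminus\avaluation^\mone(C')$, so $\avaluation^\mone(C')$ is the complement of an open set and hence closed by Lemma~\ref{lemm.closed.complement.open}(\ref{item.closed.complement.open.2}). For the converse I would run the same argument in reverse: given $O'\in\opens'$, the set $\ns P'\setminus O'$ is closed by Lemma~\ref{lemm.closed.complement.open}(\ref{item.closed.complement.open.2}), so by hypothesis $\avaluation^\mone(\ns P'\setminus O')=\ns P\setminus\avaluation^\mone(O')$ is closed, whence $\avaluation^\mone(O')$ is the complement of a closed set and therefore open by Lemma~\ref{lemm.closed.complement.open}(\ref{item.closed.complement.open.1}).

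I do not expect any genuine obstacle here; the only thing requiring care is bookkeeping — invoking the correct direction of Lemma~\ref{lemm.closed.complement.open} on each side. It is perhaps worth remarking that this argument nowhere uses closure of open sets under intersection, so it transfers from topologies to semitopologies without modification; it relies only on complements and on the behaviour of open sets under unions, both of which are unaffected by the semitopological weakening. This is exactly why, as Remark~\ref{rmrk.computing.closures} anticipates, the result holds with no technical surprises.
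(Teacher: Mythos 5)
Your proof is correct and takes essentially the same route as the paper, which likewise reduces the corollary to Lemma~\ref{lemm.closed.complement.open} together with the fact that inverse images commute with complements (the paper merely cites this as routine rather than writing it out). The only nitpick is that your complementation identity follows from the definition of the preimage (Definition~\ref{defn.continuity}, part~1), not from the continuity clause you cite.
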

\begin{proof}
By routine calculations as for topologies, using Lemma~\ref{lemm.closed.complement.open} and the fact that the inverse image of a complement is the complement of the inverse image; see~\cite[Theorem~7.2, page~44]{willard:gent} or~\cite[Proposition~1.4.1(iv), page~28]{engelking:gent}.
\end{proof}

\jamiesubsection{Duality between closure and interior}

The usual dualities between closures and interiors remain valid in semitopologies.
There are no surprises but this still needs to be checked, so we spell out the details:
\begin{lemm}
\label{lemm.closure.interior}
Suppose $(\ns P,\opens)$ is a semitopology and $O\in\opens$ and $C\in\closed$.
Then:
\begin{enumerate*}
\item\label{item.closure.interior.open}
$O\subseteq\interior(\closure{O})$.  The inclusion may be strict.
\item\label{item.closure.interior.closed}
$\closure{\interior(C)}\subseteq C$.  The inclusion may be strict.
\item\label{item.closure.interior.complement.closure}
$\interior(\ns P\setminus O)=\ns P\setminus\closure{O}$.
\item\label{item.closure.interior.complement.interior}
$\closure{\ns P\setminus C}=\ns P\setminus\interior(C)$. 
\end{enumerate*}
\end{lemm}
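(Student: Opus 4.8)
The plan is to treat the four parts in two groups: parts~\ref{item.closure.interior.open} and~\ref{item.closure.interior.closed} are one-line consequences of the increasing/monotone behaviour of closure together with the fact that $\interior$ returns the greatest open subset, whereas parts~\ref{item.closure.interior.complement.closure} and~\ref{item.closure.interior.complement.interior} are the genuine closure--interior dualities, of which the second will follow from the first by a complementation.

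For part~\ref{item.closure.interior.open} I would argue that $O\subseteq\closure{O}$ by Lemma~\ref{lemm.closure.monotone}(\ref{closure.increasing}), so since $O$ is open it is an open subset of $\closure{O}$; as $\interior(\closure{O})$ is the \emph{greatest} open subset of $\closure{O}$ by Lemma~\ref{lemm.interior.open}, we get $O\subseteq\interior(\closure{O})$. Dually for part~\ref{item.closure.interior.closed}: $\interior(C)\subseteq C$ directly from Definition~\ref{defn.interior}, hence $\closure{\interior(C)}\subseteq\closure{C}$ by monotonicity (Lemma~\ref{lemm.closure.monotone}(\ref{closure.monotone})), and $\closure{C}=C$ because $C$ is closed.

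The substantive step is part~\ref{item.closure.interior.complement.closure}, which I would prove by unfolding both sides pointwise. By Definition~\ref{defn.closure}, $p\notin\closure{O}$ precisely when $p$ has an open neighbourhood $O_p$ with $O_p\notbetween O$, equivalently $O_p\subseteq\ns P\setminus O$; and by Definition~\ref{defn.interior}, $p\in\interior(\ns P\setminus O)$ precisely when $p$ lies in some open set contained in $\ns P\setminus O$. These two conditions are verbatim the same, giving $\ns P\setminus\closure{O}=\interior(\ns P\setminus O)$. Part~\ref{item.closure.interior.complement.interior} then drops out by instantiating part~\ref{item.closure.interior.complement.closure} at the open set $\ns P\setminus C$ (which is open by Lemma~\ref{lemm.closed.complement.open}(\ref{item.closed.complement.open.1}), since $C$ is closed): this yields $\interior(C)=\ns P\setminus\closure{\ns P\setminus C}$, and taking complements gives the stated equality.

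For the two strictness assertions I would exhibit a single small witness, e.g. $\ns P=\{0,1\}$ with $\opens=\{\varnothing,\{0\},\{0,1\}\}$. Here $\closure{\{0\}}=\{0,1\}$, so $\interior(\closure{\{0\}})=\{0,1\}\supsetneq\{0\}$, witnessing strictness in part~\ref{item.closure.interior.open}; and for the closed set $C=\{1\}$ we have $\interior(C)=\varnothing$, so $\closure{\interior(C)}=\varnothing\subsetneq\{1\}$, witnessing it in part~\ref{item.closure.interior.closed}. I expect no real obstacle here: the only points needing care are reading the negation in the closure definition correctly when unfolding part~\ref{item.closure.interior.complement.closure}, and the elementary complement bookkeeping $\ns P\setminus(\ns P\setminus C)=C$ used to derive part~\ref{item.closure.interior.complement.interior}.
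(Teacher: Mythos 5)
Your proposal is correct and follows essentially the same route as the paper: parts~\ref{item.closure.interior.open} and~\ref{item.closure.interior.closed} by monotonicity of closure and the greatest-open-subset property of $\interior$, and part~\ref{item.closure.interior.complement.closure} by the same pointwise unfolding of Definitions~\ref{defn.closure} and~\ref{defn.interior}. The only cosmetic differences are that you derive part~\ref{item.closure.interior.complement.interior} from part~\ref{item.closure.interior.complement.closure} by complementation where the paper repeats the pointwise argument, and your two-point strictness witnesses replace the paper's $\mathbb R$-based ones; both are equally valid.
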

\begin{proof}
The reasoning is just as for topologies, but we spell out the details:
\begin{enumerate}
\item
By Lemma~\ref{lemm.closure.monotone}(\ref{closure.increasing}) $O\subseteq\closure{O}$.
By Corollary~\ref{corr.interior.monotone} $\interior(O)\subseteq\interior(\closure{O})$.
By Lemma~\ref{lemm.interior.open} $O=\interior(O)$, so we are done.

For an example of the strict inclusion, consider $\mathbb R$ with the usual topology (which is also a semitopology) and take $O=(0,1)\cup(1,2)$.
Then $O\subsetneq\interior(\closure{O})=(0,2)$.
\item
By Lemma~\ref{lemm.interior.open} $\interior(C)\subseteq C$.
By Lemma~\ref{lemm.closure.monotone}(\ref{closure.monotone}) $\closure{\interior(C)}\subseteq\closure{C}$.
By Definition~\ref{defn.closed}(\ref{item.closed.set}) (since we assumed $C\in\closed$) $\closure{C}=C$, so we are done.

For an example of the strict inclusion, consider $\mathbb R$ with the usual topology and take $C=\{0\}$.
Then $\closure{\interior(C)}=\varnothing\subsetneq C$.
\item
Consider some $p'\in\ns P$.
By Definition~\ref{defn.interior} $p'\in \interior(\ns P\setminus O)$ when there exists some $O'\in\opens$ such that $p'\in O'\notbetween O$.
By definition in Definition~\ref{defn.closure}(\ref{item.closure}) this happens precisely when $p'\notin\closure{O}$. 
\item
By Definition~\ref{defn.closure}(\ref{item.closure}), $p'\notin \closure{\ns P\setminus C}$ precisely when there exists some $O'\in\opens$ such that $p'\in O'\notbetween \ns P\setminus C$.
By facts of sets this means precisely that $p'\in O'\subseteq C$.
By Definition~\ref{defn.interior} this means precisely that $p'\in\interior(C)$.
\qedhere\end{enumerate}
\end{proof}

\begin{corr}
\label{corr.ic.ci}
Suppose $(\ns P,\opens)$ is a semitopology and 
$O\in\opens$ and $C\in\closed$.
Then:
\begin{enumerate*}
\item
$\closure{O} = \closure{\interior(\closure{O})}$. 
\item
$\interior(C)=\interior(\closure{\interior(C)})$.
\end{enumerate*}
\end{corr}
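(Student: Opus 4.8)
The plan is to prove each identity as a pair of inclusions, obtaining in each case one inclusion from monotonicity together with the \emph{increasing} property of closure (or the idempotence of interior on open sets), and the other inclusion by feeding an appropriately chosen \emph{closed} set into Lemma~\ref{lemm.closure.interior}(\ref{item.closure.interior.closed}). The two structural facts that make this work are that $\closure{O}$ is itself closed (Lemma~\ref{lemm.closure.closed}), so that the closed-set half of Lemma~\ref{lemm.closure.interior} applies to it, and that $\interior$ fixes open sets (Lemma~\ref{lemm.interior.open}), so that $\interior(\interior(P))=\interior(P)$.

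For part~1, to show $\closure{O}\subseteq\closure{\interior(\closure{O})}$ I would start from $O\subseteq\interior(\closure{O})$ (Lemma~\ref{lemm.closure.interior}(\ref{item.closure.interior.open})) and apply monotonicity of closure (Lemma~\ref{lemm.closure.monotone}(\ref{closure.monotone})). For the reverse inclusion, I would note that $\closure{O}\in\closed$ by Lemma~\ref{lemm.closure.closed}, and then apply Lemma~\ref{lemm.closure.interior}(\ref{item.closure.interior.closed}) taking $C$ to be $\closure{O}$, which gives exactly $\closure{\interior(\closure{O})}\subseteq\closure{O}$.

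For part~2, to show $\interior(C)\subseteq\interior(\closure{\interior(C)})$ I would start from $\interior(C)\subseteq\closure{\interior(C)}$ (Lemma~\ref{lemm.closure.monotone}(\ref{closure.increasing})), apply monotonicity of interior (Corollary~\ref{corr.interior.monotone}) to get $\interior(\interior(C))\subseteq\interior(\closure{\interior(C)})$, and then use that $\interior(C)$ is open, so $\interior(\interior(C))=\interior(C)$ by Lemma~\ref{lemm.interior.open}. For the reverse inclusion, I would use that $C$ is closed, so Lemma~\ref{lemm.closure.interior}(\ref{item.closure.interior.closed}) gives $\closure{\interior(C)}\subseteq C$; applying interior and its monotonicity (Corollary~\ref{corr.interior.monotone}) then yields $\interior(\closure{\interior(C)})\subseteq\interior(C)$.

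There is no real difficulty beyond bookkeeping, but the one genuine trap to avoid is trying to extract the hard inclusion of part~2 from Lemma~\ref{lemm.closure.interior}(\ref{item.closure.interior.open}): that item only supplies the \emph{easy} direction $\interior(C)\subseteq\interior(\closure{\interior(C)})$, and the hoped-for $\interior(\closure{O'})\subseteq O'$ is simply false for a general open $O'$ --- the disconnected example $O'=(0,1)\cup(1,2)$ in $\mathbb R$ already used to witness strictness in Lemma~\ref{lemm.closure.interior}(\ref{item.closure.interior.open}) is a counterexample. The correct move is to exploit that $C$ is closed (dually, that $\closure{O}$ is closed in part~1) and route the argument through Lemma~\ref{lemm.closure.interior}(\ref{item.closure.interior.closed}).
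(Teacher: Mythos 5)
Your proof is correct and follows essentially the same route as the paper: both directions of each identity come from Lemma~\ref{lemm.closure.interior}(\ref{item.closure.interior.open}\&\ref{item.closure.interior.closed}) together with monotonicity (Lemma~\ref{lemm.closure.monotone}(\ref{closure.monotone}), Corollary~\ref{corr.interior.monotone}) and the fact that $\closure{O}$ is closed. The only cosmetic difference is that for the easy inclusion of part~2 the paper simply instantiates Lemma~\ref{lemm.closure.interior}(\ref{item.closure.interior.open}) at the open set $\interior(C)$, whereas you rederive that inclusion from the increasing property of closure plus idempotence of interior; both are fine.
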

\begin{proof}
We use Lemma~\ref{lemm.closure.interior}(\ref{item.closure.interior.open}\&\ref{item.closure.interior.complement.closure}) along with Lemma~\ref{lemm.closure.monotone}(\ref{closure.monotone}) and Corollary~\ref{corr.interior.monotone}: 
$$
\begin{array}{r@{\ }c@{\ }c@{\ }c@{\ }ll}
\closure{O}
&\stackrel{L\ref{lemm.closure.interior}(\ref{item.closure.interior.open})\&L\ref{lemm.closure.monotone}(\ref{closure.monotone})}\subseteq&
\closure{\interior(\closure{O})}
&\stackrel{L\ref{lemm.closure.interior}(\ref{item.closure.interior.closed})}\subseteq&
\interior(\closure{O})
\\
\interior(C)
&\stackrel{L\ref{lemm.closure.interior}(\ref{item.closure.interior.open})}\subseteq&
\interior(\closure{\interior(C)})
&\stackrel{L\ref{lemm.closure.interior}(\ref{item.closure.interior.closed})\&C\ref{corr.interior.monotone}}\subseteq&
\interior(C)
\end{array}
$$
\end{proof}

\jamiesubsection{Transitivity and closure}
\label{subsect.trans.clos}

We explore how the topological closure operation interacts with taking transitive sets.
\begin{lemm}
\label{lemm.open.consensus}
Suppose $(\ns P,\opens)$ is a semitopology and $T\subseteq\ns P$ is transitive and $O\in\opens$.
Then 
$$
\atopen\between O
\quad\text{implies}\quad
\closure{T}\subseteq\closure{O}.
$$
\end{lemm}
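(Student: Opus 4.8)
The plan is to unfold the definition of closure on both sides and let the transitivity of $T$ do the work. Recall from Definition~\ref{defn.closure}(\ref{item.closure}) that $p'\in\closure{O}$ holds exactly when every open neighbourhood of $p'$ intersects $O$. So it suffices to fix an arbitrary $p'\in\closure{T}$ together with an arbitrary open set $O''$ with $p'\in O''$, and then show that $O\between O''$; once $O''$ is seen to be arbitrary, this gives $p'\in\closure{O}$ and hence the desired inclusion.

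First I would exploit the fact that $p'\in\closure{T}$. Since $O''$ is an open neighbourhood of $p'$, Definition~\ref{defn.closure}(\ref{item.closure}) gives immediately that $T\between O''$. Combining this with the hypothesis $T\between O$ (and symmetry of $\between$, Lemma~\ref{lemm.between.elementary}(\ref{item.between.symmetric})) produces the chain $O\between T\between O''$, in which both $O$ and $O''$ are open sets straddling the transitive set $T$.

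Transitivity of $T$ (Definition~\ref{defn.transitive}(\ref{transitive.transitive})) applies directly to precisely such a configuration, yielding $O\between O''$. Since $O''$ ranged over all open neighbourhoods of $p'$, this establishes $p'\in\closure{O}$, and as $p'\in\closure{T}$ was arbitrary we conclude $\closure{T}\subseteq\closure{O}$.

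There is no real obstacle here; the proof is a direct chase through the definitions. The only point requiring care is the bookkeeping of openness when invoking transitivity: Definition~\ref{defn.transitive}(\ref{transitive.transitive}) requires the two straddling sets to be open, and indeed both $O$ (by hypothesis) and $O''$ (the chosen neighbourhood) are open, so the hypothesis of transitivity is met. Notably, no openness of $T$ itself is needed — only its transitivity — which is consistent with the statement assuming $T$ transitive rather than topen.
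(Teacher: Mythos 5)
Your proof is correct and follows essentially the same route as the paper's: unpack the definition of closure, observe that any open neighbourhood $O''$ of a point of $\closure{T}$ satisfies $O''\between T\between O$, and apply transitivity of $T$ to conclude $O''\between O$. The remark that only transitivity (not openness) of $T$ is needed is accurate and matches the paper's hypotheses.
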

\begin{proof}
Unpacking Definition~\ref{defn.closure}
we have:
$$
\begin{array}{r@{\ }l}
p'\in\closure{T}\liff&\Forall{O'{\in}\opens}p'\in O'\limp O'\between \atopen 
\qquad\text{and}
\\
p'\in\closure{O}\liff&\Forall{O'{\in}\opens}p'\in O'\limp O'\between O
.
\end{array}
$$
It would suffice to prove $O'\between \atopen\limp O'\between O$ for any $O'\in\opens$.

So suppose $O'\between \atopen$.
By assumption $\atopen\between O$ and by transitivity of $\atopen$ (Definition~\ref{defn.transitive}) $O'\between O$.
\end{proof}

\begin{prop}
\label{prop.open.consensus}
\label{prop.open.strong-consensus}
Suppose $(\ns P,\opens)$ is a semitopology and $\atopen\in\topens$ and $O\in\opens$.
Then the following are equivalent:
$$
\atopen\between O
\quad\text{if and only if}\quad
\atopen\subseteq\closure{\atopen}\subseteq \closure{O}
.
$$
\end{prop}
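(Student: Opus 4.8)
The plan is to first strip the statement down to its essential content. The inclusion $\atopen\subseteq\closure{\atopen}$ holds automatically by Lemma~\ref{lemm.closure.monotone}(\ref{closure.increasing}), so the displayed chain $\atopen\subseteq\closure{\atopen}\subseteq\closure{O}$ carries no more information than $\closure{\atopen}\subseteq\closure{O}$. Hence it suffices to prove the biconditional
$$
\atopen\between O
\quad\text{if and only if}\quad
\closure{\atopen}\subseteq\closure{O}.
$$

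For the left-to-right direction I would simply invoke the preceding lemma. A topen set is in particular transitive (Definition~\ref{defn.transitive}(\ref{transitive.cc})), so Lemma~\ref{lemm.open.consensus} applies directly and yields $\atopen\between O\limp\closure{\atopen}\subseteq\closure{O}$. No further work is needed here; this direction is essentially a restatement of the earlier lemma.

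For the right-to-left direction I would exploit the two remaining ingredients in the notion of topen — nonemptiness and openness — that the forward direction did not use. Since $\atopen$ is topen it is nonempty, so pick any $p\in\atopen$. By Lemma~\ref{lemm.closure.monotone}(\ref{closure.increasing}) together with the assumed inclusion, $p\in\atopen\subseteq\closure{\atopen}\subseteq\closure{O}$, so $p\in\closure{O}$. Now $\atopen$ is itself an open neighbourhood of $p$, so unwinding Definition~\ref{defn.closure}(\ref{item.closure}) at the point $p\in\closure{O}$ forces $O\between\atopen$, equivalently $\atopen\between O$ by symmetry (Lemma~\ref{lemm.between.elementary}(\ref{item.between.symmetric})), as required.

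I do not expect a genuine obstacle; the argument is short. The one point that deserves care is that $\closure{O}$ is in general \emph{not} open, so one cannot extract an open witness directly from $p\in\closure{O}$ — the work is instead done by feeding the open set $\atopen$ into the closure condition. It is also worth flagging which hypotheses each direction consumes: the forward implication uses transitivity of $\atopen$ (through Lemma~\ref{lemm.open.consensus}), whereas the converse uses only that $\atopen$ is open and nonempty.
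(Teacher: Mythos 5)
Your proof is correct and follows essentially the same route as the paper: the forward direction is Lemma~\ref{lemm.open.consensus} plus the trivial inclusion $\atopen\subseteq\closure{\atopen}$, and your backward direction merely inlines the content of Lemma~\ref{lemm.closure.open.char} (pick $p\in\atopen\subseteq\closure{O}$ and feed the open neighbourhood $\atopen$ into Definition~\ref{defn.closure}) where the paper cites that lemma directly. Your closing remarks about which hypotheses each direction consumes match the paper's parenthetical note as well.
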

\begin{proof}
We prove two implications:
\begin{itemize}
\item
Suppose $\atopen\between O$.
By Lemma~\ref{lemm.open.consensus} $\closure{\atopen}\subseteq\closure{O}$.
By Lemma~\ref{lemm.closure.monotone}(\ref{closure.increasing}) (as standard) $\atopen\subseteq\closure{\atopen}$. 
\item
Suppose $\atopen\subseteq\closure{\atopen}\subseteq\closure{O}$.
Then $\atopen\between\closure{O}$ and by Lemma~\ref{lemm.closure.open.char} (since $\atopen$ is nonempty (and transitive) and open) also $\atopen\between O$.
\qedhere\end{itemize}
\end{proof}

\begin{rmrk}
\label{rmrk.gradecast}
In retrospect we can see the imprint of topens (Definition~\ref{defn.transitive}) in previous work, if we look at things in a certain way.
Many consensus algorithms have the property that once consensus is established in a quorum $O$, it propagates to $\closure{O}$.

This is apparent (for example) in the Grade-Cast algorithm~\cite{feldman_optimal_1988}, in which participants assign a confidence grade of 0, 1 or 2 to their output and must ensure that if any participant outputs $v$ with grade 2 then all must output $v$ with grade at least 1.
In this algorithm, if a participant finds that all its quorums intersect some set $S$ that unanimously supports value $v$, then the participant assigns grade at least 1 to $v$.
From our point of view here, this is just taking a closure in the style we discussed in Remark~\ref{rmrk.computing.closures}.
If $T$ unanimously supports $v$ and participants communicate enough, then eventually every member of $\closure{T}$ assigns grade at least 1 to $v$.
Thus, Proposition~\ref{prop.open.strong-consensus} suggests that, to convince a topen to agree on a value, we can first convince an open neighbourhood that intersects the topen, and then use Grade-Cast to convince the closure of that open set and thus in particular the topen which we know must be contained in that closure. 
\end{rmrk}

We conclude with an easy observation which will be useful later.
Recall from Notation~\ref{nttn.intertwined.space} the notion of an intertwined space being one such that all nonempty open sets intersect.
Then we have:
\begin{lemm}
\label{lemm.intertwined.iff.closure}
Suppose $(\ns P,\opens)$ is a semitopology and suppose $\atopen\in\topens$.
Then the following are equivalent:
\begin{enumerate*}
\item
$\ns P$ is intertwined.
\item
$\closure{\atopen}=\ns P$.
\end{enumerate*}
\end{lemm}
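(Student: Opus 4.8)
The plan is to prove both implications by unpacking Definition~\ref{defn.closure} and leaning on the characterisations of intertwined spaces collected in Lemma~\ref{lemm.intertwined.space}; in particular I will use freely that $\ns P$ is intertwined if and only if all nonempty open sets intersect. Note that the hypothesis $\atopen\in\topens$ supplies two facts I will need separately: $\atopen$ is nonempty, and $\atopen$ is transitive (Definition~\ref{defn.transitive}(\ref{transitive.cc})).

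For the direction from \emph{intertwined} to $\closure{\atopen}=\ns P$, I would argue that every point lies in the closure. Take any $p'\in\ns P$ and any open neighbourhood $O\ni p'$. Since $O$ contains $p'$ it is nonempty, and since $\ns P$ is intertwined (so all nonempty opens intersect) and $\atopen$ is nonempty, we get $\atopen\between O$. As $O$ was an arbitrary open neighbourhood of $p'$, Definition~\ref{defn.closure}(\ref{item.closure}) gives $p'\in\closure{\atopen}$; the reverse inclusion $\closure{\atopen}\subseteq\ns P$ is trivial.

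For the converse, assuming $\closure{\atopen}=\ns P$ I would show that any two nonempty opens $O,O'$ intersect, which yields intertwinedness via Lemma~\ref{lemm.intertwined.space}. Pick $p\in O$; since $p\in\closure{\atopen}$ and $O$ is an open neighbourhood of $p$, Definition~\ref{defn.closure}(\ref{item.closure}) gives $\atopen\between O$, and symmetrically $\atopen\between O'$. Thus $O\between\atopen\between O'$, and transitivity of $\atopen$ delivers $O\between O'$, as required.

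Both directions are genuinely short, and the only place where real work happens is the converse, where transitivity of $\atopen$ is indispensable: knowing merely that $\closure{\atopen}$ is everything tells us that each nonempty open meets $\atopen$, but without the transitivity bridge through $\atopen$ there would be no reason for two such opens to meet \emph{each other}. So the `main obstacle', such as it is, is simply to route the intersection argument through $\atopen$ rather than trying to relate $O$ and $O'$ directly.
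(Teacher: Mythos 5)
Your proof is correct and takes essentially the same route as the paper: the converse direction (closure equals $\ns P$ implies intertwined) is the paper's argument verbatim, routing $O\between\atopen\between O'$ through transitivity of $\atopen$. For the forward direction the paper instead notes that $\ns P$ itself is topen and invokes Lemma~\ref{lemm.open.consensus}, but your direct pointwise unpacking of Definition~\ref{defn.closure} rests on exactly the same fact (every nonempty open meets the nonempty open $\atopen$) and is equally valid.
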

\begin{proof}
Suppose $\closure{\atopen}=\ns P$ and consider any $O,O'\in\opens$.
Unpacking Definition~\ref{defn.closure}(\ref{item.closure}) it follows that $O\between\atopen\between O'$.
By transitivity of $\atopen$ (Definition~\ref{defn.transitive}(\ref{transitive.transitive})) $O\between O'$ as required.

Suppose $(\ns P,\opens)$ is intertwined.
By Lemma~\ref{lemm.intertwined.space} every nonempty open set is topen, thus $\ns P$ is topen, and $\ns P=\closure{\atopen}$ follows by Lemma~\ref{lemm.open.consensus}. 
\end{proof}

\jamiesubsection{Closed neighbourhoods and intertwined points}
\label{subsect.closed.neighbourhoods}

\jamiesubsubsection{Definition and basic properties}

\begin{defn}
\label{defn.cn}
Suppose $(\ns P,\opens)$ is a semitopology.
We generalise Definition~\ref{defn.open.neighbourhood} as follows:
\begin{enumerate*}
\item\label{item.neighbourhood.of.p}
Call $P\subseteq\ns P$ a \deffont{neighbourhood} when it contains an open set (i.e. when $\interior(P)\neq\varnothing$), and call $P$ a \deffont{neighbourhood of $p$} when $p\in\ns P$ and $P$ contains an open neighbourhood of $p$ (i.e. when $p\in\interior(P)$).
In particular:
\item\label{item.closed.neighbourhood.of.p}
$C\subseteq\ns P$ is a \deffont{closed neighbourhood of $p\in\ns P$} when $C$ is closed and $p\in\interior(C)$.
\item\label{item.closed.neighbourhood}
$C\subseteq\ns P$ is a \deffont{closed neighbourhood} when $C$ is closed and $\interior(C)\neq\varnothing$.
\end{enumerate*} 
\end{defn}

\begin{rmrk}
\leavevmode
\begin{enumerate}
\item
If $C$ is a closed neighbourhood of $p$ in the sense of Definition~\ref{defn.cn}(\ref{item.closed.neighbourhood.of.p}) then $C$ is a closed neighbourhood in the sense of Definition~\ref{defn.cn}(\ref{item.closed.neighbourhood}), just because if $p\in\interior(C)$ then $\interior(C)\neq\varnothing$. 
\item
$p\in C$ is not enough for $C$ to be a closed neighbourhood of $p$;
we require the stronger condition $p\in\interior(C)$.

For instance take $\ns P=\{0,1\}$ and $\opens=\{\varnothing,\{1\},\ns P\}$ (the Sierpi\'nski space; see Figure~\ref{fig.sierpinski}), and consider $p=0$ and $C=\{0\}$.
Then $p\in C$ but $p\not\oldin\interior(C)=\varnothing$, so that $C$ is not a closed neighbourhood of $p$. 
\end{enumerate}
\end{rmrk}

Recall from Definition~\ref{defn.intertwined.points} the notions of $p\intertwinedwith p'$ and $\intertwined{p}$.
Proposition~\ref{prop.intertwined.as.closure} packages up our material for convenient use in later results. 
\begin{prop}
\label{prop.intertwined.as.closure}
Suppose $(\ns P,\opens)$ is a semitopology and $p,p'\in\ns P$.
Then:
\begin{enumerate*}
\item\label{item.intertwined.as.closure.1}
We can characterise when $p'$ is intertwined with $p$ as follows: 
$$
p\intertwinedwith p' 
\quad\text{if and only if}\quad
\Forall{O{\in}\opens} p\in O\limp p'\in\closure{O} .
$$
\item\label{item.intertwined.as.intersection.of.closures}
As a corollary,
$$
\intertwined{p} = \bigcap\{\closure{O} \mid p\in O\in\opens\}.
$$
\item\label{intertwined.as.closure.closed}
Equivalently:
$$
\begin{array}{r@{\ }l@{\qquad}l}
\intertwined{p}
=& \bigcap\{C\in\closed \mid p\in \interior(C) \}
\\
=&
\bigcap\{C\in\tf{Closed} \mid C\text{ a closed neighbourhood of }p\}
&\text{Definition~\ref{defn.cn}}.
\end{array}
$$
Thus in particular, if $C$ is a closed neighbourhood of $p$ then $\intertwined{p}\subseteq C$.
\item\label{intertwined.p.closed}
$\intertwined{p}$ is closed and $\ns P\setminus\intertwined{p}$ is open.
\end{enumerate*}
\end{prop}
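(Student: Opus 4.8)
The plan is to treat the four parts in sequence, since each later part feeds off the earlier ones, and most of the work is a careful bookkeeping of definitions combined with the interior/closure duality already established in Lemma~\ref{lemm.closure.interior}.

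For part~\ref{item.intertwined.as.closure.1} I would simply unfold both sides. Unpacking Definition~\ref{defn.intertwined.points}(\ref{item.p.intertwinedwith.p'}), the statement $p\intertwinedwith p'$ reads $\Forall{O,O'{\in}\opens}(p\in O\land p'\in O')\limp O\between O'$. On the other side, unfolding Definition~\ref{defn.closure} gives $p'\in\closure{O}$ exactly when $\Forall{O'{\in}\opens}p'\in O'\limp O\between O'$. Prefixing this with the quantifier ``$\Forall{O{\in}\opens}p\in O\limp(\dots)$'' reproduces the intertwinedness condition verbatim, so the equivalence is immediate. Part~\ref{item.intertwined.as.intersection.of.closures} then follows by definition of $\intertwined p=\{p'\mid p\intertwinedwith p'\}$: by part~\ref{item.intertwined.as.closure.1}, $p'\in\intertwined p$ iff $p'\in\closure O$ for every open $O\ni p$, which is precisely membership in $\bigcap\{\closure{O}\mid p\in O\in\opens\}$.

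For part~\ref{intertwined.as.closure.closed} the second equality is just the definition of a closed neighbourhood (Definition~\ref{defn.cn}(\ref{item.closed.neighbourhood.of.p})), so the content is the first equality, which I would prove by two inclusions comparing the two intersection families. For $\supseteq$: given any open $O\ni p$, the set $\closure O$ is closed by Lemma~\ref{lemm.closure.closed} and satisfies $p\in O\subseteq\interior(\closure O)$ by Lemma~\ref{lemm.closure.interior}(\ref{item.closure.interior.open}), so $\closure O$ is itself a closed neighbourhood of $p$; hence the family $\{\closure O\mid p\in O\in\opens\}$ sits inside $\{C\in\closed\mid p\in\interior(C)\}$, and intersecting the larger family gives the smaller set. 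For $\subseteq$: given any closed $C$ with $p\in\interior(C)$, set $O=\interior(C)$, an open neighbourhood of $p$; then Lemma~\ref{lemm.closure.interior}(\ref{item.closure.interior.closed}) yields $\closure{O}=\closure{\interior(C)}\subseteq C$, so $\bigcap\{\closure{O'}\mid p\in O'\in\opens\}\subseteq\closure O\subseteq C$. As $C$ ranges over all closed neighbourhoods, this shows the left intersection is contained in the right one.

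Finally, part~\ref{intertwined.p.closed} is a direct consequence of the characterisation in part~\ref{item.intertwined.as.intersection.of.closures}: $\intertwined p$ is an intersection of the closed sets $\closure O$ (closed by Lemma~\ref{lemm.closure.closed}), hence closed by Corollary~\ref{corr.closure.closure}(\ref{closure.closure.cap}); note the family is nonempty since $p\in\ns P\in\opens$ always, so $\closure{\ns P}$ is among them. Then $\ns P\setminus\intertwined p$ is open by Lemma~\ref{lemm.closed.complement.open}(\ref{item.closed.complement.open.1}). I do not expect a genuine obstacle here: the only step needing care is the $\subseteq/\supseteq$ matching in part~\ref{intertwined.as.closure.closed}, where one must recognise that the two families are mutually cofinal under the adjunction $O\mapsto\closure O$ and $C\mapsto\interior(C)$ rather than literally equal, so the equality of intersections follows even though the families differ.
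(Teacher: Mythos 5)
Your proposal is correct and follows essentially the same route as the paper's own proof: the same unfolding of definitions for parts~\ref{item.intertwined.as.closure.1} and~\ref{item.intertwined.as.intersection.of.closures}, the same mutual-cofinality argument via Lemma~\ref{lemm.closure.interior} (taking $C=\closure{O}$ in one direction and $O=\interior(C)$ in the other) for part~\ref{intertwined.as.closure.closed}, and the same appeal to closure of $\closed$ under intersections and Lemma~\ref{lemm.closed.complement.open} for part~\ref{intertwined.p.closed}. No gaps.
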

\begin{proof}
\leavevmode
\begin{enumerate}
\item
We just rearrange Definition~\ref{defn.intertwined.points}.
So
$$
\Forall{O,O'\in\opens}((p\in O\land p'\in O') \limp O\between O')
$$
rearranges to
$$
\Forall{O\in\opens}(p\in O\limp \Forall{O'\in\opens} (p'\in O' \limp O\between O')) . 
$$
We now observe from Definition~\ref{defn.closure} that this is precisely
$$
\Forall{O\in\opens}(p\in O\limp p'\in\closure{O}).
$$
\item
We just rephrase part~\ref{item.intertwined.as.closure.1} of this result.
\item
Using part~\ref{item.intertwined.as.intersection.of.closures} of this result it would suffice to prove
$$
\bigcap\{\closure{O}\mid p\in O\in\opens\} = \bigcap\{C\in\closed \mid p\in \interior(C) \} .
$$
We will do this by proving that for each $O$-component on the left there is a $C$ on the right with $C\subseteq\closure{O}$; and for each $C$-component on the right there is an $O$ on the left with $\closure{O}\subseteq C$:
\begin{itemize}
\item
Consider some $O\in\opens$ with $p\in O$.

We set $C=\closure{O}$, so that trivially $C\subseteq\closure{O}$.
By Lemma~\ref{lemm.closure.interior}(\ref{item.closure.interior.open}) $O\subseteq\interior(\closure{O})$, so $p\in\interior(C)$.
\item
Consider some $C\in\closed$ such that $p\in\interior(C)$.

We set $O=\interior(C)$.
Then $p\in O$, and by Lemma~\ref{lemm.closure.interior}(\ref{item.closure.interior.closed}) $\closure{O}\subseteq C$.
\end{itemize}
\item
Part~\ref{intertwined.as.closure.closed} of this result exhibits $\intertwined{p}$ as an intersection of closed sets, and by Corollary~\ref{corr.closure.closure}(\ref{closure.closure.cap}) this is closed.
By Lemma~\ref{lemm.closed.complement.open}(\ref{item.closed.complement.open.1}) its complement $\ns P\setminus\intertwined{p}$ is open.
\qedhere\end{enumerate}
\end{proof}

\begin{defn}
\label{defn.nbhd.system}
\label{defn.nbhd}
Suppose $(\ns P,\opens)$ is a semitopology and $p\in\ns P$.
\begin{enumerate*}
\item
Write $\nbhd(p)=\{O\in\opens\mid p\in\opens\}$ and call this the \deffont[open neighbourhood system $\nbhd(p)$]{open neighbourhood system} of $p\in\ns P$. 
\item
Write $\nbhd^c(p)=\{C\in\closed\mid p\in\closed\}$ and call this the \deffont[closed neighbourhood system $\nbhd^c(p)$]{closed neighbourhood system}\index{$\nbhd^c(p)$ (closed neighbourhood system of a point)} of $p\in\ns P$.
\end{enumerate*}
\end{defn}

\begin{rmrk}
\label{rmrk.nbhd.concise}
As standard, we can use Definition~\ref{defn.nbhd} to rewrite the definition of $\avaluation$ being continuous at $p$ (Definition~\ref{defn.continuity}(\ref{item.continuous.function.at.p})) as
$$
\Forall{O'{\in}\nbhd(f(p))}\Exists{O{\in}\nbhd(p)} O\subseteq f^\mone(O') .
$$
\end{rmrk}

\begin{rmrk}
\label{rmrk.nbhd.filter}
If $(\ns P,\opens)$ is a topology, then $\nbhd(p)$ is a filter (a nonempty up-closed down-directed set) and this is often called the \emph{neighbourhood filter} of $p$.

We are working with semitopologies, so $\opens$ is not necessarily closed under intersections, and $\nbhd(p)$ is not necessarily a filter.
Figure~\ref{fig.nbhd} illustrates examples of this: e.g. in the left-hand example $\{0,1\},\{0,2\}\in \nbhd(0)$ but $\{0\}\notin\nbhd(0)$, since $\{0\}$ is not an open set.
\end{rmrk}

\begin{figure}
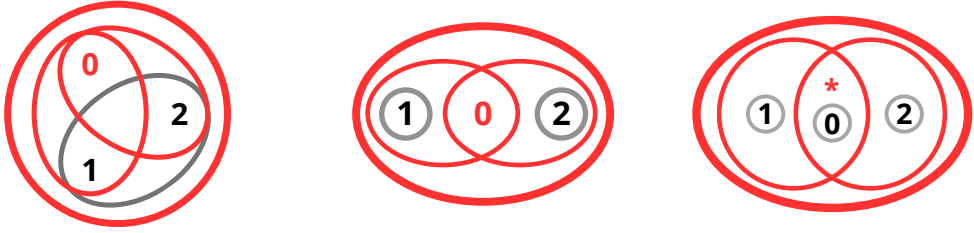

\vspace{-1em}
\centering
\includegraphics[align=c,width=0.3\columnwidth,trim={50 0 50 0},clip]{diagrams/012_triangle_nbhd\greyprint.pdf}
\quad
\includegraphics[align=c,width=0.32\columnwidth,trim={50 0 50 0},clip]{diagrams/012_nbhd\greyprint.pdf}
\quad
\includegraphics[align=c,width=0.28\columnwidth,trim={50 0 50 0},clip]{diagrams/lower-join-2\greyprint.pdf}
\caption{Examples of open neighbourhoods (Remark~\ref{rmrk.nbhd.filter})}
\label{fig.nbhd}
\end{figure}

\begin{rmrk}
\label{rmrk.cluster.convergence.2}
We can relate Proposition~\ref{prop.intertwined.as.closure} to a concept from topology. 
Following standard terminology (\cite[Definition~2, page~69]{bourbaki:gent1} or \cite[page~52]{engelking:gent}), a \deffont{cluster point} $p\in\ns P$ of $\mathcal O\subseteq\opens$ is one such that every open neighbourhood of $p$ intersects every $O\in\mathcal O$.
Then Proposition~\ref{prop.intertwined.as.closure}(\ref{item.intertwined.as.intersection.of.closures}) identifies $\intertwined{p}$ as the set of cluster points of $\nbhd(p)\subseteq\opens$.
\end{rmrk}

\jamiesubsubsection{Application to characterise (quasi/weak) regularity}

\begin{rmrk}
\label{rmrk.how.weakly.regular}
Recall that Theorem~\ref{thrm.max.cc.char} characterised regularity in multiple ways, including as the existence of a greatest topen neighbourhood. 
Proposition~\ref{prop.views.of.regularity} below does something similar, for quasiregularity and weak regularity and the existence of closed neighbourhoods (Definition~\ref{defn.cn}), and Theorem~\ref{thrm.up.down.char} is a result in the same style, for regularity.

Here, for the reader's convenience, is a summary of the relevant results:
\begin{enumerate*}
\item
Proposition~\ref{prop.views.of.quasiregularity}:\ 
$p$ is quasiregular when $\intertwined{p}$ is a closed neighbourhood.
\item
Proposition~\ref{prop.views.of.regularity}:\ 
$p$ is weakly regular when $\intertwined{p}$ is a closed neighbourhood of $p$.
\item
Theorem~\ref{thrm.up.down.char}:\ 
$p$ is regular when $\intertwined{p}$ is a closed neighbourhood of $p$ and is a minimal closed neighbourhood.
\end{enumerate*}
\end{rmrk}

\begin{prop}
\label{prop.views.of.quasiregularity}
Suppose $(\ns P,\opens)$ is a semitopology and $p\in\ns P$.
Then the following are equivalent:
\begin{enumerate*}
\item
$p$ is quasiregular, or in full: $\community(p)\neq\varnothing$ (Definition~\ref{defn.tn}(\ref{item.quasiregular.point})).
\item
$\intertwined{p}$ is a closed neighbourhood (Definition~\ref{defn.cn}(\ref{item.closed.neighbourhood})).
\end{enumerate*}
\end{prop}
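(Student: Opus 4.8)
The plan is to observe that this equivalence is essentially a definition-chase, hinging on the single non-trivial ingredient that $\intertwined{p}$ is always closed. Recall that by Definition~\ref{defn.tn}(\ref{item.tn}) we have $\community(p)=\interior(\intertwined{p})$, and that by Definition~\ref{defn.cn}(\ref{item.closed.neighbourhood}) a set $C$ is a closed neighbourhood precisely when $C$ is closed and $\interior(C)\neq\varnothing$. So I would first record that $\intertwined{p}$ is closed, which is exactly Proposition~\ref{prop.intertwined.as.closure}(\ref{intertwined.p.closed}). This is the one fact doing real work; everything else is unwinding.

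Given that $\intertwined{p}$ is closed, the closedness half of the closed-neighbourhood condition is automatic, so the condition ``$\intertwined{p}$ is a closed neighbourhood'' collapses to the single requirement $\interior(\intertwined{p})\neq\varnothing$. First I would spell out the forward direction: if $p$ is quasiregular then $\community(p)\neq\varnothing$ by Definition~\ref{defn.tn}(\ref{item.quasiregular.point}), and since $\community(p)=\interior(\intertwined{p})$ this says exactly $\interior(\intertwined{p})\neq\varnothing$; combined with $\intertwined{p}\in\closed$ this makes $\intertwined{p}$ a closed neighbourhood. Then the converse runs identically in reverse: if $\intertwined{p}$ is a closed neighbourhood then in particular $\interior(\intertwined{p})=\community(p)\neq\varnothing$, which is quasiregularity.

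Since both directions are just the chain of equalities $\community(p)=\interior(\intertwined{p})$ read together with $\intertwined{p}\in\closed$, I would likely present the argument as a single biconditional rather than two separate implications, noting that the nonemptiness of $\interior(\intertwined{p})$ is the pivot on both sides. Honestly, there is no real obstacle here: the only place where content (as opposed to notation) is invoked is the appeal to Proposition~\ref{prop.intertwined.as.closure}(\ref{intertwined.p.closed}) to discharge closedness for free, and once that is in hand the result is immediate. The main thing to be careful about is simply citing the correct clause of Definition~\ref{defn.cn} so that the reader sees that ``closed neighbourhood'' here means the unpointed version (interior merely nonempty) rather than ``closed neighbourhood of $p$'' (which would require $p\in\interior(\intertwined{p})$ and would instead characterise weak regularity, as in Proposition~\ref{prop.views.of.regularity}).
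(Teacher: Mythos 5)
Your proof is correct and follows essentially the same route as the paper's: unwind $\community(p)=\interior(\intertwined{p})$ and observe that nonemptiness of this interior is exactly the closed-neighbourhood condition. You are in fact slightly more careful than the paper's own one-line proof, which leaves the appeal to Proposition~\ref{prop.intertwined.as.closure}(\ref{intertwined.p.closed}) for closedness of $\intertwined{p}$ implicit; making it explicit, as you do, is an improvement rather than a deviation.
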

\begin{proof}
By construction in Definition~\ref{defn.tn}(\ref{item.tn}), $\community(p)=\interior(\intertwined{p})$.
So $\community(p)\neq\varnothing$ means precisely that $\intertwined{p}$ is a closed neighbourhood.
\end{proof}

\begin{prop}
\label{prop.views.of.regularity}
Suppose $(\ns P,\opens)$ is a semitopology and $p\in\ns P$.
Then the following are equivalent:
\begin{enumerate*}
\item\label{item.views.of.regularity.wr}
$p$ is weakly regular, or in full: $p\in\community(p)$ (Definition~\ref{defn.tn}(\ref{item.weakly.regular.point})).
\item\label{item.intertwined.p.closed.neighbourhood.of.p}
$\intertwined{p}$ is a closed neighbourhood of $p$ (Definition~\ref{defn.cn}(\ref{item.closed.neighbourhood.of.p})).
\item\label{item.views.of.regularity.cn}
The poset of closed neighbourhoods of $p$ ordered by subset inclusion, has a least element.
\item\label{item.intertwined.p.least.in.poset.closed.neighbourhoods.of.p}
$\intertwined{p}$ is least in the poset of closed neighbourhoods of $p$ ordered by subset inclusion.
\end{enumerate*}
\end{prop}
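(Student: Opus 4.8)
The plan is to route everything through Proposition~\ref{prop.intertwined.as.closure}, which records two facts about $\intertwined{p}$: it is always closed (part~\ref{intertwined.p.closed}), and it equals the intersection of all closed neighbourhoods of $p$, so in particular $\intertwined{p}\subseteq C$ for every closed neighbourhood $C$ of $p$ (part~\ref{intertwined.as.closure.closed}). In other words, $\intertwined{p}$ is automatically the greatest lower bound of the poset appearing in items (3) and (4); the entire proposition then reduces to asking when this greatest lower bound is itself a member of the poset, i.e.\ is itself a closed neighbourhood of $p$. Everything else is order-theoretic bookkeeping.

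First I would dispatch (1)$\liff$(2). By Definition~\ref{defn.tn}(\ref{item.tn}) we have $\community(p)=\interior(\intertwined{p})$, and by Proposition~\ref{prop.intertwined.as.closure}(\ref{intertwined.p.closed}) the set $\intertwined{p}$ is closed. Hence ``$\intertwined{p}$ is a closed neighbourhood of $p$'' (Definition~\ref{defn.cn}(\ref{item.closed.neighbourhood.of.p})) unpacks to $p\in\interior(\intertwined{p})$, which is exactly $p\in\community(p)$, i.e.\ weak regularity (Definition~\ref{defn.tn}(\ref{item.weakly.regular.point})). So (1) and (2) are the same assertion read through the definition of $\community$.

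Next I would close the cycle (2)$\limp$(4)$\limp$(3)$\limp$(2). For (2)$\limp$(4): if $\intertwined{p}$ is a closed neighbourhood of $p$ then it lies in the poset, and by Proposition~\ref{prop.intertwined.as.closure}(\ref{intertwined.as.closure.closed}) it is contained in every member of the poset, so it is the least element. The step (4)$\limp$(3) is immediate. For (3)$\limp$(2): suppose the poset has a least element $C_0$. The poset is nonempty, since $\ns P$ is a closed neighbourhood of every point, so the intersection in Proposition~\ref{prop.intertwined.as.closure}(\ref{intertwined.as.closure.closed}) runs over a nonempty family; being least, $C_0$ is contained in every member of that family while also belonging to it, so the intersection equals $C_0$. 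By Proposition~\ref{prop.intertwined.as.closure}(\ref{intertwined.as.closure.closed}) that intersection is $\intertwined{p}$, whence $\intertwined{p}=C_0$ is a closed neighbourhood of $p$, which is (2).

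I do not anticipate a genuine obstacle: once Proposition~\ref{prop.intertwined.as.closure} is available, the content is purely the observation that a greatest lower bound exists in the poset iff it happens to be attained. The one place to stay alert is (3)$\limp$(2), where it is tempting to conclude $\intertwined{p}=C_0$ too hastily; the clean justification is that a least element of a family of sets coincides with the intersection of that family, and Proposition~\ref{prop.intertwined.as.closure}(\ref{intertwined.as.closure.closed}) identifies that intersection as $\intertwined{p}$.
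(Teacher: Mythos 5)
Your proposal is correct and takes essentially the same route as the paper: both proofs hinge on Proposition~\ref{prop.intertwined.as.closure} (parts~\ref{intertwined.p.closed} and~\ref{intertwined.as.closure.closed}) to identify $\intertwined{p}$ as a closed set equal to the intersection of all closed neighbourhoods of $p$, and then observe that each condition amounts to this intersection being attained as a member of the poset. The paper arranges the four items into a single cycle $1\limp 2\limp 3\limp 4\limp 1$ rather than your $(1)\liff(2)$ plus $(2)\limp(4)\limp(3)\limp(2)$, but the content of each step is the same.
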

\begin{proof}
We prove a cycle of implications:
\begin{itemize}
\item
Suppose 
$p\in\interior(\intertwined{p})$.
By Proposition~\ref{prop.intertwined.as.closure}(\ref{intertwined.p.closed}) $\intertwined{p}$ is closed, so this makes it a closed neighbourhood of $p$ as per Definition~\ref{defn.cn}.
\item
Suppose $\intertwined{p}$ is a closed neighbourhood of $p$.
By Proposition~\ref{prop.intertwined.as.closure}(\ref{intertwined.as.closure.closed}) 
$\intertwined{p}$ is the intersection of \emph{all} closed neighbourhoods of $p$, and it follows that this poset has $\intertwined{p}$ as a least element.
\item
Assume the poset of closed neighbourhoods of $p$ has a least element; write it $C$.
So $C=\bigcap\{C'\in\tf{Closed}\mid C'\text{ is a closed neighbourhood of }p\}$ and thus by Proposition~\ref{prop.intertwined.as.closure}(\ref{intertwined.as.closure.closed}) $C=\intertwined{p}$.
\item
If $\intertwined{p}$ is least in the poset of closed neighbourhoods of $p$ ordered by subset inclusion, then in particular $\intertwined{p}$ is a closed neighbourhood of $p$ and it follows from Definition~\ref{defn.cn} that $p\in\interior(\intertwined{p})$. 
\qedhere\end{itemize}
\end{proof}

Recall from Definition~\ref{defn.tn} that $\community(p)=\interior(\intertwined{p})$:
\begin{lemm}
\label{lemm.closure.community.subset}
Suppose $(\ns P,\opens)$ is a semitopology and $p\in\ns P$.
Then $\closure{\community(p)}\subseteq\intertwined{p}$.
\end{lemm}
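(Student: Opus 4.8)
The plan is to chain together three facts already established in the excerpt, so the argument is essentially a one-line computation dressed up with citations. The key observation is that $\community(p)$ is \emph{defined} as $\interior(\intertwined{p})$ (Definition~\ref{defn.tn}(\ref{item.tn})), so everything reduces to the relationship between a set, its interior, and its closure.

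First I would note that the interior of a set is always contained in the set: by Lemma~\ref{lemm.interior.open}, $\interior(\intertwined{p})$ is the greatest open subset of $\intertwined{p}$, hence in particular
$$
\community(p)=\interior(\intertwined{p})\subseteq\intertwined{p}.
$$
Next I would apply monotonicity of closure (Lemma~\ref{lemm.closure.monotone}(\ref{closure.monotone})) to this inclusion, obtaining $\closure{\community(p)}\subseteq\closure{\intertwined{p}}$. Finally, I would invoke Proposition~\ref{prop.intertwined.as.closure}(\ref{intertwined.p.closed}), which tells us that $\intertwined{p}$ is closed, so that $\closure{\intertwined{p}}=\intertwined{p}$ by Definition~\ref{defn.closed}(\ref{item.closed.set}). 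Combining the two displayed facts gives $\closure{\community(p)}\subseteq\intertwined{p}$, as required.

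There is no genuine obstacle here: the result is a formal consequence of the definition of $\community(p)$ as an interior together with the standard monotonicity and idempotence properties of closure and the fact that $\intertwined{p}$ is closed. The only thing worth double-checking is that all three cited ingredients really are in place — in particular that $\intertwined{p}$ has already been shown closed — but this is exactly the content of Proposition~\ref{prop.intertwined.as.closure}(\ref{intertwined.p.closed}), so the proof goes through immediately.
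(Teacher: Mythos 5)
Your proof is correct and is essentially the paper's argument: the paper cites Proposition~\ref{prop.intertwined.as.closure}(\ref{intertwined.p.closed}) for closedness of $\intertwined{p}$ and then invokes Lemma~\ref{lemm.closure.interior}(\ref{item.closure.interior.closed}), whose own proof is exactly the three-step chain (interior is a subset, closure is monotone, closed sets equal their closures) that you have written out explicitly. You have merely inlined that cited lemma, so there is no substantive difference.
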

\begin{proof}
By Proposition~\ref{prop.intertwined.as.closure}(\ref{intertwined.p.closed}) $\intertwined{p}$ is closed; we use Lemma~\ref{lemm.closure.interior}(\ref{item.closure.interior.closed}).
\end{proof}

\begin{thrm}
\label{thrm.pKp}
Suppose $(\ns P,\opens)$ is a semitopology and $p\in\ns P$.
Then:
\begin{enumerate*}
\item\label{item.pKp.1}
If $p$ weakly regular then $\closure{\community(p)}=\intertwined{p}$.
In symbols:
$$
p\in\community(p)
\quad\text{implies}\quad \closure{\community(p)}=\intertwined{p}.
$$
\item\label{item.closure.community.p.intertwined}
As an immediate corollary, if $p$ is regular then $\closure{\community(p)}=\intertwined{p}$.
\end{enumerate*}
\end{thrm}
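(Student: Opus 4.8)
The plan is to prove the equality in part~\ref{item.pKp.1} by two inclusions, of which one is already available and the other is the real content. The inclusion $\closure{\community(p)}\subseteq\intertwined{p}$ holds with no hypotheses at all and is exactly Lemma~\ref{lemm.closure.community.subset}, so I would dispatch it in one line. Everything therefore comes down to proving the reverse inclusion $\intertwined{p}\subseteq\closure{\community(p)}$, and this is where the weak-regularity hypothesis $p\in\community(p)$ gets used.

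For the reverse inclusion I would argue pointwise. Take $p'\in\intertwined{p}$, so that $p\intertwinedwith p'$ in the sense of Definition~\ref{defn.intertwined.points}(\ref{item.p.intertwinedwith.p'}); unpacked, this says that any open set containing $p$ intersects any open set containing $p'$. To show $p'\in\closure{\community(p)}$ I unpack Definition~\ref{defn.closure}(\ref{item.closure}): I must check that every open neighbourhood $O'\ni p'$ satisfies $\community(p)\between O'$. Here is the key move: weak regularity means precisely that $p\in\community(p)\in\opens$, so $\community(p)$ is itself an open set containing $p$ that I can feed into the intertwinedness of $p$ and $p'$. Applying $p\intertwinedwith p'$ to the open sets $\community(p)\ni p$ and $O'\ni p'$ yields $\community(p)\between O'$ immediately, which is exactly what is required for $p'\in\closure{\community(p)}$.

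Combining the two inclusions gives $\closure{\community(p)}=\intertwined{p}$, proving part~\ref{item.pKp.1}. Part~\ref{item.closure.community.p.intertwined} is then an immediate corollary: by Lemma~\ref{lemm.wr.r}(\ref{item.r.implies.wr}) every regular point is weakly regular, so the hypothesis of part~\ref{item.pKp.1} applies and the same equation holds.

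I do not expect a genuine obstacle here; the proof is short once the right witness is spotted. The one thing to get right is the conceptual pivot: weak regularity is used solely to supply $\community(p)$ as an \emph{open} neighbourhood of $p$, which is what lets the definition of intertwinedness fire. Without $p\in\community(p)$ one would only know $\community(p)=\interior(\intertwined{p})$ is open but perhaps not a neighbourhood of $p$, and the argument plugging $\community(p)$ in as ``an open set containing $p$'' would break down. It is worth stating this explicitly so the reader sees exactly where the hypothesis is consumed.
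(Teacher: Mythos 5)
Your proof is correct and follows essentially the same route as the paper: both split the equality into the trivial inclusion $\closure{\community(p)}\subseteq\intertwined{p}$ via Lemma~\ref{lemm.closure.community.subset} and then establish the reverse inclusion using weak regularity, with part~\ref{item.closure.community.p.intertwined} following from Lemma~\ref{lemm.wr.r}(\ref{item.r.implies.wr}). The only cosmetic difference is that the paper proves $\intertwined{p}\subseteq\closure{\community(p)}$ by observing that $\closure{\community(p)}$ is a closed neighbourhood of $p$ and invoking Proposition~\ref{prop.intertwined.as.closure}(\ref{intertwined.as.closure.closed}), whereas your pointwise argument re-derives the same fact directly from the definition of intertwinedness --- it is exactly Proposition~\ref{prop.intertwined.as.closure}(\ref{item.intertwined.as.closure.1}) applied with $O=\community(p)$.
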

\begin{proof}
We consider each part in turn:
\begin{enumerate}
\item
If $p\in\community(p)=\interior(\intertwined{p})$ then $\closure{\community(p)}$ is a closed neighbourhood of $p$, so by Proposition~\ref{prop.intertwined.as.closure}(\ref{intertwined.as.closure.closed}) $\intertwined{p}\subseteq\closure{\community(p)}$.
By Lemma~\ref{lemm.closure.community.subset} $\closure{\community(p)}\subseteq\intertwined{p}$.
\item
By Lemma~\ref{lemm.wr.r}(\ref{item.r.implies.wr}) if $p$ is regular then it is weakly regular.
We use part~\ref{item.pKp.1} of this result. 
\qedhere\end{enumerate}
\end{proof}

We can combine Theorem~\ref{thrm.pKp} with Corollary~\ref{corr.regular.is.regular}: 
\begin{corr}
\label{corr.corr.pKp}
Suppose $(\ns P,\opens)$ is a semitopology and $p\in\ns P$. 
Then the following are equivalent:
\begin{enumerate*}
\item
$p$ is regular.
\item
$p$ is weakly regular and $\intertwined{p}=\intertwined{p'}$ \ for every $p'\in\community(p)$.
\end{enumerate*} 
\end{corr}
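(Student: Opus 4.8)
The plan is to piggy-back on Corollary~\ref{corr.regular.is.regular}, which is the very same biconditional except phrased with communities $\community(\cdot)$ in place of intertwined-sets $\intertwined{\cdot}$. The only thing that must be bridged is the passage between the condition ``$\community(p)=\community(p')$ for every $p'\in\community(p)$'' and the condition ``$\intertwined{p}=\intertwined{p'}$ for every $p'\in\community(p)$'', and the key fact enabling this is Theorem~\ref{thrm.pKp}, which recovers $\intertwined{p}$ as $\closure{\community(p)}$ whenever $p$ is (weakly) regular.

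For the implication from regularity to the right-hand side, I would first invoke Lemma~\ref{lemm.wr.r}(\ref{item.r.implies.wr}) to note that $p$ is weakly regular, then fix an arbitrary $p'\in\community(p)$ and apply Corollary~\ref{corr.p.p'.regular.community} to learn that $p'$ is itself regular with $\community(p')=\community(p)$. Applying Theorem~\ref{thrm.pKp}(\ref{item.closure.community.p.intertwined}) to the two regular points $p$ and $p'$ then yields
\[
\intertwined{p}=\closure{\community(p)}=\closure{\community(p')}=\intertwined{p'},
\]
which is exactly the required condition.

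For the converse, I would assume $p$ is weakly regular and that $\intertwined{p}=\intertwined{p'}$ for every $p'\in\community(p)$. Since $\community(x)=\interior(\intertwined{x})$ by Definition~\ref{defn.tn}(\ref{item.tn}), applying $\interior$ turns this hypothesis directly into $\community(p')=\interior(\intertwined{p'})=\interior(\intertwined{p})=\community(p)$ for every such $p'$. This is precisely the second clause of Corollary~\ref{corr.regular.is.regular}, so together with weak regularity it delivers regularity of $p$.

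The genuine content is concentrated entirely in the forward direction, where Theorem~\ref{thrm.pKp} is needed to convert equality of communities into equality of intertwined-sets; the converse is essentially bookkeeping, because $\community$ is \emph{by definition} the interior of $\intertwined{}$, so equal intertwined-sets have equal communities with no further hypotheses. The one point to keep an eye on is that the quantification ``for every $p'\in\community(p)$'' is non-vacuous in the relevant cases: weak regularity guarantees $p\in\community(p)$, so $\community(p)$ is nonempty and contains $p$ itself, ruling out any degenerate reading of the condition.
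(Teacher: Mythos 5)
Your proposal is correct and follows essentially the same route as the paper: both directions reduce to Corollary~\ref{corr.regular.is.regular} via Theorem~\ref{thrm.pKp}, with the converse being exactly the paper's bookkeeping step of applying $\interior$ to the hypothesis. The only (welcome) difference is that in the forward direction you route through Corollary~\ref{corr.p.p'.regular.community} to make explicit that $p'$ is itself regular before applying Theorem~\ref{thrm.pKp}(\ref{item.closure.community.p.intertwined}) to it, a point the paper's proof leaves implicit.
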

\begin{proof}
Suppose $p$ is regular and $p'\in\community(p)$.
Then $p$ is weakly regular by Lemma~\ref{lemm.wr.r}(\ref{item.r.implies.wr}), and $\community(p)=\community(p')$ by Corollary~\ref{corr.regular.is.regular}, and $\intertwined{p}=\intertwined{p'}$ by Theorem~\ref{thrm.pKp}.

Suppose $p$ is weakly regular and $\intertwined{p}=\intertwined{p'}$ for every $p'\in\community(p)$.
By Definition~\ref{defn.tn}(\ref{item.tn}) also $\community(p)=\interior(\intertwined{p})=\interior(\intertwined{p'})=\community(p')$ for every $p'\in\community(p)$, and by Corollary~\ref{corr.regular.is.regular} $p$ is regular.
\end{proof}

\begin{rmrk}
Note a subtlety to Corollary~\ref{corr.corr.pKp}: it is possible for $p$ to be regular, yet it is not the case that $\intertwined{p}=\intertwined{p'}$ for every $p'\in\intertwined{p}$ (rather than for every $p'\in\community(p)$).
For an example consider the top-left semitopology in Figure~\ref{fig.012}, taking $p=0$ and $p'=1$; then $1\in\intertwined{0}$ but $\intertwined{0}=\{0,1\}$ and $\intertwined{1}=\{0,1,2\}$.

To understand why this happens the interested reader can look ahead to Subsection~\ref{subsect.reg.tra.int}: in the terminology of that Subsection, $p'$ needs to be \emph{unconflicted} in Corollaries~\ref{corr.regular.is.regular} and~\ref{corr.corr.pKp}. 
\end{rmrk}

\jamiesubsection{Intersections of communities with open sets}

\begin{rmrk}[An observation about consensus]
\label{rmrk.fundamental.consensus}
Proposition~\ref{prop.regular.closure} and Lemma~\ref{lemm.regular.between} tell us some interesting and useful things: 
\begin{itemize*}
\item
Suppose a weakly regular $p$ wants to convince its community $\community(p)$ of some belief.
How might it proceed?

By Proposition~\ref{prop.regular.closure} it would suffice to seed one of the open neighbourhoods in its community with that belief, and then compute a \emph{topological closure} of that open set; in Remark~\ref{rmrk.why.top.closure} we discuss why topological closures are particularly interesting. 
\item
Suppose $p$ is regular, so it is a member of a transitive open neighbourhood, and $p$ wants to convince its community $\community(p)$ of some belief.

By Lemma~\ref{lemm.regular.between} $p$ need only convince \emph{some} open set that intersects its community (this open set need not even contain $p$), and then compute a topological closure as in the previous point.
\end{itemize*}
\end{rmrk}

\begin{lemm}
\label{lemm.regular.between}
Suppose $(\ns P,\opens)$ is a semitopology and $p\in\ns P$ is regular (so $p\in\community(p)\in\topens$).
Suppose $O\in\opens$.
Then
$$
p\in O\between \community(p)
\quad\text{implies}\quad 
\community(p)\subseteq\intertwined{p}\subseteq\closure{O}.
$$
In word:
\begin{quote}
If an open set intersects the community of a regular point, then that community is included in the closure of the open set.
\end{quote}
\end{lemm}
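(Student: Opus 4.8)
The plan is to prove the two inclusions $\community(p)\subseteq\intertwined{p}$ and $\intertwined{p}\subseteq\closure{O}$ separately, assembling results already established. The first inclusion is essentially immediate: by Definition~\ref{defn.tn}(\ref{item.tn}) we have $\community(p)=\interior(\intertwined{p})$, and by Lemma~\ref{lemm.interior.open} the interior of a set is contained in that set, so $\community(p)=\interior(\intertwined{p})\subseteq\intertwined{p}$. This holds with no use of the hypotheses at all.

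For the second inclusion I would first observe that the genuinely relevant hypothesis is $O\between\community(p)$. Indeed, since $p$ is regular we have $p\in\community(p)$, so the assumption $p\in O$ gives $p\in O\cap\community(p)$ and hence $O\between\community(p)$ directly; alternatively one simply reads $O\between\community(p)$ off the displayed chain $p\in O\between\community(p)$. Now, because $p$ is regular, $\community(p)\in\topens$, so in particular $\community(p)$ is a transitive set (Definition~\ref{defn.transitive}). I would then apply Lemma~\ref{lemm.open.consensus} with the transitive set taken to be $\community(p)$: from $\community(p)\between O$ it yields $\closure{\community(p)}\subseteq\closure{O}$. (One could equally invoke Proposition~\ref{prop.open.consensus} here, since $\community(p)$ is topen, but Lemma~\ref{lemm.open.consensus} is the more economical tool.)

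To finish, I would identify $\closure{\community(p)}$ with $\intertwined{p}$. Since $p$ is regular, Theorem~\ref{thrm.pKp}(\ref{item.closure.community.p.intertwined}) gives $\closure{\community(p)}=\intertwined{p}$. Substituting this into the inclusion obtained in the previous step yields $\intertwined{p}=\closure{\community(p)}\subseteq\closure{O}$, which is exactly the second inclusion. Chaining the two inclusions gives $\community(p)\subseteq\intertwined{p}\subseteq\closure{O}$, as required.

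I do not expect any real obstacle here: the lemma is a packaging result that combines the closure-propagation property of transitive sets (Lemma~\ref{lemm.open.consensus}) with the regularity-specific identity $\closure{\community(p)}=\intertwined{p}$ (Theorem~\ref{thrm.pKp}). The only point that needs a moment's care is recognising that regularity is exactly what licenses \emph{both} the transitivity of $\community(p)$ (needed for Lemma~\ref{lemm.open.consensus}) \emph{and} the equation $\closure{\community(p)}=\intertwined{p}$ (needed to convert the closure bound into a bound on $\intertwined{p}$); without regularity the second of these can fail.
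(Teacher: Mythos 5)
Your proof is correct and takes essentially the same route as the paper: the paper invokes Proposition~\ref{prop.open.strong-consensus} (which is itself just Lemma~\ref{lemm.open.consensus} packaged for topens) to get $\community(p)\subseteq\closure{\community(p)}\subseteq\closure{O}$, and then applies Theorem~\ref{thrm.pKp} to identify $\closure{\community(p)}$ with $\intertwined{p}$, exactly as you do. Your direct derivation of $\community(p)\subseteq\intertwined{p}$ from $\community(p)=\interior(\intertwined{p})$ is a harmless (and slightly cleaner) variant of how the paper obtains that inclusion.
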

\begin{proof}
Suppose $p$ is regular, so $p\in\community(p)\in\topens$, and suppose $p\in O\between\community(p)$.
By Proposition~\ref{prop.open.strong-consensus} $\community(p)\subseteq\closure{\community(p)}\subseteq\closure{O}$.
By Theorem~\ref{thrm.pKp} $\closure{\community(p)}=\intertwined{p}$, and putting this together we get 
$$
\community(p)\subseteq\intertwined{p}\subseteq\closure{O}
$$ 
as required.
\end{proof}

Proposition~\ref{prop.regular.closure} generalises Theorem~\ref{thrm.pKp}, and is proved using it.
We regain Theorem~\ref{thrm.pKp} as the special case where $O=\community(p)$: 
\begin{prop}
\label{prop.regular.closure}
Suppose $(\ns P,\opens)$ is a semitopology and $p\in\ns P$ is weakly regular (so $p\in\community(p)\in\opens$).
Suppose $O\in\opens$.
Then:
\begin{enumerate*}
\item\label{item.regular.closure.1}
$p\in O\subseteq\community(p)$ implies
$\intertwined{p}=\closure{O}$.
\item\label{item.regular.closure.2}
As a corollary, $p\in O\subseteq\intertwined{p}$ implies
$\intertwined{p}=\closure{O}$.
\end{enumerate*}
\end{prop}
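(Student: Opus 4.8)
The plan is to establish part~\ref{item.regular.closure.1} by proving two inclusions, and then to obtain part~\ref{item.regular.closure.2} as a one-line reduction to part~\ref{item.regular.closure.1}.

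For part~\ref{item.regular.closure.1}, assume $p\in O\subseteq\community(p)$. The inclusion $\intertwined{p}\subseteq\closure{O}$ comes essentially for free: by Proposition~\ref{prop.intertwined.as.closure}(\ref{item.intertwined.as.intersection.of.closures}) we have $\intertwined{p}=\bigcap\{\closure{O'}\mid p\in O'\in\opens\}$, and since $p\in O\in\opens$ the set $\closure{O}$ is one of the terms of this intersection, whence $\intertwined{p}\subseteq\closure{O}$. For the reverse inclusion $\closure{O}\subseteq\intertwined{p}$ I would feed the hypothesis $O\subseteq\community(p)$ through monotonicity of closure (Lemma~\ref{lemm.closure.monotone}(\ref{closure.monotone})) to get $\closure{O}\subseteq\closure{\community(p)}$, and then apply Theorem~\ref{thrm.pKp}, which (using weak regularity of $p$) gives $\closure{\community(p)}=\intertwined{p}$. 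Chaining the two inclusions yields $\intertwined{p}=\closure{O}$. One can also obtain $\closure{\community(p)}\subseteq\intertwined{p}$ directly from Lemma~\ref{lemm.closure.community.subset} without invoking weak regularity, but Theorem~\ref{thrm.pKp} is the natural tool here and makes the remark about recovering it as the case $O=\community(p)$ transparent.

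For part~\ref{item.regular.closure.2}, assume $p\in O\subseteq\intertwined{p}$ with $O$ open. Since $\community(p)=\interior(\intertwined{p})$ is the \emph{greatest} open subset of $\intertwined{p}$ (Lemma~\ref{lemm.interior.open}), any open set contained in $\intertwined{p}$ must already lie inside $\community(p)$; in particular $O\subseteq\community(p)$. Thus $p\in O\subseteq\community(p)$, and part~\ref{item.regular.closure.1} applies to give $\intertwined{p}=\closure{O}$.

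I do not anticipate a genuine obstacle: this proposition is a repackaging of machinery already in place, and the only point worth flagging is which tool handles which inclusion. The containment $\closure{O}\subseteq\intertwined{p}$ is the expected one and uses the weak-regularity identity $\closure{\community(p)}=\intertwined{p}$; the containment $\intertwined{p}\subseteq\closure{O}$ is the mildly surprising one, and it falls out immediately from the intersection characterisation of $\intertwined{p}$ once one notices that $O$ is an open neighbourhood of $p$. Taking $O=\community(p)$ throughout specialises both inclusions to Theorem~\ref{thrm.pKp}, which is a useful sanity check on the argument.
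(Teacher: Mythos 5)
Your proof is correct and follows essentially the same route as the paper's: the inclusion $\intertwined{p}\subseteq\closure{O}$ via Proposition~\ref{prop.intertwined.as.closure}(\ref{item.intertwined.as.intersection.of.closures}), the reverse inclusion via monotonicity of closure and Theorem~\ref{thrm.pKp}, and the corollary via the observation that an open subset of $\intertwined{p}$ must lie in $\interior(\intertwined{p})=\community(p)$.
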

\begin{proof}
If $p\in O\subseteq\community(p)$ then $p\in\community(p)$ and using Theorem~\ref{thrm.pKp} $\closure{\community(p)}\subseteq\intertwined{p}$.
Since $O\subseteq\community(p)$ also $\closure{O}\subseteq\intertwined{p}$.
Also, by Proposition~\ref{prop.intertwined.as.closure}(\ref{item.intertwined.as.intersection.of.closures}) (since $p\in O\in\opens$) $\intertwined{p}\subseteq\closure{O}$.

For the corollary, we note that if $O$ is open then $O\subseteq\interior(\intertwined{p})=\community(p)$ if and only if $O\subseteq\intertwined{p}$.
\end{proof}

\begin{rmrk}
Note in Proposition~\ref{prop.regular.closure} that it really matters that $p\in O$ --- that is, that $O$ is an open neighbourhood \emph{of $p$} and not just an open set in $\intertwined{p}$.

To see why, consider the example in Lemma~\ref{lemm.two.intertwined} (illustrated in Figure~\ref{fig.012}, top-left diagram): so $\ns P=\{0,1,2\}$ and $\opens=\{\varnothing,\ns P,\{0\},\{2\}\}$.
Note that:
\begin{itemize*}
\item
$\intertwined{1}=\{0,1,2\}$.
\item
If we set $O=\{0\}\subseteq\{0,1,2\}$ then this is open, but $\closure{O}=\{0,1\}\neq\{0,1,2\}$.
\item
If we set $O=\{0,1,2\}\subseteq\{0,1,2\}$ then $\closure{O}=\{0,1,2\}$.
\end{itemize*}
\end{rmrk}

\begin{rmrk}
\label{rmrk.why.top.closure}
Topological closures will matter because we will develop a theory of computable semitopologies which will (amongst other things) deliver a distributed algorithm to compute closures.

Thus, we can say that from the point of view of a regular participant $p$, Proposition~\ref{prop.regular.closure} and Lemma~\ref{lemm.regular.between} reduce the problem of 
\begin{quote}
$p$ wishes to progress with value $v$
\end{quote}
to the simpler problem of 
\begin{quote}
$p$ wishes to find an open set that intersects with the community of $p$, and work with this open set to agree on $v$ (which open set does not matter; $p$ can try several until one works).
\end{quote}
Once this is done, the distributed algorithm will safely propagate the belief across the network.

Note that no forking is possible above (this is when a distributed system that was in agreement, partitions into subsets that are committed to incompatible values); all the action is in finding and convincing the $O\between \community(p)$, and then the rest is automatic.
\end{rmrk}

\jamiesubsection{Regularity, maximal topens, \& minimal closed neighbourhoods}
\label{subsect.reg.max.min}

\begin{rmrk}
\label{rmrk.arc}
Recall we have seen an arc of results which 
\begin{itemize*}
\item
started with Theorem~\ref{thrm.max.cc.char} and Corollary~\ref{corr.regular.is.regular} --- characterisations of regularity %
in terms of maximal topens --- and 
\item
passed through Proposition~\ref{prop.views.of.regularity} --- characterisation of weak regularity $p\in\community(p)\in\opens$ in terms of minimal closed neighbourhoods.
\end{itemize*}
We are now ready to complete this arc by stating and proving Theorem~\ref{thrm.up.down.char}.
This establishes a pleasing --- and not-at-all-obvious --- duality between `has a maximal topen neighbourhood' and `has a minimal closed neighbourhood'.
\end{rmrk}

\begin{thrm}
\label{thrm.up.down.char}
Suppose $(\ns P,\opens)$ is a semitopology and $p\in\ns P$.
Then the following are equivalent:
\begin{enumerate*}
\item\label{item.up.down.char.regular}
$p$ is regular.
\item\label{item.up.down.char.max}
$\community(p)$ is a maximal/greatest topen neighbourhood of $p$.
\item\label{item.up.down.char.wr.mcn}
$p$ is weakly regular (meaning that $p\in\community(p)=\interior(\intertwined{p})$) and $\intertwined{p}$ is a minimal closed neighbourhood (Definition~\ref{defn.cn}).\footnote{We really do mean ``$\intertwined{p}$ is minimal amongst closed neighbourhoods'' and \emph{not} the weaker condition ``$\intertwined{p}$ is minimal amongst closed neighbourhoods of $p$''!  That weaker condition is treated in Proposition~\ref{prop.views.of.regularity}.  See Remark~\ref{rmrk.don't.misread}.}
\end{enumerate*}
\end{thrm}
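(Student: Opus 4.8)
The plan is to reduce the equivalence to results already in hand, so that the only new work is tying in the closed-neighbourhood condition~\ref{item.up.down.char.wr.mcn}. The equivalence of~\ref{item.up.down.char.regular} and~\ref{item.up.down.char.max} is immediate from Theorem~\ref{thrm.max.cc.char}, which already lists ``$p$ regular'' and ``$\community(p)$ is a maximal/greatest topen neighbourhood of $p$'' as equivalent. The two facts I would lean on for the rest are: weak regularity is the same as $\intertwined{p}$ being a closed neighbourhood of $p$ (Proposition~\ref{prop.views.of.regularity}); and, for weakly regular $p$, that $\closure{\community(p)}=\intertwined{p}$ (Theorem~\ref{thrm.pKp}). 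I would then prove the cycle by establishing \ref{item.up.down.char.regular}$\Rightarrow$\ref{item.up.down.char.wr.mcn} and \ref{item.up.down.char.wr.mcn}$\Rightarrow$\ref{item.up.down.char.regular}.

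For \ref{item.up.down.char.regular}$\Rightarrow$\ref{item.up.down.char.wr.mcn}: a regular point is weakly regular (Lemma~\ref{lemm.wr.r}), and $\intertwined{p}$ is closed (Proposition~\ref{prop.intertwined.as.closure}) with $p\in\interior(\intertwined{p})=\community(p)$, so it is a closed neighbourhood. For minimality I would take any closed neighbourhood $C\subseteq\intertwined{p}$ and choose $q\in\interior(C)$; by Corollary~\ref{corr.interior.monotone} $q\in\community(p)$, so $q$ is regular with $\community(q)=\community(p)$ (Corollary~\ref{corr.p.p'.regular.community}), whence $\intertwined{q}=\closure{\community(q)}=\intertwined{p}$ by Theorem~\ref{thrm.pKp}. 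Since $C$ is a closed neighbourhood of $q$, Proposition~\ref{prop.intertwined.as.closure} gives $\intertwined{p}=\intertwined{q}\subseteq C\subseteq\intertwined{p}$, forcing $C=\intertwined{p}$.

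The converse \ref{item.up.down.char.wr.mcn}$\Rightarrow$\ref{item.up.down.char.regular} is where the real difficulty lies. The natural route is Corollary~\ref{corr.corr.pKp}: it suffices to show $\intertwined{p'}=\intertwined{p}$ for every $p'\in\community(p)$. The inclusion $\intertwined{p'}\subseteq\intertwined{p}$ is easy, since $p'\in\community(p)=\interior(\intertwined{p})$ makes $\intertwined{p}$ a closed neighbourhood of $p'$, so Proposition~\ref{prop.intertwined.as.closure} applies. To obtain equality one wants to play minimality of $\intertwined{p}$ off against $\intertwined{p'}$ — but this needs $\intertwined{p'}$ to itself be a closed \emph{neighbourhood}, i.e. $p'$ to be quasiregular, and that is precisely the delicate point.

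The main obstacle is therefore that, in a semitopology, a point of $\community(p)$ need \emph{not} be quasiregular: because intersections of open sets may fail to be open, an open neighbourhood of $p'$ lying inside $\community(p)$ need not contain any open set all of whose points are intertwined with $p'$, so $\interior(\intertwined{p'})$ can be empty. The force of the theorem is that requiring $\intertwined{p}$ to be minimal among \emph{all} closed neighbourhoods — not merely those of $p$, exactly the distinction flagged in the statement — excludes this. Concretely I would argue by contraposition that $\community(p)$ is transitive: if it fails, Lemma~\ref{lemm.three.transitive} gives $q,q'\in\community(p)$ that are not intertwined, and from these I would try to manufacture two disjoint \emph{nonempty open} subsets $U,U'\subseteq\community(p)$; then $\closure{U}$ is a closed neighbourhood with $\closure{U}\subseteq\closure{\community(p)}=\intertwined{p}$ that misses $U'$, hence strictly smaller than $\intertwined{p}$, contradicting minimality. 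The hardest step, and the genuinely semitopological one, is producing \emph{open} witnesses inside $\community(p)$ rather than arbitrary separating open sets that may meet $\community(p)$ only in sets of empty interior — it is here that the minimality hypothesis must be used with care, and this is the step I expect to carry the weight of the proof.
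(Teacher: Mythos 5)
Your handling of the equivalence of parts~1 and~2, and of the implication from part~1 to part~3, is correct and essentially the paper's own argument: the paper routes the minimality step through Corollary~\ref{corr.corr.pKp} where you use Corollary~\ref{corr.p.p'.regular.community} together with Theorem~\ref{thrm.pKp}, but it is the same computation. The gap is in the converse direction from part~3 to part~1, which you have not proved: the step you defer --- manufacturing disjoint nonempty \emph{open} subsets $U,U'$ of $\community(p)$ from a non-intertwined pair $q,q'\in\community(p)$ --- is not a technicality to be discharged later. In a semitopology $Q\cap\community(p)$ need not contain any nonempty open set, so no such $U,U'$ need exist and your contradiction with minimality never gets off the ground.

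That said, your diagnosis of where the difficulty lies is exactly right, and the paper's own proof of this direction does not obviously resolve it either: the paper takes $p'\in\community(p)$, derives $\intertwined{p'}\subseteq\intertwined{p}$ from Proposition~\ref{prop.intertwined.as.closure}, and concludes $\intertwined{p'}=\intertwined{p}$ ``by minimality'' --- but minimality of $\intertwined{p}$ in the poset of closed \emph{neighbourhoods} only bites if $\intertwined{p'}$ is itself a closed neighbourhood, i.e.\ if $p'$ is quasiregular, which is precisely the point you flag. Indeed the implication appears to fail outright. Take $\ns P=\{p,q,q',a,b\}$ with $\opens$ generated by $\{p,q,q'\}$, $\{q,a\}$, $\{q',b\}$, $\{a\}$, $\{b\}$. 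Every open neighbourhood of $p$ contains $\{p,q,q'\}$, and $\{a\}$, $\{b\}$ separate $a$, $b$ from $p$, so $\intertwined{p}=\{p,q,q'\}$; this set is clopen (its complement is $\{a\}\cup\{b\}$), so $p\in\community(p)=\{p,q,q'\}$ and $p$ is weakly regular; and the only nonempty open subset of $\{p,q,q'\}$ is $\{p,q,q'\}$ itself, so no closed neighbourhood is strictly contained in $\intertwined{p}$ and $\intertwined{p}$ is a minimal closed neighbourhood. Yet $q,q'\in\community(p)$ are separated by the disjoint opens $\{q,a\}$ and $\{q',b\}$, so $\community(p)$ is not transitive and $p$ is not regular; here $\intertwined{q}=\{p,q\}$ has empty interior, which is exactly why minimality cannot be applied to it. You should check this example against the statement before investing more effort: as written, part~3 seems to need strengthening (for instance, by additionally requiring the points of $\community(p)$ to be quasiregular) for the implication to part~1 to hold.
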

\begin{proof}
Equivalence of parts~\ref{item.up.down.char.regular} and~\ref{item.up.down.char.max} is just Theorem~\ref{thrm.max.cc.char}(\ref{char.Kp.greatest.topen}).

For equivalence of parts~\ref{item.up.down.char.max} and~\ref{item.up.down.char.wr.mcn} we prove two implications:
\begin{itemize}
\item
Suppose $p$ is regular.
By Lemma~\ref{lemm.wr.r}(\ref{item.r.implies.wr}) $p$ is weakly regular.
Now consider a closed neighbourhood $C'\subseteq \intertwined{p}$.
Note that $C'$ has a nonempty interior by Definition~\ref{defn.cn}(\ref{item.closed.neighbourhood}), so pick any $p'$ such that
$$
p'\in\interior(C')\subseteq C'\subseteq\intertwined{p} .
$$
It follows that $p'\in\interior(\intertwined{p})=\community(p)$, and $p$ is regular, so by Corollary~\ref{corr.corr.pKp} $\intertwined{p'}=\intertwined{p}$, 
and then by Proposition~\ref{prop.views.of.regularity}(\ref{item.intertwined.p.closed.neighbourhood.of.p}\&\ref{item.intertwined.p.least.in.poset.closed.neighbourhoods.of.p}) (since $p'{\in}\interior(C')$) $\intertwined{p'}\subseteq C'$.
Putting this all together we have
$$
\intertwined{p}=\intertwined{p'} \subseteq C' \subseteq\intertwined{p},
$$
so that $C'=\intertwined{p}$ as required.
\item
Suppose $p$ is weakly regular and suppose $\intertwined{p}$ is minimal in the poset of closed neighbourhoods ordered by subset inclusion.

Consider some $p'\in\community(p)$.
By Proposition~\ref{prop.intertwined.as.closure}(\ref{intertwined.as.closure.closed}) $\intertwined{p'}\subseteq\intertwined{p}$, and by minimality it follows that $\intertwined{p'}=\intertwined{p}$.
Thus also $\community(p')=\community(p)$.

Now $p'\in\community(p)$ was arbitrary, so by Corollary~\ref{corr.regular.is.regular} $p$ is regular as required.  
\qedhere\end{itemize}
\end{proof}

\begin{rmrk}
\label{rmrk.indeed.two.closed.neighbourhoods}
Recall Example~\ref{xmpl.p.not.regular}(\ref{item.p.not.regular.01234b}), as illustrated in Figure~\ref{fig.irregular} (right-hand diagram).
This has a point $0$ whose community $\community(0)=\{1,2\}$ is not a single topen (it contains two topens: $\{1\}$ and $\{2\}$).

A corollary of Theorem~\ref{thrm.up.down.char} is that $\intertwined{0}=\{0,1,2\}$ cannot be a minimal closed neighbourhood, because if it were then $0$ would be regular and $\community(0)$ would be a maximal topen neighbourhood of $0$.

We check, and see that indeed, $\intertwined{0}$ contains \emph{two} distinct minimal closed neighbourhoods: $\{0,1\}$ and $\{0,2\}$.
\end{rmrk}

\begin{rmrk}
\label{rmrk.don't.misread}
Theorem~\ref{thrm.up.down.char}(\ref{item.up.down.char.wr.mcn}) looks like Proposition~\ref{prop.views.of.regularity}(\ref{item.intertwined.p.least.in.poset.closed.neighbourhoods.of.p}), but
\begin{itemize*}
\item
Proposition~\ref{prop.views.of.regularity}(\ref{item.intertwined.p.least.in.poset.closed.neighbourhoods.of.p}) regards the \emph{poset of closed neighbourhoods of $p$} (closed sets with a nonempty open interior that contains $p$),
\item
Theorem~\ref{thrm.up.down.char}(\ref{item.up.down.char.wr.mcn}) regards the \emph{poset of all closed neighbourhoods} (closed sets with a nonempty open interior, not necessarily including $p$).
\end{itemize*}
So the condition used in Theorem~\ref{thrm.up.down.char}(\ref{item.up.down.char.wr.mcn}) is strictly stronger than the condition used in Proposition~\ref{prop.views.of.regularity}(\ref{item.intertwined.p.least.in.poset.closed.neighbourhoods.of.p}).
Correspondingly, the regularity condition in Theorem~\ref{thrm.up.down.char}(\ref{item.up.down.char.regular}) can be written as $p\in\community(p)\in\topens$, and (as noted in Lemma~\ref{lemm.wr.r} and Example~\ref{xmpl.wr}(\ref{item.wr.2})) this is strictly stronger than the condition $p\in\community(p)$ used in Proposition~\ref{prop.views.of.regularity}(\ref{item.views.of.regularity.wr}). 
\end{rmrk}

Corollary~\ref{corr.anti-hausdorff} makes Remark~\ref{rmrk.not.hausdorff} (intertwined is the opposite of Hausdorff) a little more precise:
\begin{corr}
\label{corr.anti-hausdorff}
Suppose $(\ns P,\opens)$ is a Hausdorff semitopology (so every two points have a pair of disjoint neighbourhoods).
Then if $p\in\ns P$ is regular, then $\{p\}$ is clopen.
\end{corr}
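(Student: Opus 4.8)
The plan is to show that $\{p\}$ is both closed and open, handling the two halves separately. The key observation is that the Hausdorff hypothesis is, by Equation~\ref{eq.hausdorff} in Remark~\ref{rmrk.not.hausdorff}, precisely the statement that the intertwined set collapses to a singleton: $\intertwined{p}=\{p\}$ for every point $p$. Once this reformulation is in hand, everything follows by assembling prior results about $\intertwined{p}$.

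For closedness, I would invoke Proposition~\ref{prop.intertwined.as.closure}(\ref{intertwined.p.closed}), which states that $\intertwined{p}$ is always closed (in any semitopology). Combined with $\intertwined{p}=\{p\}$ from the Hausdorff assumption, this immediately yields that $\{p\}$ is closed. It is worth noting that this half does not use regularity at all — singletons are closed for \emph{every} point in a Hausdorff semitopology.

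Regularity enters only for openness. Since $p$ is regular, by Lemma~\ref{lemm.wr.r} it is in particular quasiregular, so $\community(p)=\interior(\intertwined{p})\neq\varnothing$. Substituting $\intertwined{p}=\{p\}$ gives $\interior(\{p\})\neq\varnothing$. But $\interior(\{p\})\subseteq\{p\}$, and the only nonempty subset of a singleton is the singleton itself, so $\interior(\{p\})=\{p\}$. Since the interior is open by Lemma~\ref{lemm.interior.open}, we conclude that $\{p\}$ is open.

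Combining the two halves, $\{p\}$ is closed and open, hence clopen, as required. I do not anticipate any genuine obstacle here: the entire content is recognizing that the Hausdorff condition is exactly $\intertwined{p}=\{p\}$, and that $\intertwined{p}$ simultaneously supplies closedness (unconditionally, via Proposition~\ref{prop.intertwined.as.closure}) and openness (through its interior, once quasiregularity forces that interior to be nonempty).
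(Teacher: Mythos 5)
Your proof is correct and follows essentially the same route as the paper: both arguments reduce the Hausdorff hypothesis to $\intertwined{p}=\{p\}$ and then show this singleton is closed with nonempty interior equal to itself. The only difference is bookkeeping --- the paper cites Theorem~\ref{thrm.up.down.char}(\ref{item.up.down.char.wr.mcn}) to get ``closed neighbourhood'' in one step, whereas you split it into Proposition~\ref{prop.intertwined.as.closure}(\ref{intertwined.p.closed}) for closedness and quasiregularity for the nonempty interior, which is a slightly lighter citation path and correctly isolates that closedness of $\{p\}$ needs no regularity at all.
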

\begin{proof}
Suppose $\ns P$ is Hausdorff and consider $p\in \ns P$.
By Remark~\ref{rmrk.not.hausdorff} $\intertwined{p}=\{p\}$. 
From Theorem~\ref{thrm.up.down.char}(\ref{item.up.down.char.wr.mcn}) $\{p\}$ is closed and has a nonempty open interior which must therefore also be equal to $\{p\}$.
By Corollary~\ref{corr.when.singleton.topen} (or from Theorem~\ref{thrm.up.down.char}(\ref{item.up.down.char.max})) this interior is transitive.
\end{proof}

\begin{prop}
\label{prop.max.topen.min.closed}
Suppose $(\ns P,\opens)$ is a semitopology.
Then:
\begin{enumerate*}
\item\label{item.max.topen.min.closed.1}
Every maximal topen is equal to the interior of a minimal closed neighbourhood.
\item\label{item.max.topen.min.closed.2}
The converse implication holds if $(\ns P,\opens)$ is a topology, but need not hold in the more general case that $(\ns P,\opens)$ is a semitopology: there may exist a minimal closed neighbourhood whose interior is not topen.
\end{enumerate*}
\end{prop}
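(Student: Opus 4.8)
The plan is to treat the two parts separately, leaning heavily on the regularity machinery already assembled, and to concentrate the real work in the counterexample for the converse. Part~1 is almost formal. Let $\atopen$ be a maximal topen; it is nonempty, so pick any $p\in\atopen$. By Corollary~\ref{corr.max.topen.char} we have $\atopen=\community(p)$, whence $p\in\community(p)\in\topens$ and $p$ is regular. Theorem~\ref{thrm.up.down.char}, in its equivalence of regularity with ``$p$ weakly regular and $\intertwined{p}$ a minimal closed neighbourhood'', then gives that $\intertwined{p}$ is a minimal closed neighbourhood; and since $\community(p)=\interior(\intertwined{p})$ by Definition~\ref{defn.tn}(\ref{item.tn}), we conclude $\atopen=\interior(\intertwined{p})$ exhibits $\atopen$ as the interior of a minimal closed neighbourhood.

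For the forward direction of part~2 (the topology case) I would let $C$ be a minimal closed neighbourhood and set $O=\interior(C)\neq\varnothing$. The key preliminary observation, valid in any semitopology, is that minimality forces every nonempty open $U\subseteq C$ to be dense in $C$: indeed $\closure{U}\subseteq C$ is closed and has nonempty interior (by Lemma~\ref{lemm.closure.interior}(\ref{item.closure.interior.open}), $U\subseteq\interior(\closure{U})$), so $\closure{U}$ is a closed neighbourhood contained in $C$ and minimality gives $\closure{U}=C$. Applying this to two nonempty open subsets $U,V\subseteq O$ yields $V\subseteq C=\closure{U}$, hence $\closure{U}\between V$ and so $U\between V$ by Lemma~\ref{lemm.closure.open.char}; thus $O$ is hyperconnected (Definition~\ref{defn.tangled}). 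Only here would I use the hypothesis that $\opens$ is a topology: by Lemma~\ref{lemm.transitive.topology}, a hyperconnected open set in a topology is topen, so $O$ is topen. To see $O$ is \emph{maximal}, pick $p\in O$; then $p$ has the topen neighbourhood $O$ and so is regular (Theorem~\ref{thrm.max.cc.char}), while $C$ is a closed neighbourhood of $p$, so $\intertwined{p}\subseteq C$ by Proposition~\ref{prop.intertwined.as.closure}(\ref{intertwined.as.closure.closed}) and hence $\community(p)=\interior(\intertwined{p})\subseteq\interior(C)=O$ by monotonicity of interior (Corollary~\ref{corr.interior.monotone}); conversely $O$ is a topen neighbourhood of $p$ and $\community(p)$ is the greatest one (Theorem~\ref{thrm.max.cc.char}), so $O=\community(p)$ is a maximal topen by Lemma~\ref{lemm.max.cc.intertwined}(\ref{item.intertwined.is.the.greatest.1}).

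The main obstacle — and the only place where genuine choices are made — is the counterexample showing the converse fails for semitopologies. The second paragraph pinpoints what is needed: minimality already forces $O=\interior(C)$ to be hyperconnected, so any counterexample must live exactly in the gap between ``hyperconnected'' and ``transitive'' identified in Lemma~\ref{lemm.tran.no.neosi}, while simultaneously admitting no strictly smaller closed neighbourhood. The paper's earlier irregular examples do not serve, because there the small open sets are clopen and hence furnish strictly smaller closed neighbourhoods, destroying minimality. I would therefore build a fresh example on $\ns P=\{0,1,2,3\}$ with $\opens$ generated by $\{0,1\},\{0,2\},\{1,3\},\{2\},\{3\}$, and take $C=\{0,1\}$. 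The singleton opens $\{2\},\{3\}$ are included precisely to keep $2$ and $3$ out of $\closure{\{0,1\}}$, so that $C$ is closed with $\interior(C)=\{0,1\}$.

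To finish I would verify, by a routine finite enumeration of the twelve open sets and their complements, three things: that $\{0\}$ and $\{1\}$ are not open, so no proper closed subset of $C$ has nonempty interior and $C$ is a minimal closed neighbourhood; that $\interior(C)=\{0,1\}$ since no open set smaller than $\{0,1\}$ meets both $0$ and $1$; and that $\{0,2\}\between\{0,1\}\between\{1,3\}$ while $\{0,2\}\notbetween\{1,3\}$, so $\interior(C)=\{0,1\}$ is not transitive and hence not topen. This produces a minimal closed neighbourhood whose interior is not a topen, as required, and being a non-topology is automatic given part~2's forward direction.
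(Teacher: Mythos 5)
Your proposal is correct, and in two places it genuinely diverges from the paper's own argument. For part~1 the route is essentially the same (pick $p$ in the maximal topen, note $p$ is regular, conclude via the regularity characterisations), though you cite Theorem~\ref{thrm.up.down.char} where the paper cites Proposition~\ref{prop.views.of.regularity}; your citation is arguably the more apt one, since it delivers minimality of $\intertwined{p}$ among \emph{all} closed neighbourhoods rather than just those of $p$, which is what the statement of part~1 actually asserts (cf.\ Remark~\ref{rmrk.don't.misread}). For the topology half of part~2 the paper argues by direct contradiction with minimality: if $O\notbetween O'$ with both meeting $\interior(C)$, then $\closure{O}\cap C$ is a strictly smaller closed set whose interior contains the open set $O\cap\interior(C)$ --- openness of that intersection being exactly where the topology hypothesis enters. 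You instead factor the argument through hyperconnectedness: minimality alone (in any semitopology) forces $\interior(C)$ to be hyperconnected, and then Lemma~\ref{lemm.transitive.topology} converts hyperconnected open sets into topens in a topology. This decomposition is cleaner in that it isolates the topology hypothesis in a single already-proved lemma, it explains \emph{why} the counterexample must sit in the hyperconnected-but-not-transitive gap of Lemma~\ref{lemm.tran.no.neosi}, and you additionally verify maximality of the resulting topen, which the paper's sketch does not address. Finally, your counterexample ($\opens$ generated by $\{0,1\},\{0,2\},\{1,3\},\{2\},\{3\}$ with $C=\{0,1\}$) differs from the paper's, which reuses the square semitopology of Figure~\ref{fig.square.diagram} via Example~\ref{xmpl.not.intertwined}; both check out, and yours has the minor virtue of making the witnessing non-intersecting pair $\{0,2\}\notbetween\{1,3\}$ completely explicit.
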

\begin{proof}
\leavevmode
\begin{enumerate}
\item
Suppose $\atopen$ is a maximal topen.
By Definition~\ref{defn.transitive}(\ref{transitive.cc}) $\atopen$ is nonempty, so choose $p\in \atopen$.
By Proposition~\ref{prop.intertwined.as.closure}(\ref{intertwined.p.closed}) $\intertwined{p}$ is closed, and using Theorem~\ref{thrm.max.cc.char} 
$$
p\in \atopen=\community(p)=\interior(\intertwined{p})\subseteq\intertwined{p}.
$$
Thus $p$ is weakly regular and by Proposition~\ref{prop.views.of.regularity}(\ref{item.views.of.regularity.wr}\&\ref{item.intertwined.p.least.in.poset.closed.neighbourhoods.of.p}) $\intertwined{p}$ is a least closed neighbourhood of $p$.
\item
It suffices to provide a counterexample.
This is Example~\ref{xmpl.not.intertwined} below.
However, we also provide here a breaking `proof', which throws light on precisely what Example~\ref{xmpl.not.intertwined} is breaking, and illustrates what the difference between semitopology and topology can mean in practical proof.

Suppose $\atopen=\interior(C)$ is the nonempty open interior of some minimal closed neighbourhood $C$: we will try (and fail) to show that this is transitive.
By Proposition~\ref{prop.cc.char} it suffices to prove that $p\intertwinedwith p'$ for every $p,p'\in \atopen$.

So suppose $p\in O$ and $p'\in O'$ and $O\notbetween O'$.
By Definition~\ref{defn.closure}(\ref{item.closure}) $p'\notin\closure{O}$, so that $\closure{O}\cap C\subseteq C$ is a strictly smaller closed set.
Also, $O\cap C$ is nonempty because it contains $p$.

If $(\ns P,\opens)$ is a topology then we are done, because $O\cap\atopen=\interior(O\cap C)$ would necessarily be open, contradicting our assumption that $C$ is a minimal closed neighbourhood. 

However, if $(\ns P,\opens)$ is a semitopology then this does not necessarily follow: $O\cap\atopen$ need not be open, and we cannot proceed.
\qedhere\end{enumerate}
\end{proof}

\begin{figure}
\vspace{-1em}
\centering
\includegraphics[width=0.4\columnwidth]{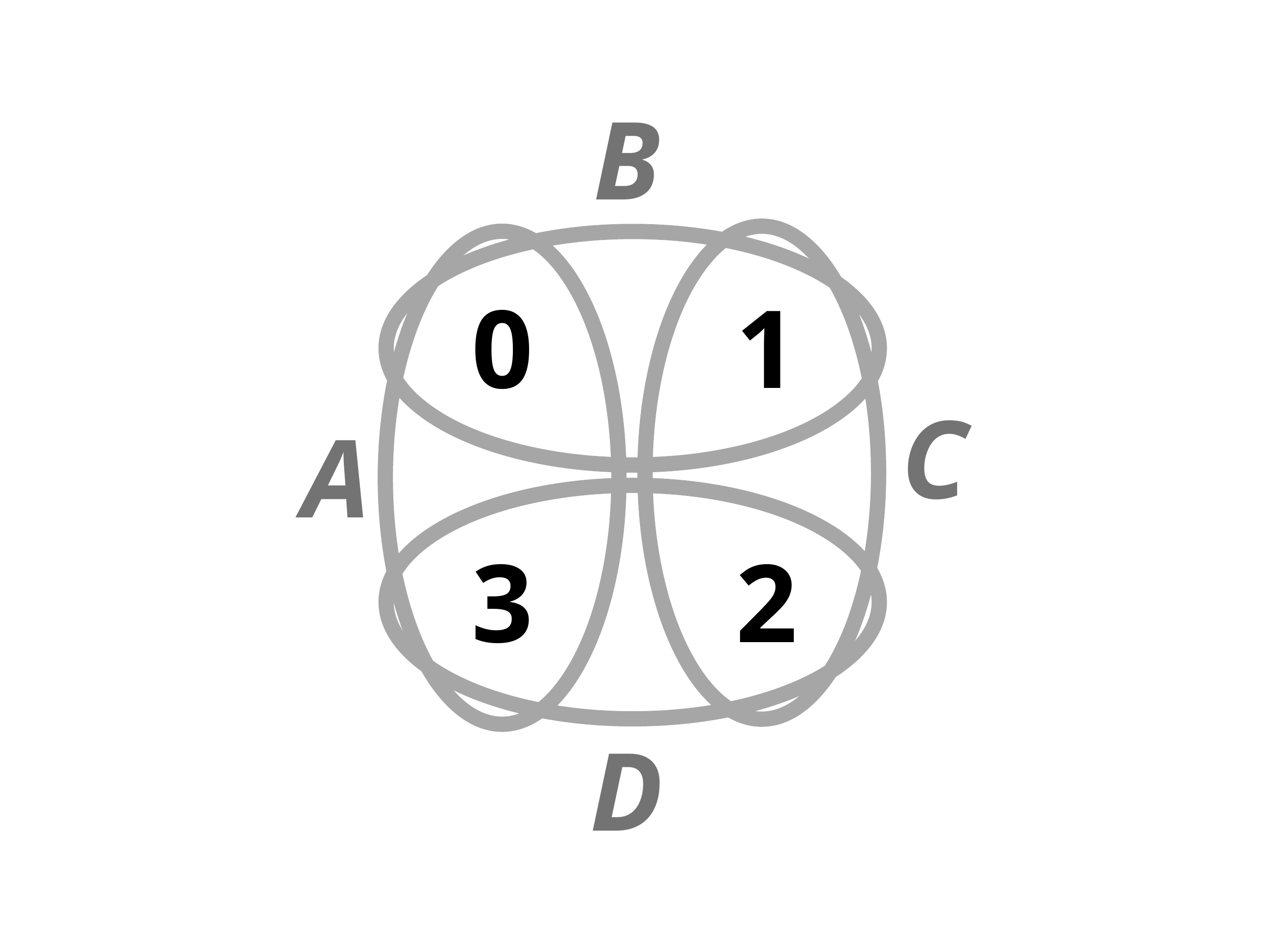}
\caption{An unconflicted, irregular space (Proposition~\ref{prop.unconflicted.irregular}) in which every minimal closed neighbourhood has a non-transitive open interior (Example~\ref{xmpl.not.intertwined})}
\label{fig.square.diagram}
\end{figure}

\begin{lemm}
\label{lemm.square.diagram.not.qr}
Consider the semitopology illustrated in Figure~\ref{fig.square.diagram}.
So:
\begin{itemize}
\item
$\ns P = \{0, 1, 2, 3\}$.
\item
$\opens$ is generated by $\{A,B,C,D\}$ where: 
$$
A=\{3, 0\}, 
\quad
B=\{0, 1\},
\quad
C=\{1, 2\},
\quad\text{and}\quad
D=\{2, 3\}.
$$
\end{itemize}
Then for every $p\in\ns P$ we have:
\begin{enumerate*}
\item\label{item.square.diagram.not.qr.1}
$p$ is intertwined only with itself.
\item\label{item.square.diagram.not.qr.2}
$\community(p)=\varnothing$.
\end{enumerate*}
\end{lemm}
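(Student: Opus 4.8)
The plan is to first pin down the small open sets, then separate any two distinct points with disjoint open neighbourhoods, and finally read off the community directly from the definition. First I would record the open sets: every nonempty open set is a union of the generators $A,B,C,D$, each of which has exactly two elements, so the minimal nonempty opens are precisely $A,B,C,D$ and in particular \emph{no singleton is open}. I would then isolate the two disjointness facts that drive everything, both immediate from the definitions of $A=\{3,0\}$, $B=\{0,1\}$, $C=\{1,2\}$, $D=\{2,3\}$: namely $A\notbetween C$ and $B\notbetween D$, together with $A\cup C=B\cup D=\ns P$. In other words, $\{A,C\}$ and $\{B,D\}$ are each partitions of $\ns P$ into two disjoint open sets.

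For part~(\ref{item.square.diagram.not.qr.1}), reflexivity of $\intertwinedwith$ (Lemma~\ref{lemm.intertwined.not.transitive}) gives $p\in\intertwined{p}$, so it remains to show $p\notintertwinedwith p'$ whenever $p\neq p'$. By Definition~\ref{defn.intertwined.points} it suffices to exhibit an open neighbourhood of $p$ and an open neighbourhood of $p'$ that are disjoint. Here I would use the two partitions: whenever the partition $\{A,C\}$ places $p$ and $p'$ in different blocks, those blocks \emph{are} the required disjoint open neighbourhoods, and likewise for $\{B,D\}$. A finite check shows every distinct pair is separated by at least one partition: $\{A,C\}$ separates $\{0,1\}$, $\{0,2\}$, $\{1,3\}$, $\{2,3\}$, while $\{B,D\}$ separates $\{0,2\}$, $\{0,3\}$, $\{1,2\}$, $\{1,3\}$, and the union of these lists exhausts all six unordered pairs. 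Hence $p\notintertwinedwith p'$ for all $p\neq p'$, so $\intertwined{p}=\{p\}$.

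Part~(\ref{item.square.diagram.not.qr.2}) is then immediate: by Definition~\ref{defn.tn}(\ref{item.tn}) and part~(\ref{item.square.diagram.not.qr.1}) we have $\community(p)=\interior(\intertwined{p})=\interior(\{p\})$, and since no singleton is open, Lemma~\ref{lemm.interior.open} gives $\interior(\{p\})=\varnothing$, so $\community(p)=\varnothing$. There is no deep obstacle in this argument; the only points needing care are the completeness of the separation case analysis and the verification that no singleton is open, both of which are finite. The conceptual content worth flagging is that this `square' (4-cycle) semitopology is secretly Hausdorff in the sense of equation~\ref{eq.hausdorff} (every pair of distinct points is separated by disjoint opens, even though it is not a topology since e.g. $A\cap B=\{0\}$ is not open), and it is exactly this Hausdorffness that collapses each $\intertwined{p}$ to $\{p\}$ and forces every community to be empty.
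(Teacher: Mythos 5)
Your proof is correct and takes essentially the same approach as the paper: the paper dismisses part~1 as ``routine calculations'' from the definition of intertwined points and derives part~2 from the observation that $\interior(\{p\})=\varnothing$, which is precisely the computation you carry out in full. Your organisation of the case check via the two disjoint open partitions $\{A,C\}$ and $\{B,D\}$, and your closing remark that the space satisfies the Hausdorff condition of equation~\ref{eq.hausdorff} without being a topology, are both accurate and make the ``routine'' verification explicit.
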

\begin{proof}
Part~\ref{item.square.diagram.not.qr.1} is by routine calculations from Definition~\ref{defn.intertwined.points}(\ref{intertwined.defn}).
Part~\ref{item.square.diagram.not.qr.2} follows, noting that $\interior(\{p\})=\varnothing$ for every $p\in\ns P$.
\end{proof}

\begin{xmpl}
\label{xmpl.not.intertwined}
The semitopology illustrated in Figure~\ref{fig.square.diagram}, and specified in Lemma~\ref{lemm.square.diagram.not.qr},
contains sets that are minimal amongst closed sets with a nonempty interior, yet that interior is not topen:
\begin{itemize*}
\item
$A$, $B$, $C$, and $D$ are clopen, because $C$ is the complement of $A$ and $D$ is the complement of $B$, so they are their own interior.
\item
$A$ is a minimal closed neighbourhood (which is also open, being $A$ itself), because 
\begin{itemize*}
\item
$A=\{3, 0\}$ is closed because it is the complement of $C$, and it is its own interior, and 
\item
its two nonempty subsets $\{3\}$ and $\{0\}$ are closed (being the complement of $B\cup C$ and $C\cup D$ respectively) but they have empty open interior because $\{3\}$ and $\{0\}$ are not open.
\end{itemize*} 
\item
$A$ is not transitive because $3$ and $0$ are not intertwined: $3\in D$ and $0\in B$ and $B\cap D=\varnothing$.
\item
Similarly $B$, $C$, and $D$ are minimal closed neighbourhoods, which are also open, and they are not transitive.
\end{itemize*}
We further note that:
\begin{enumerate*}
\item
$\closure{0}=\{0\}$, because its complement is equal to $C\cup D$ (Definition~\ref{defn.closure}; Lemma~\ref{lemm.closed.complement.open}).
Similarly for every other point in $\ns P$.
\item
$\intertwined{0}=\{0\}$, as noted in Lemma~\ref{lemm.square.diagram.not.qr}.
Similarly for every other point in $\ns P$.
\item\label{item.square.diagram.not.regular}
$\community(0)=\interior(\intertwined{0})=\varnothing$ as noted in Lemma~\ref{lemm.square.diagram.not.qr},
so that $0$ is not regular (Definition~\ref{defn.tn}(\ref{item.tn})), and $0$ is not even weakly regular or quasiregular.
Similarly for every other point in $\ns P$.
\item
$0$ has \emph{two} minimal closed neighbourhoods: $A$ and $B$.
Similarly for every other point in $\ns P$.
\end{enumerate*}
This illustrates that $\intertwined{p}\subsetneq C$ is possible, where $C$ is a minimal closed neighbourhood of $p$.
\end{xmpl}

\begin{rmrk}
The results and discussions above tell us something interesting above and beyond the specific mathematical facts which they express.

They demonstrate that points being intertwined (the $p\intertwinedwith p'$ from Definition~\ref{defn.intertwined.points}) is a distinct \emph{semitopological} notion. 
A reader familiar with topology might be tempted to identify maximal topens with interiors of minimal closed neighbourhood (so that in view of Proposition~\ref{prop.cc.char}, being intertwined would be topologically characterised just as two points being in the interior of the same minimal closed neighbourhood).

This works in topologies, but we see from Example~\ref{xmpl.not.intertwined} that in semitopologies being intertwined has its own distinct identity.
\end{rmrk}

We conclude with one more example, showing how an (apparently?) slight change to a semitopology can make a big difference to its intertwinedness:
\begin{xmpl}
\label{xmpl.two.topen.examples}
\leavevmode
\begin{enumerate*}
\item\label{item.two.topen.examples.1}
$\mathbb Q^2$ with open sets generated by any covering collection of pairwise non-parallel \deffont{rational lines} --- meaning a set of solutions to a linear equation $a.x\plus b.y=c$ for $a$, $b$, and $c$ integers --- is a semitopology.

This consists of a single (maximal) topen: lines are pairwise non-parallel, so any two lines intersect and (looking to Proposition~\ref{prop.cc.char}) all points are intertwined.
There is only one closed set with a nonempty open interior, which is the whole space.
\item\label{item.two.topen.examples.2}
$\mathbb Q^2$ with open sets generated by all (possibly parallel) rational lines, is a semitopology.
It has no topen sets and (looking to Proposition~\ref{prop.cc.char}) no two distinct points are intertwined.

For any line $l$, its complement $\mathbb Q^2\setminus l$ is a closed set, given by the union of all the lines parallel to $l$.
Thus every closed set is also an open set, and vice versa, and every line $l$ is an example of a minimal closed neighbourhood (itself), whose interior is not a topen. 
\end{enumerate*}
\end{xmpl}

\jamiesubsection{More on minimal closed neighbourhoods}

We make good use of closed neighbourhoods, and in particular minimal closed neighbourhoods, in Subsection~\ref{subsect.reg.max.min} and elsewhere.
We take a moment to give a pleasing alternative characterisation of this useful concept. 

\jamiesubsubsection{Regular open/closed sets}

\begin{rmrk}
The terminology `regular open/closed set' is from the topological literature.
It is not directly related to terminology `regular point' from Definition~\ref{defn.tn}(\ref{item.regular.point}), which comes from semitopologies.
However, it turns out that a mathematical connection does exist between these two notions. 
We outline some theory of regular open/closed sets, and then demonstrate the connections to what we have seen in our semitopological world. 
\end{rmrk}

\begin{defn}
\label{defn.regular.open.set}
Suppose $(\ns P,\opens)$ is a semitopology.
Recall some standard terminology from topology~\cite[Exercise~3D, page~29]{willard:gent}:
\begin{enumerate*}
\item
We call an open set $O\in\opens$ a \deffont{regular open set} when $O=\interior(\closure{O})$.
\item
We call a closed set $C\in\closed$ a \deffont{regular closed set} when $C=\closure{\interior(C)}$.
\item
Write $\regularOpens$ and $\regularClosed$ for the sets of regular open and regular closed sets respectively.
\end{enumerate*}
\end{defn}

\begin{lemm}
\label{lemm.ic.ci.regular}
Suppose $(\ns P,\opens)$ is a semitopology and $O\in\opens$ and $C\in\closed$.
Then:
\begin{enumerate*}
\item\label{item.ic.ci.regular.open}
$\interior(C)$ is a regular open set.
\item\label{item.ic.ci.regular.closed}
$\closure{O}$ is a regular closed set.
\end{enumerate*}
\end{lemm}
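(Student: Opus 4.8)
The plan is to recognise that both parts are essentially reformulations of Corollary~\ref{corr.ic.ci}, once the definitions of \emph{regular open} and \emph{regular closed} from Definition~\ref{defn.regular.open.set} are unwound. So almost all of the work has already been done; what remains is bookkeeping to match the right half of Corollary~\ref{corr.ic.ci} to the right definition.

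First I would treat part~\ref{item.ic.ci.regular.open}. To show $\interior(C)$ is a regular open set I must check two things: that it is open, and that $\interior(C)=\interior(\closure{\interior(C)})$. Openness is immediate from Lemma~\ref{lemm.interior.open}, since an interior is always open. The displayed equation is exactly the second part of Corollary~\ref{corr.ic.ci}, applied to the closed set $C$, so nothing further is needed. Part~\ref{item.ic.ci.regular.closed} is handled symmetrically: to show $\closure{O}$ is a regular closed set I need that it is closed, which is immediate from Lemma~\ref{lemm.closure.closed}, and that $\closure{O}=\closure{\interior(\closure{O})}$, which is precisely the first part of Corollary~\ref{corr.ic.ci} applied to the open set $O$.

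Since the genuine content was already extracted in Corollary~\ref{corr.ic.ci} (which itself rests on the inclusion $O\subseteq\interior(\closure{O})$ and its dual from Lemma~\ref{lemm.closure.interior}, together with monotonicity of closure and of interior), I expect no real obstacle here; the only subtlety is correctly pairing the two cases, since the regular-\emph{open} statement corresponds to the interior equation and the regular-\emph{closed} statement to the closure equation. Should a self-contained argument be preferred over a one-line citation, I would instead prove each equality by two inclusions: for part~\ref{item.ic.ci.regular.open} the inclusion $\interior(C)\subseteq\interior(\closure{\interior(C)})$ is Lemma~\ref{lemm.closure.interior}(\ref{item.closure.interior.open}) applied to the open set $\interior(C)$, while the reverse inclusion follows from $\interior(C)\subseteq C$, hence $\closure{\interior(C)}\subseteq\closure{C}=C$, followed by monotonicity of $\interior$ (Corollary~\ref{corr.interior.monotone}); part~\ref{item.ic.ci.regular.closed} is then dual, using Lemma~\ref{lemm.closure.interior}(\ref{item.closure.interior.closed}) and monotonicity of closure (Lemma~\ref{lemm.closure.monotone}(\ref{closure.monotone})).
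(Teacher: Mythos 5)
Your proposal is correct and matches the paper's proof, which likewise reads the two parts of the lemma directly off Definition~\ref{defn.regular.open.set} and Corollary~\ref{corr.ic.ci}; you pair the cases the right way round (the interior equation for regular open, the closure equation for regular closed) and correctly note the openness/closedness checks via Lemma~\ref{lemm.interior.open} and Lemma~\ref{lemm.closure.closed}. The self-contained two-inclusion variant you sketch is also sound, but adds nothing beyond what Corollary~\ref{corr.ic.ci} already packages.
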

\begin{proof}
Direct from Definition~\ref{defn.regular.open.set} and Corollary~\ref{corr.ic.ci}.
\end{proof}

\begin{corr}
\label{corr.community.regular.open}
Suppose $(\ns P,\opens)$ is a semitopology and $p\in\ns P$.
Then $\community(p)\in\regularOpens$. 
\end{corr}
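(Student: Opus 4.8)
The plan is to observe that this is an immediate consequence of two facts already established, assembled by unfolding the definition of $\community(p)$. First I would recall from Definition~\ref{defn.tn}(\ref{item.tn}) that $\community(p) = \interior(\intertwined{p})$, so that $\community(p)$ is \emph{by construction} the open interior of the set $\intertwined{p}$.

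Next I would invoke Proposition~\ref{prop.intertwined.as.closure}(\ref{intertwined.p.closed}), which tells us that $\intertwined{p}$ is closed. Thus $\community(p)$ is the interior of a \emph{closed} set, which places us exactly in the hypothesis of Lemma~\ref{lemm.ic.ci.regular}(\ref{item.ic.ci.regular.open}): the interior of any closed set is a regular open set. Applying that lemma with $C = \intertwined{p}$ immediately yields $\community(p) = \interior(\intertwined{p}) \in \regularOpens$, as required.

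Since both ingredients are already in hand, there is no real obstacle here; the corollary is essentially a repackaging of Lemma~\ref{lemm.ic.ci.regular} specialised to the closed set $\intertwined{p}$. The only thing to be careful about is that one genuinely needs the closedness of $\intertwined{p}$ (rather than merely that it contains $p$ or is a neighbourhood of $p$) in order to apply Lemma~\ref{lemm.ic.ci.regular}, and this is supplied precisely by Proposition~\ref{prop.intertwined.as.closure}(\ref{intertwined.p.closed}). Notably, the result holds for \emph{every} point $p$, regular or not, since it makes no appeal to regularity or even quasiregularity of $p$ --- it depends only on $\community(p)$ being the interior of a closed set.

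\begin{proof}
By Definition~\ref{defn.tn}(\ref{item.tn}), $\community(p)=\interior(\intertwined{p})$. By Proposition~\ref{prop.intertwined.as.closure}(\ref{intertwined.p.closed}), $\intertwined{p}$ is closed. Hence, by Lemma~\ref{lemm.ic.ci.regular}(\ref{item.ic.ci.regular.open}) applied to the closed set $\intertwined{p}$, the set $\interior(\intertwined{p})=\community(p)$ is a regular open set.
\end{proof}
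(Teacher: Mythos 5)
Your proof is correct and follows exactly the paper's own argument: the paper likewise proves this by combining Lemma~\ref{lemm.ic.ci.regular}(\ref{item.ic.ci.regular.open}) with Proposition~\ref{prop.intertwined.as.closure}(\ref{intertwined.p.closed}), unfolding $\community(p)=\interior(\intertwined{p})$. Your added remark that closedness of $\intertwined{p}$ (not mere membership of $p$) is the essential hypothesis is a fair and accurate observation.
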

\begin{proof}
We just combine Lemma~\ref{lemm.ic.ci.regular}(\ref{item.ic.ci.regular.open}) with Proposition~\ref{prop.intertwined.as.closure}(\ref{intertwined.p.closed}).
\end{proof}

\begin{corr}
\label{corr.interior.closure.regular}
Suppose $(\ns P,\opens)$ is a semitopology and $O\in\opens$.
Then $\interior(\closure{O})$ is a regular open set.
\end{corr}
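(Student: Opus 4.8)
The plan is to recognise that this statement is an immediate specialisation of Lemma~\ref{lemm.ic.ci.regular}(\ref{item.ic.ci.regular.open}). That result already asserts that $\interior(C)$ is a regular open set for \emph{any} closed set $C$, so the entire task reduces to exhibiting $\interior(\closure{O})$ as the interior of some closed set. Since $\closure{O}$ is patently a candidate, the only thing that needs to be supplied is the (routine) fact that $\closure{O}$ is indeed closed.

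Concretely, first I would invoke Lemma~\ref{lemm.closure.closed}, which tells us that $O\subseteq\closure{O}\in\closed$; in particular $\closure{O}$ is a closed set. Then, applying Lemma~\ref{lemm.ic.ci.regular}(\ref{item.ic.ci.regular.open}) with the closed set taken to be $C=\closure{O}$, I obtain directly that $\interior(C)=\interior(\closure{O})$ is a regular open set, which is exactly the claim. This is the whole argument.

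There is essentially no obstacle here: all of the mathematical content is carried by the earlier lemma, whose own proof rests on the idempotency-style identities of Corollary~\ref{corr.ic.ci}. The single point to verify is the trivial observation that the closure operation lands in $\closed$, so that the hypothesis of Lemma~\ref{lemm.ic.ci.regular} is satisfied. If one preferred to avoid appealing to the lemma, one could instead unfold Definition~\ref{defn.regular.open.set} and verify the regular-open condition $\interior(\closure{O})=\interior\bigl(\closure{\interior(\closure{O})}\bigr)$ by hand: applying $\interior$ to both sides of the identity $\closure{O}=\closure{\interior(\closure{O})}$ from Corollary~\ref{corr.ic.ci}(1) yields precisely this equation. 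Routing through Lemma~\ref{lemm.ic.ci.regular} is cleaner, so I would state the proof in that form.
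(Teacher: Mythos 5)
Your proof is correct and matches the paper's argument exactly: the paper likewise cites Lemma~\ref{lemm.closure.closed} to establish that $\closure{O}$ is closed and then applies Lemma~\ref{lemm.ic.ci.regular} to conclude that $\interior(\closure{O})$ is regular open. Nothing further is needed.
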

\begin{proof}
By Lemma~\ref{lemm.closure.closed} $\closure{O}$ is closed, and by Lemma~\ref{lemm.ic.ci.regular} $\interior(\closure{O})$ is regular open. 
\end{proof}

The regular open and the regular closed sets are the same thing, up to an easy and natural bijection: 
\begin{corr}
\label{corr.ro=rc}
Suppose $(\ns P,\opens)$ is a semitopology.
Then 
\begin{itemize*}
\item
the topological closure map $\closure{\text{-}}$ and 
\item
the topological interior map $\interior(\text{-})$ 
\end{itemize*}
define a bijection of posets between $\regularOpens$ and $\regularClosed$ ordered by subset inclusion. 
\end{corr}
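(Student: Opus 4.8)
The plan is to check the four things that together constitute a bijection of posets: that each of the two maps lands in its claimed target, that the two maps are mutually inverse, and that each is monotone for the subset ordering (so that the bijection is in fact an order-isomorphism).

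First I would confirm well-definedness. For $O\in\regularOpens$, Lemma~\ref{lemm.ic.ci.regular}(\ref{item.ic.ci.regular.closed}) already tells us that $\closure{O}$ is a regular closed set, so $\closure{\text{-}}$ maps $\regularOpens$ into $\regularClosed$; dually, for $C\in\regularClosed$, Lemma~\ref{lemm.ic.ci.regular}(\ref{item.ic.ci.regular.open}) gives that $\interior(C)$ is a regular open set, so $\interior(\text{-})$ maps $\regularClosed$ into $\regularOpens$. This is the only place where regularity of the argument is used to control the \emph{target} of each map.

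Next I would show the two maps are mutually inverse, and here the point is that the defining equations of regularity are exactly what is needed. For $O\in\regularOpens$, Definition~\ref{defn.regular.open.set} says $O=\interior(\closure{O})$, i.e.\ $\interior\circ\closure{\text{-}}$ is the identity on $\regularOpens$; symmetrically, for $C\in\regularClosed$, $C=\closure{\interior(C)}$ gives that $\closure{\text{-}}\circ\interior$ is the identity on $\regularClosed$. Hence both maps are bijections, inverse to one another.

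Finally, monotonicity is immediate: $\closure{\text{-}}$ is monotone by Lemma~\ref{lemm.closure.monotone}(\ref{closure.monotone}) and $\interior(\text{-})$ is monotone by Corollary~\ref{corr.interior.monotone}. A mutually-inverse pair of monotone maps is an isomorphism of posets, which is exactly the claimed bijection of posets. There is no genuine obstacle here: all the real content was already isolated in Corollary~\ref{corr.ic.ci} (from which Lemma~\ref{lemm.ic.ci.regular} follows), and the present statement is essentially bookkeeping that packages those idempotency facts together with the two standard monotonicity results.
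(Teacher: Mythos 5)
Your proposal is correct and follows essentially the same route as the paper: well-definedness via Lemma~\ref{lemm.ic.ci.regular}, mutual inversion directly from the defining equations in Definition~\ref{defn.regular.open.set}, and monotonicity from Lemma~\ref{lemm.closure.monotone} and Corollary~\ref{corr.interior.monotone}. (Minor note: you cite the monotonicity clause of Lemma~\ref{lemm.closure.monotone}, which is the correct one; the paper's citation of the increasing clause there appears to be a slip.)
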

\begin{proof}
By Lemma~\ref{lemm.ic.ci.regular}, $\closure{\text{-}}$ and $\interior(\text{-})$ map between $\regularOpens$ to $\regularClosed$.
Now we note that the regularity property from Definition~\ref{defn.regular.open.set}, which states that $\interior(\closure{O})=O$ when $O\in\regularOpens$ and $\closure{\interior(C)}=C$ when $C\in\regularClosed$, expresses precisely that these maps are inverse.

They are maps of posets by Corollary~\ref{corr.interior.monotone} and Lemma~\ref{lemm.closure.monotone}(\ref{closure.increasing}). 
\end{proof}

\begin{lemm}
\label{lemm.regular.open.closed}
Suppose $(\ns P,\opens)$ is a semitopology and $O\in\opens$ and $C\in\closed$.
Then:
\begin{enumerate*}
\item
$O$ is a regular open set if and only if $\ns P\setminus O$ is a regular closed set if and only if $\closure{O}$ is a regular closed set.
\item
$C$ is a regular closed set if and only if $\ns P\setminus C$ is a regular open set if and only if $\interior(C)$ is a regular open set.
\end{enumerate*}
\end{lemm}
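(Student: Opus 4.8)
The plan is to reduce everything to the complement--duality identities of Lemma~\ref{lemm.closure.interior}(\ref{item.closure.interior.complement.closure}\&\ref{item.closure.interior.complement.interior}), namely $\interior(\ns P\setminus O)=\ns P\setminus\closure{O}$ and $\closure{\ns P\setminus C}=\ns P\setminus\interior(C)$, together with the fact (Lemma~\ref{lemm.closed.complement.open}) that complementation interchanges open and closed sets. The two parts are exact duals of one another under this complementation, so I would prove Part~1 in full and then obtain Part~2 by the identical computation with the roles of $\interior(\text{-})$ and $\closure{\text{-}}$, and of $\opens$ and $\closed$, exchanged.

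For Part~1, first note that $\ns P\setminus O$ is closed by Lemma~\ref{lemm.closed.complement.open}(\ref{item.closed.complement.open.2}), so the predicate \emph{regular closed} applies to it. The central computation is the chain
$$
\closure{\interior(\ns P\setminus O)} = \closure{\ns P\setminus\closure{O}} = \ns P\setminus\interior(\closure{O}),
$$
where the first equality is Lemma~\ref{lemm.closure.interior}(\ref{item.closure.interior.complement.closure}) and the second is Lemma~\ref{lemm.closure.interior}(\ref{item.closure.interior.complement.interior}) applied to the closed set $\closure{O}$ (closed by Lemma~\ref{lemm.closure.closed}). Reading off Definition~\ref{defn.regular.open.set}, the set $\ns P\setminus O$ is regular closed exactly when $\ns P\setminus O=\ns P\setminus\interior(\closure{O})$, and taking complements this says precisely $O=\interior(\closure{O})$, i.e.\ that $O$ is regular open. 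This secures the equivalence of the first two conditions, which is the substantive content.

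The remaining condition, that $\closure{O}$ be regular closed, I would treat using Lemma~\ref{lemm.ic.ci.regular}(\ref{item.ic.ci.regular.closed}) (equivalently Corollary~\ref{corr.ic.ci}): for \emph{any} open $O$ the set $\closure{O}$ is regular closed, so this is the ``closure side'' that is always available, and under the bijection of Corollary~\ref{corr.ro=rc} it pairs with the regular open set $\interior(\closure{O})$; the link to the earlier conditions is carried by the identity $O=\interior(\closure{O})$, which is exactly the regular-open condition already extracted above. Part~2 is then immediate by the dual computation $\interior(\closure{\ns P\setminus C})=\interior(\ns P\setminus\interior(C))=\ns P\setminus\closure{\interior(C)}$, from which one reads off that $\ns P\setminus C$ is regular open iff $C=\closure{\interior(C)}$, together with Lemma~\ref{lemm.ic.ci.regular}(\ref{item.ic.ci.regular.open}) for the $\interior(C)$ clause.

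I do not expect any serious obstacle: there is no topological subtlety that fails in the semitopological setting, since every identity invoked has already been verified here (this is the recurring theme of Section~\ref{sect.closed.sets}). The only thing requiring care is the bookkeeping of complements --- keeping straight which of Lemma~\ref{lemm.closure.interior}(\ref{item.closure.interior.complement.closure}) and~(\ref{item.closure.interior.complement.interior}) applies to which set, and feeding a genuinely closed (respectively open) set into each application --- but this is mechanical rather than conceptual.
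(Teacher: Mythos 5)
Your central computation --- $\closure{\interior(\ns P\setminus O)} = \closure{\ns P\setminus\closure{O}} = \ns P\setminus\interior(\closure{O})$, whence $\ns P\setminus O$ is regular closed precisely when $O=\interior(\closure{O})$ --- is correct, and it is exactly the route the paper intends: the paper's entire proof is the sentence ``by routine calculations from the definitions using parts~\ref{item.closure.interior.complement.closure} and~\ref{item.closure.interior.complement.interior} of Lemma~\ref{lemm.closure.interior}'', so for the equivalence of the first two clauses (and their duals in part~2) you have simply written out what the paper leaves implicit.

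The gap is in the third clause, and you have in effect exposed a defect in the statement itself without noticing it. As you correctly observe via Lemma~\ref{lemm.ic.ci.regular}(\ref{item.ic.ci.regular.closed}), $\closure{O}$ is a regular closed set for \emph{every} open $O$. That makes the clause ``$\closure{O}$ is a regular closed set'' unconditionally true, so it cannot be equivalent to ``$O$ is a regular open set'', which fails for some open sets. Concretely, in $\mathbb R$ with its usual topology take $O=(0,1)\cup(1,2)$ (the paper's own example of strictness in Lemma~\ref{lemm.closure.interior}(\ref{item.closure.interior.open})): then $O\neq\interior(\closure{O})=(0,2)$, so $O$ is not regular open, yet $\closure{O}=[0,2]$ satisfies $\closure{\interior([0,2])}=\closure{(0,2)}=[0,2]$ and so is regular closed. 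Your attempted repair --- ``the link to the earlier conditions is carried by the identity $O=\interior(\closure{O})$'' --- is circular: that identity \emph{is} the regular-open condition, and it does not follow from $\closure{O}$ being regular closed. The same remark applies to the clause ``$\interior(C)$ is a regular open set'' in part~2, which is unconditionally true by Lemma~\ref{lemm.ic.ci.regular}(\ref{item.ic.ci.regular.open}). So the honest conclusion is: your argument proves the first biconditional in each part; no proof of the stated triple equivalence is possible because, read literally, it is false; and the paper's ``routine calculations'' glide over exactly the same point.
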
 
\begin{proof}
By routine calculations from the definitions using parts~\ref{item.closure.interior.complement.closure} and~\ref{item.closure.interior.complement.interior} of Lemma~\ref{lemm.closure.interior}.
\end{proof}

\jamiesubsubsection{Intersections of regular open sets}

An easy observation about open sets will be useful:
\begin{lemm}
\label{lemm.clint.between}
Suppose $(\ns P,\opens)$ is a semitopology and $O,O'\in\opens$.
Then the following are equivalent:
\begin{enumerate*}
\item\label{item.client.between.1} 
$O\between O'$.
\item\label{item.client.between.2} 
$O\between\interior(\closure{O'})$.
\item\label{item.client.between.3} 
$\interior(\closure{O})\between\interior(\closure{O'})$.
\end{enumerate*}
\end{lemm}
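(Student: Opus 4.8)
The plan is to route everything through one master equivalence, exploiting the sandwich $B\subseteq\interior(\closure{B})\subseteq\closure{B}$ that holds for any open set $B$: the left inclusion is Lemma~\ref{lemm.closure.interior}(\ref{item.closure.interior.open}), and the right inclusion is Lemma~\ref{lemm.interior.open} (the interior is a subset). The key observation $(\ast)$ is that, for any \emph{open} $A$, the three assertions $A\between B$, $A\between\interior(\closure{B})$, and $A\between\closure{B}$ are all equivalent: feeding the sandwich through monotonicity of $\between$ (Lemma~\ref{lemm.between.elementary}(\ref{between.monotone})) yields the implications $A\between B\limp A\between\interior(\closure{B})\limp A\between\closure{B}$, and Lemma~\ref{lemm.closure.open.char} (taken with $P=B$ and the open set $A$, read up to symmetry via Lemma~\ref{lemm.between.elementary}(\ref{item.between.symmetric})) supplies $A\between\closure{B}\liff A\between B$, closing the cycle.

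With $(\ast)$ in hand, the equivalence of conditions~1 and~2 is just the instance $A=O$, $B=O'$. For conditions~1 and~3 I would apply $(\ast)$ twice, peeling off one $\interior(\closure{\cdot})$ at a time. Starting from $\interior(\closure{O})\between\interior(\closure{O'})$, set $A=\interior(\closure{O})$ and use $(\ast)$ to strip the right-hand wrapper, obtaining $\interior(\closure{O})\between O'$; then read this as $O'\between\interior(\closure{O})$ by symmetry, set $A=O'$, and apply $(\ast)$ again to strip what is now the right-hand wrapper, arriving at $O'\between O$, which is condition~1.

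There is no genuine obstacle; the point to watch is purely bookkeeping, namely that $(\ast)$ demands its first argument $A$ be open. This is satisfied at every application, since $O$ and $O'$ are open by hypothesis and any interior $\interior(\closure{O})$ is open by Lemma~\ref{lemm.interior.open}. I expect the only mild subtlety to be correctly tracking the symmetry of $\between$ when swapping which argument carries the $\interior(\closure{\cdot})$ wrapper.
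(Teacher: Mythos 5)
Your proof is correct and takes essentially the same route as the paper: establish the equivalence of conditions~1 and~2, then obtain condition~3 by applying that equivalence twice. The only cosmetic difference is that you derive the backward implication $O\between\interior(\closure{O'})\limp O\between O'$ by passing through $O\between\closure{O'}$ and Lemma~\ref{lemm.closure.open.char}, whereas the paper unfolds Definition~\ref{defn.closure} directly at a witnessing point $p\in O\cap\interior(\closure{O'})$ --- these are the same argument packaged differently.
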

\begin{proof}
First we prove the equivalence of parts~\ref{item.client.between.1} and~\ref{item.client.between.2}:
\begin{enumerate}
\item
Suppose $O\between O'$.
By Lemma~\ref{lemm.closure.interior}(\ref{item.closure.interior.open}) $O\between \interior(\closure{O'})$.
\item
Suppose there is some $p\in O\cap\interior(\closure{O'})$.
Then $O$ is an open neighbourhood of $p$ and $p\in\closure{O'}$, so by Definition~\ref{defn.closure}(\ref{item.closure}) $O\between O'$ as required.\footnote{Lemma~\ref{lemm.closure.using.nbhd.intersections} packages this argument up nicely with some slick notation, which we have not yet set up.}
\end{enumerate}
Equivalence of parts~\ref{item.client.between.1} and~\ref{item.client.between.3} then follows easily by two applications of the equivalence of parts~\ref{item.client.between.1} and~\ref{item.client.between.2}.
\end{proof}

\begin{rmrk}
\label{rmrk.pi-base}
Lemma~\ref{lemm.clint.between} is true in topologies as well, but it is not prominent in the literature.
Two standard reference works~\cite{engelking:gent,willard:gent} do not seem to mention it.
It appears as equation~10 in Theorem~1.37 of~\cite{koppelberg:hanba1}, and as a lemma in $\pi$-base\footnoteref{https://topology.pi-base.org/theorems/T000420}{https://web.archive.org/web/20240108192930/https://topology.pi-base.org/theorems/T000420} (we thank the mathematics StackExchange community for the pointers).  
We mention this to note an interesting contrast: this result is as true in topologies as it is in semitopologies, but somehow, it \emph{matters} more in the latter than the former.
\end{rmrk}

\begin{corr}
\label{corr.nonintersect.nonintersect.regular}
Suppose $(\ns P,\opens)$ is a semitopology and $p,p'\in\ns P$.
Then the following conditions are equivalent:
\begin{enumerate*}
\item\label{item.nonintersect.nonintersect.regular.1}
$p$ and $p'$ have a nonintersecting pair of open neighbourhoods.
\item\label{item.nonintersect.nonintersect.regular.2}
$p$ and $p'$ have a nonintersecting pair of regular open neighbourhoods.
\end{enumerate*}
\end{corr}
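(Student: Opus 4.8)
The plan is to prove the two implications separately; the bottom-up direction is essentially immediate, and the top-down direction is a clean application of the regularization machinery just developed.

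For the implication from~\ref{item.nonintersect.nonintersect.regular.2} to~\ref{item.nonintersect.nonintersect.regular.1}, I would simply observe that every regular open set is in particular open (immediate from Definition~\ref{defn.regular.open.set}), so a nonintersecting pair of regular open neighbourhoods of $p$ and $p'$ is already a nonintersecting pair of open neighbourhoods. Nothing further is required here.

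For the implication from~\ref{item.nonintersect.nonintersect.regular.1} to~\ref{item.nonintersect.nonintersect.regular.2}, the key idea is \emph{regularization}: given open neighbourhoods $p\in O$ and $p'\in O'$ with $O\notbetween O'$, I would replace them by $\interior(\closure{O})$ and $\interior(\closure{O'})$ and check three things, each already available. First, these are regular open sets, by Corollary~\ref{corr.interior.closure.regular}. Second, they are open neighbourhoods of $p$ and $p'$ respectively: by Lemma~\ref{lemm.closure.interior}(\ref{item.closure.interior.open}) we have $O\subseteq\interior(\closure{O})$, hence $p\in O\subseteq\interior(\closure{O})$, and symmetrically for $p'$. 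Third, they do not intersect: by Lemma~\ref{lemm.clint.between} (equivalence of parts~\ref{item.client.between.1} and~\ref{item.client.between.3}) we have $\interior(\closure{O})\between\interior(\closure{O'})$ if and only if $O\between O'$, and since $O\notbetween O'$ by assumption, we conclude $\interior(\closure{O})\notbetween\interior(\closure{O'})$. This furnishes the required nonintersecting pair of regular open neighbourhoods.

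I do not expect a genuine obstacle here: the entire content is carried by Lemma~\ref{lemm.clint.between}, which says precisely that the operation $\interior(\closure{\text{-}})$ neither creates nor destroys intersections of open sets, together with the standard enlargement fact $O\subseteq\interior(\closure{O})$ from Lemma~\ref{lemm.closure.interior}(\ref{item.closure.interior.open}). The only points to get right are to invoke Lemma~\ref{lemm.clint.between} in its contrapositive form, and to recall for the easy direction that regular open sets are open.
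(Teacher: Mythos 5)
Your proof is correct and takes essentially the same route as the paper, which likewise derives the easy direction from the fact that regular open sets are open and the substantive direction from Lemma~\ref{lemm.clint.between} together with Corollary~\ref{corr.interior.closure.regular}. You simply spell out the details the paper leaves implicit, in particular the check via Lemma~\ref{lemm.closure.interior}(\ref{item.closure.interior.open}) that $\interior(\closure{O})$ is still a neighbourhood of $p$.
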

\begin{proof}
Part~\ref{item.nonintersect.nonintersect.regular.2} clearly implies part~\ref{item.nonintersect.nonintersect.regular.1}, since a regular open set is an open set.
Part~\ref{item.nonintersect.nonintersect.regular.1} implies part~\ref{item.nonintersect.nonintersect.regular.2} using Lemma~\ref{lemm.clint.between} and Corollary~\ref{corr.interior.closure.regular}.
\end{proof}

\begin{rmrk}
\label{rmrk.intertwined.with.regular.opens}
In Definition~\ref{defn.intertwined.points}(\ref{item.p.intertwinedwith.p'}) we defined $p\intertwinedwith p'$ in terms of open neighbourhoods of $p$ and $p'$ as follows:
$$
\Forall{O,O'{\in}\opens} (p\in O\land p'\in O') \limp O\between O' .
$$ 
In the light of Corollary~\ref{corr.nonintersect.nonintersect.regular}, we could just as well have defined it just in terms of regular open neighbourhoods: 
$$
\Forall{O,O'{\in}\regularOpens} (p\in O\land p'\in O') \limp O\between O' .
$$ 
Mathematically, for what we have needed so far, this latter characterisation is not needed.
However, it is easy to think of scenarios in which it might be useful.
In particular, \emph{computationally} it could make sense to restrict to the regular open sets, simply because there are fewer of them. 
\end{rmrk}

\jamiesubsubsection{Minimal nonempty regular closed sets are precisely the minimal closed neighbourhoods}

\begin{lemm}
\label{lemm.lcn.nrc}
Suppose $(\ns P,\opens)$ is a semitopology and $C\in\closed$.
Then:
\begin{enumerate*}
\item\label{item.lcn.nrc.1}
If $C$ is a minimal closed neighbourhood (a closed set with a nonempty open interior), then $C$ is a nonempty regular closed set (Definition~\ref{defn.regular.open.set}).
\item\label{item.lcn.nrc.2}
If $C$ is a nonempty regular closed set then $C$ is a closed neighbourhood (Definition~\ref{defn.cn}).
\end{enumerate*}
\end{lemm}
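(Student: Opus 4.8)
The plan is to handle the two implications separately, using the closure--interior duality of Lemma~\ref{lemm.closure.interior} as the main tool; neither direction requires more than the machinery already established.

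I would begin with part~\ref{item.lcn.nrc.2}, which is the easier direction. Suppose $C$ is a nonempty regular closed set, so $C=\closure{\interior(C)}$ by Definition~\ref{defn.regular.open.set}. To show $C$ is a closed neighbourhood I only need $\interior(C)\neq\varnothing$, since $C$ is closed by hypothesis. If instead $\interior(C)=\varnothing$, then $C=\closure{\interior(C)}=\closure{\varnothing}=\varnothing$ (the closure of the empty set is empty, since every point lies in the open set $\ns P$ and $\varnothing\notbetween\ns P$), contradicting nonemptiness of $C$. Hence $\interior(C)\neq\varnothing$, and $C$ is a closed neighbourhood in the sense of Definition~\ref{defn.cn}(\ref{item.closed.neighbourhood}).

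For part~\ref{item.lcn.nrc.1}, suppose $C$ is a minimal closed neighbourhood. First note $C$ is nonempty, since $\interior(C)\neq\varnothing$ and $\interior(C)\subseteq C$. The key move is to exhibit $\closure{\interior(C)}$ as a closed neighbourhood sitting inside $C$, and then invoke minimality. Indeed, $\closure{\interior(C)}$ is closed by Lemma~\ref{lemm.closure.closed}; it is contained in $C$ by Lemma~\ref{lemm.closure.interior}(\ref{item.closure.interior.closed}), which applies since $C$ is closed; and its interior is nonempty because $\interior(C)\subseteq\interior(\closure{\interior(C)})$ by Lemma~\ref{lemm.closure.interior}(\ref{item.closure.interior.open}) (noting $\interior(C)$ is open) together with $\interior(C)\neq\varnothing$. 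So $\closure{\interior(C)}$ is a closed neighbourhood with $\closure{\interior(C)}\subseteq C$, and minimality of $C$ forces $\closure{\interior(C)}=C$. This is exactly the equation defining a regular closed set, and $C$ is nonempty, completing the argument. (Alternatively one may cite Lemma~\ref{lemm.ic.ci.regular}(\ref{item.ic.ci.regular.closed}) to see $\closure{\interior(C)}$ is regular closed directly, but the equality $\closure{\interior(C)}=C$ already transfers regularity to $C$.)

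I do not anticipate a genuine obstacle here. The only point requiring a little care is confirming that $\closure{\interior(C)}$ genuinely qualifies as a closed neighbourhood \emph{before} applying minimality --- specifically that its open interior is nonempty --- which is precisely where Lemma~\ref{lemm.closure.interior}(\ref{item.closure.interior.open}) does the essential work.
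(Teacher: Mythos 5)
Your proof is correct and follows essentially the same route as the paper's: both directions hinge on exhibiting $\closure{\interior(C)}$ as a closed neighbourhood contained in $C$ and invoking minimality (the paper uses Corollary~\ref{corr.ic.ci} to get $\interior(\closure{\interior(C)})=\interior(C)$ exactly, where you only need the containment from Lemma~\ref{lemm.closure.interior}(\ref{item.closure.interior.open}), and you make the inclusion $\closure{\interior(C)}\subseteq C$ explicit where the paper leaves it implicit). No gaps.
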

\begin{proof}
We consider each part in turn:
\begin{enumerate}
\item
\emph{Suppose $C$ is a minimal closed neighbourhood.}

Write $O'=\interior(C)$ and $C'=\closure{O'}=\closure{\interior(C)}$.
Because $C$ is a closed neighbourhood, by Definition~\ref{defn.cn} $O'\neq\varnothing$.
By Lemma~\ref{lemm.closure.closed} $C'\in\closed$.
Using Corollary~\ref{corr.ic.ci} $\interior(C')=\interior(\closure{\interior(C)})=\interior(C)=O'\neq\varnothing$, so that $C'$ is a closed neighbourhood, and by minimality $C'=C$.
But then $C=\closure{\interior(C)}$ so $C$ is regular, as required.
\item
\emph{Suppose $C$ is a nonempty regular closed set,} so that $\varnothing\neq C=\closure{\interior(C)}$.

It follows that $\interior(C)\neq\varnothing$ and this means precisely that $C$ is a closed neighbourhood. 
\qedhere\end{enumerate}
\end{proof}

In Theorem~\ref{thrm.up.down.char} we characterised the point $p$ being regular in terms of minimal closed neighbourhoods.
We can now characterise the minimal closed neighbourhoods in terms of something topologically familiar:
\begin{prop}
\label{prop.lnrc.lcn}
Suppose $(\ns P,\opens)$ is a semitopology and $C\in\closed$.
Then the following are equivalent:
\begin{enumerate*}
\item
$C$ is a minimal nonempty regular closed set. 
\item
$C$ is a minimal closed neighbourhood. 
\end{enumerate*}
\end{prop}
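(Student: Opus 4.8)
The plan is to deduce everything from Lemma~\ref{lemm.lcn.nrc}, which already establishes the two inclusions between the relevant classes of sets: every nonempty regular closed set is a closed neighbourhood (part~\ref{item.lcn.nrc.2}), and every minimal closed neighbourhood is a nonempty regular closed set (part~\ref{item.lcn.nrc.1}). The only genuine subtlety is that these two \emph{classes} are not equal --- a closed neighbourhood need not be regular closed --- so I cannot simply transport minimality from one poset to the other. The device that bridges the gap is the \emph{regularisation map} $C'\mapsto\closure{\interior(C')}$, which by Lemma~\ref{lemm.ic.ci.regular}(\ref{item.ic.ci.regular.closed}) sends any closed neighbourhood to a regular closed set, is nonempty precisely because $\interior(C')\neq\varnothing$, and satisfies $\closure{\interior(C')}\subseteq C'$ by Lemma~\ref{lemm.closure.interior}(\ref{item.closure.interior.closed}).

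For the direction `minimal closed neighbourhood $\Rightarrow$ minimal nonempty regular closed set', I would first invoke Lemma~\ref{lemm.lcn.nrc}(\ref{item.lcn.nrc.1}) to see that $C$ is a nonempty regular closed set at all. For minimality, given any nonempty regular closed $C'\subseteq C$, Lemma~\ref{lemm.lcn.nrc}(\ref{item.lcn.nrc.2}) makes $C'$ a closed neighbourhood, so minimality of $C$ as a closed neighbourhood forces $C'=C$. This direction is essentially immediate, because the regular closed sets already sit \emph{inside} the closed neighbourhoods.

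The converse direction, `minimal nonempty regular closed set $\Rightarrow$ minimal closed neighbourhood', is where the regularisation map does real work. By Lemma~\ref{lemm.lcn.nrc}(\ref{item.lcn.nrc.2}), $C$ is a closed neighbourhood. To show it is minimal, I take any closed neighbourhood $C'\subseteq C$ and pass to $\closure{\interior(C')}$. This is a nonempty regular closed set, and it lies inside $C'$ (hence inside $C$) by the properties noted above, so minimality of $C$ among nonempty regular closed sets yields $\closure{\interior(C')}=C$. Chaining $C=\closure{\interior(C')}\subseteq C'\subseteq C$ then gives $C'=C$, as required.

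I expect the main (mild) obstacle to be exactly this asymmetry between the two posets: one must resist the temptation to argue that the two classes coincide, and instead use the regularisation map to shrink an arbitrary closed neighbourhood down to a regular closed witness sitting inside it. No nontrivial computation is involved; everything reduces to monotonicity and idempotency of closure and interior together with the cited lemmas.
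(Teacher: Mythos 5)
Your proof is correct, and the overall decomposition is the same as the paper's: two implications, each powered by the two halves of Lemma~\ref{lemm.lcn.nrc}. The first direction is identical to the paper's. In the second direction, however, you do something genuinely different --- and in fact more careful. The paper's written proof handles an arbitrary closed neighbourhood $C'\subseteq C$ by citing Lemma~\ref{lemm.lcn.nrc}(\ref{item.lcn.nrc.1}) to conclude that $C'$ is a nonempty regular closed set; but that lemma is stated (and proved) only for \emph{minimal} closed neighbourhoods, and an arbitrary closed neighbourhood need not be regular closed (in $\mathbb R$, the set $[0,1]\cup\{2\}$ is a closed neighbourhood with $\closure{\interior([0,1]\cup\{2\})}=[0,1]$). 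You instead pass to the regularisation $\closure{\interior(C')}$, which is a nonempty regular closed set contained in $C'$ by Lemmas~\ref{lemm.ic.ci.regular} and~\ref{lemm.closure.interior}(\ref{item.closure.interior.closed}), apply minimality of $C$ among nonempty regular closed sets to get $\closure{\interior(C')}=C$, and then squeeze $C\subseteq C'\subseteq C$. This is exactly the step the paper's argument needs (it is the same regularisation move that occurs inside the proof of Lemma~\ref{lemm.lcn.nrc}(\ref{item.lcn.nrc.1}), just applied at the right level), so your version buys a proof whose citations actually match their hypotheses, at the cost of one extra line.
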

\begin{proof}
We prove two implications:
\begin{itemize}
\item
\emph{Suppose $C$ is a minimal closed neighbourhood.}

By Lemma~\ref{lemm.lcn.nrc}(\ref{item.lcn.nrc.1}) $C$ is a nonempty regular closed set.
Furthermore by Lemma~\ref{lemm.lcn.nrc}(\ref{item.lcn.nrc.2}) if $C'\subseteq C$ is any other nonempty regular closed set contained in $C$, then it is a closed neighbourhood, and by minimality it is equal to $C$.
Thus, $C$ is minimal.
\item
\emph{Suppose $C$ is a minimal nonempty regular closed set.}

By Lemma~\ref{lemm.lcn.nrc}(\ref{item.lcn.nrc.2}) $C$ is a closed neighbourhood.
Furthermore by Lemma~\ref{lemm.lcn.nrc}(\ref{item.lcn.nrc.1}) if $C'\subseteq C$ is any other closed neighbourhood then it is a nonempty regular closed set, and by minimality it is equal to $C$.
\qedhere\end{itemize}
\end{proof}

\jamiesubsection{How are $\intertwined{p}$ and $\closure{p}$ related?}

\begin{rmrk}
\label{rmrk.re-read.closure}
Recall the definitions of $\intertwined{p}$ and $\closure{p}$:
\begin{itemize*}
\item
The set $\closure{p}$ is the \emph{closure} of $p$.

By Definition~\ref{defn.closure} this is the set of $p'$ such that every open neighbourhood $O'\ni p'$ intersects with $\{p\}$.
By Definition~\ref{defn.closed} $\closure{p}$ is closed.
\item
The set $\intertwined{p}$ is the set of points \emph{intertwined} with $p$.

By Definition~\ref{defn.intertwined.points}(\ref{intertwined.defn}) this is the set of $p'$ such that every open neighbourhood $O'\ni p'$ intersects with every open neighbourhood $O \ni p$. 
By Proposition~\ref{prop.intertwined.as.closure}(\ref{intertwined.p.closed}) $\intertwined{p}$ is closed.
\end{itemize*}
So we see that $\closure{p}$ and $\intertwined{p}$ give us two canonical ways of generating a closed set from a point $p\in \ns P$. 
This invites a question: 
\begin{quote}
\emph{How are $\intertwined{p}$ and $\closure{p}$ related?}
\end{quote}
\end{rmrk}

Lemma~\ref{lemm.char.not.intertwined} rephrases Remark~\ref{rmrk.re-read.closure} more precisely by looking at it through sets complements.
\begin{lemm}
\label{lemm.char.not.intertwined}
Suppose $(\ns P,\opens)$ is a semitopology and $p\in\ns P$.
Then:
\begin{enumerate*}
\item
$\ns P\setminus\closure{p} = \bigcup \{O\in\opens \mid p\notin O\}\oldin\opens$.
\item\label{item.intertwined.open.avoid}
$\ns P\setminus\intertwined{p} = \bigcup\{O'\in\opens \mid \Exists{O{\in}\opens} p\in O\land O'\notbetween O\}\oldin\opens$.
\item
$\ns P\setminus\intertwined{p} = \bigcup\{O\in\opens \mid p\notin \closure{O}\}\oldin\opens$.
\end{enumerate*}
In words, we can say: $\ns P\setminus\closure{p}$ is the union of the open sets such that $p$ avoids them, and $\ns P\setminus\intertwined{p}$ is the union of the open sets such that $p$ avoids their closures.
\end{lemm}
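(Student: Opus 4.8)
The plan is to prove each of the three equalities by directly unpacking the relevant definitions, noting in each case that the asserted right-hand side is visibly a union of open sets and hence open by Definition~\ref{defn.semitopology}(\ref{semitopology.unions}); the substance is the set equality, and each reduces to a short logical rearrangement. Throughout I would argue through complements: a point $p'$ lies in $\ns P\setminus\closure{p}$ (resp.\ in $\ns P\setminus\intertwined{p}$) exactly when the defining universally-quantified membership condition fails at $p'$, and negating that condition produces an existential statement that exhibits $p'$ as a member of the claimed union.

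For part~1, I would recall from Definition~\ref{defn.closure}(\ref{item.closure.p}) that $p'\in\closure{p}$ means $\Forall{O{\in}\opens}(p'\in O\limp p\in O)$. Negating, $p'\in\ns P\setminus\closure{p}$ iff there is some $O\in\opens$ with $p'\in O$ and $p\notin O$, which says precisely that $p'\oldin\bigcup\{O\in\opens\mid p\notin O\}$. (Alternatively this follows from Corollary~\ref{corr.closed.complement.union} combined with Lemma~\ref{lemm.closure.open.char}, which gives $O\notbetween\closure{p}\liff p\notin O$.)

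For part~2, I would unpack Definition~\ref{defn.intertwined.points}(\ref{item.p.intertwinedwith.p'}): $p'\in\intertwined{p}$ means $p\intertwinedwith p'$, i.e.\ $\Forall{O,O'{\in}\opens}(p\in O\land p'\in O')\limp O\between O'$. Negating, $p'\in\ns P\setminus\intertwined{p}$ iff there exist $O,O'\in\opens$ with $p\in O$, $p'\in O'$ and $O\notbetween O'$; reading this as ``$p'$ lies in some open set $O'$ that is disjoint from some open neighbourhood of $p$'' is exactly membership in $\bigcup\{O'\in\opens\mid \Exists{O{\in}\opens}p\in O\land O'\notbetween O\}$. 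Part~3 then follows at once from part~2, once we observe, straight from Definition~\ref{defn.closure}(\ref{item.closure}), that the indexing condition $\Exists{O{\in}\opens}p\in O\land O'\notbetween O$ is literally the statement $p\notin\closure{O'}$ (some open neighbourhood of $p$ misses $O'$); rewriting the index set turns the union of part~2 into the union of part~3. Equivalently, I could obtain part~3 directly from Proposition~\ref{prop.intertwined.as.closure}(\ref{item.intertwined.as.closure.1}) together with the symmetry of $\intertwinedwith$ (Lemma~\ref{lemm.intertwined.not.transitive}).

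There is no real obstacle: the result is pure definition-chasing, and the only thing demanding a little care is the bookkeeping about which point the quantified open neighbourhood is required to contain. In part~3 especially one must invoke symmetry of $\intertwinedwith$ — so that ``$p$ avoids $\closure{O'}$ for some $O'\ni p'$'' correctly captures $\neg(p\intertwinedwith p')$ — and one must read the closure condition in the correct orientation; getting these two orientations straight is essentially the whole content of the argument.
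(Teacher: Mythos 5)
Your proof is correct and takes essentially the same route as the paper's, which disposes of all three parts as direct definition-unpacking: part~1 immediately from Definitions~\ref{defn.intertwined.points} and~\ref{defn.closure}, part~2 as a routine negation of Definition~\ref{defn.intertwined.points}, and part~3 as a rephrasing of part~2 via Definition~\ref{defn.closure}(\ref{item.closure}), with openness in each case coming from Definition~\ref{defn.semitopology}(\ref{semitopology.unions}). The quantifier bookkeeping you spell out is exactly the content the paper leaves implicit, so nothing is missing.
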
 
\begin{proof}
\leavevmode
\begin{enumerate*}
\item
Immediate from Definitions~\ref{defn.intertwined.points} and~\ref{defn.closure}.\footnote{A longer proof via Corollary~\ref{corr.closure.closure}(\ref{item.closure.as.intersection}) and Lemma~\ref{lemm.closed.complement.open} is also possible.}
Openness is from Definition~\ref{defn.semitopology}(\ref{semitopology.unions}).
\item
By a routine argument direct from Definition~\ref{defn.intertwined.points}. 
Openness is from Definition~\ref{defn.semitopology}(\ref{semitopology.unions}).
\item
Rephrasing part~\ref{item.intertwined.open.avoid} of this result using Definition~\ref{defn.closure}(\ref{item.closure}).
\qedhere\end{enumerate*}
\end{proof}

\begin{prop}
\label{prop.closure.intertwined}
Suppose $(\ns P,\opens)$ is a semitopology and $p\in\ns P$.
Then:
\begin{enumerate*}
\item\label{item.closure.intertwined.1}
$\closure{p}\subseteq \intertwined{p}$.
\item\label{item.closure.intertwined.2}
The subset inclusion may be strict; that is, $\closure{p}\subsetneq\intertwined{p}$ is possible --- even if $p$ is regular (Definition~\ref{defn.tn}(\ref{item.regular.point})).
\item\label{item.closure.intertwined.3}
If $\interior(\closure{p})\neq\varnothing$ (so $\closure{p}$ has a nonempty interior)
then 
$\closure{p}=\intertwined{p}$.
\end{enumerate*}
\end{prop}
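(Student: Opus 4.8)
The plan is to handle the three parts separately: the first is essentially free given the machinery already in place, the second is a small concrete computation, and the third is the only part that needs a genuine idea.

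For part~\ref{item.closure.intertwined.1} I would invoke the characterisation $\intertwined{p}=\bigcap\{\closure{O}\mid p\in O\in\opens\}$ from Proposition~\ref{prop.intertwined.as.closure}(\ref{item.intertwined.as.intersection.of.closures}). Since $p\in O$ gives $\{p\}\subseteq O$, monotonicity of closure (Lemma~\ref{lemm.closure.monotone}(\ref{closure.monotone})) yields $\closure{p}\subseteq\closure{O}$ for every open $O\ni p$, and intersecting over all such $O$ gives $\closure{p}\subseteq\intertwined{p}$. A direct unpacking of the two definitions also works: if $p'\in\closure{p}$ and $p\in O$, $p'\in O'$, then $p\in O'$ because $O'$ is an open neighbourhood of $p'$, so $p\in O\cap O'$ and hence $O\between O'$.

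For part~\ref{item.closure.intertwined.2} I would use the left-hand (triangle) semitopology of Figure~\ref{fig.not-strong-topen}: $\ns P=\{0,1,2\}$ with opens generated by $\{0,1\}$, $\{1,2\}$, $\{2,0\}$. All nonempty opens pairwise intersect, so by Lemma~\ref{lemm.intertwined.space.regular}(\ref{item.intertwined.space.regular.1}) the space is regular and $\intertwined{0}=\ns P$. A direct computation shows $\closure{0}=\{0\}$ (for instance $\{1,2\}$ is an open neighbourhood of both $1$ and $2$ that omits $0$), so $\closure{0}=\{0\}\subsetneq\{0,1,2\}=\intertwined{0}$ with $0$ regular, as required.

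Part~\ref{item.closure.intertwined.3} is where the work lies; here I must prove the reverse inclusion $\intertwined{p}\subseteq\closure{p}$ under the hypothesis $\interior(\closure{p})\neq\varnothing$. The key step, and the main obstacle, is to upgrade the bare nonemptiness of $\interior(\closure{p})$ to the statement that $\closure{p}$ is a closed neighbourhood \emph{of $p$ itself}, that is $p\in\interior(\closure{p})$. This follows from a small but crucial observation: if $O$ is any nonempty open set with $O\subseteq\closure{p}$, then picking any $q\in O$ we have $q\in\closure{p}$, and since $O$ is an open neighbourhood of $q$, the definition of $\closure{p}$ forces $p\in O$; hence $p\in O\subseteq\closure{p}$, so $p\in\interior(\closure{p})$. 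Since $\closure{p}$ is closed (Lemma~\ref{lemm.closure.closed}), it is then a closed neighbourhood of $p$ in the sense of Definition~\ref{defn.cn}(\ref{item.closed.neighbourhood.of.p}), and Proposition~\ref{prop.intertwined.as.closure}(\ref{intertwined.as.closure.closed}) gives $\intertwined{p}\subseteq\closure{p}$. Combined with part~\ref{item.closure.intertwined.1}, this yields $\closure{p}=\intertwined{p}$.
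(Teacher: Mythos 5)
Your proof is correct and takes essentially the same approach as the paper: in particular your key observation for part~3 --- that any nonempty open subset of $\closure{p}$ must contain $p$, so $p\in\interior(\closure{p})$ --- is exactly the paper's argument, and your finish via Proposition~\ref{prop.intertwined.as.closure}(\ref{intertwined.as.closure.closed}) is if anything slightly more direct than the paper's detour through Proposition~\ref{prop.regular.closure}. The only cosmetic difference is that for part~2 the paper uses the Sierpi\'nski space (Example~\ref{xmpl.closure.101}) rather than your triangle semitopology; both counterexamples are valid.
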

\begin{proof}
\leavevmode
\begin{enumerate}
\item
We reason as follows:
$$
\begin{array}{r@{\ }l@{\quad}l}
\closure{p}=&
\closure{\{p\}}
&\text{Definition~\ref{defn.closure}(\ref{item.closure.p})}
\\
=&
\bigcap\{C\in\closed \mid p\in C\}
&\text{Corollary~\ref{corr.closure.closure}(\ref{item.closure.as.intersection})}
\\
\subseteq&
\bigcap\{C\in\closed \mid p\in\interior(C)\}
&\text{Fact of intersections}
\\
=&
\intertwined{p} 
&\text{Proposition~\ref{prop.intertwined.as.closure}(\ref{intertwined.as.closure.closed})}
\end{array}
$$
\item
Example~\ref{xmpl.closure.101} below shows that $\closure{p}\subsetneq\intertwined{p}$ is possible for $p$ regular. 
\item
Write $O=\interior(\closure{p})$.
By standard topological reasoning, $\closure{p}$ is the complement of the union of the open sets that do not contain $p$, and $O=\interior(\closure{p})$ is the greatest open set such that $\Forall{O'{\in}\opens}O\between O'\limp p\in O'$.  
We assumed that $O$ is nonempty, so $O\between O$, thus $p\in O$.

Then by part~\ref{item.closure.intertwined.1} of this result $p\in O\subseteq\closure{p}\subseteq\intertwined{p}$, and by Proposition~\ref{prop.regular.closure}(\ref{item.regular.closure.2}) $\intertwined{p}=\closure{O}$.
Using more standard topological reasoning (since $O\neq\varnothing$) $\closure{O}=\closure{p}$, and the result follows.
\qedhere\end{enumerate}
\end{proof}

\begin{figure}
\centering
\includegraphics[width=0.4\columnwidth,trim={50 150 50 150},clip]{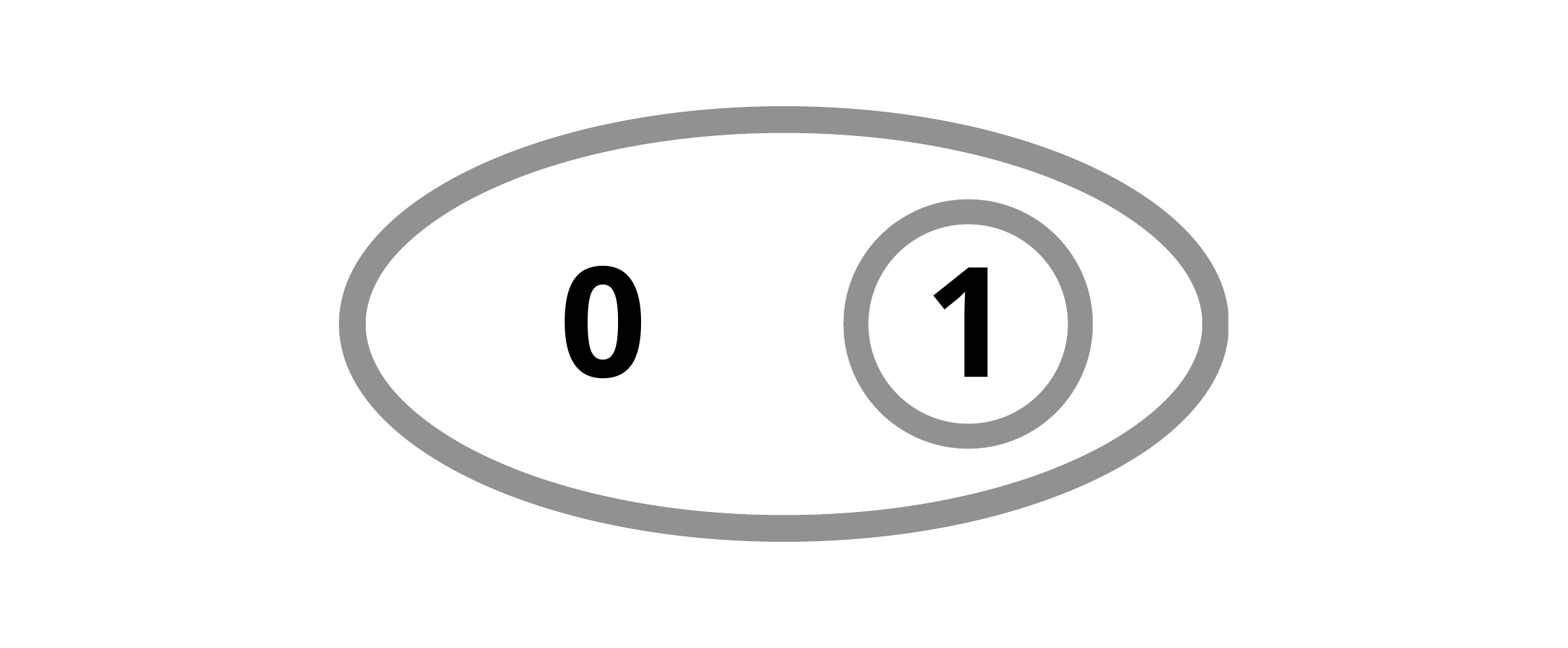}
\caption{The Sierpi\'nski space $\tf{Sk}$ (Example~\ref{xmpl.sk})}
\label{fig.sierpinski}
\end{figure}

\begin{xmpl}
\label{xmpl.closure.101}
\label{xmpl.sk}
Define $\tf{Sk}$ the \deffont{Sierpi\'nski space}~\cite[Example~3.2(e)]{willard:gent} by $\ns P=\{0,1\}$ and $\opens=\{\varnothing,\{1\},\{0,1\}\}$, as illustrated in Figure~\ref{fig.sierpinski}. 
Then:
\begin{itemize*}
\item
$\closure{0}=\{0\}$ (because $\{1\}$ is open), but
\item
$\intertwined{0}=\{0,1\}$ (because every open neighbourhood of $0$ intersects with every open neighbourhood of $1$). 
\end{itemize*}
Thus we see that $\closure{0}=\{0\}\subsetneq\{0,1\}=\intertwined{0}$, and $0$ is regular since $0\in\interior(\intertwined{0})=\{0,1\}\in\topens$.
\end{xmpl}

\begin{rmrk}
We have one loose end left.
We know from Theorem~\ref{thrm.up.down.char}(\ref{item.up.down.char.wr.mcn}) that $\intertwined{p}$ is a minimal closed neighbourhood (closed set with nonempty open interior) when $p$ is regular. 
We also know from Proposition~\ref{prop.closure.intertwined} that $\closure{p}\subsetneq\intertwined{p}$ is possible, even if $p$ is regular.

So a closed \emph{neighbourhood} in between $\closure{p}$ and $\intertwined{p}$ is impossible by minimality, but can there be any closed \emph{sets} (not necessarily having a nonempty open interior) in between $\closure{p}$ and $\intertwined{p}$?

Somewhat counterintuitively perhaps, this is possible: 
\end{rmrk}

\begin{lemm}
Suppose $(\ns P,\opens)$ is a semitopology and $p\in\ns P$. 
Then it is possible for there to exist a closed set $C\subseteq\ns P$ with $\closure{p}\subsetneq C\subsetneq\intertwined{p}$, even if $p$ is regular.
\end{lemm}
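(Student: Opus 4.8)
The plan is to prove this existence statement by exhibiting a single concrete counterexample, since nothing more is required. The natural starting point is the Sierpi\'nski space of Example~\ref{xmpl.sk}, where $\closure{0}=\{0\}\subsetneq\{0,1\}=\intertwined{0}$ with $0$ regular; there, however, $\closure{0}$ and $\intertwined{0}$ differ by a single point, leaving no room for an intermediate closed set. So I would first enlarge the space by one point, aiming to make the gap $\intertwined{p}\setminus\closure{p}$ contain two points that can be split by a closed set. Concretely, I would take
\[
\ns P=\{0,1,2\}
\qquad
\opens=\{\varnothing,\ \{1\},\ \{1,2\},\ \{0,1,2\}\},
\]
with $p=0$ and candidate closed set $C=\{0,2\}$. (This is the Alexandrov topology of the chain $0<2<1$, i.e.\ a one-point extension of Sierpi\'nski space.)

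The verification then proceeds in four short steps. First, check that $(\ns P,\opens)$ is a semitopology: it contains $\varnothing$ and $\ns P$ and is closed under unions. Second, observe that every nonempty open set contains $1$, so all nonempty opens pairwise intersect; hence $\ns P$ is intertwined and, by Lemma~\ref{lemm.intertwined.space} and Lemma~\ref{lemm.intertwined.space.regular}(\ref{item.intertwined.space.regular.1}), the space is regular with $\intertwined{0}=\ns P=\{0,1,2\}$, so in particular $0$ is regular. Third, compute $\closure{0}=\{0\}$: its complement $\{1,2\}$ is open, so $\{0\}$ is closed by Lemma~\ref{lemm.closed.complement.open}, whence $\closure{0}=\{0\}$ (equivalently, $\{1\}$ and $\{1,2\}$ are open neighbourhoods of $1$ and $2$ avoiding $0$). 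Fourth, note that $C=\{0,2\}$ is closed because its complement $\{1\}$ is open, and that
\[
\closure{0}=\{0\}\subsetneq\{0,2\}\subsetneq\{0,1,2\}=\intertwined{0},
\]
both inclusions being strict. This gives the required closed set strictly between $\closure{0}$ and $\intertwined{0}$, with $0$ regular.

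The only conceptual obstacle is engineering all three requirements simultaneously: $p$ regular, $\closure{p}$ strictly smaller than $\intertwined{p}$, and an intermediate closed set. Theorem~\ref{thrm.up.down.char} forces $\intertwined{0}$ to be a \emph{minimal} closed neighbourhood when $0$ is regular, so any closed $C$ with $C\subsetneq\intertwined{0}$ must have empty interior (otherwise it would be a closed neighbourhood strictly inside the minimal one). This is exactly what happens above, since $\interior(\{0,2\})=\varnothing$ because every nonempty open contains $1\notin\{0,2\}$; keeping this constraint in mind is what guides the choice of example, and once the space is written down the checks are entirely routine.
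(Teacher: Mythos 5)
Your proof is correct and takes essentially the same approach as the paper: both simply exhibit a concrete counterexample and verify it (the paper uses $\mathbb N$ with opens generated by final segments together with $\{0,1,\dots,9\}$, which yields eight intermediate closed sets between $\closure{0}=\{0\}$ and $\intertwined{0}=\{0,\dots,9\}$). Your three-point example is smaller and all your checks are accurate, and your closing observation via Theorem~\ref{thrm.up.down.char} --- that regularity of $p$ forces any closed set strictly between $\closure{p}$ and $\intertwined{p}$ to have empty interior --- is a correct and worthwhile sanity check that the paper does not spell out.
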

\begin{proof}
It suffices to provide an example.
Consider $\mathbb N$ with the semitopology whose open sets are generated by 
\begin{itemize*}
\item
final segments $n_\geq=\{n'\in\mathbb N\mid n'\geq n\}$ for $n\in\mathbb N$ (cf. Example~\ref{xmpl.meet-irreducible}(\ref{item.final.N})), and 
\item
$\{0,1,2,3,4,5,6,7,8,9\}$.
\end{itemize*} 
The reader can check that $\closure{0}=\{0\}$ and $\intertwined{0}=\{0,1,2,3,4,5,6,7,8,9\}$.
However, there are also eight closed sets $\{0,1\}$, $\{0,1,2\}$, \dots, $\{0,1,2,3,\dots,8\}$ in between $\closure{0}$ and $\intertwined{0}$. 
\end{proof}

We will study $\intertwined{p}$ further but to make more progress we need the notion of a(n un)conflicted point.
This is an important idea in its own right:

\jamiesection{(Un)conflicted points: transitivity of $\intertwinedwith$}
\label{sect.unconflicted.point}

\jamiesubsection{The basic definition} 
\label{subsect.reg.tra.int}

In Lemma~\ref{lemm.intertwined.not.transitive} we asked whether the `is intertwined with' relation $\intertwinedwith$ from Definition~\ref{defn.intertwined.points}(\ref{item.p.intertwinedwith.p'}) is transitive --- answer: not necessarily.

Transitivity of $\intertwinedwith$ is a natural condition.
We now have enough machinery to study it in more detail, and this will help us gain a deeper understanding of the properties of not-necessarily-regular points.

\begin{defn}
\label{defn.conflicted}
Suppose $(\ns P,\opens)$ is a semitopology.
\begin{enumerate*}
\item\label{item.conflicted.point}
Call $p$ a \deffont{conflicted point} when there exist $p'$ and $p''$ such that $p'\intertwinedwith p$ and $p\intertwinedwith p''$ yet $\neg(p'\intertwinedwith p'')$.
\item\label{item.unconflicted}
If $p'\intertwinedwith p\intertwinedwith p''$ implies $p'\intertwinedwith p''$ always, then call $p$ an \deffont{unconflicted point}.
\item
Continuing Definition~\ref{defn.tn}(\ref{item.regular.S}), if $P\subseteq\ns P$ and every $p\in P$ is conflicted/unconflicted, then we may call $P$ a \deffont{conflicted/unconflicted set} respectively. 
\end{enumerate*}
\end{defn}

\begin{xmpl}
\label{xmpl.conflicted.points}
We consider some examples:
\begin{enumerate*}
\item\label{item.example.of.conflicted.point}
In Figure~\ref{fig.012} top-left diagram, $0$ and $2$ are unconflicted and intertwined with themselves, and $1$ is conflicted (being intertwined with $0$, $1$, and $2$).

If the reader wants to know what a conflicted point looks like: it looks like $1$. 
\item 
In Figure~\ref{fig.012} top-right diagram, $0$ and $2$ are unconflicted and intertwined with themselves, and $1$ is conflicted (being intertwined with $0$, $1$, and $2$).
\item
In Figure~\ref{fig.012} lower-left diagram, $0$ and $1$ are unconflicted and intertwined with themselves, and $3$ and $4$ are unconflicted and intertwined with themselves, and $2$ is conflicted (being intertwined with $0$, $1$, $2$, $3$, and $4$).
\item
In Figure~\ref{fig.012} lower-right diagram, all points are unconflicted, and $0$ and $2$ are intertwined just with themselves, and $1$ and $\ast$ are intertwined with one another.
\item
In Figure~\ref{fig.square.diagram}, all points are unconflicted and intertwined only with themselves.
\end{enumerate*}
\end{xmpl}

So $p$ is conflicted when it witnesses a counterexample to $\intertwinedwith$ being transitive.
We start with an easy lemma (we will use this later, but we mention it now for Remark~\ref{rmrk.intertwined.unconflicted.in.context}):
\begin{lemm}
\label{lemm.unconflicted.char}
Suppose $(\ns P,\opens)$ is a semitopology and $p\in\ns P$.
Then the following are equivalent:
\begin{enumerate*}
\item\label{item.unconflicted.char.1}
$p$ is unconflicted.
\item\label{item.unconflicted.p.in.q}
If $q\in\ns P$ and $p\in\intertwined{q}$ then $\intertwined{p}\subseteq\intertwined{q}$. 
\item\label{item.p'.in.unconflicted.p}
$\intertwined{p}\subseteq\intertwined{p'}$ for every $p'\in\intertwined{p}$.
\item\label{item.unconflicted.as.least}
$\intertwined{p}$ is least in the set $\{\intertwined{p'}\mid p\intertwinedwith p'\}$ ordered by subset inclusion.
\end{enumerate*}
\end{lemm}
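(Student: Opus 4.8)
The plan is to prove all four conditions equivalent, leaning entirely on the fact from Lemma~\ref{lemm.intertwined.not.transitive} that $\intertwinedwith$ is reflexive and symmetric. Granting these two properties, the four statements are essentially verbatim reformulations of one another, so the work is purely a matter of unwinding the definition of $\intertwined{\cdot}$ and tracking variable names; I would present it as three successive biconditionals.

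First I would fix the dictionary used throughout: by Definition~\ref{defn.intertwined.points}(\ref{intertwined.defn}) together with symmetry, the assertions $p'\in\intertwined{p}$, $p\in\intertwined{p'}$, and $p\intertwinedwith p'$ all say the same thing, and $\intertwined{p}\subseteq\intertwined{q}$ unpacks to ``$p\intertwinedwith p''$ implies $q\intertwinedwith p''$ for all $p''$''. With this, condition~(\ref{item.unconflicted.char.1}), which by Definition~\ref{defn.conflicted}(\ref{item.unconflicted}) says that $q\intertwinedwith p\intertwinedwith p''$ forces $q\intertwinedwith p''$ for all $q,p''$, becomes exactly condition~(\ref{item.unconflicted.p.in.q}): fixing $q$ with $p\in\intertwined{q}$ (that is, $q\intertwinedwith p$) and ranging over $p''$ with $p\intertwinedwith p''$ turns the unconflicted hypothesis into $\intertwined{p}\subseteq\intertwined{q}$, and conversely reading this off for all such $q$ recovers unconflictedness. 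This gives (\ref{item.unconflicted.char.1})$\Leftrightarrow$(\ref{item.unconflicted.p.in.q}).

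Next, (\ref{item.unconflicted.p.in.q})$\Leftrightarrow$(\ref{item.p'.in.unconflicted.p}) is a pure change of bound variable: by symmetry the constraint ``$p\in\intertwined{q}$'' on $q$ is the same as ``$q\in\intertwined{p}$'', so quantifying over $q$ with $p\in\intertwined{q}$ coincides with quantifying over $p'\in\intertwined{p}$, and the conclusion $\intertwined{p}\subseteq\intertwined{q}$ rewrites to $\intertwined{p}\subseteq\intertwined{p'}$. Finally, for (\ref{item.p'.in.unconflicted.p})$\Leftrightarrow$(\ref{item.unconflicted.as.least}) I would note that reflexivity gives $p\intertwinedwith p$, so $\intertwined{p}$ is itself a member of the family $\{\intertwined{p'}\mid p\intertwinedwith p'\}$; being \emph{least} in that family then means precisely that $\intertwined{p}\subseteq\intertwined{p'}$ for every $p'$ with $p\intertwinedwith p'$, i.e.\ for every $p'\in\intertwined{p}$, which is condition~(\ref{item.p'.in.unconflicted.p}).

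There is no genuine obstacle here; the one point that repays attention is the reflexivity step in the last equivalence, since ``least'' (as opposed to merely ``a lower bound for the family'') requires $\intertwined{p}$ to actually lie in the family, and this membership is exactly what $p\intertwinedwith p$ supplies. Everything else is symmetry of $\intertwinedwith$ and renaming of quantified variables.
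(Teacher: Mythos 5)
Your proof is correct and takes essentially the same route as the paper: both arguments are pure definition-unpacking, with the paper organising it as a cycle of implications $(1)\Rightarrow(2)\Rightarrow(3)\Rightarrow(4)\Rightarrow(1)$ and you as three biconditionals. Your explicit observation that reflexivity of $\intertwinedwith$ is what puts $\intertwined{p}$ \emph{into} the family in condition~(4) (so that ``least'' rather than merely ``lower bound'' makes sense) is a point the paper glosses over, and is worth keeping.
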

\begin{proof}
The proof is just by pushing definitions around in a cycle of implications.
\begin{itemize}
\item
\emph{Part~\ref{item.unconflicted.char.1} implies part~\ref{item.unconflicted.p.in.q}.}

Suppose $p$ is unconflicted.
Consider $q\in\ns P$ such that $p\in\intertwined{q}$, and consider any $p'\in\intertwined{p}$.
Unpacking definitions we have that $p'\intertwinedwith p\intertwinedwith q$ and so $p'\intertwinedwith q$, thus $p'\in\intertwined{q}$ as required.
\item
\emph{Part~\ref{item.unconflicted.p.in.q} implies part~\ref{item.p'.in.unconflicted.p}.}

From the fact that $p'\in\intertwined{p}$ if and only if $p'\intertwinedwith p$ if and only if $p\in\intertwined{p'}$.
\item
\emph{Part~\ref{item.p'.in.unconflicted.p} implies part~\ref{item.unconflicted.as.least}.}

Part~\ref{item.unconflicted.as.least} just rephrases part~\ref{item.p'.in.unconflicted.p}.
\item
\emph{Part~\ref{item.unconflicted.as.least} implies part~\ref{item.unconflicted.char.1}.}

Suppose $\intertwined{p}$ is $\subseteq$-least in $\{\intertwined{p'}\mid p\intertwinedwith p'\}$ and suppose $p''\intertwinedwith p\intertwinedwith p'$.
Then $p''\in\intertwined{p}\subseteq\intertwined{p'}$, so $p''\intertwinedwith p'$ as required.
\qedhere\end{itemize}
\end{proof}

\begin{rmrk}
\label{rmrk.intertwined.unconflicted.in.context}
Lemma~\ref{lemm.unconflicted.char} is just an exercise in reformulating definitions, but part~\ref{item.unconflicted.as.least} of the result helps us to contrast the property of being unconflicted, with structurally similar 
characterisations of \emph{weak regularity} and of \emph{regularity} in Proposition~\ref{prop.views.of.regularity} and Theorem~\ref{thrm.up.down.char} respectively.
For the reader's convenience we collect them here --- all sets below are ordered by subset inclusion:
\begin{enumerate}
\item
$p$ is unconflicted when \emph{$\intertwined{p}$ is least in $\{\intertwined{p'}\mid p\intertwinedwith p'\}$}. 
\item
$p$ is weakly regular when \emph{$\intertwined{p}$ is least amongst closed neighbourhoods of $p$}.

See Proposition~\ref{prop.views.of.regularity} and recall from Definition~\ref{defn.cn} that a closed neighbourhood of $p$ is a closed set $C$ such that $p\in\interior(C)$.
\item 
$p$ is regular when \emph{$\intertwined{p}$ is a closed neighbourhood of $p$ and minimal amongst all closed neighbourhoods}.

See Theorem~\ref{thrm.up.down.char} and recall that a closed neighbourhood is any closed set with a nonempty interior (not necessarily containing $p$).
\end{enumerate}
We know from Lemma~\ref{lemm.wr.r}(\ref{item.r.implies.wr}) that regular implies weakly regular. 
We now consider how these properties relate to being unconflicted.
\end{rmrk}

\jamiesubsection{Regular = weakly regular + unconflicted}
\label{subsect.r=wr+uc}

\begin{prop}
\label{prop.unconflicted.irregular}
Suppose $(\ns P,\opens)$ is a semitopology and $p\in\ns P$.
Then:
\begin{enumerate*}
\item\label{item.reg.implies.unconflicted}
If $p$ is regular then it is unconflicted.

Equivalently by the contrapositive: if $p$ is conflicted then it is not regular.
\item\label{item.unconflicted.irregular.2}
$p$ may be unconflicted and neither quasiregular, weakly regular, nor regular.
\item\label{item.unconflicted.irregular.3}
There exists a semitopological space such that 
\begin{itemize*}
\item
every point is unconflicted (so $\intertwinedwith$ is a transitive relation) yet 
\item
every point has empty community, so that the space is irregular, not weakly regular, and not quasiregular.%
\footnote{See also Proposition~\ref{prop.conflicted.weakly.regular}.}
\end{itemize*}
\end{enumerate*}
\end{prop}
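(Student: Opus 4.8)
The plan is to handle the three parts separately: part~\ref{item.reg.implies.unconflicted} is the only one carrying a genuine idea, and is a short direct argument, while parts~\ref{item.unconflicted.irregular.2} and~\ref{item.unconflicted.irregular.3} are dispatched by examples already to hand.

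For part~\ref{item.reg.implies.unconflicted} I would argue straight from the definitions, using $\community(p)$ as a \emph{bridge}. Suppose $p$ is regular, so by Definition~\ref{defn.tn}(\ref{item.regular.point}) $\community(p)$ is a topen --- in particular a transitive open set --- with $p\in\community(p)$. Take any $p'$ and $p''$ with $p'\intertwinedwith p\intertwinedwith p''$; the goal is $p'\intertwinedwith p''$. Fix open neighbourhoods $O\ni p'$ and $O''\ni p''$. Since $\community(p)$ is an open neighbourhood of $p$, unpacking $p'\intertwinedwith p$ (Definition~\ref{defn.intertwined.points}) on the pair $O,\community(p)$ gives $O\between\community(p)$, and unpacking $p\intertwinedwith p''$ on the pair $\community(p),O''$ gives $\community(p)\between O''$. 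Thus $O\between\community(p)\between O''$, and transitivity of $\community(p)$ yields $O\between O''$. As $O$ and $O''$ were arbitrary this shows $p'\intertwinedwith p''$, so $p$ is unconflicted; the contrapositive is then immediate.

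For parts~\ref{item.unconflicted.irregular.2} and~\ref{item.unconflicted.irregular.3} I would reuse the square semitopology of Figure~\ref{fig.square.diagram} together with Lemma~\ref{lemm.square.diagram.not.qr}. There every point is intertwined only with itself, so the hypothesis $p'\intertwinedwith p\intertwinedwith p''$ forces $p'=p=p''$ and the conclusion $p'\intertwinedwith p''$ holds trivially by reflexivity (Lemma~\ref{lemm.intertwined.not.transitive}); hence every point is unconflicted and $\intertwinedwith$ is (vacuously) transitive. At the same time Lemma~\ref{lemm.square.diagram.not.qr} gives $\community(p)=\varnothing$ for every $p$, so no point is quasiregular, and by Lemma~\ref{lemm.wr.r} none is weakly regular or regular. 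This settles part~\ref{item.unconflicted.irregular.3} outright, and any single point of this space --- or, if a familiar Hausdorff witness is preferred, any point of $\mathbb R$ with its usual topology as in Example~\ref{xmpl.wr}(\ref{item.wr.6}) --- witnesses part~\ref{item.unconflicted.irregular.2}.

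I expect part~\ref{item.reg.implies.unconflicted} to be where the only real content lies, and the easy-to-miss move is to pass through $\community(p)$ rather than through $\intertwined{p}$: the point is precisely that $\community(p)$ is \emph{open} (hence a legitimate witness neighbourhood in the definition of $\intertwinedwith$) \emph{and} transitive (hence able to absorb the two intersections into one), whereas $\intertwined{p}$ is in general neither open nor transitive. Parts~\ref{item.unconflicted.irregular.2} and~\ref{item.unconflicted.irregular.3} present no difficulty once one observes that ``intertwined only with itself'' makes unconflictedness vacuous; the only care needed is to quote $\community(p)=\varnothing$ and then invoke Lemma~\ref{lemm.wr.r} to propagate non-quasiregularity up to non-weak-regularity and non-regularity.
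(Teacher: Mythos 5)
Your proposal is correct and follows essentially the same route as the paper: part~\ref{item.reg.implies.unconflicted} is proved by passing both intersections through the topen $\community(p)$ and invoking its transitivity, and parts~\ref{item.unconflicted.irregular.2} and~\ref{item.unconflicted.irregular.3} are witnessed by the square semitopology of Figure~\ref{fig.square.diagram} via Lemma~\ref{lemm.square.diagram.not.qr}. Your closing remark about why $\community(p)$ (open and transitive) rather than $\intertwined{p}$ is the right bridge is exactly the point, and your alternative witness $\mathbb R$ for part~\ref{item.unconflicted.irregular.2} is also valid.
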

\begin{proof}
We consider each part in turn:
\begin{enumerate}
\item
So consider $q\intertwinedwith p \intertwinedwith q'$.
We must show that $q\intertwinedwith q'$, so consider open neighbourhoods $Q\ni q$ and $Q'\ni q'$.
By assumption $p$ is regular, so unpacking Definition~\ref{defn.tn}(\ref{item.regular.point}) $p\in\community(p)\in\topens$.
From
$$
q\intertwinedwith p\intertwinedwith q'
\quad\text{if follows that}\quad
Q\between \community(p)\between Q',
$$
and by transitivity of $\community(p)$ (Definition~\ref{defn.transitive}(\ref{transitive.transitive})) we have $Q\between Q'$ as required.
\item
Consider the semitopology illustrated in Figure~\ref{fig.square.diagram}.
By Lemma~\ref{lemm.square.diagram.not.qr} the point $0$ is trivially unconflicted (because it is intertwined only with itself), but it is also neither quasiregular, weakly regular, nor regular, because its community is the empty set. 
See also Example~\ref{xmpl.boundary.examples}. 
\item
As for the previous part, noting that the same holds of points $1$, $2$, and $3$ in Figure~\ref{fig.square.diagram}.
\qedhere\end{enumerate}
\end{proof}

We can combine Proposition~\ref{prop.unconflicted.irregular} with a previous result Lemma~\ref{lemm.wr.r} to get a precise and attractive relation between being 
\begin{itemize*}
\item
regular (Definition~\ref{defn.tn}(\ref{item.regular.point})), 
\item
weakly regular (Definition~\ref{defn.tn}(\ref{item.weakly.regular.point})), and 
\item
unconflicted (Definition~\ref{defn.conflicted}), 
\end{itemize*}
as follows:
\begin{thrm}
\label{thrm.r=wr+uc}
Suppose $(\ns P,\opens)$ is a semitopology and $p\in\ns P$.
Then the following are equivalent:
\begin{itemize*}
\item
$p$ is regular.
\item
$p$ is weakly regular and unconflicted.
\end{itemize*}
More succinctly we can write: \emph{regular = weakly regular + unconflicted}.\footnote{See also a similar result Theorem~\ref{thrm.regular=qr+sc}, and a discussion in Remark~\ref{rmrk.two.char.r}.}
\end{thrm}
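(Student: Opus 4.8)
The plan is to prove the two directions separately, noting that the forward direction is essentially already assembled from results in hand, while the reverse direction reduces to a single transitivity check on the community.

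For the forward direction I would simply combine two earlier results: if $p$ is regular then $p$ is weakly regular by Lemma~\ref{lemm.wr.r}(\ref{item.r.implies.wr}), and $p$ is unconflicted by Proposition~\ref{prop.unconflicted.irregular}(\ref{item.reg.implies.unconflicted}). So this direction costs no new work — the content of the theorem is really in the converse.

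For the reverse direction, suppose $p$ is weakly regular and unconflicted. Weak regularity gives $p\in\community(p)\in\opens$ (Definition~\ref{defn.tn}(\ref{item.weakly.regular.point})), so $\community(p)$ is a nonempty open neighbourhood of $p$; the only thing missing for regularity is that $\community(p)$ be \emph{topen}, i.e.\ transitive. Here the clean move is to invoke Proposition~\ref{prop.cc.char}, which tells us that a nonempty open set is topen exactly when its points are pairwise intertwined. So I would take arbitrary $p',p''\in\community(p)$ and show $p'\intertwinedwith p''$. The key observation is the inclusion $\community(p)=\interior(\intertwined{p})\subseteq\intertwined{p}$, so both $p'$ and $p''$ lie in $\intertwined{p}$, giving $p\intertwinedwith p'$ and $p\intertwinedwith p''$. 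Using symmetry of $\intertwinedwith$ (Lemma~\ref{lemm.intertwined.not.transitive}) this reads $p'\intertwinedwith p\intertwinedwith p''$, and the unconflictedness hypothesis centred at $p$ (Definition~\ref{defn.conflicted}(\ref{item.unconflicted})) then yields exactly $p'\intertwinedwith p''$. Hence $\community(p)\in\topens$ with $p\in\community(p)$, so $p$ is regular by Definition~\ref{defn.tn}(\ref{item.regular.point}).

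I do not expect a genuine obstacle, since the argument is a short definition-chase; the one step that carries the weight is recognising that the two community points can be fed into the unconflictedness condition precisely because unconflictedness is a property of the \emph{centre} point $p$, and $\community(p)$ sits inside $\intertwined{p}$. (An alternative route would run through Corollary~\ref{corr.regular.is.regular} by verifying $\community(p)=\community(p')$ for all $p'\in\community(p)$, but the direct application of Proposition~\ref{prop.cc.char} is more economical and makes the role of the unconflicted hypothesis transparent.)
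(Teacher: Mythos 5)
Your proposal is correct and matches the paper's proof essentially step for step: the forward direction is the same combination of Lemma~\ref{lemm.wr.r}(\ref{item.r.implies.wr}) and Proposition~\ref{prop.unconflicted.irregular}(\ref{item.reg.implies.unconflicted}), and the reverse direction is the same pairwise-intertwined check on $\community(p)\subseteq\intertwined{p}$ closed off by unconflictedness at $p$. The only cosmetic difference is that you cite Proposition~\ref{prop.cc.char} where the paper cites Lemma~\ref{lemm.three.transitive}, but the former is just the open-set specialisation of the latter, so the argument is identical in substance.
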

\begin{proof}
We prove two implications:
\begin{itemize}
\item
If $p$ is regular then it is weakly regular by Lemma~\ref{lemm.wr.r} and unconflicted by Proposition~\ref{prop.unconflicted.irregular}(\ref{item.reg.implies.unconflicted}). 
\item
Suppose $p$ is weakly regular and unconflicted.
By Definition~\ref{defn.tn}(\ref{item.weakly.regular.point}) $p\in\community(p)$ and by Lemma~\ref{lemm.three.transitive} it would suffice to show that $q\intertwinedwith q'$ for any $q,q'\in\community(p)$.

So consider $q,q'\in\community(p)$.
Now by Definition~\ref{defn.tn}(\ref{item.tn}) $\community(p)=\interior(\intertwined{p})$ so in particular $q,q'\in\intertwined{p}$.
Thus $q\intertwinedwith p\intertwinedwith q'$, and since $p$ is unconflicted $q\intertwinedwith q'$ as required.
\qedhere\end{itemize}
\end{proof}

We can use Theorem~\ref{thrm.r=wr+uc} to derive simple global well-behavedness conditions on spaces, as follows: 
\begin{corr}
Suppose $(\ns P,\opens)$ is a semitopology.
Then:
\begin{enumerate*}
\item
If the $\intertwinedwith$ relation is transitive (i.e. if every point is unconflicted) then a point is regular if and only if it is weakly regular.
\item
If every point is weakly regular (i.e. if $p\in\community(p)$ always) then a point is regular if and only if it is unconflicted.
\end{enumerate*} 
\end{corr}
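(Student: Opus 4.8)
The plan is to read off both equivalences directly from Theorem~\ref{thrm.r=wr+uc}, which states that $p$ is regular if and only if $p$ is weakly regular and unconflicted. Each of the two parts is obtained by promoting one of these two conjuncts to a standing global hypothesis, so that it holds automatically for every point and therefore drops out of the biconditional, leaving the other conjunct as the sole criterion for regularity.

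For part~1, I would assume that $\intertwinedwith$ is transitive, which by Definition~\ref{defn.conflicted}(\ref{item.unconflicted}) means precisely that every point is unconflicted. Then for any fixed $p$ the conjunct ``$p$ is unconflicted'' in Theorem~\ref{thrm.r=wr+uc} is satisfied for free, so the theorem specialises to ``$p$ is regular if and only if $p$ is weakly regular''. Concretely I would spell out the two directions: regular $\Rightarrow$ weakly regular is immediate (Lemma~\ref{lemm.wr.r}(\ref{item.r.implies.wr}), equivalently the easy half of the theorem), while weakly regular $\Rightarrow$ regular follows because weak regularity together with the ambient unconflictedness is exactly the hypothesis of the backward direction of Theorem~\ref{thrm.r=wr+uc}.

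For part~2 I would argue symmetrically, assuming now that every point is weakly regular (i.e.\ $p\in\community(p)$ always). For fixed $p$ the conjunct ``$p$ is weakly regular'' is then automatic, and Theorem~\ref{thrm.r=wr+uc} specialises to ``$p$ is regular if and only if $p$ is unconflicted''. Here regular $\Rightarrow$ unconflicted is Proposition~\ref{prop.unconflicted.irregular}(\ref{item.reg.implies.unconflicted}) (again the easy half of the theorem), and unconflicted $\Rightarrow$ regular follows since unconflictedness plus the ambient weak regularity is the backward hypothesis of Theorem~\ref{thrm.r=wr+uc}.

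I do not expect any genuine obstacle, since the mathematical content is carried entirely by Theorem~\ref{thrm.r=wr+uc} and the corollary is just its two natural one-sided specialisations. The only thing I would take care to verify is the bookkeeping: that the global hypotheses (``$\intertwinedwith$ transitive'' and ``every point weakly regular'') are exactly the universally quantified forms of the pointwise conditions appearing in the theorem, so that the relevant conjunct really can be discharged uniformly for every $p$ and the biconditional collapses as claimed.
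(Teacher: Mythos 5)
Your proposal is correct and matches the paper's proof, which simply derives both parts as immediate specialisations of Theorem~\ref{thrm.r=wr+uc}; you have just spelled out the bookkeeping that the paper leaves implicit.
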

\begin{proof}
Immediate from Theorem~\ref{thrm.r=wr+uc}. 
\end{proof}

\jamiesubsection{The boundary of $\intertwined{p}$}
\label{subsect.boundary.intertwined}

In this short Subsection we ask what points on the topological boundary of $\intertwined{p}$ can look like:
\begin{nttn}
\label{nttn.boundary}
Suppose $(\ns P,\opens)$ is a semitopology and $P\subseteq\ns P$.
\begin{enumerate*}
\item
As standard, we define 
$$
\f{boundary}(P) = P\setminus\interior(P)
$$ 
and we call this the \deffont{boundary of $P$}.
\item
In the case that $P=\intertwined{p}$ for $p\in\ns P$ then 
$$
\f{boundary}(\intertwined{p})=\intertwined{p}\setminus\interior(\intertwined{p})=\intertwined{p}\setminus\community(p).
$$
\end{enumerate*}
\end{nttn}

Points in the boundary of $\intertwined{p}$ are \emph{not} regular points:
\begin{prop}
\label{prop.boundary.points.not.regular}
\label{prop.char.boundary}
Suppose $(\ns P,\opens)$ is a semitopology and $p,q\in\ns P$ and $q\in\intertwined{p}$.
Then:
\begin{enumerate*}
\item\label{item.char.boundary.1}
If $q$ is regular then $q\in\community(p)=\interior(\intertwined{p})$.
\item\label{item.char.boundary.2}
If $q$ is regular then $q\notin\boundary(\intertwined{p})$.
\item\label{item.char.boundary.3}
If $q\in\boundary(\intertwined{p})$ then $q$ is either conflicted or not weakly regular (or both).
\end{enumerate*}
\end{prop}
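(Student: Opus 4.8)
The plan is to treat part~\ref{item.char.boundary.1} as the heart of the matter and then obtain parts~\ref{item.char.boundary.2} and~\ref{item.char.boundary.3} as quick consequences, since $\boundary(\intertwined{p})=\intertwined{p}\setminus\interior(\intertwined{p})=\intertwined{p}\setminus\community(p)$ by Notation~\ref{nttn.boundary} and $\community(p)=\interior(\intertwined{p})$ by Definition~\ref{defn.tn}(\ref{item.tn}). The whole point is to show that a regular point $q$ lying in $\intertwined{p}$ cannot sit on the boundary, i.e.\ must actually sit in the open interior $\community(p)$.

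For part~\ref{item.char.boundary.1}, I would argue that $\community(q)$ is an open neighbourhood of $q$ that is contained in $\intertwined{p}$, which forces $q\in\interior(\intertwined{p})=\community(p)$. First, from $q\in\intertwined{p}$ and symmetry of $\intertwinedwith$ (Lemma~\ref{lemm.intertwined.not.transitive}) I get $p\in\intertwined{q}$. Since $q$ is regular, Corollary~\ref{corr.corr.pKp} gives $\intertwined{q}=\intertwined{p'}$ for every $p'\in\community(q)$. Now for any such $p'$ I have $p\in\intertwined{q}=\intertwined{p'}$, hence $p\intertwinedwith p'$, hence (symmetry again) $p'\in\intertwined{p}$. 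This shows $\community(q)\subseteq\intertwined{p}$. Because $q$ is regular, $\community(q)$ is open and contains $q$ (Definition~\ref{defn.tn}(\ref{item.regular.point})), so $q$ lies in an open subset of $\intertwined{p}$, which is exactly the condition $q\in\interior(\intertwined{p})=\community(p)$.

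Part~\ref{item.char.boundary.2} is then immediate: if $q$ is regular, part~\ref{item.char.boundary.1} puts $q\in\community(p)$, so $q\notin\intertwined{p}\setminus\community(p)=\boundary(\intertwined{p})$. For part~\ref{item.char.boundary.3} I would argue by contraposition: suppose $q$ is both unconflicted and weakly regular. By Theorem~\ref{thrm.r=wr+uc} (regular $=$ weakly regular $+$ unconflicted) this makes $q$ regular, and since $q\in\boundary(\intertwined{p})\subseteq\intertwined{p}$ the hypothesis of part~\ref{item.char.boundary.2} applies and yields $q\notin\boundary(\intertwined{p})$, contradicting the assumption. Hence $q$ must be conflicted or not weakly regular.

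The only nontrivial step is the inclusion $\community(q)\subseteq\intertwined{p}$ in part~\ref{item.char.boundary.1}, and I expect the main obstacle to be invoking exactly the right earlier result there. The temptation would be to try to show directly that every point of $\community(q)$ is intertwined with $p$ by juggling open neighbourhoods, but that quickly runs into the non-transitivity of $\intertwinedwith$; the clean route is instead to use Corollary~\ref{corr.corr.pKp}, whose characterisation of regularity as ``$\intertwined{q}=\intertwined{p'}$ for all $p'\in\community(q)$'' is precisely what converts $p\in\intertwined{q}$ into $p\in\intertwined{p'}$ uniformly across the community. Once that substitution is in place, the rest is just symmetry and the definition of interior.
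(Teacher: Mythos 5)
Your proof is correct and follows essentially the same strategy as the paper's: exhibit $\community(q)$ as an open neighbourhood of $q$ contained in $\intertwined{p}$, conclude $q\in\interior(\intertwined{p})$, and then derive parts~2 and~3 by unwinding $\boundary(\intertwined{p})=\intertwined{p}\setminus\community(p)$ and taking the contrapositive via Theorem~\ref{thrm.r=wr+uc}. The only (harmless) difference is in how the containment $\community(q)\subseteq\intertwined{p}$ is justified: the paper uses unconflictedness of $q$ (Theorem~\ref{thrm.r=wr+uc} plus Lemma~\ref{lemm.unconflicted.char}) to get $\intertwined{q}\subseteq\intertwined{p}$ and then $\community(q)\subseteq\intertwined{q}$, whereas you invoke Corollary~\ref{corr.corr.pKp} to transport $p\in\intertwined{q}$ to $p\in\intertwined{p'}$ for each $p'\in\community(q)$ --- both routes are valid and of comparable length.
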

\begin{proof}
We consider each part in turn:
\begin{enumerate}
\item
Suppose $q$ is regular.
By Theorem~\ref{thrm.r=wr+uc} $q$ is unconflicted, so that by Lemma~\ref{lemm.unconflicted.char}(\ref{item.p'.in.unconflicted.p}) $\intertwined{q}\subseteq\intertwined{p}$; and also $q$ is weakly regular, so that $q\in\community(q)\in\opens$ and $\community(q)\subseteq\intertwined{q}\subseteq\intertwined{p}$.
Thus $\community(q)$ is an open neighbourhood of $q$ that is contained in $\intertwined{p}$ and thus $q\in\interior(\intertwined{p})$ as required.
\item
This just repeats part~\ref{item.char.boundary.2} of this result, recalling from Notation~\ref{nttn.boundary} that $q\in\boundary(\intertwined{p})$ if and only if $q\notin\interior(\intertwined{p})$.
\item
This is just the contrapositive of part~\ref{item.char.boundary.2}, combined with Theorem~\ref{thrm.r=wr+uc}.
\qedhere\end{enumerate}
\end{proof}

\begin{figure}
\vspace{-1em}
\centering
\includegraphics[width=0.32\columnwidth,trim={50 20 50 20},clip]{diagrams/counterexample-1.pdf}
\includegraphics[width=0.32\columnwidth,trim={50 20 50 20},clip]{diagrams/012a.pdf}
\includegraphics[width=0.30\columnwidth,trim={50 20 50 20},clip]{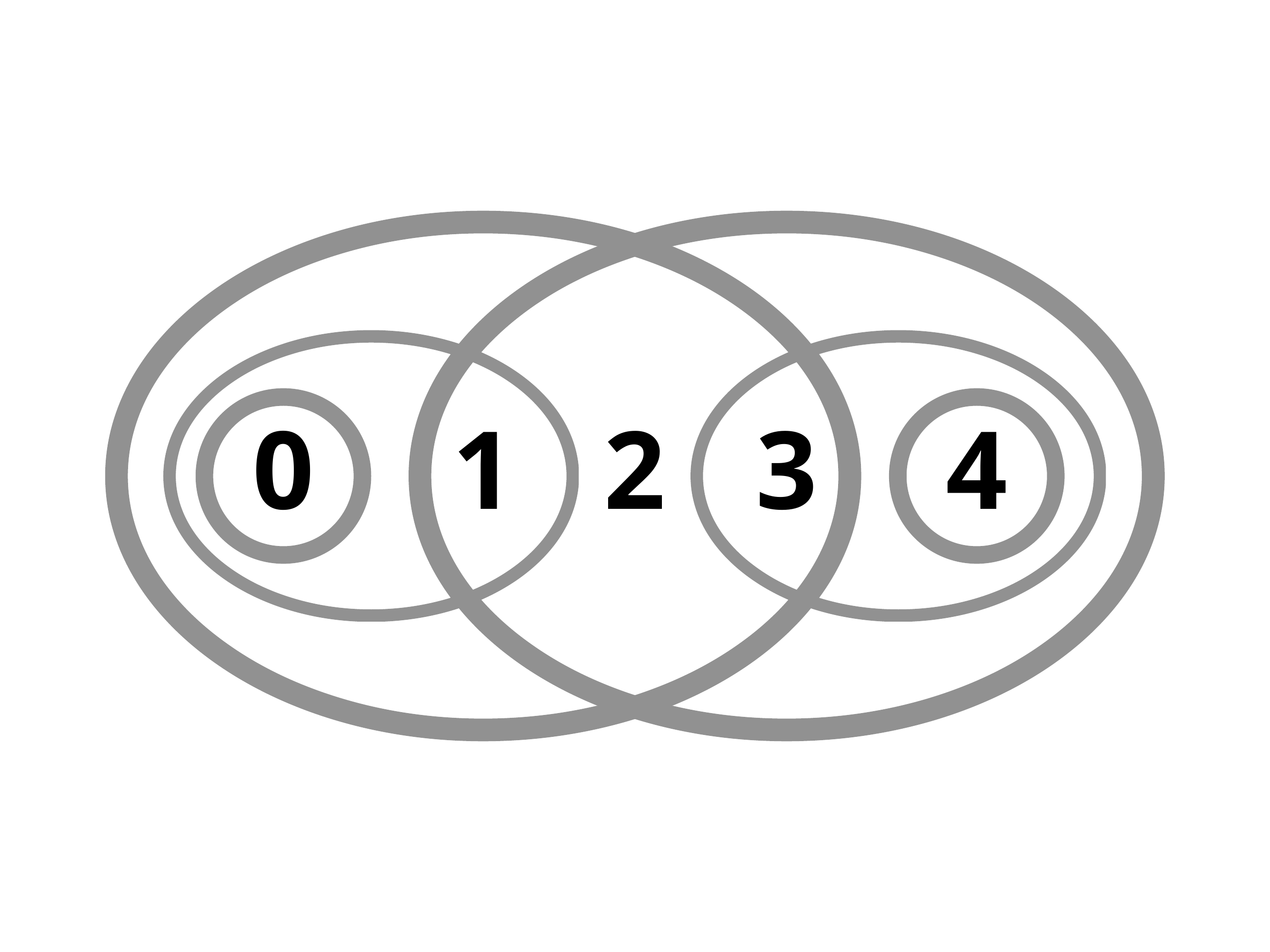}
\caption{Examples of boundary points (Example~\ref{xmpl.boundary.examples}).}
\label{fig.boundaries}
\end{figure}

\begin{xmpl}
\label{xmpl.boundary.examples}
Proposition~\ref{prop.char.boundary}(\ref{item.char.boundary.3}) tells us that points on the topological boundary of $\intertwined{p}$ are either conflicted, or not weakly regular, or perhaps both.
It remains to show that all options are possible.
It suffices to provide examples: 
\begin{enumerate*}
\item\label{item.boundary.examples.1}
In Figure~\ref{fig.boundaries} (left-hand diagram) the point $\ast$ is on the boundary of $\intertwined{1}=\{\ast,1\}$.
It is unconflicted (being intertwined just with itself and $1$), and not weakly regular (since $\ast\notin\community(\ast)=\{1\}$). 
\item\label{item.boundary.examples.2}
In Figure~\ref{fig.boundaries} (middle diagram) the point $1$ is on the boundary of $\intertwined{0}=\{0,1\}$.
It is conflicted (since $0\intertwinedwith 1\intertwinedwith 2$ yet $0\notintertwinedwith 2$) and it is weakly regular (since $1\in\community(1)=\{0,1,2\}$).\footnote{This semitopology is also in Figure~\ref{fig.012}.  We reproduce it here for the reader's convenience so that the examples are side-by-side.  
}
\item\label{item.boundary.examples.3}
In Figure~\ref{fig.boundaries} (right-hand diagram) the point $2$ is conflicted (since $1\intertwinedwith 2\intertwinedwith 3$ yet $1\notintertwinedwith 3$) and it is not weakly regular, or even quasiregular (since $\community(2)=\interior(\{1,2,3\})=\varnothing$).
\end{enumerate*} 
\end{xmpl}

We consider the special case of \emph{regular} spaces (we will pick this thread up again in Subsection~\ref{subsect.boundaries.of.closed.sets} after we have built more machinery):
\begin{corr}
\label{corr.bgp}
Suppose $(\ns P,\opens)$ is a semitopology and $p\in\ns P$. 
Then:
\begin{enumerate*}
\item\label{item.bgp.1}
If the set $\intertwined{p}$ is regular, then $\boundary(\intertwined{p})=\varnothing$ and $\intertwined{p}$ is clopen (closed and open) and transitive.
\item\label{item.bgp.2}
If $\ns P$ is a regular space (so every point in it is regular) then $\ns P$ partitions into clopen transitive components given by $\{\intertwined{p} \mid p\in\ns P\}$.
\end{enumerate*}
\end{corr}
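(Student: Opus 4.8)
The plan is to derive everything in part~\ref{item.bgp.1} from the single observation that regularity of all points of $\intertwined{p}$ collapses its boundary, and then to read off part~\ref{item.bgp.2} as a partition statement. For part~\ref{item.bgp.1}, I would first note that $p\intertwinedwith p$ by reflexivity (Lemma~\ref{lemm.intertwined.not.transitive}), so $p\in\intertwined{p}$; since $\intertwined{p}$ is assumed to be a regular set (Definition~\ref{defn.tn}(\ref{item.regular.S}), i.e.\ every one of its points is regular), in particular $p$ itself is regular. The crux is then to apply Proposition~\ref{prop.char.boundary}(\ref{item.char.boundary.1}) to each $q\in\intertwined{p}$: as every such $q$ is regular and lies in $\intertwined{p}$, that result gives $q\in\interior(\intertwined{p})=\community(p)$, whence $\intertwined{p}\subseteq\interior(\intertwined{p})$. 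The reverse inclusion being automatic, I obtain $\intertwined{p}=\interior(\intertwined{p})=\community(p)$, and everything else drops out: $\boundary(\intertwined{p})=\intertwined{p}\setminus\interior(\intertwined{p})=\varnothing$ by Notation~\ref{nttn.boundary}; $\intertwined{p}$ is open as it equals an interior (Lemma~\ref{lemm.interior.open}) and closed by Proposition~\ref{prop.intertwined.as.closure}(\ref{intertwined.p.closed}), hence clopen; and $\intertwined{p}=\community(p)\in\topens$ (as $p$ is regular) is transitive by Definition~\ref{defn.transitive}(\ref{transitive.cc}).

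For part~\ref{item.bgp.2}, regularity of $\ns P$ means every point is regular, so every point of every $\intertwined{p}$ is regular and each $\intertwined{p}$ is therefore a regular set to which part~\ref{item.bgp.1} applies; thus each $\intertwined{p}$ is clopen, transitive, and equal to $\community(p)$. It remains to verify that $\{\intertwined{p}\mid p\in\ns P\}$ is a partition. Covering is immediate from $p\in\intertwined{p}$. For the disjoint-or-equal condition I would suppose $\intertwined{p}\between\intertwined{p'}$, rewrite this as $\community(p)\between\community(p')$ using the equalities from part~\ref{item.bgp.1}, and invoke Proposition~\ref{prop.community.partition} (legitimate since $p,p'$ are regular) to conclude $\community(p)=\community(p')$, i.e.\ $\intertwined{p}=\intertwined{p'}$. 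Covering together with disjoint-or-equal is exactly the partition property.

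The main obstacle --- really the only step that is not bookkeeping --- is recognising that Proposition~\ref{prop.char.boundary}(\ref{item.char.boundary.1}) is precisely the lever that turns the hypothesis ``$\intertwined{p}$ is regular'' into the inclusion $\intertwined{p}\subseteq\interior(\intertwined{p})$, forcing the closed set $\intertwined{p}$ to coincide with its interior. Once the equality $\intertwined{p}=\community(p)$ is established, clopenness, transitivity, the vanishing boundary, and the partition property are all immediate consequences of results already proved, so no genuinely new computation is needed.
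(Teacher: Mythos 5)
Your proposal is correct, and part~\ref{item.bgp.1} follows exactly the paper's route: Proposition~\ref{prop.char.boundary}(\ref{item.char.boundary.1}) applied to each regular $q\in\intertwined{p}$ yields $\intertwined{p}=\interior(\intertwined{p})=\community(p)$, from which openness, closedness (Proposition~\ref{prop.intertwined.as.closure}(\ref{intertwined.p.closed})), transitivity, and the empty boundary all follow. For part~\ref{item.bgp.2} you deviate slightly: the paper invokes Theorem~\ref{thrm.r=wr+uc} to get that every point is unconflicted and then deduces the disjoint-or-equal property from that (a step which itself needs a line of unpacking via Lemma~\ref{lemm.unconflicted.char}), whereas you rewrite $\intertwined{p}\between\intertwined{p'}$ as $\community(p)\between\community(p')$ and apply Proposition~\ref{prop.community.partition} directly; your version is, if anything, the more self-contained of the two, since it leans on an already-packaged partition result rather than leaving an ``it follows that'' to the reader.
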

\begin{proof}
\leavevmode
\begin{enumerate}
\item
By Proposition~\ref{prop.char.boundary} $\intertwined{p}=\interior(\intertwined{p})$, so by Lemma~\ref{lemm.interior.open} $\intertwined{p}$ is open.
By Proposition~\ref{prop.intertwined.as.closure}(\ref{intertwined.p.closed}) $\intertwined{p}$ is closed.
By Definition~\ref{defn.tn}(\ref{item.regular.point}) $p\in\community(p)=\interior(\intertwined{p})\in\topens$.
It follows that $\intertwined{p}$ is (topen and therefore) transitive.
\item
By part~\ref{item.bgp.1} of this result each $\intertwined{p}$ is a clopen transitive set.
Using Theorem~\ref{thrm.r=wr+uc} every point is unconflicted and it follows that if $\intertwined{p}\between\intertwined{p'}$ then $\intertwined{p}=\intertwined{p'}$. 
\qedhere\end{enumerate}
\end{proof}

\jamiesubsection{The intertwined preorder}

\jamiesubsubsection{Definition and properties}

\begin{rmrk}
Recall the \emph{specialisation preorder} on points from topology, defined by 
$$
p\leq p'
\quad\text{when}\quad
\closure{p}\subseteq\closure{p'}.
$$
In words: we order points $p$ by subset inclusion on their closure $\closure{p}$.

This can also be defined on semitopologies of course, but we will also find a similar preorder interesting, which is defined using $\intertwined{p}$ instead of $\closure{p}$ (Definition~\ref{defn.intertwined.preorder}).
Recall that:
\begin{itemize*}
\item
$\closure{p}$ is a closed set and is equal to the intersection of all the closed sets containing $p$, and 
\item
$\intertwined{p}$ is also a closed set (Proposition~\ref{prop.intertwined.as.closure}(\ref{intertwined.p.closed}))
and it is the intersection of all the closed neighbourhoods of $p$ (closed sets with an interior that contains $p$; see Definition~\ref{defn.cn} and Proposition~\ref{prop.intertwined.as.closure}(\ref{intertwined.as.closure.closed})).
\end{itemize*}
\end{rmrk}

\begin{defn}
\label{defn.intertwined.preorder}
Suppose $(\ns P,\opens)$ is a semitopology.
\begin{enumerate}
\item
Define the \deffont[intertwined preorder $p\leqk p'$]{intertwined preorder}\index{$p\leqk p'$ (intertwined preorder on points)} on points $p,p'\in\ns P$ by:
$$
p\leqk p'
\quad\text{when}\quad
\intertwined{p}\subseteq\intertwined{p'}.
$$
As standard, we may write $p'\geqk p$ when $p\leqk p'$ (pronounced `$p'$ is intertwined-less / intertwined-greater than $p$').

Calling this the `intertwined preorder' does not refer to the ordering being intertwined in any sense; it just means that we order on $\intertwined{p}$ (which is read `intertwined-$p$').
\item\label{item.intertwined-bounded}
Call $(\ns P,\opens)$ an \deffont{$\intertwinedwith$-complete semitopology}\index{intertwined-complete semitopology} (read `\deffont{intertwined-complete}') when 
for every subset $P\subseteq\ns P$ that is totally ordered by $\leqk$, 
there exists some $p\in\ns P$ such that $\intertwined{p}\subseteq \bigcap_i\{\intertwined{p}\mid p\in P\}$.
\end{enumerate}
\end{defn}

\begin{rmrk}
\label{rmrk.intertwinedwith-bounded.natural}
Being $\intertwinedwith$-complete (Definition~\ref{defn.intertwined.preorder}(\ref{item.intertwined-bounded})) is a plausible well-behavedness condition, because 
finite semitopologies are $\intertwinedwith$-complete, since a descending chain of subsets of a finite set is terminating.
Real systems are finite (though participants in the system may not be able to access all of them, so they may look infinite `from the inside'), so assuming that a semitopology is $\intertwinedwith$-complete is a reasonable abstraction of actual finiteness.
\end{rmrk}

\begin{rmrk}
There is also the \deffont[community preorder $p\leq_K p'$]{community preorder}\index{$p\leq_K p'$ (community preorder on points)} defined such that $p\leq_K p'$ when $\community(p)\subseteq\community(p')$, which is related to $p\leq p'$ via the fact that by definition $\community(p)=\interior(\intertwined{p})$, so that $\leq_K$ is a coarser relation (meaning: it relates more points).
There is an argument that this would sit more nicely with the condition $q\in\community(p)$ in Lemma~\ref{lemm.weakly.regular.community}, but ordering on $\community(p)$ would relate all points with empty community, e.g. all of the points in Figure~\ref{fig.square.diagram}, and would slightly obfuscate the parallel with the specialisation preorder. 
This strikes us as unintuitive, so we prefer to preorder on $\intertwined{p}$. 
\end{rmrk}

\begin{lemm}
\label{lemm.weakly.regular.community}
Suppose $(\ns P,\opens)$ is a semitopology and $p,q\in\ns P$. 
Then:
\begin{enumerate*}
\item\label{item.weakly.regular.community.1}
If $q\in\community(p)$ then $q\leqk p$ (meaning that $\intertwined{q}\subseteq\intertwined{p}$).
\item\label{item.weakly.regular.community.2}
If $q\in\community(p)$ then $\community(q)\subseteq \community(p)$.
\end{enumerate*}
\end{lemm}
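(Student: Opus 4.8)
The plan is to reduce both parts to the closed-neighbourhood characterisation of $\intertwined{p}$ recorded in Proposition~\ref{prop.intertwined.as.closure}(\ref{intertwined.as.closure.closed}), which states that if $C$ is a closed neighbourhood of a point then the intertwined set of that point is contained in $C$. The whole argument hinges on recognising that $\intertwined{p}$ itself plays the role of such a closed neighbourhood for any point in its community.

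For part~\ref{item.weakly.regular.community.1}, I would start from $q\in\community(p)=\interior(\intertwined{p})$ (Definition~\ref{defn.tn}(\ref{item.tn})). Now $\intertwined{p}$ is closed by Proposition~\ref{prop.intertwined.as.closure}(\ref{intertwined.p.closed}), and we have just said $q\in\interior(\intertwined{p})$; by Definition~\ref{defn.cn}(\ref{item.closed.neighbourhood.of.p}) these two facts together say precisely that $\intertwined{p}$ is a closed neighbourhood of $q$. Applying the final sentence of Proposition~\ref{prop.intertwined.as.closure}(\ref{intertwined.as.closure.closed}) with $C=\intertwined{p}$ then yields $\intertwined{q}\subseteq\intertwined{p}$, which is exactly $q\leqk p$ by Definition~\ref{defn.intertwined.preorder}.

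Part~\ref{item.weakly.regular.community.2} then falls out immediately: from $\intertwined{q}\subseteq\intertwined{p}$ (part~\ref{item.weakly.regular.community.1}) I apply the interior operator and use its monotonicity (Corollary~\ref{corr.interior.monotone}) to get $\interior(\intertwined{q})\subseteq\interior(\intertwined{p})$, i.e. $\community(q)\subseteq\community(p)$.

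I do not expect a genuine obstacle here; the result is essentially a repackaging of Proposition~\ref{prop.intertwined.as.closure}. The one point that must be stated carefully rather than glossed over is the observation that $q\in\interior(\intertwined{p})$ combined with the closedness of $\intertwined{p}$ makes $\intertwined{p}$ literally a \emph{closed neighbourhood of $q$} in the sense of Definition~\ref{defn.cn}(\ref{item.closed.neighbourhood.of.p}) --- this is what licenses the appeal to Proposition~\ref{prop.intertwined.as.closure}(\ref{intertwined.as.closure.closed}). Everything after that is monotonicity bookkeeping.
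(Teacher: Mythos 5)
Your proof is correct and takes essentially the same approach as the paper: both arguments reduce part~\ref{item.weakly.regular.community.1} to the closed-neighbourhood characterisation of $\intertwined{q}$ in Proposition~\ref{prop.intertwined.as.closure}(\ref{intertwined.as.closure.closed}) and then obtain part~\ref{item.weakly.regular.community.2} by monotonicity of $\interior(\text{-})$. The only (harmless, and mildly streamlining) difference is your choice of witnessing closed neighbourhood of $q$: you take $\intertwined{p}$ itself, whereas the paper takes $\closure{\community(p)}$ and then needs Lemma~\ref{lemm.closure.community.subset} to conclude $\closure{\community(p)}\subseteq\intertwined{p}$, a step your version avoids.
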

\begin{proof}
We consider each part in turn:
\begin{enumerate}
\item
Suppose $q\in\community(p)$ and recall from Lemma~\ref{lemm.two.intertwined}(\ref{item.two.intertwined.1})
that $\community(p)\in\opens$, which means that $\closure{\community(p)}$ is a closed neighbourhood of $q$.
We use Proposition~\ref{prop.intertwined.as.closure}(\ref{item.intertwined.as.intersection.of.closures}) and Lemma~\ref{lemm.closure.community.subset}:\footnote{If the reader prefers a proof by concrete calculations, it runs as follows:
Suppose $p'\in\community(p)$, so that in particular $p'\intertwinedwith p$.
We wish to prove that $\intertwined{p'}\subseteq\intertwined{p}$.
So consider $p''\intertwinedwith p'$; we will show that $p''\intertwinedwith p$, i.e. that every pair of open neighbourhoods of $p''$ and $p$ must intersect.
Consider a pair of open neighbourhoods $p''\in O''\in\opens$ and $p\in O\in\opens$.
We note that $O''\between \community(p)$, because $p'\in\community(p)\in\opens$ and $p''\intertwinedwith p'$.
Choose $q\in\community(p)\cap O''$. 
Now $q\intertwinedwith p$ and $q\in O''$ and $p\in O$, and we conclude that $O''\between O$ as required.
}
$$
\intertwined{q} 
\stackrel{P\ref{prop.intertwined.as.closure}(\ref{item.intertwined.as.intersection.of.closures})}{\subseteq} 
\closure{\community(p)} 
\stackrel{L\ref{lemm.closure.community.subset}}{\subseteq} 
\intertwined{p}.
$$
\item
Suppose $q\in\community(p)$.
By part~\ref{item.weakly.regular.community.1} of this result and Definition~\ref{defn.intertwined.preorder} $\intertwined{q}\subseteq\intertwined{p}$.
It is a fact that then $\interior(\intertwined{q})\subseteq\interior(\intertwined{p})$.
By Definition~\ref{defn.tn}(\ref{item.tn}) therefore $\community(q)\subseteq\community(p)$ as required.
\qedhere\end{enumerate}
\end{proof}

In the rest of this Subsection we develop corollaries of Lemma~\ref{lemm.weakly.regular.community} (and compare this with Proposition~\ref{prop.community.partition}):
\begin{corr}
\label{corr.community.intersects.community}
Suppose $(\ns P,\opens)$ is a semitopology and $q,q'\in\ns P$.
Then:
\begin{enumerate*}
\item\label{item.community.intersects.community.1}
If $\community(q)\between\community(q')$ then $q\intertwinedwith q'$.
\item\label{item.community.intersects.community.2}
If $q$ and $q'$ are weakly regular (so that $q\in\community(q)$ and $q'\in\community(q')$) then
$$
q\intertwinedwith q'
\quad\text{if and only if}\quad
\community(q)\between\community(q').
$$
\end{enumerate*}
\end{corr}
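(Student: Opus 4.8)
The plan is to handle the two parts essentially independently, noting first that the right-to-left implication of part~\ref{item.community.intersects.community.2} is literally the content of part~\ref{item.community.intersects.community.1} (and needs no weak regularity), so that genuine work is confined to part~\ref{item.community.intersects.community.1} and to the left-to-right implication of part~\ref{item.community.intersects.community.2} --- and the latter is a one-liner.

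For part~\ref{item.community.intersects.community.1} I would argue through a shared witness point. Assuming $\community(q)\between\community(q')$, I would pick some $r\in\community(q)\cap\community(q')$ and then chain two facts about $r$. On the one hand, since $\community(q)=\interior(\intertwined{q})\subseteq\intertwined{q}$ by Definition~\ref{defn.tn}(\ref{item.tn}), the witness satisfies $r\in\intertwined{q}$, i.e.\ $q\intertwinedwith r$; by symmetry of $\intertwinedwith$ (Lemma~\ref{lemm.intertwined.not.transitive}) this reads $q\in\intertwined{r}$. On the other hand, from $r\in\community(q')$ I would invoke Lemma~\ref{lemm.weakly.regular.community}(\ref{item.weakly.regular.community.1}) to obtain the containment $\intertwined{r}\subseteq\intertwined{q'}$. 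Combining, $q\in\intertwined{r}\subseteq\intertwined{q'}$, which is exactly $q\intertwinedwith q'$.

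For part~\ref{item.community.intersects.community.2}, the backward direction is just part~\ref{item.community.intersects.community.1}. For the forward direction I would use weak regularity directly: by Definition~\ref{defn.tn}(\ref{item.weakly.regular.point}) we have $q\in\community(q)$ and $q'\in\community(q')$, and both communities are open by Lemma~\ref{lemm.two.intertwined}(\ref{item.two.intertwined.1}), so $\community(q)$ is an open neighbourhood of $q$ and $\community(q')$ is an open neighbourhood of $q'$. Unpacking the hypothesis $q\intertwinedwith q'$ from Definition~\ref{defn.intertwined.points}(\ref{item.p.intertwinedwith.p'}) at this particular pair of open neighbourhoods yields $\community(q)\between\community(q')$ immediately.

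I do not expect a serious obstacle, since the whole result is really a repackaging of Lemma~\ref{lemm.weakly.regular.community}. The one place demanding care is part~\ref{item.community.intersects.community.1}, where the naive move ``$q\intertwinedwith r\intertwinedwith q'$, hence $q\intertwinedwith q'$'' is illegal because $\intertwinedwith$ is not transitive (Lemma~\ref{lemm.intertwined.not.transitive}). The trick is to avoid that by converting the \emph{second} intertwining into the honest set containment $\intertwined{r}\subseteq\intertwined{q'}$ via Lemma~\ref{lemm.weakly.regular.community}; here the asymmetry between the two sides --- $r$ genuinely lies \emph{in the community} of $q'$, not merely intertwined with $q'$ --- is precisely what supplies the extra strength needed to push $q$ into $\intertwined{q'}$.
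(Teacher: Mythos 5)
Your proposal is correct and follows essentially the same route as the paper: both arguments pick a witness $r\in\community(q)\cap\community(q')$, use Lemma~\ref{lemm.weakly.regular.community}(\ref{item.weakly.regular.community.1}) to convert membership in a community into the containment $\intertwined{r}\subseteq\intertwined{q'}$, and conclude $q\in\intertwined{r}\subseteq\intertwined{q'}$; part~\ref{item.community.intersects.community.2} is handled identically (unpacking Definition~\ref{defn.intertwined.points} at the two open communities for the forward direction, citing part~\ref{item.community.intersects.community.1} for the converse). Your closing remark about why the naive transitivity step is illegal and how the lemma circumvents it is a fair gloss on what the paper's proof is implicitly doing.
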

\begin{proof} 
We consider each part in turn:
\begin{enumerate}
\item
Suppose $r\in\community(q)\cap\community(q')$.
Then $\intertwined{r}\subseteq\intertwined{q}\cap\intertwined{q'}$ using Lemma~\ref{lemm.weakly.regular.community}(\ref{item.weakly.regular.community.1}).
But $q\in\intertwined{r}$, so $q\in\intertwined{q'}$, and thus $q\intertwinedwith q'$.
\item
If $q$ and $q'$ are weakly regular and $q\intertwinedwith q'$ then $\community(q)\between\community(q')$ follows from Definition~\ref{defn.intertwined.points}(\ref{item.p.intertwinedwith.p'}).
The result follows from this and from part~\ref{item.community.intersects.community.1} of this result.
\qedhere\end{enumerate}
\end{proof}

Theorem~\ref{thrm.K-regular} is somewhat reminiscent of the \emph{hairy ball theorem}:\footnote{This famous result states that every tangent vector field on a sphere of even dimension --- this being the surface of a ball of odd dimension --- must vanish at at least one point.  Intuitively, if we consider a `hairy ball' in three-dimensional space and we try to comb its hairs so they all lie smoothly flat (with no discontinuities in direction), then at least one of the hairs is pointing straight up (i.e. its projection onto the ball is zero).  A nice combinatorial proof is in \cite{doi:10.1080/00029890.2004.11920120}.} 
\begin{thrm}
\label{thrm.K-regular}
Suppose $(\ns P,\opens)$ is an $\intertwinedwith$-complete quasiregular semitopology.\footnote{Definition~\ref{defn.tn}(\ref{item.quasiregular.point}): a semitopology that is $\intertwinedwith$-complete and whose every point has a nonempty community.}
Then:
\begin{enumerate*}
\item\label{item.K-regular.1}
For every $p\in\ns P$ there exists some regular $q\in\community(p)$.
\item\label{item.K-regular.2}
$\ns P$ contains a regular point.
\end{enumerate*}
\end{thrm}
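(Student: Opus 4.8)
The plan is to prove part~\ref{item.K-regular.1} and then read off part~\ref{item.K-regular.2} immediately: for any $p\in\ns P$ (and $\ns P$ is nonempty, as otherwise quasiregularity is vacuous and there is nothing to find), a regular point inside $\community(p)$ is in particular a regular point of $\ns P$. So fix $p$; by quasiregularity $\community(p)\neq\varnothing$. The strategy is to descend in the intertwined preorder $\leqk$ until we reach a $\leqk$-minimal point \emph{inside} $\community(p)$, and then extract a genuinely regular point from that minimal point's community via Corollary~\ref{corr.corr.pKp}.

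First I would run a Zorn's Lemma argument on the family $\mathcal F=\{\intertwined{q}\mid q\in\community(p)\}$ ordered by $\subseteq$, aiming for a minimal element; $\mathcal F$ is nonempty since $\community(p)\neq\varnothing$. The crux is producing a lower bound for a chain $\{\intertwined{q_i}\}$ with all $q_i\in\community(p)$. By $\intertwinedwith$-completeness (Definition~\ref{defn.intertwined.preorder}(\ref{item.intertwined-bounded})) there is some $q^\ast\in\ns P$ with $\intertwined{q^\ast}\subseteq\bigcap_i\intertwined{q_i}$ --- but $q^\ast$ need not lie in $\community(p)$, so $\intertwined{q^\ast}$ need not belong to $\mathcal F$. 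This is the main obstacle. I would resolve it by descending once more: from $\intertwined{q^\ast}\subseteq\intertwined{q_i}$ and monotonicity of interiors (Corollary~\ref{corr.interior.monotone}) we get $\community(q^\ast)\subseteq\community(q_i)$, and since $q_i\in\community(p)$ Lemma~\ref{lemm.weakly.regular.community}(\ref{item.weakly.regular.community.2}) gives $\community(q_i)\subseteq\community(p)$; hence $\community(q^\ast)\subseteq\community(p)$. As $q^\ast$ is quasiregular, $\community(q^\ast)\neq\varnothing$, so I pick $r^\ast\in\community(q^\ast)\subseteq\community(p)$, and by Lemma~\ref{lemm.weakly.regular.community}(\ref{item.weakly.regular.community.1}) $\intertwined{r^\ast}\subseteq\intertwined{q^\ast}\subseteq\bigcap_i\intertwined{q_i}$, so $\intertwined{r^\ast}\in\mathcal F$ is the required lower bound. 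Zorn then yields $q\in\community(p)$ with $\intertwined{q}$ minimal in $\mathcal F$.

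It remains to extract a regular point. Pick $r\in\community(q)$, nonempty by quasiregularity, with $r\in\community(q)\subseteq\community(p)$ by Lemma~\ref{lemm.weakly.regular.community}(\ref{item.weakly.regular.community.2}). By Lemma~\ref{lemm.weakly.regular.community}(\ref{item.weakly.regular.community.1}) $\intertwined{r}\subseteq\intertwined{q}$, so minimality of $\intertwined{q}$ in $\mathcal F$ forces $\intertwined{r}=\intertwined{q}$, whence $\community(r)=\interior(\intertwined{r})=\interior(\intertwined{q})=\community(q)$. In particular $r\in\community(q)=\community(r)$, so $r$ is weakly regular; and for every $r'\in\community(r)=\community(q)$ the same minimality argument (applied to $r'\in\community(p)$ with $\intertwined{r'}\subseteq\intertwined{q}$) gives $\intertwined{r'}=\intertwined{q}=\intertwined{r}$. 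By Corollary~\ref{corr.corr.pKp}, $r$ is regular, and $r\in\community(p)$, establishing part~\ref{item.K-regular.1}.

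I expect the Zorn setup to be routine once the lower-bound construction is in hand; the genuinely delicate point --- and the reason $\intertwinedwith$-completeness alone does not suffice --- is that the lower bound handed to us by completeness may escape $\community(p)$ and must be pulled back inside by descending into its (nonempty, by quasiregularity) community. This interplay of completeness with quasiregularity is precisely the \emph{hairy ball} flavour of the result: we cannot pin down a single distinguished point, yet every local community is forced to contain a regular witness.
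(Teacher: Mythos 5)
Your proof is correct and follows essentially the same route as the paper's: Zorn's lemma powered by $\intertwinedwith$-completeness produces a $\leqk$-minimal point, quasiregularity lets you step once more into its community, and Corollary~\ref{corr.corr.pKp} closes the argument. The only (harmless) divergence is that you run Zorn on $\{\intertwined{q}\mid q\in\community(p)\}$ rather than on the full down-set $\{p'\mid p'\leqk p\}$ as the paper does, which is why you need the extra descent through $\community(q^\ast)$ to pull the completeness witness back into $\community(p)$ --- a step the paper's larger poset renders unnecessary.
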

\begin{proof}
We consider each part in turn:
\begin{enumerate}
\item
Consider the subset $\{p'\in\ns P \mid p'\leqk p\}\subseteq\ns P$ ordered by $\leqk $.
Using Zorn's lemma (on $\geqk$), this contains a $\leqk$-minimal element $q'$.
By assumption of quasiregularity $\community(q')\neq\varnothing$, so choose $q\in\community(q')$.
By Lemma~\ref{lemm.weakly.regular.community}(\ref{item.weakly.regular.community.1}) $\intertwined{q}\subseteq\intertwined{q'}$ and by $\leqk$-minimality $\intertwined{q}=\intertwined{q'}$ and it follows that $q\in\community(q)$.
Thus $q$ is weakly regular.
Applying similar reasoning to $p'\in\community(q)$ we deduce that $\intertwined{p'}=\intertwined{q}$, and thus $\community(p')=\community(q)$, for every $p'\in\community(q)$, and so by Corollary~\ref{corr.corr.pKp} $q$ is regular.
\item
Choose any $p\in\ns P$, and use part~\ref{item.K-regular.2} of this result.
\qedhere\end{enumerate}
\end{proof}

\begin{rmrk}
We care about the existence of regular points as these are the ones that are well-behaved with respect to our semitopological model. 
A semitopology with a regular point is one that --- in some idealised mathematical sense --- is capable of some collaboration somewhere to take some action.

So Theorem~\ref{thrm.K-regular} can be read as a guarantee that, provided the semitopology is $\intertwinedwith$-complete and quasiregular, there exists somebody, somewhere, who can make sense of their local network and progress to act.
This a mathematical guarantee and not an engineering one, much as is the hairy ball theorem of which the result reminds us. 
\end{rmrk} 

\jamiesubsubsection{Application to quasiregular conflicted spaces}

In Proposition~\ref{prop.unconflicted.irregular}(\ref{item.unconflicted.irregular.3}) we saw an example of an unconflicted irregular space (illustrated in Figure~\ref{fig.square.diagram}): this is a space in which every point is unconflicted but not weakly regular.
In this subsection we consider a dual case, of a conflicted quasiregular space: a space in which every point is conflicted yet quasiregular.

One question is: does such a creature even exist?
The answer is: 
\begin{itemize*}
\item
no, in the finite case (Corollary~\ref{corr.no.finite.wr.c}); and 
\item
yes, in the infinite case (Proposition~\ref{prop.conflicted.weakly.regular}).
\end{itemize*}

\begin{prop}
\label{prop.weakly.regular.to.regular}
Suppose $(\ns P,\opens)$ is a finite quasiregular semitopology (so $\ns P$ is finite and every $p\in\ns P$ is quasiregular) --- in particular this holds if the semitopology is weakly regular.
Then:
\begin{enumerate*}
\item
For every $p\in\ns P$ there exist some regular $q\in\community(p)$. 
\item
$\ns P$ contains a regular point.
\end{enumerate*}
In words we can say: every finite quasiregular semitopology contains a regular point.
\end{prop}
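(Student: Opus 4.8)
The plan is to read this off Theorem~\ref{thrm.K-regular}, whose two conclusions are word-for-word what we must prove, but under the hypothesis that $(\ns P,\opens)$ is $\intertwinedwith$-complete rather than finite. So the only real work is to discharge that hypothesis from finiteness, after which the Proposition is a direct corollary. (The parenthetical ``in particular this holds if the semitopology is weakly regular'' needs nothing extra: weak regularity implies quasiregularity by Lemma~\ref{lemm.wr.r}(\ref{item.wr.implies.qr}), so the weakly regular case is subsumed.)

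First I would observe, following Remark~\ref{rmrk.intertwinedwith-bounded.natural}, that every finite semitopology is $\intertwinedwith$-complete. Indeed, let $P\subseteq\ns P$ be (non-trivially) totally ordered by $\leqk$; then $\{\intertwined{p}\mid p\in P\}$ is a chain of subsets of the finite set $\ns P$ ordered by inclusion, hence itself finite, and so it has a least element $\intertwined{p_0}$ for some $p_0\in P\subseteq\ns P$. This $p_0$ witnesses the condition in Definition~\ref{defn.intertwined.preorder}(\ref{item.intertwined-bounded}), since $\intertwined{p_0}=\bigcap\{\intertwined{p}\mid p\in P\}$, in particular $\intertwined{p_0}\subseteq\bigcap\{\intertwined{p}\mid p\in P\}$. (The degenerate empty-chain case is handled exactly as it is implicitly in Theorem~\ref{thrm.K-regular}, requiring only that $\ns P$ be nonempty.) Hence $(\ns P,\opens)$ is $\intertwinedwith$-complete.

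With this in hand, both parts follow immediately from Theorem~\ref{thrm.K-regular}: part~1 is Theorem~\ref{thrm.K-regular}(\ref{item.K-regular.1}), and part~2 is Theorem~\ref{thrm.K-regular}(\ref{item.K-regular.2}) (choose any $p\in\ns P$ and apply part~1). I expect no genuine obstacle here; the statement is essentially a finite repackaging of the earlier theorem, and all the content is concentrated in the one-line verification of $\intertwinedwith$-completeness.

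If a self-contained proof were preferred — one avoiding the appeal to Zorn's lemma buried inside Theorem~\ref{thrm.K-regular} — I would instead argue directly, replacing Zorn by the fact that a nonempty finite poset has minimal elements. Take a $\leqk$-minimal element $q'$ of the nonempty finite set $\{p'\in\ns P\mid p'\leqk p\}$; quasiregularity gives some $q\in\community(q')$; Lemma~\ref{lemm.weakly.regular.community}(\ref{item.weakly.regular.community.1}) yields $\intertwined{q}\subseteq\intertwined{q'}$, so $\leqk$-minimality of $q'$ forces $\intertwined{q}=\intertwined{q'}$ and hence $q\in\community(q')=\interior(\intertwined{q'})=\interior(\intertwined{q})=\community(q)$, making $q$ weakly regular. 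The same reasoning applied to each $p''\in\community(q)$ gives $\intertwined{p''}=\intertwined{q}$, whence $q$ is regular by Corollary~\ref{corr.corr.pKp}. Finally $\community(q)=\community(q')\subseteq\community(p)$ because $q'\leqk p$, placing the regular point $q$ inside $\community(p)$. The one point to watch in this route is precisely this last bookkeeping step, ensuring $q$ lands in $\community(p)$ rather than merely in $\community(q')$.
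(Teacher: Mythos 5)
Your proposal is correct and is essentially the paper's own proof: the paper disposes of this Proposition in one line, ``From Theorem~\ref{thrm.K-regular}, since `is finite' implies `is $\intertwinedwith$-complete','' which is exactly your main route, and your optional direct argument replacing Zorn's lemma by minimality in a finite poset matches the alternative the paper mentions in a footnote. Your explicit verification that finiteness gives $\intertwinedwith$-completeness (a finite chain of subsets has a least element) is a welcome filling-in of what the paper only asserts in Remark~\ref{rmrk.intertwinedwith-bounded.natural}.
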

\begin{proof}
From Theorem~\ref{thrm.K-regular}, since `is finite' implies `is $\intertwinedwith$-complete'.%
\footnote{The proof of Theorem~\ref{thrm.K-regular} uses Zorn's lemma.  A longer, direct proof of Proposition~\ref{prop.weakly.regular.to.regular} is also possible, by explicit induction on size of sets.}
\end{proof}

\begin{corr}
\label{corr.no.finite.wr.c}
There exists no finite quasiregular conflicted semitopology (i.e. a semitopology with finitely many points, each of which is quasiregular but conflicted).
\end{corr}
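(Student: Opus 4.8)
The plan is to derive a contradiction directly from the two immediately preceding results, Proposition~\ref{prop.weakly.regular.to.regular} and Proposition~\ref{prop.unconflicted.irregular}(\ref{item.reg.implies.unconflicted}), read against the definition of a conflicted space.

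First I would suppose, for contradiction, that $(\ns P,\opens)$ is a finite quasiregular conflicted semitopology; that is, $\ns P$ is finite, every $p\in\ns P$ is quasiregular, and every $p\in\ns P$ is conflicted. Since the space is conflicted it has a conflicted point, and a conflicted point by Definition~\ref{defn.conflicted}(\ref{item.conflicted.point}) requires witnesses $p'$ and $p''$, so in particular $\ns P\neq\varnothing$. I can therefore apply Proposition~\ref{prop.weakly.regular.to.regular}, which states that every finite quasiregular semitopology contains a regular point. Hence there exists a regular $q\in\ns P$.

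Next I would invoke Proposition~\ref{prop.unconflicted.irregular}(\ref{item.reg.implies.unconflicted}), which tells us that every regular point is unconflicted; thus $q$ is unconflicted. But by the standing assumption that the space is conflicted, every point of $\ns P$ — and in particular $q$ — is conflicted. Since being conflicted and being unconflicted are mutually exclusive by Definition~\ref{defn.conflicted}(\ref{item.conflicted.point}\&\ref{item.unconflicted}), this is the desired contradiction, and so no such semitopology can exist.

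I expect no real obstacle: the corollary is essentially the conjunction of the two propositions ``finite $+$ quasiregular $\Rightarrow$ a regular point exists'' and ``regular $\Rightarrow$ unconflicted''. The only point that needs a word of care is the degenerate empty space, which is ruled out precisely because the presence of a conflicted point forces $\ns P$ to be nonempty, so that Proposition~\ref{prop.weakly.regular.to.regular} genuinely produces a point against which to contradict conflictedness.
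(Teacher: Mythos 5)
Your proof is correct and is essentially identical to the paper's own argument: apply Proposition~\ref{prop.weakly.regular.to.regular} to obtain a regular point and then Proposition~\ref{prop.unconflicted.irregular}(\ref{item.reg.implies.unconflicted}) to see it is unconflicted, contradicting conflictedness. Your extra remark about the empty space is a small point of care the paper omits, but it does not change the argument.
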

\begin{proof}
Suppose $(\ns P,\opens)$ is finite and quasiregular.
By Proposition~\ref{prop.weakly.regular.to.regular} it contains a regular $q\in\ns P$ and by Proposition~\ref{prop.unconflicted.irregular}(\ref{item.reg.implies.unconflicted}) $q$ is unconflicted. 
\end{proof}

\begin{figure}
\centering
\includegraphics[width=0.6\columnwidth]{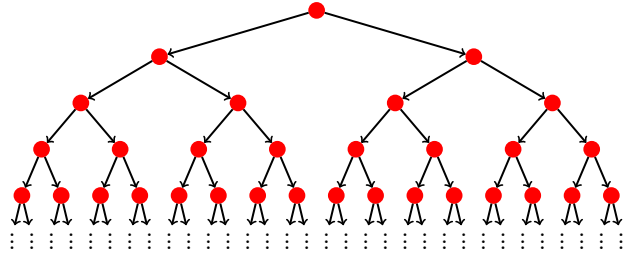}
\caption{A weakly regular, conflicted space (Proposition~\ref{prop.conflicted.weakly.regular}); the opens are the down-closed sets}
\label{fig.weakly-regular.conflicted}
\end{figure}

Corollary~\ref{corr.no.finite.wr.c} applies to finite semitopologies because these are necessarily $\intertwinedwith$-complete.
The infinite case is different, as we shall now observe:
\begin{prop}
\label{prop.conflicted.weakly.regular}
There exists an infinite quasiregular --- indeed it is also weakly regular --- conflicted semitopology $(\ns P,\opens)$.

In more detail:
\begin{itemize*}
\item
every $p\in\ns P$ is weakly regular (so $p\in\community(p)\in\opens$; see Definition~\ref{defn.tn}(\ref{item.weakly.regular.point})) yet 
\item
every $p\in\ns P$ is conflicted (so $\intertwinedwith$ is not transitive at $p$; Definition~\ref{defn.conflicted}(\ref{item.conflicted.point})).
\end{itemize*}
Furthermore: $\ns P$ is a topology
and contains no topen sets.
\end{prop}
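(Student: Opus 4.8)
The plan is to realise the required space as an Alexandrov-style topology on an infinite binary tree, exactly as suggested by Figure~\ref{fig.weakly-regular.conflicted}. Concretely, I would take $\ns P$ to be the set of nodes of the infinite complete binary tree (say, finite binary strings, with the root the empty string and the children of a node $s$ being $s0$ and $s1$), and declare a set to be open precisely when it is \emph{down-closed}, meaning closed under passing to children and hence to arbitrary descendants: if $p\in O$ then the whole subtree rooted at $p$ lies in $O$. Since down-closed sets are closed under arbitrary unions and arbitrary intersections and include $\varnothing$ and $\ns P$, this is genuinely a topology, which discharges the final `Furthermore' clause; and $\ns P$ is plainly infinite.

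The engine of the argument is a clean description of $\intertwinedwith$. The minimal open neighbourhood of a node $p$ is the subtree $\downarrow p$ of its descendants (including $p$), so by unwinding Definition~\ref{defn.intertwined.points} I obtain $p\intertwinedwith p'$ if and only if $\downarrow p\between\downarrow p'$, that is, if and only if $p$ and $p'$ are \emph{comparable} (one is a descendant of the other, since two subtrees meet exactly when one contains the other). Consequently $\intertwined{p}$ is the set of nodes comparable with $p$, namely the descendants of $p$ together with its ancestors.

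With this in hand the three point-properties fall out quickly. For weak regularity: $\downarrow p$ is open, contains $p$, and consists entirely of nodes comparable with $p$, so $\downarrow p\subseteq\intertwined{p}$; hence $p\in\downarrow p\subseteq\interior(\intertwined{p})=\community(p)$ (indeed $\community(p)=\downarrow p$), so every $p$ is weakly regular in the sense of Definition~\ref{defn.tn}(\ref{item.weakly.regular.point}). For conflictedness: the two children $p0,p1$ of $p$ are each comparable with $p$ but are incomparable siblings, so $p0\intertwinedwith p\intertwinedwith p1$ while $\neg(p0\intertwinedwith p1)$, witnessing that $p$ is conflicted per Definition~\ref{defn.conflicted}(\ref{item.conflicted.point}). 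For the absence of topens: any nonempty open $O$ contains some node $p$ and therefore, being down-closed, contains its two incomparable children; thus $O$ is not a set of pairwise-intertwined points, and by Proposition~\ref{prop.cc.char} it cannot be topen.

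The one genuinely delicate choice, and the step I would be most careful about, is the \emph{orientation} of the order. It is tempting to make open sets closed under ancestors instead, but then the root would lie in the minimal neighbourhood of every node, so every pair of points would be intertwined, the whole space would collapse to a single topen, and nothing would be conflicted. The construction only works with the orientation in which the minimal neighbourhood of $p$ is the \emph{downward} subtree: this simultaneously forces nearby incomparable points (the siblings, which produce conflict and destroy topens) while still letting each point sit inside an open subset of its own $\intertwined{p}$ (which yields weak regularity). Once this orientation is fixed, the remaining verifications are routine unwindings of the definitions.
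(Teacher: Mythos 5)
Your construction is exactly the paper's: the infinite binary tree $[01]^*$ with opens the down-closed (descendant-closed) sets, the same computation $\intertwined{w}=w_\geq\cup w_\leq$ with $\community(w)=w_\geq$, and the same sibling witness $w0\intertwinedwith w\intertwinedwith w1$, $\neg(w0\intertwinedwith w1)$ for conflictedness. The only (harmless) variation is the final step: you rule out topens directly via Proposition~\ref{prop.cc.char} (every nonempty open contains two incomparable siblings), whereas the paper deduces it from Theorem~\ref{thrm.max.cc.char} and the absence of regular points; both are correct.
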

\begin{proof}
Take $\ns P=[01]^*$ to be the set of words (possibly empty finite lists) from $0$ and $1$.
For $w,w'\in\ns P$ write $w\leq w'$ when $w$ is an initial segment of $w'$ and define 
$$
w_\geq = \{w' \mid w\leq w'\}
\quad\text{and}\quad
w_\leq = \{w' \mid w'\leq w\}.
$$
Let open sets be generated as (possibly empty) unions of the $w_\geq$.
This space is illustrated in Figure~\ref{fig.weakly-regular.conflicted}; open sets are down-closed subsets. 

The reader can check that $\neg(w0\intertwinedwith w1)$, because $w0_\geq\cap w1_\geq=\varnothing$, and that $w\intertwinedwith w'$ when $w\leq w'$ or $w'\leq w$.
It follows from the above that 
$$
\intertwined{w}=w_\geq\cup w_\leq
\quad\text{and}\quad 
\community(w)=\interior(\intertwined{w})=w_\geq,
$$
and since $w\in w_\geq$ every $w$ is weakly regular. 
Yet every $w$ is also conflicted, because $w0\intertwinedwith w \intertwinedwith w1$ yet $\neg(w0\intertwinedwith w1)$. 

This example is a topology, because an intersection of down-closed sets is still down-closed.
It escapes the constraints of Theorem~\ref{thrm.K-regular} by not being $\intertwinedwith$-complete.
It contains no topen sets because if it did contain some topen $\atopen$ then by Theorem~\ref{thrm.max.cc.char}(\ref{char.p.regular}\&\ref{char.some.topen}) there would exist a regular $p\in\atopen$ in $\ns P$.
\end{proof}

\jamiesubsubsection{(Un)conflicted points and boundaries of closed sets}
\label{subsect.boundaries.of.closed.sets}

Recall from Definition~\ref{defn.cn} that a closed neighbourhood is a closed set with a nonempty interior, and recall that $\intertwined{p}$ --- the set of points intertwined with $p$ from Definition~\ref{defn.intertwined.points} --- is characterised using closed neighbourhoods in Proposition~\ref{prop.closure.intertwined}, as the intersection of all closed neighbourhoods that have $p$ in their interior.

This leads to the question of whether the theory of $\intertwined{p}$ might \emph{be} a theory of closed neighbourhoods.
The answer seems to be no: $\intertwined{p}$ has its own distinct character, as the results and counterexamples below will briefly illustrate. 

For instance: in view of Proposition~\ref{prop.closure.intertwined} characterising $\intertwined{p}$ as an intersection of closed neighbourhoods of $p$, might it be the case that for $C$ a closed neighbourhood, $C=\bigcup\{\intertwined{p} \mid p\in\interior(C)\}$.
In words: is a closed neighbourhood $C$ the union of the points intertwined with its interior? 
This turns out to be only half true:
\begin{lemm}
\label{lemm.ab12}
Suppose $(\ns P,\opens)$ is a semitopology and $C\in\closed$ is a closed neighbourhood.
Then: 
\begin{enumerate*}
\item\label{item.ab12.1}
$\bigcup\{\intertwined{p} \mid p\in\interior(C)\}\subseteq C$.
\item\label{item.ab12.2}
This subset inclusion may be strict: it is possible for $p\in\ns P$ to be on the boundary of a closed neighbourhood $C$, but not intertwined with any point in that neighbourhood's interior.
This is true even if $\ns P$ is a regular space (meaning that every $p\in\ns P$ is regular).
\end{enumerate*}
\end{lemm}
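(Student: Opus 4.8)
The plan is to handle part~\ref{item.ab12.1} in a single line from results already available, and to spend essentially all the effort on a counterexample for part~\ref{item.ab12.2}.

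For part~\ref{item.ab12.1} I would argue pointwise. Fix $p\in\interior(C)$. Since $C\in\closed$ and $p\in\interior(C)$, Definition~\ref{defn.cn}(\ref{item.closed.neighbourhood.of.p}) says $C$ is a closed neighbourhood \emph{of} $p$. Then Proposition~\ref{prop.intertwined.as.closure}(\ref{intertwined.as.closure.closed}) — specifically its concluding ``in particular'' clause, that $\intertwined{p}\subseteq C$ whenever $C$ is a closed neighbourhood of $p$ — gives $\intertwined{p}\subseteq C$ immediately. Taking the union over all $p\in\interior(C)$ yields $\bigcup\{\intertwined{p}\mid p\in\interior(C)\}\subseteq C$, as required.

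For part~\ref{item.ab12.2} the guiding idea is that in a regular space the sets $\intertwined{p}$ are clopen and partition $\ns P$ (Corollary~\ref{corr.bgp}(\ref{item.bgp.2})), whereas closedness is a componentwise condition. So if I build a $C$ that cuts \emph{across} two components — having nonempty interior in one component but meeting a second component only in a nonempty closed set of empty interior — then any point $p$ in that second sliver lies on $\boundary(C)$ while its entire component $\intertwined{p}$ is disjoint from $\interior(C)$, so $p$ is intertwined with no interior point. Concretely I would take $\ns P=\{0,1,2\}$ with $\opens$ generated by $\{1\}$, $\{0,1\}$, and $\{2\}$, i.e. $\opens=\{\varnothing,\{1\},\{0,1\},\{2\},\{1,2\},\{0,1,2\}\}$; this is the disjoint union of a Sierpi\'nski component $\{0,1\}$ (cf.\ Example~\ref{xmpl.sk}) and a singleton component $\{2\}$. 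One checks $\intertwined{0}=\intertwined{1}=\{0,1\}$ and $\intertwined{2}=\{2\}$, both clopen and topen, so every point is regular. Now set $C=\{0,2\}$: it is closed (its complement $\{1\}$ is open), $\interior(C)=\{2\}\neq\varnothing$ so it is a closed neighbourhood, and $\boundary(C)=\{0\}$. The point $0$ is on the boundary, yet $0\notintertwinedwith 2$ since $\{0,1\}\ni 0$ and $\{2\}\ni 2$ are disjoint open neighbourhoods; hence $0$ is intertwined with no point of $\interior(C)$, and $\bigcup\{\intertwined{p}\mid p\in\interior(C)\}=\{2\}\subsetneq\{0,2\}=C$.

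The only real work is the counterexample, and the main obstacle is conceptual rather than computational: one must see \emph{why} regularity fails to force a boundary point of a closed neighbourhood into its interior. The resolution is exactly the componentwise behaviour of closure on the clopen partition $\{\intertwined{p}\}$ — a single closed neighbourhood can look like a nonempty-interior set on one component and like a nonempty closed set of empty interior on another, and the latter is precisely what a boundary point living in a foreign component requires. Once this is understood, verifying the small example is routine; the one check worth stating is that $\{0,1\}$ is transitive (every open set meeting $\{0,1\}$ contains the point $1$, so any two such sets intersect), which is what makes $0$ and $1$ genuinely regular and keeps $\ns P$ a regular space.
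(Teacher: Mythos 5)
Your proposal is correct and takes essentially the same approach as the paper: part~1 is the same one-line appeal to Proposition~\ref{prop.intertwined.as.closure}(\ref{intertwined.as.closure.closed}), and your counterexample for part~2 is isomorphic to the paper's (renaming $0\mapsto\ast$, $1\mapsto 2$, $2\mapsto 1$ turns your space into the one generated by $\{1\}$, $\{2\}$, $\{\ast,2\}$ with $C=\{1,\ast\}$).
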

\begin{proof}
We consider each part in turn:
\begin{enumerate}
\item
If $p\in\interior(C)$ then $\intertwined{p}\subseteq C$ by Proposition~\ref{prop.intertwined.as.closure}(\ref{intertwined.as.closure.closed}).
\item
We provide a counterexample, as illustrated in Figure~\ref{fig.Ast12} (left-hand diagram): 
\begin{itemize*}
\item
$\ns P=\{\ast, 1, 2\}$.
\item
Open sets are generated by $\{1\}$, $\{2\}$, and $\{\ast,2\}$.
\item
We set $p=\ast$ and $C=\{1,\ast\}$.
\end{itemize*}
Then the reader can check that $\interior(C)=\{2\}$ $\intertwined{\ast}=\{\ast,2\}$ and $\ast\notintertwinedwith 2$ and every point in $\ns P$ is regular.
\qedhere\end{enumerate}
\end{proof}

\begin{figure}
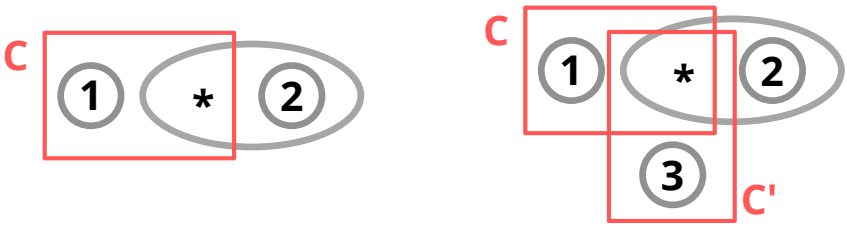

\vspace{-2em}
\centering
\subcaptionbox{Regular boundary point of closed neighbourhood that is not intertwined with its interior (Lemma~\ref{lemm.ab12}(\ref{item.ab12.2}))}{\includegraphics[width=0.4\columnwidth,trim={50 60 50 50},clip]{diagrams/Ast12\greyprint.pdf}}
\qquad
\subcaptionbox{Regular point in kissing set of closed neighbourhoods, not intertwined with interiors (Corollary~\ref{corr.ab123}(\ref{item.ab123.2}))}{\includegraphics[width=0.4\columnwidth,trim={50 20 50 50},clip]{diagrams/Ast12b\greyprint.pdf}}
\caption{Two counterexamples}
\label{fig.Ast12}
\end{figure}

\begin{defn}
Suppose $(\ns P,\opens)$ is a semitopology and $P,P'\subseteq\ns P$.
Then
define 
$$
\f{kiss}(P,P')=\boundary(P)\cap \boundary(P')
$$ 
and call this the \deffont{kissing set of $P$ and $P'$}.
\end{defn}

\begin{lemm}
\label{lemm.kissing.conflict}
Suppose $(\ns P,\opens)$ is a semitopology.
Then the following are equivalent:
\begin{itemize*}
\item
$p$ is conflicted.
\item
There exist $q,q'\in\ns P$ such that $q\notintertwinedwith q'$ and $p\in\kiss(\intertwined{q},\intertwined{q'})$.
\item
There exist $q,q'\in\ns P$ such that $q\notintertwinedwith q'$ and $p\in\intertwined{q}\cap\intertwined{q'}$.
\end{itemize*}
\end{lemm}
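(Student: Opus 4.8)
The plan is to prove a cycle of implications (1)$\Rightarrow$(2)$\Rightarrow$(3)$\Rightarrow$(1), where (1), (2), (3) denote the three bulleted conditions in the order listed. Two of the three links are routine unpackings of definitions; the real content sits in the step (1)$\Rightarrow$(2), which must \emph{upgrade} a plain point of intersection to a point lying on \emph{both} topological boundaries.

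First I would dispatch the easy links. For (2)$\Rightarrow$(3), recall that $\intertwined{q}$ is closed by Proposition~\ref{prop.intertwined.as.closure}(\ref{intertwined.p.closed}), so by Notation~\ref{nttn.boundary} we have $\boundary(\intertwined{q})=\intertwined{q}\setminus\interior(\intertwined{q})\subseteq\intertwined{q}$, and likewise for $q'$; hence membership in $\kiss(\intertwined{q},\intertwined{q'})=\boundary(\intertwined{q})\cap\boundary(\intertwined{q'})$ immediately yields $p\in\intertwined{q}\cap\intertwined{q'}$, carrying along the same witnesses $q\notintertwinedwith q'$. For (3)$\Rightarrow$(1), note that $p\in\intertwined{q}\cap\intertwined{q'}$ unpacks, using symmetry of $\intertwinedwith$ (Lemma~\ref{lemm.intertwined.not.transitive}), to $q\intertwinedwith p\intertwinedwith q'$; together with $q\notintertwinedwith q'$ this is exactly Definition~\ref{defn.conflicted}(\ref{item.conflicted.point}) of $p$ being conflicted, taking $p'=q$ and $p''=q'$.

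The crux is (1)$\Rightarrow$(2). Suppose $p$ is conflicted with witnesses $p',p''$, so $p'\intertwinedwith p\intertwinedwith p''$ and $\neg(p'\intertwinedwith p'')$; set $q=p'$ and $q'=p''$, so that $q\notintertwinedwith q'$ and $p\in\intertwined{q}\cap\intertwined{q'}$. It remains to show $p\in\boundary(\intertwined{q})$ and $p\in\boundary(\intertwined{q'})$, equivalently $p\notin\interior(\intertwined{q})=\community(q)$ and $p\notin\interior(\intertwined{q'})=\community(q')$. I would argue by contradiction: if $p\in\community(q)$ then by Lemma~\ref{lemm.weakly.regular.community}(\ref{item.weakly.regular.community.1}) we get $\intertwined{p}\subseteq\intertwined{q}$; but $q'\in\intertwined{p}$ (since $p\intertwinedwith q'$), so $q'\in\intertwined{q}$, giving $q\intertwinedwith q'$ and contradicting $q\notintertwinedwith q'$. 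The symmetric argument, using $q\in\intertwined{p}$ since $p\intertwinedwith q$, shows $p\notin\community(q')$. Hence $p$ lies on both boundaries and condition (2) holds.

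The main obstacle is precisely this last step: conflictedness only places $p$ \emph{inside} the two intertwined sets, whereas condition (2) demands that it sit on their topological \emph{boundaries}. The engine bridging the gap is the monotonicity fact $p\in\community(q)\Rightarrow\intertwined{p}\subseteq\intertwined{q}$ from Lemma~\ref{lemm.weakly.regular.community}(\ref{item.weakly.regular.community.1}), which converts ``$p$ is interior to $\intertwined{q}$'' into a containment incompatible with $q$ and $q'$ being non-intertwined. Everything else is bookkeeping with the symmetry of $\intertwinedwith$ and the identity $\boundary(\intertwined{q})=\intertwined{q}\setminus\community(q)$ from Notation~\ref{nttn.boundary}.
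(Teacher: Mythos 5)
Your proof is correct and follows essentially the same route as the paper: the same cycle of implications, with the two easy links handled by unpacking definitions and the key step $p\notin\community(q)$ established by the same contradiction via Lemma~\ref{lemm.weakly.regular.community}(\ref{item.weakly.regular.community.1}). Nothing to add.
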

\begin{proof}
We prove a cycle of implications:
\begin{itemize}
\item
\emph{Suppose $p$ is conflicted.}\quad

Then there exist $q,q'\in\ns P$ such that $q\intertwinedwith p\intertwinedwith q'$ yet $q\notintertwinedwith q'$.
Rephrasing this, we obtain that $p\in\intertwined{q}\cap\intertwined{q'}$.

We need to check that $p\notin\community(q)$ and $p\notin\community(q')$.
We prove $p\notin\community(q)$ by contradiction ($p\notin\community(q')$ follows by identical reasoning).
Suppose $p\in\community(q)$.
Then by Lemma~\ref{lemm.weakly.regular.community}(\ref{item.weakly.regular.community.1}) $\intertwined{p}\subseteq\intertwined{q}$.
But $q'\in\intertwined{p}$, so $q'\in\intertwined{q}$, so $q'\intertwinedwith q$, contradicting our assumption.
\item
\emph{Suppose $q\notintertwinedwith q'$ and $p\in\boundary(\intertwined{q})\cap\boundary(\intertwined{q'})$.}

Then certainly $p\in\intertwined{q}\cap\intertwined{q'}$.
\item
\emph{Suppose $q\notintertwinedwith q'$ and $p\in\intertwined{q}\cap\intertwined{q'}$.}

Then $q\intertwinedwith p\intertwinedwith q'$ and $q\notintertwinedwith q'$, which is precisely what it means to be conflicted.
\qedhere\end{itemize}
\end{proof}

We can look at Definition~\ref{defn.conflicted} and Lemma~\ref{lemm.kissing.conflict} and conjecture that a point $p$ is conflicted if and only if it is in the kissing set of a pair of distinct closed sets.
Again, this is half true:
\begin{corr}
\label{corr.ab123}
Suppose $(\ns P,\opens)$ is a semitopology and $p\in\ns P$.
Then:
\begin{enumerate*}
\item\label{item.ab123.1}
If $p$ is conflicted then there exist a pair of closed sets such that $p\in\kiss(C,C')$.
\item\label{item.ab123.2}
The reverse implication need not hold: it is possible for $p$ to be in the kissing set of a pair of closed sets $C$ and $C'$, yet $p$ is unconflicted.
This is even possible if the space is regular (meaning that every point in the space is regular, including $p$) and $C$ and $C'$ are closed neighbourhoods.
\end{enumerate*}
\end{corr}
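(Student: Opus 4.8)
The plan is to treat the two parts separately, with essentially all the substance in the second.

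For part~\ref{item.ab123.1}, the plan is to read the conclusion directly off Lemma~\ref{lemm.kissing.conflict}. That lemma tells us that $p$ being conflicted is equivalent to the existence of $q,q'\in\ns P$ with $q\notintertwinedwith q'$ and $p\in\kiss(\intertwined{q},\intertwined{q'})$. Since by Proposition~\ref{prop.intertwined.as.closure}(\ref{intertwined.p.closed}) both $\intertwined{q}$ and $\intertwined{q'}$ are closed, it suffices to set $C=\intertwined{q}$ and $C'=\intertwined{q'}$, which gives $p\in\kiss(C,C')$ with $C,C'$ closed. This is immediate and needs no further work.

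For part~\ref{item.ab123.2}, the plan is to exhibit a counterexample, built by \emph{doubling} the example already used in Lemma~\ref{lemm.ab12}(\ref{item.ab12.2}). There, $\ast$ lies on the boundary of a single closed neighbourhood $C=\{1,\ast\}$ whose interior $\{1\}$ is \emph{not} intertwined with $\ast$; the essential point is that this boundary membership does not force $\ast$ to be intertwined with $\interior(C)$, so it creates no conflict. To place $\ast$ in a genuine kissing set of two closed neighbourhoods I would attach a second, symmetric spoke. Concretely, take $\ns P=\{\ast,1,2,3\}$ with opens generated by $\{1\}$, $\{2\}$, $\{3\}$, and $\{\ast,2\}$ (Figure~\ref{fig.Ast12}, right-hand diagram), and set $C=\{1,\ast\}$ and $C'=\{3,\ast\}$.

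The verification steps are then routine. First, $C$ and $C'$ are closed neighbourhoods: they are the complements of the open sets $\{2,3\}$ and $\{1,2\}$, and their interiors $\{1\}$ and $\{3\}$ are nonempty, so $\boundary(C)=\boundary(C')=\{\ast\}$ and hence $\ast\in\kiss(C,C')$. Second, the minimal open neighbourhood of $\ast$ is $\{\ast,2\}$, so $\intertwined{\ast}=\{\ast,2\}$ (as $\ast$ is intertwined with neither $1$ nor $3$), and since $\ast\intertwinedwith 2$ the points intertwined with $\ast$ are pairwise intertwined, making $\ast$ unconflicted. Third, every point is regular: $\{1\}$, $\{3\}$, and $\{\ast,2\}$ are topen neighbourhoods of $1$, $3$, and of both $\ast$ and $2$ respectively (via Proposition~\ref{prop.cc.char}), so each point lies in its community, which is topen. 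I expect no obstacle in part~\ref{item.ab123.1}; the only real design work is in part~\ref{item.ab123.2}, where the delicate point is arranging for $\ast$ to sit on two boundaries while the interiors of those closed neighbourhoods remain \emph{outside} $\intertwined{\ast}$, so that kissing-set membership carries no intertwining information and $\ast$ stays unconflicted.
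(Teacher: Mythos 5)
Your proposal is correct and follows essentially the same route as the paper: part~\ref{item.ab123.1} is read off Lemma~\ref{lemm.kissing.conflict} together with Proposition~\ref{prop.intertwined.as.closure}(\ref{intertwined.p.closed}), and part~\ref{item.ab123.2} uses exactly the counterexample of Figure~\ref{fig.Ast12} (right-hand diagram) with $C=\{1,\ast\}$ and $C'=\{3,\ast\}$. Your verification that $\ast$ is unconflicted and that every point is regular checks out.
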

\begin{proof}
We consider each part in turn:
\begin{enumerate}
\item
If $p$ is conflicted then we use Lemma~\ref{lemm.kissing.conflict} and Proposition~\ref{prop.intertwined.as.closure}(\ref{intertwined.p.closed}).
\item
We provide a counterexample, as illustrated in Figure~\ref{fig.Ast12} (right-hand diagram): 
\begin{itemize*}
\item
$\ns P=\{\ast, 1, 2, 3\}$.
\item
Open sets are generated by $\{1\}$, $\{2\}$, $\{3\}$, and $\{\ast, 2\}$. 
\item
We set $p=\ast$ and $C=\{\ast,1\}$ and $C'=\{\ast, 3\}$.
\end{itemize*}
Note that $\ast$ is regular (being intertwined with itself and $2$), and $C$ and $C'$ are closed neighbourhoods that kiss at $\ast$, and $1$, $2$, and $3$ are also regular. 
\qedhere\end{enumerate}
\end{proof}

\jamiesubsection{Regular = quasiregular + hypertransitive}

\begin{rmrk}
In Theorem~\ref{thrm.r=wr+uc} we characterised regularity in terms of weak regularity and being unconflicted.
Regularity and weak regularity are two of the regularity properties considered in Definition~\ref{defn.tn}, but there is also a third: \emph{quasiregularity}.
This raises the question whether there might be some other property $X$ such that regular = quasiregular + $X$?\footnote{By Lemma~\ref{lemm.wr.r}(\ref{item.wr.implies.qr}) being weakly regular is a stronger condition than being quasiregular, thus we would expect $X$ to be stronger than being unconflicted.  And indeed this will be so: see Lemma~\ref{lemm.regular.sc}(\ref{item.sc.implies.uc}).}

Yes there is, and we develop it in this Subsection, culminating with Theorem~\ref{thrm.regular=qr+sc}.
\end{rmrk}

\jamiesubsubsection{Hypertransitivity}

\begin{nttn}
\label{nttn.between.nbhd}
Suppose $(\ns P,\opens)$ is a semitopology and $O'\in\opens$ and $\mathcal O\subseteq\opens$.
\begin{enumerate*}
\item\label{item.between.nbhd.1}
Write $O'\between\mathcal O$, or equivalently $\mathcal O\between O'$, when $O'\between O$ for every $O\in\mathcal O$.
In symbols:
$$
O'\between\mathcal O
\quad\text{when}\quad
\Forall{O{\in}\mathcal O}O'\between O .
$$
\item\label{item.between.nbhd}
As a special case of part~\ref{item.between.nbhd.1} above taking $\mathcal O=\nbhd(p)$ (Definition~\ref{defn.nbhd.system}), if $p\in\ns P$ then write $O'\between\nbhd(p)$, or equivalently $\nbhd(p)\between O'$, when $O'\between O$ for every $O\in\opens$ such that $p\in O$. 
\end{enumerate*}
\end{nttn}

\begin{lemm}
\label{lemm.closure.using.nbhd.intersections}
Suppose $(\ns P,\opens)$ is a semitopology and $p\in\ns P$ and $O'\in\opens$.
Then 
$$
p\in\closure{O'}
\quad\text{if and only if}\quad 
O'\between\nbhd(p) .
$$
\end{lemm}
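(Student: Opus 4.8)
The plan is to observe that both sides of the biconditional unfold to literally the same first-order sentence, so that the proof is a pure definitional chase with no real content. First I would unpack the left-hand side using Definition~\ref{defn.closure}(\ref{item.closure}): by definition $p\in\closure{O'}$ holds precisely when $\Forall{O{\in}\opens}(p\in O\limp O'\between O)$, i.e.\ every open neighbourhood of $p$ intersects $O'$.

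Next I would unpack the right-hand side using Notation~\ref{nttn.between.nbhd}(\ref{item.between.nbhd}): by definition $O'\between\nbhd(p)$ holds precisely when $O'\between O$ for every $O\in\opens$ with $p\in O$, i.e.\ again $\Forall{O{\in}\opens}(p\in O\limp O'\between O)$.

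Comparing the two unfoldings, they are syntactically identical, so the equivalence is immediate. The only point that requires even a moment's care is the orientation of the intersection symbol --- the closure condition is stated as $O'\between O$ while the neighbourhood condition pairs $O'$ against each $O\in\nbhd(p)$ --- but these match directly, and in any case $\between$ is symmetric by Lemma~\ref{lemm.between.elementary}(\ref{item.between.symmetric}), so no genuine reorientation is needed.

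I do not expect any real obstacle here: the statement is essentially a repackaging of the definition of closure into the compact $\between\nbhd(p)$ notation introduced in Notation~\ref{nttn.between.nbhd}, presumably set up precisely so that later arguments can invoke this slick reformulation. The entire proof therefore reduces to writing ``unpacking Definition~\ref{defn.closure}(\ref{item.closure}) and Notation~\ref{nttn.between.nbhd}(\ref{item.between.nbhd}), both conditions read $\Forall{O{\in}\opens}(p\in O\limp O'\between O)$,'' and concluding.
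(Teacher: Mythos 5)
Your proof is correct and matches the paper's, which simply states that the lemma ``just rephrases Definition~\ref{defn.closure}(\ref{item.closure})''; your unfolding of both sides to $\Forall{O{\in}\opens}(p\in O\limp O'\between O)$ is exactly that rephrasing, spelled out.
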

\begin{proof}
This just rephrases Definition~\ref{defn.closure}(\ref{item.closure}). 
\end{proof}

\begin{defn}
\label{defn.sc}
Suppose $(\ns P,\opens)$ is a semitopology.
Call $p\in\ns P$ a \deffont{hypertransitive point} when for every $O',O''\in\opens$, 
$$
O'\between\nbhd(p)\between O''
\quad\text{implies}\quad O'\between O''.
$$
Call $(\ns P,\opens)$ a \deffont{hypertransitive semitopology} when every $p\in\ns P$ is hypertransitive.
\end{defn}

\begin{lemm}
\label{lemm.sc.op.reg.op}
Suppose $(\ns P,\opens)$ is a semitopology and $p\in\ns P$.
Then the following are equivalent:
\begin{enumerate*}
\item\label{item.sc.op.reg.op.1}
$p$ is hypertransitive.
\item\label{item.sc.op.reg.op.2}
For every pair of open sets $O',O''\in\opens$, $p\in \closure{O'}\cap \closure{O''}$ implies $O'\between O''$.
\item\label{item.sc.op.reg.op.3}
For every pair of \emph{regular} open sets $O',O''\in\regularOpens$, $p\in \closure{O'}\cap \closure{O''}$ implies $O'\between O''$ (cf. Remark~\ref{rmrk.intertwined.with.regular.opens}).
\end{enumerate*}
\end{lemm}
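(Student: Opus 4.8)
The plan is to prove the equivalences by a short translation argument, using as the central tool Lemma~\ref{lemm.closure.using.nbhd.intersections}, which identifies ``$p\in\closure{O'}$'' with ``$O'\between\nbhd(p)$''. Concretely I would show that conditions~\ref{item.sc.op.reg.op.1} and~\ref{item.sc.op.reg.op.2} are literally the same statement after rewriting, that~\ref{item.sc.op.reg.op.2} trivially implies~\ref{item.sc.op.reg.op.3}, and that~\ref{item.sc.op.reg.op.3} implies~\ref{item.sc.op.reg.op.2} by ``regularising'' the open sets involved.

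For the equivalence of~\ref{item.sc.op.reg.op.1} and~\ref{item.sc.op.reg.op.2}: the antecedent $O'\between\nbhd(p)\between O''$ of hypertransitivity is, by Notation~\ref{nttn.between.nbhd}(\ref{item.between.nbhd}) and its built-in symmetry, the conjunction of $O'\between\nbhd(p)$ and $\nbhd(p)\between O''$. Applying Lemma~\ref{lemm.closure.using.nbhd.intersections} to each conjunct rewrites this as $p\in\closure{O'}$ together with $p\in\closure{O''}$, i.e.\ $p\in\closure{O'}\cap\closure{O''}$. Since the consequent $O'\between O''$ is untouched, conditions~\ref{item.sc.op.reg.op.1} and~\ref{item.sc.op.reg.op.2} coincide verbatim. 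The implication from~\ref{item.sc.op.reg.op.2} to~\ref{item.sc.op.reg.op.3} is immediate, because $\regularOpens\subseteq\opens$, so~\ref{item.sc.op.reg.op.3} is merely~\ref{item.sc.op.reg.op.2} restricted to a subclass of open sets.

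The only step with genuine content --- and the one I expect to be the main (if modest) obstacle --- is the implication from~\ref{item.sc.op.reg.op.3} to~\ref{item.sc.op.reg.op.2}, where a statement quantified over regular open sets must be promoted to one quantified over all open sets. Given arbitrary $O',O''\in\opens$ with $p\in\closure{O'}\cap\closure{O''}$, I would replace them by $\interior(\closure{O'})$ and $\interior(\closure{O''})$, which are regular open by Corollary~\ref{corr.interior.closure.regular}. By the first part of Corollary~\ref{corr.ic.ci} we have $\closure{O'}=\closure{\interior(\closure{O'})}$ and likewise for $O''$, so the hypothesis becomes $p\in\closure{\interior(\closure{O'})}\cap\closure{\interior(\closure{O''})}$; condition~\ref{item.sc.op.reg.op.3} then yields $\interior(\closure{O'})\between\interior(\closure{O''})$.

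Finally, Lemma~\ref{lemm.clint.between} (equivalence of parts~\ref{item.client.between.1} and~\ref{item.client.between.3}) converts $\interior(\closure{O'})\between\interior(\closure{O''})$ back into $O'\between O''$, which is exactly condition~\ref{item.sc.op.reg.op.2}. The conceptual crux is that neither taking closures nor the ``interior-of-closure'' operation changes which open sets have $p$ in their closure, nor which open sets intersect, so the regular open sets already encode all the intersection data that hypertransitivity refers to; everything else is bookkeeping through Lemma~\ref{lemm.closure.using.nbhd.intersections} and the duality results of the preceding subsection.
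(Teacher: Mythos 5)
Your proposal is correct and follows essentially the same route as the paper's own proof: the equivalence of parts~\ref{item.sc.op.reg.op.1} and~\ref{item.sc.op.reg.op.2} via Lemma~\ref{lemm.closure.using.nbhd.intersections}, the trivial restriction for \ref{item.sc.op.reg.op.2}$\Rightarrow$\ref{item.sc.op.reg.op.3}, and the promotion \ref{item.sc.op.reg.op.3}$\Rightarrow$\ref{item.sc.op.reg.op.2} by passing to $\interior(\closure{O'})$ and $\interior(\closure{O''})$ and invoking Lemma~\ref{lemm.clint.between}. The only cosmetic difference is that you cite Corollaries~\ref{corr.interior.closure.regular} and~\ref{corr.ic.ci} where the paper cites the underlying Lemmas~\ref{lemm.ic.ci.regular} and~\ref{lemm.closure.closed}; these are interchangeable.
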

\begin{proof}
For the equivalence of parts~\ref{item.sc.op.reg.op.1} and~\ref{item.sc.op.reg.op.2} we reason as follows:
\begin{itemize*}
\item
Suppose $p$ is hypertransitive and suppose $p\in\closure{O'}$ and $p\in\closure{O''}$.
By Lemma~\ref{lemm.closure.using.nbhd.intersections} it follows that $O'\between\nbhd(p)\between O''$.
By hypertransitivity, $O'\between O''$ as required.
\item
Suppose for every $O,O'\in\opens$, $p\in\closure{O}\cap\closure{O'}$ implies $O'\between O''$, and suppose $O'\between\nbhd(p)\between O''$.
By Lemma~\ref{lemm.closure.using.nbhd.intersections} $p\in\closure{O}\cap\closure{O'}$ and therefore $O'\between O''$.
\end{itemize*}
For the equivalence of parts~\ref{item.sc.op.reg.op.2} and~\ref{item.sc.op.reg.op.3} we reason as follows: 
\begin{itemize*}
\item
Part~\ref{item.sc.op.reg.op.2} implies part~\ref{item.sc.op.reg.op.3} follows since every open regular set is also an open set.
\item
To show part~\ref{item.sc.op.reg.op.3} implies part~\ref{item.sc.op.reg.op.2}, suppose for every pair of regular opens $O',O''\in\regularOpens$, $p\in \closure{O'}\cap \closure{O''}$ implies $O'\between O''$, and suppose $O',O''\in\opens$ are two open sets that are not necessarily regular, and suppose $p\in\closure{O'}\cap\closure{O''}$.
We must show that $O'\between O''$.

Write $P'=\interior(\closure{O'})$ and $P''=\interior(\closure{O''})$ and note by Lemmas~\ref{lemm.ic.ci.regular} and~\ref{lemm.closure.closed} that $P'$ and $P''$ are regular open sets and $\closure{P'}=\closure{O'}$ and $\closure{P''}=\closure{O''}$.
Then $\closure{P'}\between\closure{P''}$, so $P'\between P''$, and $O'\between O''$ follows from Lemma~\ref{lemm.clint.between}
\qedhere\end{itemize*}
\end{proof}

\jamiesubsubsection{The equivalence}

\begin{lemm}
\label{lemm.regular.sc}
Suppose $(\ns P,\opens)$ is a semitopology and $p\in\ns p$.
Then:
\begin{enumerate*}
\item\label{item.r.implies.sc}
If $p$ is regular then it is hypertransitive.
\item\label{item.sc.implies.uc}
If $p$ is hypertransitive then it is unconflicted.
\item
The reverse implication need not hold: it is possible for $p$ to be unconflicted but not hypertransitive.
\item
It is possible for $p$ to be hypertransitive (and unconflicted), but not quasiregular (and thus not weakly regular or regular).
\end{enumerate*}
\end{lemm}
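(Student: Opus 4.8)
The plan is to establish the two implications (parts~\ref{item.r.implies.sc} and~\ref{item.sc.implies.uc}) by unpacking the definitions of hypertransitivity (Definition~\ref{defn.sc}) and of being intertwined (Definition~\ref{defn.intertwined.points}), and to settle the two negative parts by producing counterexamples. For part~\ref{item.r.implies.sc}, assume $p$ is regular, so $p\in\community(p)\in\topens$ and in particular $\community(p)$ is a transitive open neighbourhood of $p$. Given $O',O''\in\opens$ with $O'\between\nbhd(p)\between O''$, I would instantiate the quantifier in Notation~\ref{nttn.between.nbhd} at the specific neighbourhood $\community(p)\in\nbhd(p)$ to obtain $O'\between\community(p)\between O''$, whence $O'\between O''$ by transitivity of $\community(p)$ (Definition~\ref{defn.transitive}(\ref{transitive.transitive})). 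For part~\ref{item.sc.implies.uc}, assume $p$ is hypertransitive and $q\intertwinedwith p\intertwinedwith q'$; fixing arbitrary open neighbourhoods $Q\ni q$ and $Q'\ni q'$, unpacking $q\intertwinedwith p$ at $Q$ gives $Q\between\nbhd(p)$ and unpacking $p\intertwinedwith q'$ at $Q'$ gives $\nbhd(p)\between Q'$, so hypertransitivity delivers $Q\between Q'$; as $Q,Q'$ were arbitrary, $q\intertwinedwith q'$. One could equivalently route this through Lemma~\ref{lemm.closure.using.nbhd.intersections} and the closure-based reformulation of Lemma~\ref{lemm.sc.op.reg.op}, since $O\between\nbhd(p)$ is exactly $p\in\closure{O}$.

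For the third part (unconflicted but not hypertransitive) it suffices to exhibit a single witness, and the cleanest is $\mathbb R$ with its usual topology at the point $0$. Since $\mathbb R$ is Hausdorff, $\intertwined{0}=\{0\}$ by Equation~\ref{eq.hausdorff} in Remark~\ref{rmrk.not.hausdorff}, so $0$ is vacuously unconflicted; yet the disjoint open sets $O'=(-1,0)$ and $O''=(0,1)$ each have $0$ in their closure, so $O'\between\nbhd(0)\between O''$ while $O'\notbetween O''$, and hypertransitivity fails. If a finite witness is preferred, I would take $\ns P=\{p,x_1,x_2,y_1,y_2\}$ with $\opens$ generated by $\{x_1\},\{x_2\},\{y_1\},\{y_2\}$ together with $N_1=\{p,x_1,y_1\}$ and $N_2=\{p,x_2,y_2\}$. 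Here the only open neighbourhoods of $p$ are the supersets of $N_1$ or of $N_2$, and one checks $\intertwined{p}=\{p\}$ (each $x_i$ and $y_i$ is separated from $p$ by whichever of $N_1,N_2$ omits it), so $p$ is unconflicted; while $O'=\{x_1,x_2\}$ and $O''=\{y_1,y_2\}$ are disjoint open sets each meeting both minimal neighbourhoods, giving $p\in\closure{O'}\cap\closure{O''}$ and the failure of hypertransitivity.

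For the fourth part (hypertransitive but not quasiregular) I would reuse the square semitopology of Figure~\ref{fig.square.diagram} and Lemma~\ref{lemm.square.diagram.not.qr}, taking $p=0$. That Lemma gives $\intertwined{0}=\{0\}$ and $\community(0)=\varnothing$, so $0$ is not quasiregular (and hence neither weakly regular nor regular). It remains to verify hypertransitivity at $0$: its minimal open neighbourhoods are $A=\{0,3\}$ and $B=\{0,1\}$, so an open set has $0$ in its closure exactly when it meets both $A$ and $B$; a short enumeration of the opens shows these are precisely the opens containing $0$ together with the single further open $\{1,2,3\}$, and any two of these intersect, so $0$ is hypertransitive.

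The two implications are routine definition-chasing and present no difficulty. The real content, and the main obstacle, is the third part: the gap between unconflicted and hypertransitive exists precisely because ``$p$ lies in the closure of $O$'' ($O\between\nbhd(p)$, i.e.\ $p\in\closure{O}$) is strictly weaker than ``$O$ contains a point intertwined with $p$'' --- an open set can accumulate at $p$ through points each of which is separated from $p$ by some other neighbourhood of $p$. The witness must realise this weakening, which is why a Hausdorff space (no two distinct points intertwined, yet closures large) is the natural example, and why the finite version is forced to use two distinct minimal neighbourhoods of $p$. The only genuinely computational step is the finite enumeration verifying hypertransitivity in the fourth part.
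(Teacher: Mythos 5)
Your proposal is correct. Parts~\ref{item.r.implies.sc} and~\ref{item.sc.implies.uc} are argued exactly as in the paper: instantiate $O'\between\nbhd(p)\between O''$ at the topen neighbourhood $\community(p)$ and invoke its transitivity; and for the second implication, unpack $q\intertwinedwith p\intertwinedwith q'$ at arbitrary neighbourhoods $Q,Q'$ to get $Q\between\nbhd(p)\between Q'$ and apply hypertransitivity. The only divergence is in the choice of witnesses for the two negative parts. For the third part the paper uses the finite four-point space in the lower-right of Figure~\ref{fig.012} with $p=\ast$, $O'=\{1\}$, $O''=\{0,2\}$ (where $\ast$ is intertwined with itself and $1$), whereas you use $\mathbb R$ at $0$ --- which is consistent with the paper's own observation in Example~\ref{xmpl.wr} that every point of $\mathbb R$ is unconflicted --- plus a five-point finite alternative; all three check out, and your diagnosis that the gap lies in ``$p\in\closure{O}$'' being strictly weaker than ``$O$ meets $\intertwined{p}$'' is exactly the right way to see why such witnesses exist. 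For the fourth part the paper takes $p=1$ in the top-right space of Figure~\ref{fig.012}, a three-point example, while you verify hypertransitivity of $0$ in the square space of Figure~\ref{fig.square.diagram}; your enumeration of the opens meeting both $A$ and $B$ is correct and the square example has the mild advantage of reusing Lemma~\ref{lemm.square.diagram.not.qr} for the non-quasiregularity claim, at the cost of a slightly longer case check than the paper's smaller witness.
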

\begin{proof}
We consider each part:
\begin{enumerate}
\item
Suppose $p$ is regular and $O,O'\in\opens$ and $O\between\nbhd(p)\between O'$.
By Definition~\ref{defn.tn}(\ref{item.regular.point}) (since $p$ is regular) $\community(p)$ is a topen (= open and transitive) neighbourhood of $p$.
Therefore by transitivity $O\between O'$ as required. 
\item
Suppose $p$ is hypertransitive and suppose $p',p''\in\ns P$ and $p'\intertwinedwith p\intertwinedwith p''$.
Now consider $p'\in O'\in\opens$ and $p''\in O''\in\opens$.
By our intertwinedness assumptions we have that $O'\between\nbhd(p)\between O''$.
But $p$ is hypertransitive, so $O'\between O''$ as required.
\item
It suffices to provide a counterexample.
Consider the bottom right semitopology in Figure~\ref{fig.012}, and take $p=\ast$ and $O'=\{1\}$ and $O''=\{0,2\}$.
Note that:
\begin{itemize*}
\item
$\ast$ is unconflicted, since it is intertwined only with itself and $1$.
\item
$O'$ and $O'$ intersect every open neighbourhood of $\ast$, but $O'\notbetween O''$, so $\ast$ is not strongly compatible.
\end{itemize*} 
\item
It suffices to provide an example.
Consider the semitopology illustrated in Figure~\ref{fig.012}, top-right diagram; so $\ns P=\{0,1,2\}$ and $\opens=\{\varnothing,\{0\},\{2\},\{1,2\},\{0,1\},\{0,1,2\}\}$.
The reader can check that $p=1$ is hypertransitive, but $\intertwined{1}=\{1\}$ and $\community(1)=\varnothing$ so $p$ is not quasiregular.
\qedhere\end{enumerate}
\end{proof}

(Yet) another characterisation of being quasiregular will be helpful:
\begin{lemm}
\label{lemm.quasiregular.iff.between}
Suppose $(\ns P,\opens)$ is a semitopology and $p\in\ns P$.
Then the following conditions are equivalent:
\begin{enumerate*}
\item\label{item.quasiregular.iff.between.1}
$p$ is quasiregular (meaning by Definition~\ref{defn.tn}(\ref{item.quasiregular.point}) that $\community(p)\neq\varnothing$).
\item\label{item.quasiregular.iff.between.2}
$\community(p)\between\nbhd(p)$ (meaning by Notation~\ref{nttn.between.nbhd}(\ref{item.between.nbhd}) that $\community(p)\between O$ for every $O\in\nbhd(p)$).
\item\label{item.quasiregular.iff.between.3}
$p\in\closure{\community(p)}$.
\end{enumerate*}
\end{lemm}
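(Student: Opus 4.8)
The plan is to funnel the whole equivalence through Lemma~\ref{lemm.closure.using.nbhd.intersections}, which states that $p\in\closure{O'}$ if and only if $O'\between\nbhd(p)$. Instantiating this with $O'=\community(p)$ immediately yields the equivalence of conditions~\ref{item.quasiregular.iff.between.2} and~\ref{item.quasiregular.iff.between.3}, so no separate argument is needed there. The remaining task is to tie these to quasiregularity (condition~\ref{item.quasiregular.iff.between.1}), which I would do by proving the two implications between~\ref{item.quasiregular.iff.between.1} and~\ref{item.quasiregular.iff.between.2}.

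The direction from~\ref{item.quasiregular.iff.between.2} to~\ref{item.quasiregular.iff.between.1} is essentially free. Since $\ns P\in\opens$ and $p\in\ns P$, we always have $\ns P\in\nbhd(p)$, so the hypothesis $\community(p)\between\nbhd(p)$ gives in particular $\community(p)\between\ns P$; by Lemma~\ref{lemm.between.elementary}(\ref{between.nonempty}) this forces $\community(p)\neq\varnothing$, which is exactly quasiregularity.

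The one step with any content is the direction from~\ref{item.quasiregular.iff.between.1} to~\ref{item.quasiregular.iff.between.2}. Here I would use that $\community(p)=\interior(\intertwined{p})$ is open (Lemma~\ref{lemm.two.intertwined}(\ref{item.two.intertwined.1})) and is contained in $\intertwined{p}$. Assuming $\community(p)\neq\varnothing$, pick a witness $q\in\community(p)$; then $q\in\intertwined{p}$, i.e.\ $q\intertwinedwith p$. For an arbitrary $O\in\nbhd(p)$, I would observe that $\community(p)$ is itself an open neighbourhood of $q$, and then apply the definition of $q\intertwinedwith p$ (Definition~\ref{defn.intertwined.points}(\ref{item.p.intertwinedwith.p'})) to the pair $\community(p)\ni q$ and $O\ni p$ to conclude $\community(p)\between O$. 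Since $O$ was arbitrary, $\community(p)\between\nbhd(p)$ as required.

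The main, and fairly mild, obstacle is recognising in this last step that $\community(p)$ can itself serve as the open neighbourhood of $q$ demanded by the intertwinedness condition: this is precisely what upgrades the bare nonemptiness of $\community(p)$ into the stronger statement that $\community(p)$ meets \emph{every} open neighbourhood of $p$. Everything else is unwinding definitions, and the whole proof is short.
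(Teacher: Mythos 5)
Your proposal is correct and follows essentially the same route as the paper: the $2\liff 3$ equivalence via Lemma~\ref{lemm.closure.using.nbhd.intersections}, the $2\limp 1$ direction by intersecting with $\ns P\in\nbhd(p)$, and the $1\limp 2$ direction by picking a witness $q\in\community(p)\subseteq\intertwined{p}$ and using $\community(p)$ itself (open by Lemma~\ref{lemm.two.intertwined}) as the open neighbourhood of $q$ in the intertwinedness condition. No gaps.
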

\begin{proof}
Equivalence of parts~\ref{item.quasiregular.iff.between.2} and~\ref{item.quasiregular.iff.between.3} is immediate from Lemma~\ref{lemm.closure.using.nbhd.intersections}.

For equivalence of parts~\ref{item.quasiregular.iff.between.1} and~\ref{item.quasiregular.iff.between.2}, we prove two implications:
\begin{itemize}
\item
Suppose $p$ is quasiregular, meaning by Definition~\ref{defn.tn}(\ref{item.quasiregular.point}) that $\community(p)\neq\varnothing$.
Pick some $p'\in\community(p)$ (it does not matter which).
It follows by construction in Definitions~\ref{defn.intertwined.points}(\ref{intertwined.defn}) and~\ref{defn.tn}(\ref{item.tn}) and Lemma~\ref{lemm.interior.open} that $p'\intertwinedwith p$, so that $p'\in\community(p)$. 
Using Definition~\ref{defn.intertwined.points}(\ref{item.p.intertwinedwith.p'}) it follows that $\community(p)\between O$ for every $O\in\nbhd(p)$, as required.
\item
Suppose $\community(p)\between\nbhd(p)$.
Then in particular $\community(p)\between\ns P$ (because $p\in\ns P\in\opens$), and by Notation~\ref{nttn.between}(\ref{item.between}) it follows that $\community(p)\neq\varnothing$.
\qedhere\end{itemize}
\end{proof}

Compare and contrast Theorem~\ref{thrm.regular=qr+sc} with Theorem~\ref{thrm.r=wr+uc}:
\begin{thrm}
\label{thrm.regular=qr+sc}
Suppose $(\ns P,\opens)$ is a semitopology and $p\in\ns P$.
Then the following are equivalent:
\begin{enumerate*}
\item
$p$ is regular.
\item
$p$ is quasiregular and hypertransitive.
\end{enumerate*}
\end{thrm}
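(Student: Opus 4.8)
The plan is to prove the two implications separately, treating the backward direction as the substantive one.

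For the forward direction (regular $\Rightarrow$ quasiregular and hypertransitive), both halves are already available as prior results. If $p$ is regular then by Lemma~\ref{lemm.wr.r} it is weakly regular and hence quasiregular, and by Lemma~\ref{lemm.regular.sc}(\ref{item.r.implies.sc}) it is hypertransitive. So this direction is just an assembly of earlier lemmas.

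For the backward direction, I would first invoke Lemma~\ref{lemm.regular.sc}(\ref{item.sc.implies.uc}) (hypertransitive $\Rightarrow$ unconflicted) together with Theorem~\ref{thrm.r=wr+uc} (regular = weakly regular + unconflicted). This reduces the entire problem to one missing step: showing that a quasiregular hypertransitive point is weakly regular, i.e. that $p\in\community(p)$. Once that holds, unconflictedness upgrades weak regularity to regularity via Theorem~\ref{thrm.r=wr+uc}, and we are done.

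The heart of the argument, and the step I expect to be the main obstacle, is therefore: quasiregular $+$ hypertransitive $\Rightarrow p\in\community(p)=\interior(\intertwined{p})$. I would argue by contradiction. Suppose $p\notin\interior(\intertwined{p})$, and set $U=\ns P\setminus\intertwined{p}$. Since $\intertwined{p}$ is closed (Proposition~\ref{prop.intertwined.as.closure}(\ref{intertwined.p.closed})), $U$ is open, and by the closure/interior duality Lemma~\ref{lemm.closure.interior}(\ref{item.closure.interior.complement.interior}) we have $\closure{U}=\ns P\setminus\interior(\intertwined{p})=\ns P\setminus\community(p)$; hence the assumption $p\notin\community(p)$ says exactly $p\in\closure{U}$. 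By Lemma~\ref{lemm.closure.using.nbhd.intersections} this means $U\between\nbhd(p)$. Meanwhile quasiregularity gives, via Lemma~\ref{lemm.quasiregular.iff.between}, that $\community(p)\neq\varnothing$ and $\community(p)\between\nbhd(p)$. Thus $U$ and $\community(p)$ are both open sets meeting every open neighbourhood of $p$, so $U\between\nbhd(p)\between\community(p)$, and hypertransitivity (Definition~\ref{defn.sc}) forces $U\between\community(p)$. But $\community(p)=\interior(\intertwined{p})\subseteq\intertwined{p}$ while $U=\ns P\setminus\intertwined{p}$, so in fact $U\notbetween\community(p)$ --- a contradiction. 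This establishes weak regularity, and the proof then closes as above.

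The elegant point, which is exactly where all the work lives, is that hypertransitivity is precisely the hypothesis converting ``$p$ lies in the closure of $\ns P\setminus\intertwined{p}$'' plus nonemptiness of the community into an impossible intersection. This is what separates hypertransitivity from mere unconflictedness: compare the point $\ast$ in Lemma~\ref{lemm.regular.sc}, which is quasiregular and unconflicted but not hypertransitive, and correspondingly fails to be regular.
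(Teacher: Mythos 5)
Your proposal is correct and follows essentially the same route as the paper's proof: the forward direction is assembled from the same two lemmas, and the backward direction is reduced in the same way to showing weak regularity (via Lemma~\ref{lemm.regular.sc}(\ref{item.sc.implies.uc}) and Theorem~\ref{thrm.r=wr+uc}), with the crux being the same application of hypertransitivity to $\community(p)$ and an open set disjoint from it that meets every neighbourhood of $p$. The only cosmetic difference is that you run a direct contradiction with the open set $\ns P\setminus\intertwined{p}$, whereas the paper case-splits on $O''=\interior(\ns P\setminus\community(p))$; your choice makes ``$p\notin\community(p)$ iff the set meets every neighbourhood of $p$'' an exact equivalence and so dispenses with the case split.
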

\begin{proof}
We consider two implications:
\begin{itemize}
\item
\emph{Suppose $p$ is regular.}\quad

Then $p$ is quasiregular by Lemma~\ref{lemm.wr.r}(\ref{item.r.implies.wr}\&\ref{item.wr.implies.qr}), and hypertransitive by Lemma~\ref{lemm.regular.sc}(\ref{item.r.implies.sc}). 
\item
\emph{Suppose $p$ is quasiregular and hypertransitive.}\quad

By Lemma~\ref{lemm.regular.sc}(\ref{item.sc.implies.uc}) $p$ is unconflicted.
If we can prove that $p$ is weakly regular (meaning by Definition~\ref{defn.tn}(\ref{item.weakly.regular.point}) that $p\in\community(p)$), then by Theorem~\ref{thrm.r=wr+uc} it would follow that $p$ is regular as required.
Thus, it would suffice to show that $p\in\community(p)$, thus that there is an open neighbourhood of points with which $p$ is intertwined.

Write $O''=\interior(\ns P\setminus\community(p))$.
We have two subcases to consider:
\begin{itemize*}
\item
\emph{Suppose $\nbhd(p)\between O''$.}\quad

By Lemma~\ref{lemm.quasiregular.iff.between} (since $p$ is quasiregular) we have that $\community(p)\between\nbhd(p)$.
Thus $\community(p)\between\nbhd(p)\between O''$, and by hypertransitivity of $p$ it follows that $\community(p)\between O''$.
But this contradicts the construction of $O''$ as being a subset of $\ns P\setminus\community(p)$, so this case is impossible.
\item
\emph{Suppose $\nbhd(p)\notbetween O''$.}\quad
Then there exists some $O\in\nbhd(p)$ such that $O\notbetween O''$, and it follows that $O\subseteq\community(p)$ so that $p\in\community(p)$ as required.
\end{itemize*}
Thus $p$ is weakly regular, as required.
\qedhere\end{itemize}
\end{proof}

\begin{rmrk}
\label{rmrk.two.char.r}
So we have obtained two nice characterisations of regularity of points from Definition~\ref{defn.tn}(\ref{item.regular.point}):
\begin{enumerate*}
\item
Regular = weakly regular + unconflicted, by Theorem~\ref{thrm.r=wr+uc}. 
\item
Regular = quasiregular + hypertransitive, by Theorem~\ref{thrm.regular=qr+sc}. 
\end{enumerate*}
\end{rmrk}

\jamiesection{Conclusions}
\label{sect.conclusions}

We start by noticing that a notion of `actionable coalition' as discussed in the Introduction, leads to the topology-like structure which we call \emph{semitopologies}.

We simplified and purified our motivating examples --- having to do with understanding agreement and consensus in distributed systems --- to two precise mathematical questions: 
\begin{enumerate*}
\item
understand antiseparation properties, and 
\item
understand the implications of these for value assignments.\footnote{A value assignment is just a not-necessarily-continuous map from a semitopology to a discrete space.}
\end{enumerate*}
We have seen that the implications of these ideas are rich and varied.
Point-set semitopologies have an interesting theory which obviously closely resembles point-set topology, but is not identical to it.
In particular, dropping the condition that intersections of open sets must be open permits a wealth of new structure, which our taxonomy of antiseparation properties and its applications to value assignments explores.

\jamiesubsection{Topology vs. semitopology}
\label{subsect.vs}

We briefly compare and contrast topology and semitopology: 
\begin{enumerate}
\item
\emph{Topology:}\ 
Separation axioms are prominent in the topological literature; I could find no corresponding taxonomy of anti-separation properties.\footnote{The Wikipedia page on separation axioms (\href{https://web.archive.org/web/20221103233631/https://en.wikipedia.org/wiki/Separation_axiom}{permalink}) includes an excellent overview with over a dozen separation axioms; no anti-separation axioms are proposed.  Important non-Hausdorff spaces do exist; e.g. the \emph{Zariski topology}~\cite[Subsection~1.1.1]{hulek:eleag}.} 

\emph{Semitopology:}\ 
Antiseparation, not separation, is our primary interest.
We consider a taxonomy of antiseparation properties, including: points being intertwined (see Definition~\ref{defn.intertwined.points} and Remark~\ref{rmrk.not.hausdorff}), and points being quasiregular, %
weakly regular, and regular (Definition~\ref{defn.tn}), (un)conflicted (Definition~\ref{defn.conflicted}(\ref{item.unconflicted})), and hypertransitive (Definition~\ref{defn.sc}).\footnote{An extra word on the converse of this:  Our theory of semitopologies admits spaces whose points partition into distinct communities, as discussed in Theorem~\ref{thrm.topen.partition} and Remark~\ref{rmrk.partition}.  To a professional blockchain engineer it might seem terrible if two points points are \emph{not} intertwined, since this means they might not be in consensus in a final state. 
Should this not be excluded by the definition of semitopology, as is done in the literature on quorum systems, where it typically definitionally assumed that all quorums in a quorum system intersect?  
No! 
Separation is a fact of life which we permit not only so that we can mathematically analyse it (and we do), but also because we may need it for certain \emph{normal situations}.
For example, most blockchains have a \emph{mainnet} and several \emph{testnets} and it is understood that each should be coherent within itself, but different nets \emph{need not} be in consensus with one another.  Indeed, if the mainnet had to agree with a testnet then this would likely be a bug, not a feature.  So the idea of having multiple partitions is nothing new \emph{per se}.  It is a familiar idea, which semitopologies put in a powerfully general mathematical context.}
\item
\emph{Topology:}\quad 
If a minimal open neighbourhood of a point exists then it is least, because we can intersect two minimal neighbourhoods to get a smaller one which by minimality is equal to both.

Yet, in topology the existence of a least open neighbourhood is not guaranteed (e.g. $0\in\mathbb R$ has no least open neighbourhood).

\emph{Semitopology:}\ 
A point may have multiple minimal open neighbourhoods --- examples are very easy to generate, see e.g. the top-right example in Figure~\ref{fig.012}.
\item
\emph{Topology:}\quad 
We are typically interested in functions on topologies that are continuous (or mostly so, e.g. $f(x)=1/x$).
Thus for example, the definition of $\tf{Top}$ the category of topological spaces takes continuous functions as morphisms, essentially building in assumptions that continuous functions are of most interest and that finding them is enough of a solved problem that we can restrict to continuous functions in the definition.
 
\emph{Semitopology:}\quad 
For our intended application to consensus, 
we are explicitly interested in functions that may be discontinuous.
This models initial and intermediate states where local consensus has not yet been achieved, or final states on semitopologies that include disjoint topens and non-regular points (e.g. conflicted points), as well as adversarial or failing behaviour.
Thus, having continuity is neither a solved problem, nor even necessarily desirable.
\item
Sometimes, ideas that come from semitopology project carry over to topology, but they lose impact or become less interesting in doing so. 
For example: our theory of semitopologies considers notions of \emph{topen set} and \emph{strongly topen set} (Definitions~\ref{defn.transitive} and~\ref{defn.strongly.transitive}).
In topology these are equivalent to one another, and to a known and simpler topological property of being \emph{hyperconnected} (Definition~\ref{defn.tangled}).\footnote{\dots but (strong) topens are their own thing.  Analogy: a projection from $\mathbb C$ to $\mathbb R$ maps $a+bi$ to $a$; this is not evidence that $i$ is equivalent to $0$!} 
\item
Semitopological questions such as \emph{`is this a topen set'} or \emph{`are these two points intertwined'} or \emph{`does this point have a topen neighbourhood'} --- and many other definitions, such as our taxonomy of points into \emph{regular}, \emph{weakly regular}, %
\emph{quasiregular}, \emph{unconflicted}, and \emph{hypertransitive} %
--- appear to be novel.

Also in the background 
is that we are particularly interested in properties and algorithms that work well using local and possibly incomplete or even partially incorrect information.

Thus semitopologies have their own distinct character: because they are mathematically distinct, and because modern applications having to do with actionable coalitions and distributed systems motivate us to ask questions that have not necessarily been considered before.
\end{enumerate}

\jamiesubsection{Related work}
\label{subsect.related.work}

\paragraph*{Union sets and minimal structures}

There is a thread of research into \emph{union-closed families}; these are subsets of a finite powerset closed under unions, so that a union-closed family is precisely just a finite semitopology. 
The motivation is to study the combinatorics of finite subsemilattices of a powerset.
Some progress has been made in this~\cite{poonen:unicf}; the canonical reference for the relevant combinatorial conjectures is the `problem session' on page~525 (conjectures 1.9, 1.9', and 1.9") of~\cite{rival:grao}.
See also recent progress in a conjecture about union-closed families (\href{https://web.archive.org/web/20230330170701/https://en.wikipedia.org/wiki/Union-closed_sets_conjecture#Partial_results}{permalink}).

There is no direct connection to this work, though the combinatorial properties considered may yet become useful for proving properties of concrete algorithms.

A \emph{minimal structure} on a set $X$ is a subset of $\powerset(X)$ that contains $\varnothing$ and $X$.
Thus a semitopology is a minimal structure that is also closed under arbitrary unions.
There is a thread of research into minimal structures, studying how notions familiar from topology (such as continuity) fare in weak (minimal) settings~\cite{noiri:defsgf} and how this changes as axioms (such as closure under unions) are added or removed.
An accessible discussion is in~\cite{szaz:minsgt}, and see the brief but comprehensive references in Remark~3.7 of that paper.
Of course our focus is on properties of semitopologies 
which are not considered in that literature; but we share an observation with minimal structures that it is useful to study topology-like constructs, in the absence of closure under intersections. 

\paragraph*{Gradecast converges on a topen}

Many consensus algorithms have the property that once consensus is established in a quorum $O$, it propagates to $\closure{O}$.
For example, in the Grade-Cast algorithm~\cite{feldman_optimal_1988}, participants assign a confidence grade of 0, 1 or 2 to their output and must ensure that if any participant outputs $v$ with grade 2 then all must output $v$ with grade at least 1.
If all the quorums of a participant intersect some set $S$ that unanimously supports value $v$, then the participant assigns grade at least 1 to $v$.

From the view of our paper, this is just taking a closure, which suggests that, to convince a topen to agree on a value, it would suffice to first convince an open neighbourhood that intersects the topen, and then use Grade-Cast to convince the whole topen.
See also Proposition~\ref{prop.open.strong-consensus} and Remark~\ref{rmrk.gradecast}.

\paragraph*{Algebraic topology as applied to distributed computing tasks}

Continuing the discussion of tasks above, the reader may know that solvability results about distributed computing tasks have been obtained from algebraic topology, starting with the impossibility of wait-free $k$-set consensus using read-write registers and the Asynchronous Computability Theorem~\cite{herlihy_asynchronous_1993,borowsky_generalized_1993,saks_wait-free_1993} in 1993.
See~\cite{herlihy_distributed_2013} for numerous such results.

The basic observation is that the set of final states of a distributed algorithm forms a simplicial complex, called the \emph{protocol complex}, and topological properties of this complex, like connectivity, are constrained by the underlying communication and fault model.
These topological properties in turn can determine what tasks are solvable.
For example: every algorithm in the wait-free model with atomic read-write registers has a connected protocol complex, and because the consensus task's output complex is disconnected, consensus in this model is not solvable~\cite[Chapter~4]{herlihy_distributed_2013}.

This work is also topological, but in a different way: we use (semi)topologies to study consensus in and of itself, rather than the solvability of consensus or other tasks in particular computation models.
Put another way: the papers cited above use topology to study the solvability of distributed tasks, but we show here how the very idea of `distribution' can be viewed as having a semitopological foundation.

Of course we can imagine that these might be combined --- that in future work we may find interesting and useful things to say about the topologies of distributed algorithms when viewed as algorithms \emph{on} and \emph{in} a semitopology.

\paragraph*{Fail-prone systems and quorum systems}

Given a set of processes $\ns P$, a \emph{fail-prone} system~\cite{malkhi_byzantine_1998}  (or \emph{adversary structure}~\cite{hirt_player_2000}) is a set of \emph{fail-prone sets} $\mathcal{F}=\{F_1,...,F_n\}$ where, for every $1\leq i\leq n$, $F_i\subseteq \ns P$.
$\mathcal{F}$ denotes the assumptions that the set of processes that will fail (potentially maliciously) is a subset of one of the fail-prone sets.
A \emph{dissemination quorum system} for $\mathcal{F}$ is a set  $\{Q_1,..., Q_m\}$ of quorums where, for every $1\leq i\leq m$, $Q_i\subseteq \ns P$, and such that 
\begin{itemize*}
\item
for every two quorums $Q$ and $Q'$ and for every fail-prone set $F$, $\left(Q\cap Q'\right)\setminus F\neq\emptyset$ and 
\item
for every fail-prone set $F$, there exists a quorum disjoint from $F$.
\end{itemize*}
Several distributed algorithms, such as Bracha Broadcast~\cite{bracha_asynchronous_1987} and PBFT~\cite{castro_practical_2002}, rely on a quorum system for a fail-prone system $\mathcal{F}$ in order to solve problems such as reliable broadcast and consensus assuming (at least) that the assumptions denoted by $\mathcal{F}$ are satisfied.

Several recent works generalise the fail-prone system model.
Under the failure assumptions of a traditional fail-prone system, Bezerra et al.~\cite{bezerra_relaxed_2022} study reliable broadcast when participants each have their own set of quorums.
Asymmetric Fail-Prone Systems~\cite{cachin_asymmetric_2019} generalise fail-prone systems to allow participants to make different failure assumption and have different quorums.
In Permissionless Fail-Prone Systems~\cite{cachin_quorum_2023}, participants not only make assumptions about failures, but also make assumptions about the assumptions of other processes;
the resulting structure seems closely related to semitopologies, but the exact relationship still needs to be elucidated.

In Federated Byzantine Agreement Systems~\cite{mazieres2015stellar}, participants declare quorum slices and quorums emerge out of the collective quorum slices of their members.
García-Pérez and Gotsman~\cite{garcia2018federated} rigorously prove the correctness of broadcast abstractions in Stellar's Federated Byzantine Agreement model and investigate the model's relationship to dissemination quorum systems.
The Personal Byzantine Quorum System model~\cite{losa:stecbi} is an abstraction of Stellar's Federated Byzantine Agreement System model and accounts for the existence of disjoint consensus clusters (in the terminology of the paper) which can each stay in agreement internally but may disagree between each other.
Consensus clusters are closely related to the notion of topen in Definition~\ref{defn.transitive}(\ref{transitive.cc}).

Sheff et al. study heterogeneous consensus in a model called Learner Graphs~\cite{sheff_heterogeneous_2021} and propose a consensus algorithm called Heterogeneous Paxos.

Cobalt, the Stellar Consensus Protocol, Heterogeneous Paxos, and the Ripple Consensus Algorithm~\cite{macbrough_cobalt_2018,mazieres2015stellar,sheff_heterogeneous_2021,schwartz_ripple_2014} are consensus algorithms that rely on heterogeneous quorums or variants thereof.
The Stellar network~\cite{lokhafa:fassgp} and the XRP Ledger~\cite{schwartz_ripple_2014} are two global payment networks that use heterogeneous quorums to achieve consensus among an open set of participants.

Quorum systems and semitopologies are not the same thing.
Quorum systems are typically taken to be such that all quorums intersect (in our terminology: they are \emph{intertwined}), whereas semitopologies do not require this.
On the other hand, quorums are not always taken to be closed under arbitrary unions, whereas semitopologies are (see the discussion in Example~\ref{xmpl.semitopologies}(\ref{item.quorum.system})).

But there are also differences in how the maths has been used and understood.
This paper has been all about point-set topology flavoured ideas, whereas the literature on fail-prone systems and quorum systems has been most interested in synchronisation algorithms for distributed systems. 
We see these interests as complementary, and the difference in emphasis is a feature, not a bug.
Some work by the second author and others~\cite{losa:stecbi} gets as far as proving an analogue to Lemma~\ref{lemm.cc.unions} (though we think it is fair to say that the presentation in this paper is much simpler and more clear), but it fails to notice the connection with topology and the subsequent results which we present in this paper.

\jamiesubsection{Future work} 

We briefly outline some ways in which this work can be extended and improved:
\begin{enumerate}
\item
In Definition~\ref{defn.value.assignment} we define a \emph{value assignment} $f:\ns P\to\tf{Val}$ to be a function from a semitopology to a codomain $\tf{Val}$ that is given the discrete semitopology.
This is a legitimate starting point, but of course we should consider more general codomains.
This could include an arbitrary semitopology on the right (for greatest generality), but even for our intended special case of consensus it would be interesting to try to endow $\tf{Val}$ with a semilattice structure (or something like it), at least, e.g. to model merging of distinct updates to a ledger.\footnote{We write `something like it' because we might also consider, or consider excluding, possibly conflicting updates.}
We can easily generate a (semi)topology from a semilattice by taking points to be elements of the lattice and open sets to be up-closed sets, and this would be a natural generalisation of the discrete semitopologies we have used so far.
\item 
We have not considered what would correspond to the exponential (or \emph{Vietoris}) semitopology.
Semilattice representation results exist~\cite{bredhikin:repts}, but a design space exists here and we should look for representations well-suited to computationally verifying or refuting properties of semitopologies.
\item
We mentioned in Subsection~\ref{subsect.related.work} that semitopology is not about algebraic topology applied to solvability of distributed computing tasks.
These are distinct topics, and the fact that they share a word in their name does not make them any more equal than a Great Dane and a Danish pastry.

But, it is a very interesting question what algebraic \emph{semi}topology might look like.
To put this another way: what is the geometry of semitopological spaces?
We would very much like to know.
\item
It remains to consider Byzantine behaviour, by which we mean that some participants may misreport their view of the network in order to `invent' or sabotage quorums and so influence the outcome of consensus.

So for instance we can ask: ``What conditions can we put on a semitopology consisting of a single toppen to guarantee that changing it at one point $p$ will not make that topen split into two topens?''
Thus intuitively, given a semitopology $(\ns P,\opens)$ we are interested in asking how properties range over an `$\epsilon$-ball' of perturbed semitopologies --- as might be caused by various possible non-standard behaviours from a limited number of Byzantine points --- and in particular we are looking for criteria to guarantee that appropriately-chosen good properties be preserved under perturbation.
\item
We have studied how consensus, once achieved on an open set $O$, propagates to its closure $\closure{O}$; see Proposition~\ref{prop.open.strong-consensus} and Remark~\ref{rmrk.gradecast}. 
But this is just half of the problem of consensus: it remains to consider (within our semitopological framework) what it is to attain consensus on some open set in the first place.

That is: suppose $(\ns P,\opens)$ is a semitopology and $f:\ns P\to\tf{Var}$ is a value assignment.
Then what does it mean, in maths and algorithms, to find a value assignment $f':\ns P\to\tf{Var}$ that is `close' to $f$ but is continuous on some open set $O$?
In this paper we have constructed a theory of what it would then be to extend $f'$ to an $f''$ that continuously extends $f'$ to regular points; but we have not yet looked at how to build the $f'$. 
We speculate that unauthenticated Byzantine consensus algorithms (like Information-Theoretic HotStuff~\cite{abraham_information_2020}) can be understood in our setting; unlike authenticated algorithms, unauthenticated algorithms do not rely on one participant being able to prove to another, by exhibiting signed messages, that a quorum has acted in a certain way.
\item
We have not considered morphisms of semitopologies and how to organise semitopologies into a category, in this paper --- but see next paragraph.
\end{enumerate}

A more extensive treatment of semitopologies is also available~\cite{gabbay:semdca} (completed since this paper was first submitted): its first part includes and extends the material in this paper; its second part treats the category of semitopologies and constructs its categorical dual as a category of \emph{semiframes}; and its third part builds a three-valued logic on semitopologies which we use to study, and expand on, the antiseparation properties we consider here.
The reader who found this paper too brief can find further reading there.

\newcommand{\etalchar}[1]{$^{#1}$}
\hyphenation{Mathe-ma-ti-sche}
\providecommand{\bysame}{\leavevmode\hbox to3em{\hrulefill}\thinspace}
\providecommand{\MR}{\relax\ifhmode\unskip\space\fi MR }
\providecommand{\MRhref}[2]{%
  \href{http://www.ams.org/mathscinet-getitem?mr=#1}{#2}
}
\providecommand{\href}[2]{#2}

\section*{Acknowledgements}

I am extremely grateful to an anonymous referee for their careful attention to this material and for their feedback. 
Thanks to Giuliano Losa and the Stellar Development Foundation for their generous support and funding.
Giuliano was instrumental in explaining the issues, reviewing material, and guiding me to interesting problems; thank you for the many interesting discussions and insightful comments.


\begin{thebibliography}{SWRM21}

\bibitem[ACTZ24]{Alpos2024}
Orestis Alpos, Christian Cachin, Bj{\"o}rn Tackmann, and Luca Zanolini,
  \emph{Asymmetric distributed trust}, Distributed Computing (2024), Available
  online at \url{https://doi.org/10.1007/s00446-024-00469-1}.

\bibitem[AS21]{abraham_information_2020}
Ittai Abraham and Gilad Stern, \emph{{Information} {Theoretic} {HotStuff}},
  24th International Conference on Principles of Distributed Systems, {OPODIS}
  2020, December 14-16, 2020, Strasbourg, France (Virtual Conference)
  (Dagstuhl, Germany) (Quentin Bramas, Rotem Oshman, and Paolo Romano, eds.),
  LIPIcs, vol. 184, Schloss Dagstuhl - Leibniz-Zentrum f{\"{u}}r Informatik,
  2021, pp.~11:1--11:16.

\bibitem[BG93]{borowsky_generalized_1993}
Elizabeth Borowsky and Eli Gafni, \emph{Generalized {FLP} {Impossibility}
  {Result} for {T}-resilient {Asynchronous} {Computations}}, Proceedings of the
  {Twenty}-fifth {Annual} {ACM} {Symposium} on {Theory} of {Computing} (New
  York, NY, USA), {STOC} '93, ACM, 1993, pp.~91--100.

\bibitem[BKK22]{bezerra_relaxed_2022}
João~Paulo Bezerra, Petr Kuznetsov, and Alice Koroleva, \emph{Relaxed reliable
  broadcast for decentralized trust}, Networked Systems (Mohammed-Amine
  Koulali and Mira Mezini, eds.), Lecture Notes in Computer Science, Springer
  International Publishing, 2022, pp.~104--118.

\bibitem[Bou98]{bourbaki:gent1}
Nicolas Bourbaki, \emph{General topology: Chapters 1-4}, Elements of
  mathematics, Springer, 1998.

\bibitem[Bra87]{bracha_asynchronous_1987}
Gabriel Bracha, \emph{Asynchronous {Byzantine} agreement protocols},
  Information and Computation \textbf{75} (1987), no.~2, 130--143.

\bibitem[Bre84]{bredhikin:repts}
D.~A. Bredhikin, \emph{A representation theorem for semilattices}, Proceedings
  of the American Mathematical Society \textbf{90} (1984), no.~2, 219--220.

\bibitem[CL02]{castro_practical_2002}
Miguel Castro and Barbara Liskov, \emph{Practical {Byzantine} fault tolerance
  and proactive recovery}, ACM Transactions on Computer Systems (TOCS)
  \textbf{20} (2002), no.~4, 398--461.

\bibitem[CLZ23]{cachin_quorum_2023}
Christian Cachin, Giuliano Losa, and Luca Zanolini, \emph{Quorum systems in
  permissionless networks}, 26th International Conference on Principles of
  Distributed Systems ({OPODIS} 2022) (Eshcar Hillel, Roberto Palmieri, and
  Etienne Rivière, eds.), Leibniz International Proceedings in Informatics
  ({LIPIcs}), vol. 253, Schloss Dagstuhl – Leibniz-Zentrum für Informatik,
  2023, {ISSN}: 1868-8969, pp.~17:1--17:22.

\bibitem[CT19]{cachin_asymmetric_2019}
Christian Cachin and Björn Tackmann, \emph{Asymmetric {Distributed} {Trust}},
  arXiv:1906.09314 [cs] (2019), arXiv: 1906.09314.

\bibitem[DP02]{priestley:intlo}
B.~A. Davey and Hilary~A. Priestley, \emph{Introduction to lattices and order},
  2 ed., Cambridge University Press, 2002.

\bibitem[Eng89]{engelking:gent}
Ryszard Engelking, \emph{General topology}, Sigma Series in Pure Mathematics,
  Heldermann Verlag, 1989.

\bibitem[FHNS22]{florian_sum_2022}
Martin Florian, Sebastian Henningsen, Charmaine Ndolo, and Björn Scheuermann,
  \emph{The sum of its parts: {Analysis} of federated byzantine agreement
  systems}, Distributed Computing \textbf{35} (2022), no.~5, 399--417 (en).

\bibitem[FM88]{feldman_optimal_1988}
Paul Feldman and Silvio Micali, \emph{Optimal algorithms for {B}yzantine
  agreement}, Proceedings of the twentieth annual {ACM} symposium on Theory of
  computing, {STOC} '88, Association for Computing Machinery, 1 1988,
  pp.~148--161.

\bibitem[Gab24]{gabbay:semdca}
Murdoch~J. Gabbay, \emph{Semitopology: decentralised collaborative action via
  topology, algebra, and logic}, College Publications, August 2024, ISBN
  9781848904651.

\bibitem[GPG18]{garcia2018federated}
{\'A}lvaro Garc{\'\i}a-P{\'e}rez and Alexey Gotsman, \emph{Federated byzantine
  quorum systems}, 22nd International Conference on Principles of Distributed
  Systems (OPODIS 2018), Schloss Dagstuhl-Leibniz-Zentrum fuer Informatik,
  2018.

\bibitem[HKR13]{herlihy_distributed_2013}
Maurice Herlihy, Dmitry Kozlov, and Sergio Rajsbaum, \emph{Distributed
  computing through combinatorial topology}, Morgan Kaufmann, 2013.

\bibitem[HM00]{hirt_player_2000}
Martin Hirt and Ueli Maurer, \emph{Player {Simulation} and {General}
  {Adversary} {Structures} in {Perfect} {Multiparty} {Computation}}, Journal of
  Cryptology \textbf{13} (2000), no.~1, 31--60.

\bibitem[HS93]{herlihy_asynchronous_1993}
Maurice Herlihy and Nir Shavit, \emph{The asynchronous computability theorem
  for t-resilient tasks}, Proceedings of the twenty-fifth annual {ACM}
  symposium on {Theory} of computing, 1993, pp.~111--120.

\bibitem[Hul03]{hulek:eleag}
Klaus Hulek, \emph{Elementary algebraic geometry}, Student Mathematical
  Library, vol.~20, American Mathematical Society, 2003.

\bibitem[JT04]{doi:10.1080/00029890.2004.11920120}
Tyler Jarvis and James Tanton, \emph{The hairy ball theorem via {S}perner's
  lemma}, The American Mathematical Monthly \textbf{111} (2004), no.~7,
  599--603.

\bibitem[Kop89]{koppelberg:hanba1}
Sabine Koppelberg, \emph{Handbook of {B}oolean algebras, volume 1},
  North-Holland, 1989, Series editors Robert Bonnet and James Donald Monk.

\bibitem[Lam98]{lamport_part-time_1998}
Leslie Lamport, \emph{The part-time parliament}, {ACM} Transactions on Computer
  Systems \textbf{16} (1998), no.~2, 133--169.

\bibitem[LCL23]{li_quorum_2023}
Xiao Li, Eric Chan, and Mohsen Lesani, \emph{{Quorum Subsumption for
  Heterogeneous Quorum Systems}}, 37th International Symposium on Distributed
  Computing (DISC 2023) (Dagstuhl, Germany) (Rotem Oshman, ed.), Leibniz
  International Proceedings in Informatics (LIPIcs), vol. 281, Schloss Dagstuhl
  -- Leibniz-Zentrum f{\"u}r Informatik, 2023, pp.~28:1--28:19.

\bibitem[LGM19]{losa:stecbi}
Giuliano Losa, Eli Gafni, and David Mazi{\`e}res, \emph{Stellar consensus by
  instantiation}, 33rd International Symposium on Distributed Computing (DISC
  2019) (Dagstuhl, Germany) (Jukka Suomela, ed.), Leibniz International
  Proceedings in Informatics (LIPIcs), vol. 146, Schloss
  Dagstuhl--Leibniz-Zentrum fuer Informatik, 2019, pp.~27:1--27:15.

\bibitem[LL23]{li_open_2023}
Xiao Li and Mohsen Lesani, \emph{Open {Heterogeneous} {Quorum} {Systems}},
  April 2023, arXiv:2304.02156 [cs].

\bibitem[LLM{\etalchar{+}}19]{lokhafa:fassgp}
Marta Lokhava, Giuliano Losa, David Mazi\`eres, Graydon Hoare, Nicolas Barry,
  Eli Gafni, Jonathan Jove, Rafa\l{} Malinowsky, and Jed McCaleb, \emph{Fast
  and secure global payments with {S}tellar}, Proceedings of the 27th ACM
  Symposium on Operating Systems Principles (New York, NY, USA), SOSP '19,
  Association for Computing Machinery, 2019, p.~80–96.

\bibitem[LSP82]{lamport:byzgp}
Leslie Lamport, Robert Shostak, and Marshall Pease, \emph{{The Byzantine
  Generals Problem}}, ACM Trans. Program. Lang. Syst. \textbf{4} (1982), no.~3,
  382–401.

\bibitem[Mac18]{macbrough_cobalt_2018}
Ethan MacBrough, \emph{Cobalt: {BFT} {Governance} in {Open} {Networks}},
  February 2018, Available online at
  \url{https://doi.org/10.48550/arXiv.1802.07240}.

\bibitem[Maz15]{mazieres2015stellar}
David Mazi{\`e}res, \emph{The {S}tellar consensus protocol: a federated model
  for {I}nternet-level consensus}, Tech. report, {Stellar Development
  Foundation}, 2015,
  \url{https://www.stellar.org/papers/stellar-consensus-protocol.pdf}
  (permalink:
  \url{https://web.archive.org/web/20240629063518/https://stellar.org/learn/stellar-consensus-protocol}).

\bibitem[MR98]{malkhi_byzantine_1998}
Dahlia Malkhi and Michael Reiter, \emph{Byzantine quorum systems}, Distributed
  computing \textbf{11} (1998), no.~4, 203--213.

\bibitem[NW94]{naor:loacaq}
Moni Naor and Avishai Wool, \emph{The load, capacity and availability of quorum
  systems}, Proceedings 35th Annual Symposium on Foundations of Computer
  Science, vol.~27, IEEE, 1994, pp.~214--225.

\bibitem[PN01]{noiri:defsgf}
Valeriu Popa and Takashi Noiri, \emph{On the definitions of some generalized
  forms of continuity under minimal conditions}, Memoirs of the Faculty of
  Science. Series A. Mathematics \textbf{22} (2001), 9--18.

\bibitem[Poo92]{poonen:unicf}
Bjorn Poonen, \emph{Union-closed families}, Journal of Combinatorial Theory,
  Series A \textbf{59} (1992), no.~2, 253--268.

\bibitem[Rik62]{riker:thepc}
William~H. Riker, \emph{The theory of political coalitions}, Yale University
  Press, 1962.

\bibitem[Riv85]{rival:grao}
Ivan Rival (ed.), \emph{Graphs and order. the role of graphs in the theory of
  ordered sets and its applications}, Proceedings of NATO ASI, Series C (ASIC),
  vol. 147, Banff/Canada, 1985.

\bibitem[Rya10]{ryan:hisidf}
Johnny Ryan, \emph{A history of the internet and the digital future}, Reaktion
  Books, 2010.

\bibitem[Rya11]{ars-technica:howabg}
\bysame, \emph{How the atom bomb helped give birth to the internet},
  \url{https://arstechnica.com/tech-policy/2011/02/how-the-atom-bomb-gave-birth-to-the-internet/},
  2 2011, Permalink:
  \url{http://web.archive.org/web/20240622221756/https://arstechnica.com/tech-policy/2011/02/how-the-atom-bomb-gave-birth-to-the-internet/}.

\bibitem[SWRM21]{sheff_heterogeneous_2021}
Isaac Sheff, Xinwen Wang, Robbert~van Renesse, and Andrew~C. Myers,
  \emph{Heterogeneous {Paxos}}, 24th {International} {Conference} on
  {Principles} of {Distributed} {Systems} ({OPODIS} 2020) (Dagstuhl, Germany)
  (Quentin Bramas, Rotem Oshman, and Paolo Romano, eds.), Leibniz
  {International} {Proceedings} in {Informatics} ({LIPIcs}), vol. 184, Schloss
  Dagstuhl–Leibniz-Zentrum für Informatik, 2021, ISSN: 1868-8969,
  pp.~5:1--5:17.

\bibitem[SYB14]{schwartz_ripple_2014}
David Schwartz, Noah Youngs, and Arthur Britto, \emph{The {R}ipple {P}rotocol
  {C}onsensus {A}lgorithm}, Ripple Labs Inc White Paper \textbf{5} (2014),
  no.~8, 151.

\bibitem[SZ93]{saks_wait-free_1993}
Michael Saks and Fotios Zaharoglou, \emph{Wait-free k-set agreement is
  impossible: {The} topology of public knowledge}, Proceedings of the
  twenty-fifth annual {ACM} symposium on {Theory} of computing, 1993,
  pp.~101--110.

\bibitem[Sz{\'a}07]{szaz:minsgt}
{\'A}rp{\'a}d Sz{\'a}z, \emph{Minimal structures, generalized topologies, and
  ascending systems should not be studied without generalized uniformities},
  Filomat (Nis) \textbf{21} (2007), 87--97.

\bibitem[Wes11]{sep-generalized-quantifiers}
Dag Westerst\r{a}hl, \emph{Generalized quantifiers}, The Stanford Encyclopedia
  of Philosophy (Edward~N. Zalta, ed.), Metaphysics research lab, CSLI,
  Stanford University, summer 2011 ed., 2011, Available online at
  \url{https://plato.stanford.edu/archives/sum2011/entries/generalized-quantifiers/}
  (permalink: \url{https://archive.ph/YlP09}).

\bibitem[Wil70]{willard:gent}
Stephen Willard, \emph{General topology}, Addison-Wesley, 1970, Reprinted by
  Dover Publications.

\end{thebibliography}
\end{document}